\DeclarePairedDelimiter{\ceil}{\lceil}{\rceil}
\newtheorem{theorem}{Theorem}[section]
\newtheorem{property}[theorem]{Property}
\newtheorem{definition}{Definition}[section]
\theoremstyle{remark}
\newcommand{\ketbra}[2]{{|#1 \rangle \langle #2|}}
\title{The resource cost of large scale quantum computing}
\author{Marco Fellous Asiani}
\begin{document}
\includepdf[pages=-]{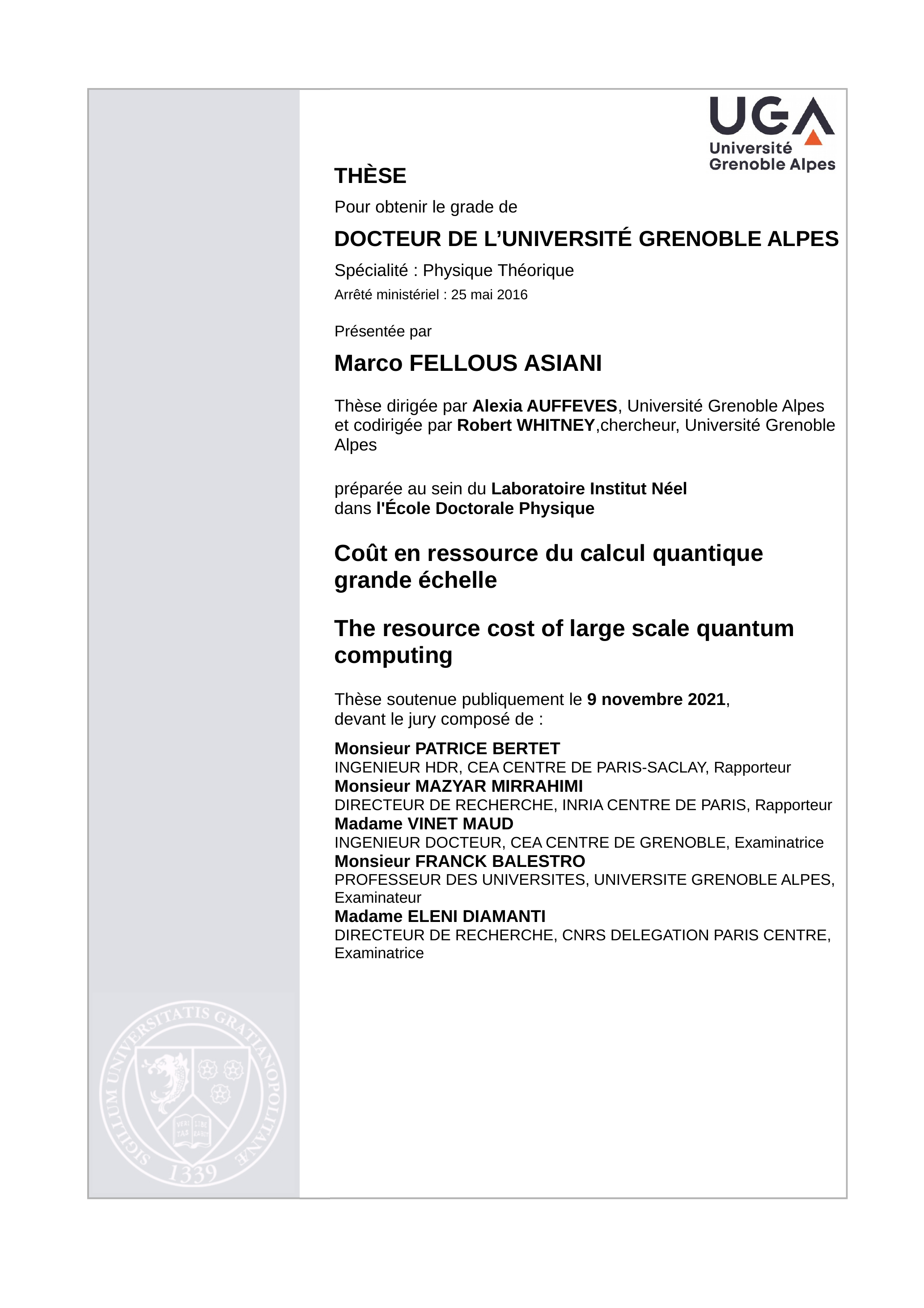}
\maketitle
\section*{Remerciements}
Je tiens tout d'abord à remercier ma directrice de thèse, Alexia Auffèves, ainsi que mon co-directeur de thèse, Robert Whitney pour m'avoir proposé ce sujet de thèse, initialement intitulé "thermodynamique de l'information quantique" mais que j'ai, grâce à eux, pu orienter d'avantage vers les aspects de scalabilité du calcul, qui m'intéressent tout particulièrement. Je les remercie donc pour la liberté qui fut possible vis à vis des orientations choisies ainsi que pour leur précieuse aide tout au long de ma thèse. I would like to thank Hui Khoon Ng as well for all the help she provided in the aspects of fault-tolerance which are central in this PhD, and without who this PhD wouldn't have been possible. Je remercie aussi Cyril Branciard qui fut impliqué sur le projet de l'étude énergétique du quantum switch, non détaillé dans ce manuscrit, mais qui est un projet "annexe" sur lequel j'ai pris beaucoup de plaisir à travailler, en partie grâce à Cyril. De manière générale, merci à l'ensemble de mes encadrants pour leurs conseils et guidance tout au long du déroulement de cette thèse.

Je tiens aussi à remercier tous les protagonistes impliqués dans l'ambiance du département PLUM ainsi que du feu bureau F218 désormais remplacé par le F323, qui aura eu sa période de gloire lors de l'ère pré-covid. Je pense évidemment à Alexandra, Julian, Hippolyte, Augustin, Pierre, Juliette, Smaïl, Noé, Patrice (qui s'est découvert une passion pour la raclette depuis son arrivée en France), ainsi qu'aux membres provenant d'autres départements tels que Guilliam (étant donné ses hautes responsabilités administratives au conseil de laboratoire, j'étais forcé de le mettre dans les remerciements, sous peine de risquer des coupures d'électricité et d'internet dans mon bureau), voir n'étant même pas à l'institut Néel (Marie et Juan vous saurez vous reconnaître). Je garde cependant un souvenir mitigé des parties de civilization V durant lesquelles moultes alliances à mon encontre furent créées (souvent sous l'impulsion d'Alexandra et Marie). 

Je remercie également les différents membres du groupe, tels que Maria, Léa, ainsi que les derniers arrivés que j'ai moins eu le temps de connaitre en raison du covid, Bruno, Nicolò, Stephen et Irénée. Merci aussi à Raphaël qui va bientôt commencer sa thèse à la suite d'Hippolyte. Je pense en particulier aux discussions passionnantes que nous avons eu, à savoir si il faut calculer les $\chi$-matrix à l'ordre 2 ou si bah en fait l'ordre un, franchement ça suffit large. I would also like to give a special thanks to Jing Hao (also a new member of the group), who suffered from my many questions on fault-tolerance when I was learning it. In addition, I take the opportunity to wish good luck to Samyak which will start his PhD this year and that showed us his amazing badminton skills the last time he was there. In return, we showed you the cute little animals we have in the mountains in Grenoble such as ticks that gave us nice "kisses" during our hiking session. But please, don't thank me back by showing me cute tigers and snakes if I come to India one day, I prefer those in pictures. Thank you also to Alessandro who is doing an exchange in Grenoble and trying to convince everyone that the Mont Blanc is an Italian Mountain. Fortunately I arrived on time to stop this Italian propaganda ! Enfin, je remercie Yvain Thonnart, Matias Urdampilleta, Tristan Meunier, Benjamin Huard, Olivier Buisson et Vladimir Milchakov qui ont su être disponibles à plusieurs reprises lors de mes questions sur l'électronique classique pour le premier,certains aspects de physique des micro-ondes et des qubits de spins pour les seconds et troisième, et diverses petites questions liées aux réalités expérimentales des qubits supraconducteurs pour les trois derniers. Sans leur aide, cette thèse n'aurait pas pu se dérouler de la même manière. Je remercie aussi mes amis à Paris ainsi que mes parents et ma famille, et bien entendu Laurie qui a été d'un grand soutien tout au long de cette thèse. 

To finish those acknowledgements, I think that the two great physicists that are Alice and Bob should be acknowledged for all their contributions in the field of physics. From their groundbreaking discovery in quantum causality, in quantum gravity and their courage to try to do very dangerous experiment, sometimes risking their life by going inside blackholes \cite{fuentes2005alice}. I believe that I speak for all the community of physicists to tell them: thank you for all the risk you took for the pure goal of improving humanity's knowledge.

Merci aux membres du jury d'avoir accepté d'en faire partie.
\section*{Short summary}
This thesis deals with the problematics of the scalability of fault-tolerant quantum computing. This question is studied under the angle of estimating the resources needed to set up such computers. Now that the first prototypes of quantum computers exist, it is time to start making such estimates. What we call a resource is, in principle, very general; it could be the power, the energy, or even the total bandwidth allocated to the different qubits. However, we focus mainly on the energetic cost of quantum computing within this thesis, although most of the approaches used can be adapted to deal with any other resource.

We first study what is the maximum accuracy a fault-tolerant quantum computer can achieve in the presence of a scale-dependent noise, i.e., a noise that increases with the number of qubits and physical gates present in the computer. Indeed, this regime may violate an assumption behind the central theorem of fault-tolerance: the quantum threshold theorem. This theorem states that the accuracy of algorithms implemented on a quantum computer can be arbitrarily high if they are protected by quantum error correction, if enough physical elements (qubits and gates) are available and if the noise strength is below a certain threshold. Since this last assumption must be satisfied regardless of the number of physical elements in the computer, scale-dependent noise can violate it. In the case where this scale-dependent noise can be expressed as a function of a resource, these estimates allow (i) to estimate the maximum precision that the computer can achieve in the presence of a fixed quantity of this resource (which makes possible to deduce the maximum size of the algorithms that the computer will be able to implement, in order to know if the scale-dependent noise is a real problem) and vice versa (ii) to estimate the minimum quantity of resource allowing to reach a given accuracy. Throughout this thesis, our calculations are based on the concatenated Steane error-correcting code (because it is a theoretically well-documented construction that protects the qubits against an arbitrary error and allows us to make analytical calculations).

In a second study, we generalize these approaches in order to be able to estimate the resource cost of a calculation in the most general case. By asking to find the minimum amount of resources required to perform a computation \textit{under the constraint} that the algorithm provides a correct answer with a targeted probability, it is possible to optimize the entire architecture of the computer to minimize the resources spent while being sure to have a correct answer with a high probability. We apply this approach to a complete model of fault-tolerant quantum computer based on superconducting qubits. Our results indicate that for algorithms implemented on thousands of logical qubits, our method makes possible to reduce the energetic cost by a factor of 100 in regimes where, without optimizing, the power consumption could exceed the gigawatt. This work illustrates the fact that the energetic cost of quantum computing should be a criterion in itself, allowing to evaluate the scaling potential of a given quantum computer technology. It also illustrates that optimizing the architecture of a quantum computer via inter-disciplinary methods, including algorithmic considerations, quantum physics, and engineering aspects, such as the ones that we propose, can prove to be a powerful tool, clearly improving the scaling potential of quantum computers. Finally, we provide general hints about how to make fault-tolerant quantum computers energy efficient.
\newpage

\newpage
\section*{Court résumé}
Cette thèse traite des questions de mise à l'échelle du calcul quantique tolérant aux fautes. Ces questions sont étudiées sous l'angle de l'estimation des ressources nécessaires à la mise en place de tels ordinateurs: maintenant que les premiers prototypes d'ordinateurs quantiques existent, il est temps de commencer à réaliser de telles estimations. Ce que nous appelons ressource est en principe très général, il pourrait s'agir de la puissance, de l'énergie, ou même de la bande passante totale allouée aux différents qubits. Cependant, nous nous focalisons particulièrement sur le coût énergétique du calcul quantique au sein de cette thèse, bien que la plupart des approches utilisées puissent être adapté pour traiter une quelconque autre ressource. 

Nous étudions dans un premier temps quelle est la précision maximale qu'un ordinateur quantique tolérant aux fautes peut atteindre en présence d'un bruit dépendant de l'échelle, c'est à dire un bruit qui augmente avec le nombre de qubits ou de portes physiques présents dans l'ordinateur. En effet, ce régime peut violer une hypothèse derrière le théorème central de la tolérance aux fautes: le théorème du seuil (« quantum threshold theorem ») qui stipule que la précision des algorithmes implémentés sur un ordinateur quantique peut être arbitrairement grande si ils sont protégés par de la correction d'erreur quantique, si suffisamment de qubits et portes physiques sont à disposition, et si le taux de bruit est en dessous d'un certain seuil. Cette dernière hypothèse devant être vérifiée peu importe le nombre d'éléments physiques dans l'ordinateur, un bruit dépendant de l'échelle peut la violer. Dans le cas où ce bruit dépendant de l'échelle peut être exprimé en fonction d'une ressource, ces estimations permettent (i) d'estimer la précision maximale que l'ordinateur peut atteindre en présence d'une quantité fixée de cette ressource (ce qui permet de déduire la taille maximale des algorithmes que l'ordinateur pourra implémenter, afin de savoir si le bruit dépendant de l'échelle est un réel problème) et réciproquement (ii) d'estimer la quantité de ressource minimale permettant d'atteindre une précision donnée. Dans toute cette thèse, nos calculs sont basés sur le code correcteur d'erreur Steane concaténé (car c'est une construction bien documentée théoriquement, permettant de protéger les qubits contre une erreur arbitraire et permettant de faire des calculs analytiques).

Dans un second temps, nous généralisons ces approches afin de pouvoir estimer le coût en ressource d'un calcul dans le cas le plus général. En demandant de trouver la quantité de ressource minimale requise pour effectuer un calcul \textit{sous la contrainte} que l'algorithme fournisse une réponse correcte avec une probabilité ciblée, il est possible d'optimiser l'intégralité de l'architecture de l'ordinateur permettant de minimiser la dépense en ressource tout en ayant une réponse correcte. Nous appliquons cette démarche à un modèle complet d'ordinateur tolérant aux fautes basé sur des qubits supraconducteurs. Nos résultats indiquent que pour des algorithmes implémentés sur plusieurs milliers de qubits logiques, notre méthode permet de réduire la facture énergétique d'un facteur 100, dans des régimes où sans optimisation la consommation en puissance pourrait dépasser le gigawatt. Ce travail illustre le fait que le coût énergétique du calcul quantique devrait être un critère en soit permettant d'évaluer le potentiel de mise à l'échelle des ordinateurs quantiques. Il illustre aussi que l'optimisation de l'architecture d'un ordinateur quantique, via des méthodes transversales, incluant les aspects algorithmiques, de physique quantique, et d'ingénierie, telles que celles que nous proposons, peut se révéler être un outil puissant permettant d'améliorer grandement le potentiel de mise à l'échelle. Enfin, nous donnons des premières pistes permettant de savoir comment réaliser des ordinateurs quantiques économes en énergie.
\tableofcontents
\chapter*{Introduction}
\section*{General context of this thesis}
In recent years, quantum technologies have grown in interest. This is part of the context of the second quantum revolution, which follows the first one that occurred in the last century. When physicists understood that classical physics was only an approximation of more fundamental quantum laws describing our world: quantum mechanics, they used this knowledge in order to design new technologies such as computers, lasers, solar cells, allowing us to enter in the information age. While quantum mechanics was necessary to design those devices, physicists did not really "engineer" the quantum effects to create the technology. Instead, they took those effects the way they already existed in nature and directly tried to create devices with those. 

The goal of the second quantum revolution is to go beyond this limitation and to design new devices whose working principle would strongly rely on the engineering of all the fundamental quantum effects, such as superposition and entanglement \cite{dowling2003quantum,jaeger2018second}. It relies on having the ability to manipulate individual quantum particles. This is becoming possible because in the last $20$ to $30$ years, experiments showing the possibility to control individual quantum particles with a high level of control have been shown experimentally \cite{rauschenbeutel1999coherent,nakamura2001rabi,haroche2006exploring}. The second quantum revolution thus opens possibilities of technological innovations in a wide range of domains. We can think about quantum sensing \cite{degen2017quantum, pirandola2018advances}, quantum cryptography \cite{pirandola2020advances,shenoy2017quantum}, quantum communication \cite{chen2021review,gisin2007quantum}, and of course, what is at the heart of this PhD, quantum computing. 

The ultimate goal of quantum computing is to build large-scale quantum computers as they would have a computational power order of magnitudes bigger than today's best supercomputers for some computational tasks. Because of their very high computational power, quantum computers could impact the national and industrial sovereignty \cite{moerel2021reflections}. One example clearly showing it is the Shor factoring algorithm which allows factorizing an integer into prime numbers exponentially faster than the best-known classical algorithm. It would allow breaking RSA encryption protocol that is currently used in the banking and military communications\footnote{The example of Shor is a clear example of the potential exponential speedup in the computing time that quantum computers would have compared to classical supercomputers. We should, however, emphasize that there exist post-quantum cryptography methods \cite{bernstein2017post} which would allow transferring data in a secured manner even against the threat of quantum computers. Also, while quantum computing could be a threat to the standard encryption technique, quantum cryptography would at the same time provide a much more secure way to transmit data such that even quantum computers would not be able to decrypt \cite{pirandola2020advances}.}. A recent estimation considered that with 20 million noisy qubits, it could do this in about eight hours \cite{gidney2021factor} where in comparison, a classical supercomputer could take at least millions of years to do it. Quantum computers already exist today. As an example of that, we can think about the Sycamore quantum processor, which has been associated with a recent claim of quantum supremacy by Google \cite{arute2019quantum}. Other laboratories or companies also have quantum computers, we can think about IBM \cite{tornow2020non,garcia2017five}, Intel \cite{intelquantumcomputing}, Rigetti \cite{motta2020determining} to give a few examples. Actually, the first experimental implementation of a quantum algorithm goes back to the year 1998, when the Deutsch-Jozsa algorithm was implemented on a two qubits quantum computer \cite{jones1998implementation}. However, all those examples are computers of very modest size, containing less than a hundred noisy physical qubits. In order to reach all the potential of the quantum speedup, there is a need to scale up the devices to make large-scale quantum computers.

Many challenges have to be solved to make large-scale quantum computers a reality. One of the main issues is related to the fact that quantum information is very fragile. The qubits are very sensitive to environmental noise, which can modify their state. If it occurs, the answer provided by the algorithm would not be trustable. The intrinsic fragility of quantum information has been seen for some time as a real threat, making quantum computers seen as an unreachable goal \cite{raimond1996quantum}. To face this issue, strategies such as quantum error correction and fault-tolerance have been developed. To explain the difference between those concepts briefly, quantum error correction consists in developing algorithms that, if implemented \textit{perfectly} (i.e., if the gates applied on the qubits during the error correction are ideal) can detect and correct errors occurring on the qubits during the calculation. Fault-tolerance takes into account the fact that error correction is unfortunately a noisy process, and provides explicit circuits allowing to implement error correction in such a way that despite the fact it is a noisy process, because correction is performed, the net effect is an improvement in the protection of the encoded information. Thanks to those protocols, the quantum information can, in principle, be preserved for a long enough time in order to get a trustable answer at the end of the algorithm. A central theorem guaranteeing this is the quantum threshold theorem \cite{aliferis2005quantum,aliferis2007accuracy}. It states that if the noise per physical gate is lower than a fixed "threshold" value, one can regroup multiple physical qubits and gates to create a logical qubit and a logical gate. On a logical level, those elements would do the same operation as their physical counterparts, excepted that because error correction is being performed, everything would be as if the qubits were less noisy. Using enough of those physical elements in a logical element, under the hypothesis that the noise is below a threshold, the noise of qubits and gates on the logical level can be reduced as much as desired. The quantum threshold theorem is a significant result. It guarantees the experimentalists that once they would be able to create good enough physical qubits and implement physical gates in a good enough manner, no further improvement would be necessary from the hardware perspective. Quantum error correction would make the necessary improvements to allow the computer to get to any targetted accuracy. For this reason, this theorem has been a big motivation behind the development of quantum technologies for quantum computing. 

However, there are two potential issues with the quantum threshold theorem. The first one is that, because fault-tolerance and error correction require many additional physical qubits and gates to be implemented, we might expect to have a significant overhead in terms of physical components required. The estimations of their number vary, but between hundreds of thousands to potentially billions of qubits might be required to implement a fault-tolerant algorithm showing a clear quantum supremacy \cite{chamberland2017overhead,suchara2013comparing,gidney2021factor}. This naturally raises the general question of the resources (energy or power, for instance) that would be required to build such large-scale quantum computers. Given the number of qubits needed, we could naturally expect that the energetic cost of such computers might be high. An even better question to answer would be to see if it is possible to design robust methods allowing to \textit{minimize} the resources expenses of such computers. The second issue with the quantum threshold theorem is that the physical components must already be of good quality in order to make error correction useful: the noise must be below the threshold, and it must \textit{remain} below this value even if the computer is composed of the potentially billions of physical qubits required for quantum error correction. This can be very challenging as having a well-controlled environment is much easier when the number of qubits is low than when many qubits are inside the computer. The work done in this Ph.D. focuses on those problems.

\section*{Positioning of the present Ph.D. work}
In this Ph.D., we are studying the question of the scalability of quantum computing by estimating how much resources it would require. In principle, what we call a resource is very general: it could be energy, power, the total frequency bandwidth allowed for the qubits, etc. Now that quantum computing is becoming a reality, and because we have access to the characteristics of the first noisy prototypes of quantum computers, such estimations can, and should be done in order to design the next generation of quantum computers which are expected to be fault-tolerant.

Our goal in this thesis is first to design a general approach that could be used to estimate the largest algorithms that could be implemented with quantum error correction, assuming a fixed amount of resources available. It is also, reciprocally, to evaluate the minimum amount of resources required to implement a given algorithm. 

While our approach is general, we then focus on the energetic cost of large-scale quantum computing. There are two reasons for that. First, the energy (or power) required to run a quantum computer is a "good variable" to know if an architecture is scalable. This is because energetic quantities encompass many criteria coming from various fields. For instance, a low energetic cost is likely to be associated with a "reasonable" design from an engineering perspective. The other reason we focus on energy and power is that the energy cost required to create large-scale quantum computers could be considerable. It is essential "in itself" to know how much energy would be needed to implement concrete algorithms.

For this reason, our ending goal is to try to make an in-depth estimation of the power that would be required to implement algorithms on a superconducting qubit fault-tolerant quantum computer. The reason why we focus on superconducting qubits is because it is one of the most mature technologies used for quantum computers today (many experimental values are accessible). Such qubits must be maintained at very low temperatures, which might require a large amount of cryogenic power. We consider using the Steane quantum error correction code (it is a code that protects qubits against arbitrary single-qubit errors). We implement it fault-tolerantly with the so-called "concatenated construction". The concatenated construction is a way to implement the code allowing to reach an arbitrarily high level of protection, assuming the noise is below the threshold initially (it is a construction in which the quantum threshold theorem can be rigorously derived). The reason why we choose such code and fault-tolerant construction is because it is very well documented theoretically, we have access to the concrete circuits allowing us to implement it, and it will enable us to perform analytic calculations, which is a requirement to keep our approach simple to explain.

Finally, the overall philosophy behind our work is to design \textit{inter-disciplinary} methods allowing us to estimate the amount of resources required by involving characteristics coming from quantum error correction, quantum algorithm, engineering, cryogenics, physics of the quantum gate, etc. A quantum computer being a multi-disciplinary device, inter-disciplinary approaches to the design are necessary, especially to minimize resource expenses. Those methods are mainly provided in the last two chapters of this thesis. They are said to be full-stack as each "stack" in the quantum computer will be modeled (where for us, a stack can be the physics of the quantum gate, the algorithm implemented, the fault-tolerance construction considered, etc).

\section*{Our work in more details}
The first question we will investigate is finding the largest algorithm a quantum computer can successfully implement, using quantum error correction in the presence of a scale-dependent noise. A scale-dependent noise is a noise whose strength grows with the number of physical components inside the quantum computer. The connection with resource estimation is that in the presence of a limited amount of resources for all the computer, when scaling up, each of the physical components in the computer will receive a fewer amount of this resource. It will typically induce noisier operations. For instance, for a fixed amount of available frequencies for the qubits, the more qubits there are inside the computer, and the more probable crosstalk issues (i.e., the fact to address extra other qubits than the ones that are targetted by the driving signals) might occur \cite{schutjens2013single,theis2016simultaneous}. Many other examples of scale-dependent noise can occur. The general idea is that it might be possible to create qubits in an environment where they experience a low amount of noise, but what is complicated is to maintain this low noise environment when more and more qubits are added \cite{rosenberg2020solid,monroe2013scaling}. Having a noise that is growing in intensity with the size of the computer is already annoying in itself, but what can be worse is that this condition can violate one crucial hypothesis behind the quantum threshold theorem: the requirement that the noise of the physical component must be below the threshold. Indeed, if the noise grows with the size of the computer, it is possible that while being lower than the threshold for a small number of qubits, it gets higher than this value when all the additional qubits used to perform error correction are included. If it occurs, the accuracy of the computer is then intrinsically limited because either no error correction or only a limited amount of it would be possible. Those issues motivated the work presented in the third chapter of the manuscript. There, we study how to maximize the accuracy of the logical gates in the presence of a scale-dependent noise, allowing us to deduce the largest algorithm the computer would be able to run successfully. From the connection between scale-dependent noise and limited resource we established, we provide a first approach allowing to find the minimum resource required to implement the algorithm, and reciprocally to find what is the maximum accuracy the logical gates of a computer can get to in the presence of a limited fixed amount of resource. However, the approaches developed in this chapter do not allow us to treat any kind of resource optimization.

This is why we investigated further to find a general way to formulate the problem. The first thing to acknowledge is that the question of resource estimation of large-scale quantum computing is a question that is at the frontier of many different disciplines; one cannot only focus on the noise felt by the physical qubits to answer this question. For instance, it requires knowledge from quantum physics but also from computer science, cryogenics, and engineering. One important issue it brings is the almost omnipresence of contradictory behaviors that might intersect all those different fields. To give a few examples, ion trap technologies are associated to a very long lifetime for the qubits. Coherence times about $600s$ \cite{bruzewicz2019trapped} and even hours \cite{wang2021single} have been reached experimentally. All this while having quantum gates that can last for $100 \mu s$  \cite{bruzewicz2019trapped}, providing the ratio of coherence time divided by the gate duration, which can be close to $10^6$, much higher than in many other technologies \cite{kjaergaard2020superconducting,hanson2007spins}. From this perspective, they could be seen as ideal candidates for large-scale, fault-tolerant quantum computers as they would be associated with a low overhead in terms of physical qubit per logical qubit. But on the other hand, there are real challenges putting many ions together in a single trap while guaranteeing good connectivity and high coherence times \cite{bruzewicz2019trapped} which is a clear drawback for scalability. This is why strategies consisting of regrouping the ions in smaller groups are considered. However, it introduces other challenges, such as how to move the ions efficiently to implement the appropriate interactions. This is one example of contradictory behaviors or challenges that can cross different fields. If we think about superconducting qubits now, many of them could be put on a small size chip such that they are not facing the issue ion trap are\footnote{Crosstalk issues could occur but in a less problematic way than for ions.}. But their lifetime is not comparable to what it is possible to do with ion traps, and those qubits must typically be at very low temperatures, close to $10mK$ \cite{krantz2019quantum,kjaergaard2020superconducting}. Assuming that a superconducting quantum chip would have many physical qubits, potentially between hundred of thousands and the billion depending on the size of the fault-tolerant algorithm that is supposed to run and the quality of the qubits, it could lead to potential issues concerning the energetic cost. More generally, many other technologies have their pros and cons in terms of scalability, resource cost, and level of maturity. There is quantum computing based on photons (linear optics quantum computation) \cite{slussarenko2019photonic,kok2007linear}, nuclear magnetic resonance quantum computing \cite{criger2012recent}, quantum computing based on spin qubit, and the list continues. Spin qubits \cite{hanson2007spins} are seen as excellent candidates in terms of scalability as they can benefit from the maturity of CMOS technology in terms of integration \cite{maurand2016cmos,vinet2018towards}. Still, this approach to quantum computing is much more recent and does not benefit yet from the same maturity as superconducting qubits.

Here, we mainly talked about competition phenomena occurring between different fields of science, but they may also arise within a given domain. For instance, it is believed that to scale up superconducting or spin qubits quantum computers, electronics controlling or generating the signals that will implement quantum gates on the qubits should be put inside of the cryostat \cite{bardin201929,patra2020scalable,mcdermott2018quantum}. But different technologies would allow for that: CMOS technology can be put at cold temperature and can generate signals of good quality, but it is associated with a more important thermal load than other approaches such as superconducting circuits \cite{semenov2003sfq,mcdermott2018quantum} or adiabatic computing \cite{debenedictis2020adiabatic} (which in return do not have the same level of maturity in term of performances). Choosing "the best" technology is then not easy. And we mainly talked about what happens for the hardware, but many different solutions can also be chosen on the software side of a quantum computer. For instance, there are many different possible strategies on the quantum-error correction side to consider. Thinking about the various areas of research and strategies to implement quantum error correction, we can give the examples of topological quantum error correction such as the surface code \cite{fowler2012surface,dennis2002topological}, bosonic codes \cite{terhal2020towards,ofek2016extending,noh2020fault}, concatenated constructions \cite{aliferis2005quantum,aharonov2008fault} etc. All the different codes and ways to implement them can be associated with different overheads in terms of the number of physical qubits and performance in reducing the errors. Benchmarking quantum error correction to find which code is the most resource-efficient is a complicated task because of the many different aspects that could be benchmarked. As we said before, even if one code appears to be better than another one from a specific performance in noise reduction, it will not necessarily be the best one when the full quantum computer is considered (if it has good performance in terms of noise reduction but requires very heavy classical processing, it might not be a very good candidate). As we see, quantum computing is a field full of contradictions!

What this discussion illustrates is mainly the fact that \textit{inter-disciplinary} approaches to the problem of scalability must be considered as all the different components involved in the design of a quantum computer are strongly interconnected. But what would be desirable would be to have a well-defined and \textit{unique} question to answer that would lead the entire design of the quantum computer. Indeed one issue is also that "too many" choices are possible in the design of the computer and it is hard to choose the best one. To phrase a "good" question, we can make the following analysis: even though at first view many elements in the quantum computer seem to be very far from the "quantum world" (cryogenics or signal generation are good examples), they are actually \textit{intrinsically} connected to it. Indeed, the ultimate goal of a quantum computer is to provide a trustable answer to some algorithm that has been implemented. The whole design of the quantum computer is made so that this condition must hold. For instance, a cryostat has to be designed because the qubits must be maintained cool, and this is because the quantity of noise felt by the qubits must be kept low to have a successful answer for some algorithm. Thus the design of a cryostat is indirectly connected to algorithmic aspects. Seeing the problem under this angle allows to see that connecting the algorithmic aspect, and more precisely, the probability that the algorithm succeeds to all the engineering involved in the design of a quantum computer can give the appropriate constraint to know how to design the computer. Now, only asking to solve this question might lead to many choices in the design, and many of them might be unreasonable. For instance, if one design will satisfy this condition but will induce a quantum computer consuming hundreds of gigawatts, it will not be a good choice "for all practical purposes". This is why one step further is to lead the design by asking to minimize a given resource \textit{under the constraint} that the algorithm succeeds with a targetted probability. Taking the power consumption as a resource and phrasing the question this way will then provide the appropriate constraints on the design of the quantum computer in such a way that it implements the algorithm successfully (which is its ultimate goal) while spending the least amount of power to do this (which will lead to a "reasonable" design). The constraints this question gives can then directly guide the engineers, physicists, and computer scientists to design the computer together. In the fourth chapter of this thesis, we will provide the conceptual elements required to apply this method properly. In the last chapter, we are going to use it in a complete model of a quantum computer where we will optimize the amount of error correction to perform, the optimal temperature of the different stages of the cryostat (which contains the qubits but also the electronics generating the signals) and the level of attenuation that is chosen on the driving line\footnote{As we are going to see, attenuation is being put on the coaxial cable where the signals driving the qubit are injected to reduce the thermal noise.} to implement fault-tolerant algorithms involving thousands of logical qubits with a high enough targetted probability of success. We will see that orders of magnitude of power consumption can be gained in regimes, where without our optimization, a power consumption bigger than the gigawatt could be involved. It indicates that our approach can significantly enhance the scalability of the architecture. We will also see how the design of the computer depends on the characteristics of the implemented algorithm: it shouldn't be surprising that the size of the algorithm and the way it is implemented can impact the optimal design of the computer (and, of course, its energetic cost). Finally, because we are going to find the \textit{minimum} power required to implement an algorithm, the way we formulate the problem can allow to actually \textit{define} properly the question of the energetic cost of quantum computing.

\section*{Outline of the manuscript}

This thesis is organized as follows. The first chapter is dedicated to provide the essential tools we need from circuit quantum electrodynamic theory to understand how noisy gates performed on superconducting qubits are and how much power is required to implement them. We also give the state-of-the-art values for superconducting qubits we will consider using in the rest of the thesis. The second chapter is dedicated to quantum error correction and fault-tolerance. We provide all the theoretical results allowing us to understand this theory. Those two chapters do not contain any original results; they just introduce the tools required to understand the last three chapters. 

The third chapter is the first one providing results from this Ph.D. It is dedicated to understand what happens for fault-tolerance when the noise felt by the qubits grows with the number of qubits: what is occurring in this regime, and is it necessarily an issue for scalability. This chapter will also give first intuitions about how it is possible to estimate the minimum resources required to implement a fault-tolerant algorithm as having a scale-dependent noise is often related to resource constraints. The fourth chapter is dedicated to explain precisely the method allowing to find how much resources a calculation requires. More specifically, we will show that the minimum power required and the optimal architecture the quantum computer should have to reach this minimum can be found. The last chapter applies those concepts in a complete model of quantum computer based on superconducting qubits where we do quantitative estimations of the energetic cost required to run different kinds of large-scale algorithms. Our work indicates that the inter-disciplinary approach to the question of energetics we propose can reduce the power consumption of the computer by orders of magnitudes in a regime where the consumption could otherwise be higher than the gigawatt. We will also give some first intuitions about what is essential to optimize to make fault-tolerant quantum computing energy-efficient. This work illustrates that the energetic cost of quantum computing should be a figure of merit by itself on the scorecard of qubits technology to assess their potential for scalability. It also shows that optimizing the architecture of a quantum computer through methods like the one we are proposing can be a powerful tool allowing to clearly improve the potential in terms of scalability.
\chapter{Physics of superconducting qubits}
In all this Ph.D. thesis, the physical examples we will consider will be based on superconducting qubits. The goal of this chapter is, first, to explain the basics of the physics they rely on. Then, we will explain what is the origin of the noise when they are being manipulated before calculating the energetic cost to perform single-qubit gate operations. We will also give there the characteristics of the qubits and gates we will use along this thesis. This energetic cost is one important building block that we will use in the rest of this Ph.D. thesis. This chapter, apart from the section \ref{sec:infidelity_operations}, does not contain any original result; we just provide the tools we need to understand the rest of this thesis.
\section{Designing superconducting qubits}
Our goal here is to explain how to make a superconducting qubit and what is the transmon regime. For this purpose, we need to be able to describe quantum phenomena in electrical circuits. This is usually done through the canonical quantization procedure that we are going to explain. Some references on the subject can be found in \cite{vool2017introduction,gu2017microwave,wendin2005superconducting,nigg2012black}. We will then apply this method for the electrical circuits we will consider: we will start by quantizing a simple LC circuit before explaining the quantization of a superconducting qubit. Our explanations are mainly taken from \cite{vool2017introduction}.
\subsection{Quantization of electromagnetic circuits}
\label{sec:canonical}
\subsubsection{Canonical quantization}
One way that is used to quantize a classical theory is called the canonical quantization. But in order to describe it, we need to make a few reminders about Lagrangian and Hamiltonian mechanics. 

\textbf{Basics of Lagrangian and Hamiltonian formulation of classical mechanics}

Let us consider a classical system. This system can be described by the mean of its coordinates in the phase space. If one considers a system with $N$ degrees of freedom, its state is entirely described by $N$ coordinates $\{q_k\}_{k=1}^N$ and associated velocities $\{\dot{q}_k\}_{k=1}^N$. A Lagrangian $\mathcal{L}$ is a function $\mathcal{L}(\{q_k\}_{k=1}^N,\{\dot{q}_k\}_{k=1}^N)$ from which the classical equations of motion can be deduced by the mean of the Euler-Lagrange equations \cite{basdevant2014principes}:
\begin{align}
\forall k \in [|1,n|]: \frac{d}{dt}\frac{\partial \mathcal{L}}{\partial \dot{q}_k}=\frac{\partial \mathcal{L}}{\partial q_k}.
\label{eq:euler_lagrange}
\end{align}
To fix ideas, we can take the example of a one dimensional mechanical harmonic oscillator. Calling $x$ the relative distance to the rest position, this physical system satisfies the equation of motion
\begin{align}
\ddot{x}+\omega_0^2 x=0
\label{eq:1Dharmonic_oscillator}
\end{align}
where $\omega_0$ is the resonant frequency of this oscillator. An appropriate Lagrangian to describe this dynamic would be:
\begin{align}
\mathcal{L}(x,\dot{x})=\frac{m}{2}\dot{x}^2-\frac{m \omega_0^2}{2}x^2.
\label{eq:lagrangian_1D_harmonic_oscillator}
\end{align}
This is because applying the Euler Lagrange equation \eqref{eq:euler_lagrange} in this situation would give:
\begin{align}
\frac{d}{dt}\frac{\partial \mathcal{L}}{\partial \dot{x}}=\frac{\partial \mathcal{L}}{\partial x} \Leftrightarrow m\ddot{x}=-m \omega_0^2 x,
\end{align}
which provides the appropriate equation of motion \eqref{eq:1Dharmonic_oscillator}\footnote{Actually, many equivalent Lagrangian can describe properly a system \cite{basdevant2014principes}. Removing the term $m$, homogeneous to a mass would for instance, also provide the appropriate dynamic and is thus not strictly necessary. It is only in order to get quantities homogeneous to energy that we considered it here.}. We also notice that this Lagrangian has the expression of the kinetic energy of the system minus the potential energy. A valid Lagrangian does not necessarily have this shape, but in many situations, it will occur to be the case \cite{goldstein2002classical}. 

In order to understand the quantization procedure later on, we must also introduce the Hamiltonian formulation of classical mechanics. First, the Lagrangian allows to define the generalized momentum $p_i$ associated to any of the generalized coordinate $q_i$:
\begin{align}
\forall i \in [|1,n|]: p_i \equiv \frac{\partial \mathcal{L}}{\partial \dot{q}_i}.
\end{align}
It also allows to define a Hamiltonian for the system through the equation:
\begin{align}
H(\{q_k\}_{k=1}^N,\{p_k\}_{k=1}^N) \equiv \sum_{k=1}^N \dot{q}_k p_k -\mathcal{L}(\{q_k\}_{k=1}^N,\{\dot{q}_k\}_{k=1}^N)
\end{align}
The equation of motions in the Hamiltonian formalism can be written as:
\begin{align}
&\forall i \in [|1,n|]: \dot{p}_i=-\frac{\partial H}{\partial q_i} \label{eq:momentum_hamiltonian} \\
&\forall i \in [|1,n|]: \dot{q}_i=\frac{\partial H}{\partial p_i}
\label{eq:velocity_hamiltonian} 
\end{align}
Those first order differential equations are mathematically equivalent to the second order Euler Lagrange equations \eqref{eq:euler_lagrange} \cite{basdevant2014principes}.

Keeping the example of the harmonic oscillator, we would for instance find that the momentum associated to the position $x$ is simply:
\begin{align}
p=\frac{\partial \mathcal{L}}{\partial \dot{x}}=m \dot{x}
\end{align}
And the Hamiltonian is:
\begin{align}
H(x,p)=\dot{x}p-\mathcal{L}(x,\dot{x})=\frac{p^2}{2m}+\frac{m \omega_0^2}{2}x^2
\end{align}

The last concept we will need in order to understand the quantization of a classical theory is the notion of Poisson bracket. Let us consider an arbitrary function $A(\{q_i\},\{p_i\},t)$. The equation of motion of this function satisfies:
\begin{align}
\frac{d A}{d t}=\sum_i \left( \frac{\partial f}{\partial q_i} \frac{\partial q_i}{\partial t}+\frac{\partial A}{\partial p_i} \frac{\partial p_i}{\partial t} \right)+\frac{\partial A}{\partial t}
\end{align}
And, using \eqref{eq:momentum_hamiltonian} and \eqref{eq:velocity_hamiltonian}, it can be rewritten as:
\begin{align}
\frac{d A}{d t}=\sum_i \left( \frac{\partial A}{\partial q_i} \frac{\partial H}{\partial p_i}-\frac{\partial A}{\partial p_i} \frac{\partial H}{\partial q_i}\right)+\frac{\partial A}{\partial t}=\{A,H\}+\frac{\partial A}{\partial t}
\label{eq:evol_A}
\end{align}
Where the Poisson Bracket $\{.,.\}$ is defined in general by:
\begin{align}
\{A,B\} \equiv \sum_i \left( \frac{\partial A}{\partial q_i} \frac{\partial B}{\partial p_i}-\frac{\partial A}{\partial p_i} \frac{\partial B}{\partial q_i} \right)
\end{align}
We notice in particular:
\begin{align}
&\{q_i,p_j\}=\delta_{i,j}
\label{eq:poisson1}\\
&\{q_i,q_j\}=\{p_i,p_j\}=0
\label{eq:poisson2}
\end{align}
We will see that the structure of the equations of motions in the Hamiltonian formalism, when expressed with the Poisson bracket, are very similar to the equation of motions of operators in the Heisenberg formalism of quantum mechanics. This will be the starting point of the canonical quantization principle.

\textbf{equation of motion of a quantum system in Heisenberg picture}

Now we consider a quantum system described by a Hamiltonian $\widehat{H}$. We assume that this Hamiltonian is composed of generalized coordinates operators $\{\widehat{q}_i\}$ and associated momentum $\{\widehat{p}_i\}$. In the Heisenberg picture of quantum mechanics, the equation of motion of any operator $\widehat{A}_H$ acting on S follows the equation:
\begin{align}
&\frac{d \widehat{A}_H(t)}{dt}=\frac{1}{i \hbar}[\widehat{A}_H(t),\widehat{H}]+\frac{\partial \widehat{A}_H(t)}{\partial t} \label{eq:heisenberg}
\end{align}
The coordinates and momenta satisfy the commutations relations:
\begin{align}
&[\widehat{q}_i(t),\widehat{p}_j(t)]=i\hbar \delta_{ij}
\label{eq:commutation1}\\
&[\widehat{q}_i(t),\widehat{q}_j(t)]=[\widehat{p}_i(t),\widehat{p}_j(t)]=0
\label{eq:commutation2}
\end{align}
\eqref{eq:heisenberg} and \eqref{eq:commutation1},\eqref{eq:commutation2} contain all the information to deduce the behavior of any operator at time $t$ which is a polynome in the variables $\{p_i(t)\}$ and $\{q_i(t)\}$.

\textbf{Quantization procedure}

At this point, we can notice the strong similarity between \eqref{eq:heisenberg} and \eqref{eq:evol_A}, and also between \eqref{eq:commutation1},\eqref{eq:commutation2} and \eqref{eq:poisson1},\eqref{eq:poisson2}. Indeed, the quantum equations of motion can be obtained by replacing the Poisson bracket $\{A,B\}$ in those equations by a commutator $\frac{1}{i \hbar}[A,B]$ equal to this Poisson bracket. This will provide the appropriate equation of motion for the operator, as well as define the commutation relations properly. In practice, to obtain the quantum theory, it will be enough to promote the Hamiltonian into an operator by imposing the commutation relations:
\begin{align}
&[\widehat{q}_i(t),\widehat{p}_j(t)]=i \hbar\delta_{ij} \label{eq:commutation_quantization0}\\
&[\widehat{q}_i(t),\widehat{q}_j(t)]=0
\label{eq:commutation_quantization1}\\
&[\widehat{p}_i(t),\widehat{p}_j(t)]=0 \label{eq:commutation_quantization2}
\end{align}
Further problems might occur in practice (for instance, if in the classical Hamiltonian terms like $x.p$ appears there is a choice of ordering to take because while $x$ and $p$ commute classically, they don't commute anymore when promoted into operators), but we will not be facing them in what follows such that we can ignore those issues. This "resemblance" in the classical and quantum theory has originally been discovered by Dirac in 1925 \cite{basdevant2014principes}, and the mapping we describe here defines the so-called "canonical quantization" of a classical theory. This should not be understood as a general proof of how a quantum theory can be deduced from a classical theory but more as a guess which provides a recipe that works for many systems. Indeed, the natural order of things would be to deduce a classical theory from the quantum one, the latter being more fundamental. However, such a procedure will at least work for the class of system we are going to study in this Ph.D.

\subsubsection{An application: quantum description of an LC circuit}
To introduce the concepts we need behind the quantization of electromagnetic circuits, we can start with the LC circuit represented in the figure \ref{LC_circuit}.

\begin{figure}[h!]
\begin{center}
\includegraphics[width=0.3\textwidth]{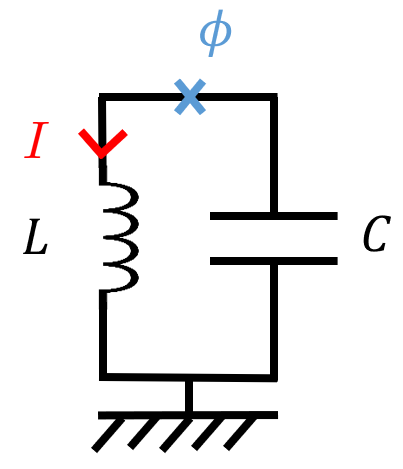}
\caption{The LC circuit we are considering}
\label{LC_circuit}
\end{center}
\end{figure}

In order to anticipate the usual way superconducting qubits are described, we will introduce a new physical quantity: the generalized flux variable. If we consider a point in the circuit where the electric potential is $V(t)$, the generalized flux variable is defined as \cite{vool2017introduction}:
\begin{align}
\phi(t) \equiv \int_{-\infty}^t V(t') dt'
\label{eq:generalized_flux}
\end{align}
In all what follows we will also assume that all electrical quantities vanish at $t=-\infty$. Calling $\phi$ the generalized flux associated to the voltage difference around the inductor, the current in the circuit satisfies $I=\phi/L=-C \ddot{\phi}$. It gives an equation of motion being, with $\omega_0=\frac{1}{\sqrt{LC}}$.
\begin{align}
\ddot{\phi}+\omega_0^2 \phi=0
\end{align}
To quantize the description we will proceed as explained in the section \ref{sec:canonical}. We consider $\phi$ being our generalized coordinate. A possible Lagrangian describing this dynamic is:
\begin{align}
\mathcal{L}(\phi,\dot{\phi})=\frac{C}{2}\dot{\phi}^2-\frac{\phi^2}{2 L}
\end{align}
The conjugated momentum associated to $\phi$ is:
\begin{align}
P \equiv \frac{\partial \mathcal{L}}{\partial \dot{\phi}}=C \dot{\phi}
\end{align}
We notice that $C \dot{\phi}$ physically corresponds to the charge on the negative plate of the capacitor. We can deduce the Hamiltonian:
\begin{align}
H(\phi,P) \equiv P \dot{\phi}-\mathcal{L}(\phi,\dot{\phi})=\frac{P^2}{2 C}+\frac{\phi^2}{2 L}
\label{eq:hamiltonian_LC}
\end{align}
The canonical quantization principle explained in \ref{sec:canonical} tells us that to obtain the quantum theory of this system, we just have to promote $\phi$ and $P$ into operators by imposing the commutation relation:
\begin{align}
&[\widehat{\phi},\widehat{P}]=i \hbar
\end{align}
We see that our choice of Lagrangian consists in considering the energy stored in the capacitor as kinetic energy (because associated to $P$), while the energy stored in the inductor is seen as potential energy. But there is nothing fundamental about that; it simply comes from our \textit{choice} of choosing $\phi$ as the coordinate. Another approach would have been to choose the charge stored by the capacitor as being the generalized coordinate. In this situation, the interpretation of kinetic and potential energy between inductor and capacitor would have been reversed. 

\textbf{A systematic approach to quantize electrical circuits}

In all that follows, the Lagrangian is always going to be written:
\begin{align}
\mathcal{L}=\mathcal{E}_{C}-\mathcal{E}_L,
\end{align} 
where $\mathcal{E}_{C}$ corresponds to the energy stored by the capacitive elements of the circuit, and $\mathcal{E}_L$ to the inductive ones. The generalized coordinate is always going to be the generalized flux \eqref{eq:generalized_flux}. A precise definition of capacitive and inductive elements can be found in \cite{vool2017introduction}. In this work, we will only have capacitors, inductors, and Josephson junction (that we are going to introduce). The Josephson junction will be considered as an inductive element, and its associated potential energy is going to be taken into account in $\mathcal{E}_L$. This approach to quantize electrical circuits is the one proposed in \cite{vool2017introduction}.
\subsection{Superconducting qubit and transmon regime}
Now that the basic principles of how one can have a quantum description of an electrical circuit have been explained, we are going to see how we can engineer a superconducting qubit in practice and what is the so-called transmon regime.
\subsubsection{The need for non-linear oscillators}
A quantum computer usually requires quantum systems having quantum states living in a two-dimensional Hilbert space: qubits. Ideally, we would like to create a physical system for which the quantum state exactly lives in a bi-dimensional Hilbert space. In the context of superconducting qubits, it is not possible. The strategy is then to create an approximation of a qubit. This is often done by considering a two-dimensional subspace of a physical system of high dimensions.

One way to do it experimentally is to realize a non-linear quantum oscillator. A non-linear quantum oscillator has energy levels that are not equally spaced, as represented in Fig \ref{potential_anharmonic}, in comparison to a harmonic oscillator. The advantage in using such a physical system is that assuming the initial state of the system lives in the Hilbert space spanned by the two lowest energy eigenstates, by sending signals resonant with the associated energy transition, the system will remain in this bi-dimensional subspace.

\begin{figure}[h!]
\begin{center}
\includegraphics[width=0.7\textwidth]{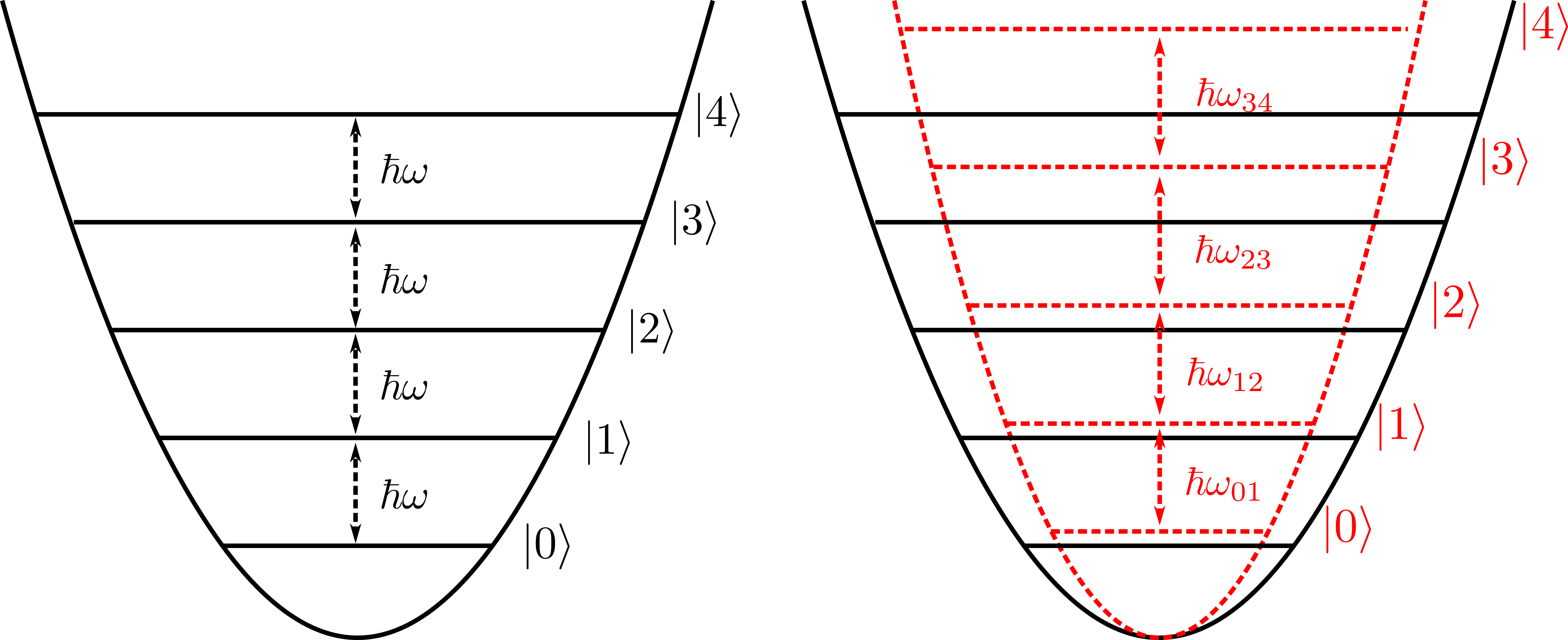}
\caption{Left: an harmonic potential. All level are equally spaced by the energy $\hbar \omega$. Right: an anharmonic potential. Because the potential is no longer quadratic, the quantum levels will not be equally spaced.}
\label{potential_anharmonic}
\end{center}
\end{figure}

As we are going to see, designing such systems can be done by replacing the inductor composing the LC oscillator with an element called the Josephson junction that we are now going to introduce. 
\subsubsection{The Josephson junction}
A Josephson junction is a two-terminal electric component composed of two superconducting electrodes separated via an insulator. In terms of description, it can be modeled by a capacitor in parallel to a Josephson tunnel element as shown in Figure \ref{josephson_junction}. To describe the electrical properties of a Josephson tunnel element, we will use again the generalized flux corresponding to the voltage applied to this element.  

\begin{figure}[h!]
\begin{center}
\includegraphics[width=0.8\textwidth]{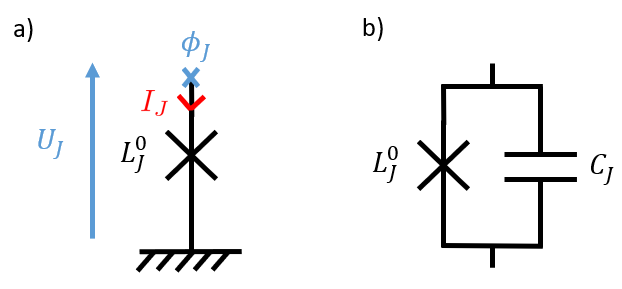}
\caption{\textbf{a)} Electrical representation of the Josephson tunnel element. \textbf{b)} Electrical model of a Josephson junction:  it is composed of a Josephson tunnel element in parallel to a capacitor.}
\label{josephson_junction}
\end{center}
\end{figure}

Calling $\phi_J$ the generalized flux associated to the voltage around the Josephson tunnel element, the electrical equations are:
\begin{align}
&\phi_J(t) \equiv \int_{-\infty}^t U_J(t') dt'\\
&I_J(t)=I_0 \sin\left(\frac{\phi_J(t)}{\varphi_0}\right)
\label{eq_current_josephson}\\
&\varphi_0 \equiv \frac{\hbar}{2e}
\end{align}
$\varphi_0$ is called the superconducting flux quantum, $2e$ physically represents the charge of a Cooper pair which are the particle leading to the superconducting current. Typical experimental values for the current $I_0$ lie between the $\mu A$ and $nA$ \cite{vool2017introduction}.

This element can be seen as a non-linear inductance. To understand what it precisely means, we recall that the relationship between current and voltage around an inductor of inductance $L$ is: $U(t)=L \frac{d I(t)}{dt}$. Expressing those quantities in term of the generalized flux associated to $U$, we can see that the inductance $L$ satisfies $\frac{d \phi}{dt}= L \frac{d I}{dt} \Leftrightarrow L=\frac{d \phi}{d I}$. The relationship between $d \phi$ and $dI$ is linear as $L$ does not depends on $\phi$. It motivates us to define the inductance of the pure Josephson element as:
\begin{align}
&L_J(\phi_J) \equiv \frac{d \phi_J}{d I_J}=\frac{L_J^0}{\cos(\phi_J/\varphi_0)}\\
&L_J^0 \equiv \frac{\varphi_0}{I_0}
\end{align}
We notice that now, the inductance is no longer constant and depends on the flux $\phi_J$. It implies that the relationship between $d \phi$ and $d I$ is not linear. This is what is usually meant by considering that the Josephson junction is a non-linear element. We also introduced the quantity $L_J^0$, which is called the zero-flux Josephson inductance.

Finally, in order to quantize the description, we need to express the energy stored by the Josephson tunnel element. The energy it receives between $t_0$ and $t_1$ is:
\begin{align}
\mathcal{E}=\int_{t_0}^{t_1} dt \ U(t) I(t) = \int_{t_0}^{t_1} dt \  \dot{\phi} I_0 \sin(\frac{\phi}{\varphi_0})= \left[-I_0 \varphi_0 \cos(\frac{\phi(t)}{\varphi_0})\right]_{t_0}^{t_1}=\left[-\frac{\varphi_0^2}{L_J^0} \cos(\frac{\phi(t)}{\varphi_0})\right]_{t_0}^{t_1},
\end{align}
where $[f(x)]_{a}^{b} \equiv f(b)-f(a)$. This energy is thus stored into potential energy and has the expression:
\begin{align}
&E_p(\phi) = -E_J \cos \left(\frac{\phi(t)}{\varphi_0} \right)\\
&E_J \equiv \frac{\varphi_0^2}{L_J^0},
\end{align}
where $E_J$ is called the Josephson energy \cite{krantz2019quantum}. 

At this point, we explained how a Josephson junction can be understood from an electrical point of view. This is enough to understand how we can engineer and then control a superconducting qubit.
\subsubsection{Engineering a superconducting qubit} 
Our goal is to use the Josephson junction in order to create a potential that will deviate from a harmonic oscillator. The reason is that it will provide us with two "well isolated" energy levels, which will allow us to define the superconducting qubit properly, as explained in a previous section. To do it, the usual way is to put the Josephson junction in parallel to a capacitor as represented in Figure \ref{superconducting_qubit}. This extra capacitor is just here to increase the total capacitance in the circuit for a reason given below.

\begin{figure}[h!]
\begin{center}
\includegraphics[width=0.8\textwidth]{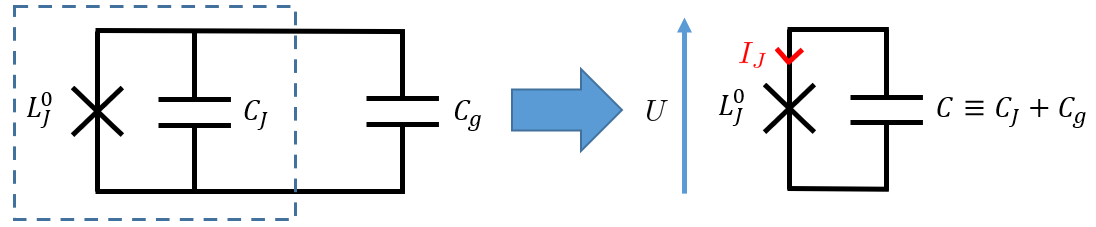}
\caption{A Josephson junction (inside the blue dotted box) is put in parallel to a capacitor of capacitance $C_g$. The resulting entity is an equivalent Josephson junction with a now greater capacitance. It will correspond to the "hardware" implementation of a superconducting qubit.}
\label{superconducting_qubit}
\end{center}
\end{figure}

Now, we can write down a Lagrangian describing the equation of motion in order to quantize the circuit. We will consider the generalized coordinate being the generalized flux associated to the potential difference $U$ as represented in Figure \ref{superconducting_qubit}. As we saw previously, a good "hint" as a Lagrangian is to consider it being equal to the energy stored by the capacitive element, minus the one stored by the inductive. Considering $\phi$ as the generalized coordinate, we get:
\begin{align}
&\mathcal{L}(\phi,\dot{\phi}) = \mathcal{E}_C-\mathcal{E}_L=\frac{1}{2} C \dot{\phi}^2+E_J \cos(\frac{\phi(t)}{\varphi_0})\\
\end{align}
Applying Euler Lagrange equations \eqref{eq:euler_lagrange}, it provides the expected equations of motion\footnote{Indeed, with the convention used in Figure \ref{superconducting_qubit}, the voltage $U$ at the boundary of the capacitor satisfies: $U=-Q/C$ where $Q$ is the charge on the positive plate of the capacitor. Replacing $U$ with the generalized flux and taking the derivative to time, we obtain: $\ddot{\phi}=-\frac{I}{C}=-\frac{I_0 \sin(\frac{\phi}{\varphi_0})}{C}$ which is equivalent to \eqref{eq:euler_lagrange_superconducting_qubit}.}:
\begin{align}
\frac{d}{dt} \frac{\partial \mathcal{L}}{\partial \dot{\phi}}=\frac{\partial \mathcal{L}}{\partial \phi} \Leftrightarrow  \ddot{\phi} = -\frac{\varphi_0}{C L_J^0} \sin(\frac{\phi(t)}{\varphi_0})=-\frac{I_0}{C}\sin(\frac{\phi(t)}{\varphi_0})
\label{eq:euler_lagrange_superconducting_qubit}
\end{align}
We can now find the momentum and Hamiltonian:
\begin{align}
& P \equiv \frac{\partial \mathcal{L}}{\partial \dot{\phi}}=C \dot{\phi}\\
& H(\phi,P) \equiv P \dot{\phi} - \mathcal{L}(\phi,\dot{\phi})=\frac{E_C P^2}{e^2}-E_J \cos(\frac{\phi}{\varphi_0})\\
&E_C \equiv \frac{e^2}{2C}
\end{align}
We notice that the momentum physically represents the charge (up to a sign) that is stored inside the capacitor. The quantity $E_C$ is called the charging energy \cite{krantz2019quantum}.
Promoting $\phi$ and $P$ into operators and imposing the commutation relation $[\widehat{\phi},\widehat{P}]=i \hbar$, the theory is now quantized.  

Now, we will adopt a perturbative approach. Performing a Taylor expansion of the $\cos$ in the variable $\phi/\varphi_0$, and neglecting any constant term which doesn't play a role in the dynamic, it gives, up to order four in $\widehat{\phi}$:

\begin{align}
H(\widehat{\phi},\widehat{P})=\frac{E_C \widehat{P}^2}{e^2}+E_J \frac{\widehat{\phi}^2}{\varphi_0^2} -E_J \frac{\widehat{\phi}^4}{4! \varphi_0^4}+ O\left(\widehat{\phi}^6\right)
\label{eq:hamiltonian_superconducting_qubit_firstorders1}
\end{align}
Expressing $E_C$ and $E_J$ as a function of $C$ and $L_J^0$, we realize that up to order two in $\widehat{\phi}$, \eqref{eq:hamiltonian_superconducting_qubit_firstorders1} has an expression analog to \eqref{eq:hamiltonian_LC}:
\begin{align}
H=\frac{\widehat{P}^2}{2C}+ \frac{\widehat{\phi}^2}{2 L_J^0} +O\left(\widehat{\phi}^4\right)
\end{align} 
It means that at the lowest order, the device we engineered behaves as a quantum harmonic oscillator. But the higher-order terms introduce the an-harmonicity that we are looking for.

\subsubsection{From anharmonic oscillator to superconducting qubit}
Now that we understand how the Josephson junction allowed us to create an anharmonic potential, we can see how it allows us to find two isolated energy levels that are going to play the role of the computational states $\ket{0}$ and $\ket{1}$ for the qubit. Up to order two, the Hamiltonian has the shape of a mechanical harmonic oscillator having a mass $m=e^2/2 E_C=C$ and frequency $\omega_0=\sqrt{8 E_J E_C}/\hbar=\frac{1}{\sqrt{L_J^0 C}}$. Thus, if we introduce the annihilation and creation operators associated to a harmonic oscillator, defined as \cite{cohen1998mecanique}:
\begin{align}
&\widehat{a} \equiv \sqrt{\frac{C \omega_0}{2 \hbar}} (\widehat{\phi}+\frac{i}{m \omega_0} \widehat{P}) \label{eq:annihilation_operator}\\
&\widehat{a}^{\dagger} \equiv \sqrt{\frac{C \omega_0}{2 \hbar}} (\widehat{\phi}-\frac{i}{m \omega_0} \widehat{P}),
\end{align}
and if we re express the physics with those operators, we obtain (removing all the terms higher than order $4$ in $\widehat{\phi}$)
\begin{align}
&\widehat{H} = \hbar \omega_0 \widehat{a}^{\dagger} \widehat{a} +\widehat{V}
\label{eq:hamiltonian_superconducting_qubit_firstorders2}\\
&\widehat{V} \equiv - \frac{E_C}{12}(\widehat{a}+\widehat{a}^{\dagger})^4
\label{eq:potential_V}
\end{align}
The energy levels of the qubit will correspond to the two lowest energy states of this Hamiltonian. First order perturbation theory provides the energy gap between the two first levels, $\hbar \omega_{01}$, and the two following $\hbar \omega_{12}$:
\begin{align}
&\hbar \omega_{01}=\hbar \omega_0-E_C \label{eq:omega01}\\
&\hbar \omega_{12}=\hbar \omega_{01}+E_C
\label{eq:omega12}
\end{align}
The derivations leading to those results are performed in further details in the appendix \ref{app:superconducting_qubit_frequency}.
We notice that the charging energy is the energy that will give the strength of the an-harmonicity of the potential. For low values of $E_C$, $\omega_{01} \approx \omega_{12}$: the behavior gets closer to the harmonic oscillator ($\omega_{01}$ is typically in the GHz range for superconducting qubits, we are going to provide typical values in the section \ref{sec:state_of_the_art}). For this reason, it seems at first view preferable to have a high value for the charging energy, more precisely: $E_C \gg \hbar \omega_{01} \Leftrightarrow E_C \gg E_J$ in order to have a good qubit. Indeed the potential will be greatly an-harmonic, and the two first levels would have a better "insulation" when being coherently driven. However, in the regime $E_C \gg E_J$ the qubit becomes more sensible to charge noise which is experimentally challenging to suppress \cite{krantz2019quantum}. For those reasons, the community has actually chosen to design superconducting qubits in the regime $E_J \gg E_C$ which induces a low anharmonicity of the potential but makes the qubit less sensitive to charge noise. Qubits realized in this regime are called transmon qubits. We can also understand better the role of the extra capacitance $C_g$ that we have added: it will allow experimentally to increase the charging energy in order to reach the transmon regime. Finally, the Hamiltonian describing our qubit can simply be approximated as:
\begin{align}
H = \hbar \omega_0 \widehat{a}^{\dagger} \widehat{a} +\widehat{V} \approx -\frac{\hbar \omega_{01}}{2} \widehat{\sigma}_z.
\end{align}
The reason for the minus sign ($-\hbar \omega_{01}/2$) is related to the fact that in all this thesis, we take the conventions used in quantum information (which are sometimes different than the ones used in the quantum optic community). For us, the Bloch sphere \cite{cohen1998mecanique} which allows representing the state of a two-level system has its north pole being $\ket{0}$, which corresponds to the ground state (i.e., the state of lowest energy) of the system. On the south pole, there is the state $\ket{1}$ which corresponds to the excited state. Considering that all the matrices are written in the ordered basis $(\ket{0},\ket{1})$, it implies that we must have $H=-\hbar \omega_{01}/2 \widehat{\sigma}_z$. The figure \ref{fig:bloch_sphere} summarizes this discussion.
\begin{figure}[h!]
\begin{center}
\includegraphics[width=0.5\textwidth]{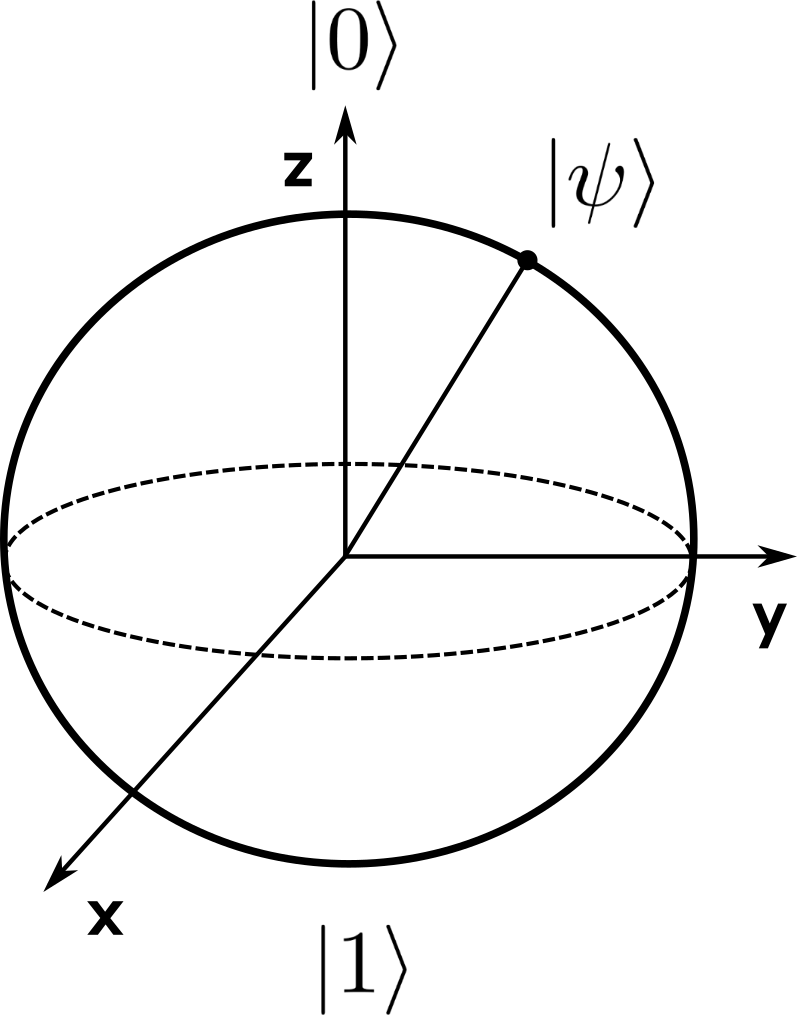}
\caption{The convention for the Bloch sphere we are taking in this PhD follows the convention from the quantum information community where the $\ket{0}$ state is at the north pole, and the $\ket{1}$ at the south pole.}
\label{fig:bloch_sphere}
\end{center}
\end{figure}
\subsubsection{Driving the qubit}
So far, we have explained how to obtain two well-isolated energy levels in order to have a qubit. But for an information processing task, we must be able to control it, i.e., to perform single-qubit gate operations (and also two-qubit gate operations, but we will only briefly comment on those in the section \ref{sec:state_of_the_art_2qb}. 

One possible way to realize single qubit gates is to couple capacitively the qubit to a voltage source as represented in Figure \ref{driving_superconducting_qubit}. 
\begin{figure}[h!]
\begin{center}
\includegraphics[width=0.8\textwidth]{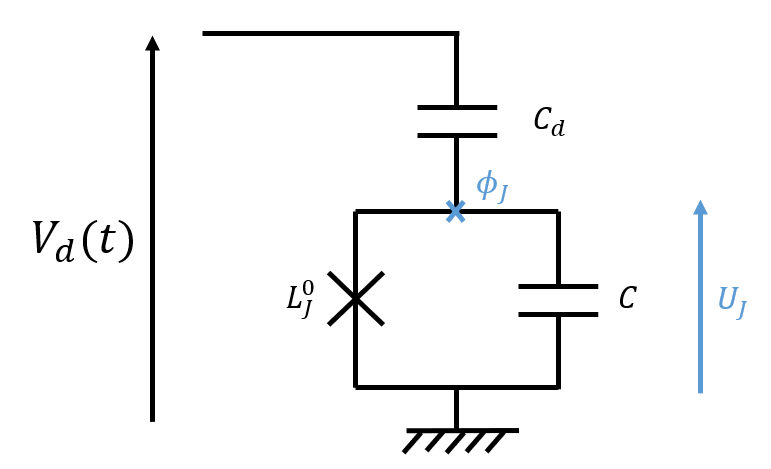}
\caption{Putting an extra capacitance $C_d$ and driving it with a voltage $V_d(t)$ allows to implement single qubit gate operations.}
\label{driving_superconducting_qubit}
\end{center}
\end{figure}

One Lagrangian describing the dynamic represented in Figure \ref{driving_superconducting_qubit} is:
\begin{align}
\mathcal{L}(\phi_J,\dot{\phi}_J)=\frac{1}{2} C_d (V_d(t)-\dot{\phi}_J)^2 +  \frac{1}{2} C_J \dot{\phi}_J^2 -(-E_J \cos(\frac{\phi_J}{\varphi_0})),
\end{align}
where the momentum conjugated to $\phi_J$ is:
\begin{align}
p_J \equiv \frac{\partial \mathcal{L}}{\partial \dot{\phi}_J} = (C_J+C_d)\dot{\phi}_J-C_d V_d(t)
\end{align}
The Hamiltonian can be deduced:
\begin{align}
&H\equiv p_J \dot{\phi}_J - \mathcal{L}=\frac{p_J^2}{2 C_{\Sigma}} - E_J \cos(\frac{\phi_J}{\varphi_0}) + \frac{C_d}{C_{\Sigma}} p_J V_d(t)\\
&C_{\Sigma} \equiv C+C_d
\end{align}
We now promote the flux and momentum into quantum operators, imposing the canonical commutation relation $[\widehat{\phi}_J,\widehat{p}_J]=i \hbar$, and, up to order four in the variable $\widehat{\phi}_J$, we get:
\begin{align}
&\widehat{H}=\widehat{H}_0+\widehat{H}_d\\
&\widehat{H}_0 \equiv \frac{\widehat{p}_J^2}{2 C_{\Sigma}} + \frac{E_J^0}{2 \varphi_0^2} (\widehat{\phi}_J)^2 - \frac{E_J^0}{6 \varphi_0^4} (\widehat{\phi}_J)^4 + O((\widehat{\phi}_J)^6)\\
&\widehat{H}_d \equiv \frac{C_d}{C_{\Sigma}} \widehat{p}_J  V_d(t).
\end{align}
We recognize $\widehat{H}_0$ as corresponding to the free Hamiltonian of a superconducting qubit having a Josephson energy $E_J$ and a charge energy $E_C=2C_{\Sigma}/e^2$. $\widehat{H}_d$ correspond to the interaction of the qubit with the time-varying voltage that will perform the rotations. In order to see it more clearly, we will express all the operators as a function of the harmonic oscillator annihilation and creation operators defined in \eqref{eq:annihilation_operator}. Neglecting the terms higher or equal to order four in $\widehat{\phi}_J$, we get:
\begin{align}
&\widehat{H}_0 = \hbar \omega_{0} \widehat{a}^{\dagger} \widehat{a}\\
&\widehat{H}_d(t)=\frac{C_d}{C_{\Sigma}} \sqrt{\frac{\hbar}{2}\sqrt{\frac{C_{\Sigma}}{L_J^0}}}i(\widehat{a}^{\dagger}-\widehat{a}) V_d(t)\\
&\omega_0 \equiv \frac{1}{\sqrt{L_J^0 C_{\Sigma}}}.
\label{eq:transmon_approx_harmonic_oscillator_frequency}
\end{align}
Finally, if we "cut" the Hilbert space to the two firsts energy levels, which means in this context replacing: $\widehat{a} \to \widehat{\sigma}_-$, $\widehat{a}^{\dagger} \to \widehat{\sigma}_+$ we get:
\begin{align}
&\widehat{H}(t)=-\frac{\hbar \omega_{0}}{2} \widehat{\sigma}_z + \hbar g V_d(t) \widehat{\sigma}_y \label{eq:driven_superconducting_qubit}\\
&g \equiv  \frac{C_d}{C_{\Sigma}} \sqrt{\frac{1}{2 \hbar}\sqrt{\frac{C_{\Sigma}}{L_J^0}}}
\end{align}
We recognize in \eqref{eq:driven_superconducting_qubit} the Hamiltonian of a two-level system driven classically. To see more clearly the operation done on the qubit, we can go in the interacting picture with respect to the free Hamiltonian $-\frac{\hbar \omega_{0}}{2}\widehat{\sigma}_z$. Assuming a resonant drive: $V_d(t)=V_0 \cos(\omega_{0} t+\psi+\pi/2)$, and applying the rotating wave approximation \cite{haroche2006exploring} the Hamiltonian of the evolution reads:
\begin{align}
\widehat{H}_I(t)=\frac{\hbar g V_0}{2} \vec{n}.\vec{\widehat{\sigma}},
\end{align}
where $\overrightarrow{\widehat{\sigma}} \equiv (\widehat{\sigma}_x,\widehat{\sigma}_y,\widehat{\sigma}_z)$. It induces a rotation of the qubit along the $\vec{n}=(\cos(\psi),\sin(\psi),0)$ axis of the Bloch sphere at the Rabi frequency $\Omega = g V_0$: the phase in the voltage drive is selecting the axis of rotation.
\section{Scaling up the devices: arranging superconducting qubits in waveguide}
Up to this point, we described how a superconducting qubit can be designed and what are the main approximations behind such construction. We also showed how it can be controlled experimentally in order to implement gate operations. We saw that single-qubit gates are typically performed by applying an appropriate oscillating voltage on the superconducting qubit. However, our approach relied on two main assumptions. The first one is that we assumed that the voltage can be applied "instantaneously" on the qubit, i.e., there is no delay between the moment the voltage is generated and the moment it interacts with the qubit. Unfortunately, if we want to scale up quantum computers, it will not be possible to put the signal generation close enough to all the qubits in order to make this assumption valid. In typical microwave experiments, the signal propagation delay cannot be neglected \cite{pozar2011microwave}. The second one is that we assumed the voltage is a classical entity. Understanding it this way doesn't allow us to understand phenomenons like spontaneous emission, which are due to the interaction between a two-level system and a quantized environment composed of a continuum of modes \cite{gardiner1985input}. Spontaneous emission, being one of the major limitations in the qubit lifetime, has to be taken into account with the modeling in order to have an accurate description of the physics. The role of this section is thus to have more accurate models.
\subsection{Classical description of a waveguide}
We will begin by providing the classical description of a transmission line such as a coaxial cable or a coplanar waveguide. The typical wavelength of signals propagating into coaxial cables or waveguides is in the cm to mm range which can be comparable or smaller than the typical size of the circuits considered. In order to properly model the electromagnetic phenomenon, we then have to take into account the propagation phenomenon. One model to describe an electrical waveguide is based on a lumped-element circuit model \cite{pozar2011microwave} as represented on Figure \ref{fig:lossless_transmission_line}. The motivation behind this model is to acknowledge the fact that to send a voltage, one necessarily needs two lines between which the voltage difference will occur. Those metallic lines are separated by some insulator, and because of that, a "parasitic" capacitance will be formed, represented by the capacitance per unit length $c_0$ on Figure \ref{fig:lossless_transmission_line}. Also, the current flowing on those lines might generate a flux inducing a counter-acting electromotive force. This is taken into account by considering a "parasitic" inductance per unit length $l_0$.
\begin{figure}[h!]
\begin{center}
\includegraphics[width=0.8\textwidth]{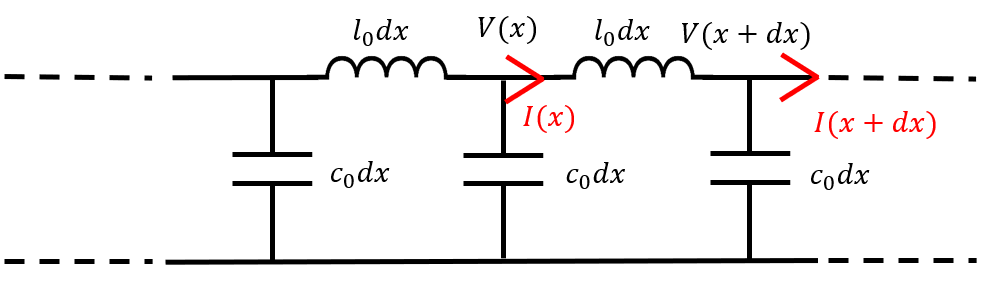}
\caption{Lumped-element model for a lossless transmission line. The inductance per unit length $l_0$ represent the flux that might be induced by the current between the positive and ground planes. The capacitance per unit length $l_0$ the charge accumulation between positive and mass line.}
\label{fig:lossless_transmission_line}
\end{center}
\end{figure}

Now, we can apply Kirchhoff laws between $x$ and $x+dx$ in order to find the equations that current and voltage are following. We find:
\begin{align}
&V(x,t)-V(x+dx,t)=l_0 dx \frac{\partial I(x,t)}{\partial t}\\
&c_0 dx \frac{\partial V(x+dx,t)}{\partial t}=I(x,t)-I(x+dx,t),
\end{align}
from which we deduce that telegraph equations \cite{pozar2011microwave}:
\begin{align}
&\frac{\partial V(x,t)}{\partial x}=-l_0 \frac{\partial I(x,t)}{\partial t} \label{eq:telegraph1}\\
&\frac{\partial I(x,t)}{\partial x}=-c_0 \frac{\partial V(x,t)}{\partial t}.
\label{eq:telegraph2}
\end{align}
Those equations imply that the voltage and current follow a one dimensional wave equation with a velocity $c=1/\sqrt{l_0 c_0}$:
\begin{align}
&\frac{\partial^2 V}{\partial x^2}-\frac{1}{c^2} \frac{\partial^2 V}{\partial t^2}=0 \label{eq:telegraphwavesV}\\
&\frac{\partial^2 I}{\partial x^2}-\frac{1}{c^2} \frac{\partial^2 I}{\partial t^2}=0 \label{eq:telegraphwavesI}
\end{align}
Both voltage and current can then be expressed as sum of forward and backward propagating modes:
\begin{align}
&V(x,t)=V_{\rightarrow}(x-ct)+V_{\leftarrow}(x+ct)\\
&I(x,t)=I_{\rightarrow}(x-ct)+I_{\leftarrow}(x+ct)
\end{align}
where we can decompose the waves on Fourier modes, for $F \in \{I,V\}$:
\begin{align}
&F_{\rightarrow}(x-ct)=\sum_{\omega > 0} \widetilde{F}_{\rightarrow}(\omega)e^{j(\omega(t-x/c))}+c.c\\
&F_{\leftarrow}(x+ct)=\sum_{\omega > 0} \widetilde{F}_{\leftarrow}(\omega)e^{j(\omega(t+x/c))}+c.c
\end{align}
In these equations, $c.c$ means complex conjugate. Using the telegraph equation \eqref{eq:telegraph2}, we realize that:
\begin{align}
&\widetilde{V}_{\rightleftarrows}(\omega)=\pm Z_0 \widetilde{I}_{\rightleftarrows}(\omega) \label{eq:relation_voltage_current_freq} \\
&V_{\rightleftarrows}(x-ct)=\pm Z_0 I_{\rightleftarrows}(x-ct)\\
&Z_0 \equiv \sqrt{\frac{l_0}{c_0}}
\end{align}
Basically, the right/left voltage moving waves have the same/opposite sign to the associated currents up to the quantity $Z_0$ called the impedance of the line, typically about $50 \Omega$. The power flow associated with the right or left moving waves satisfies \cite{clerk2010introduction}:
\begin{align}
P_{\rightleftarrows} \equiv V_{\rightleftarrows} I_{\rightleftarrows} = \pm \frac{V_{\rightleftarrows}^2}{Z_0}
\label{eq:power_wave_voltage_classical}
\end{align}
This result can be shown from an electromagnetic treatment of the lines (it comes from the Poynting vector integrated on the waveguide cross-section \cite{pozar2011microwave}).

From this, we understand that right moving waves will be associated with a positive power flow. More precisely, let's assume we are at the position $x$. A product $V_{\rightarrow} I_{\rightarrow} > 0$ represents a power "dissipated" to the $[x,+\infty]$ portion of the waveguide, while a negative product is a power generated by this same part. This interpretation comes from the convention of sign for voltage and currents as defined in Figure \ref{fig:lossless_transmission_line}. The reason for this interpretation, while having only non-dissipative elements in the lumped-element model, is that for an observer in $x$, a right moving wave at this position represents a power that will be going on the $[x,+\infty]$ portion of the waveguide and will thus be "lost" for the observer. An opposite interpretation holds for the left moving waves. 
\begin{figure}[h!]
\begin{center}
\includegraphics[width=0.5\textwidth]{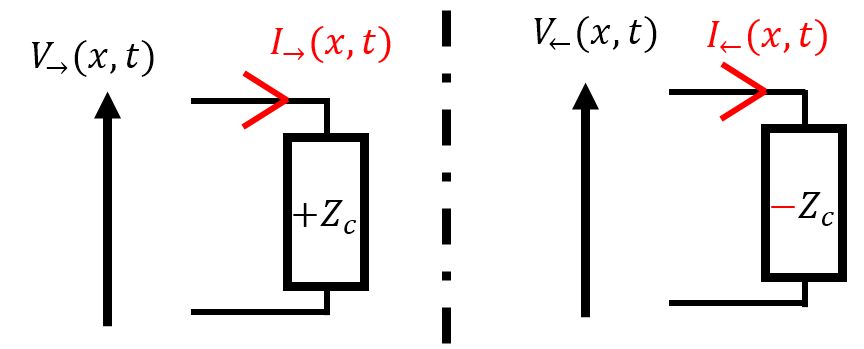}
\caption{Power flow in a waveguide. The waveguide can be seen as a resistor dissipating power for the right moving waves, while it is seen as a negative resistor for the left moving ones.}
\label{fig:diff_convention_power}
\end{center}
\end{figure}
\subsection{Quantum description and coupling to a qubit}
\subsubsection{Quantization of the line}
Now, we will go into the full quantum framework by considering the waveguide being quantized interacting with a superconducting qubit. 

We consider a semi infinite waveguide grounded in $x=0$. Signals are being injected from $x=-\infty$. In $x=x_k=-L$ a superconducting qubit, capacitively coupled to the waveguide has been put. The setup is illustrated on Figure \ref{fig:transmon_capacitively_coupled_semi_infinite_waveguide}.
\begin{figure}[h!]
\begin{center}
\includegraphics[width=0.8\textwidth]{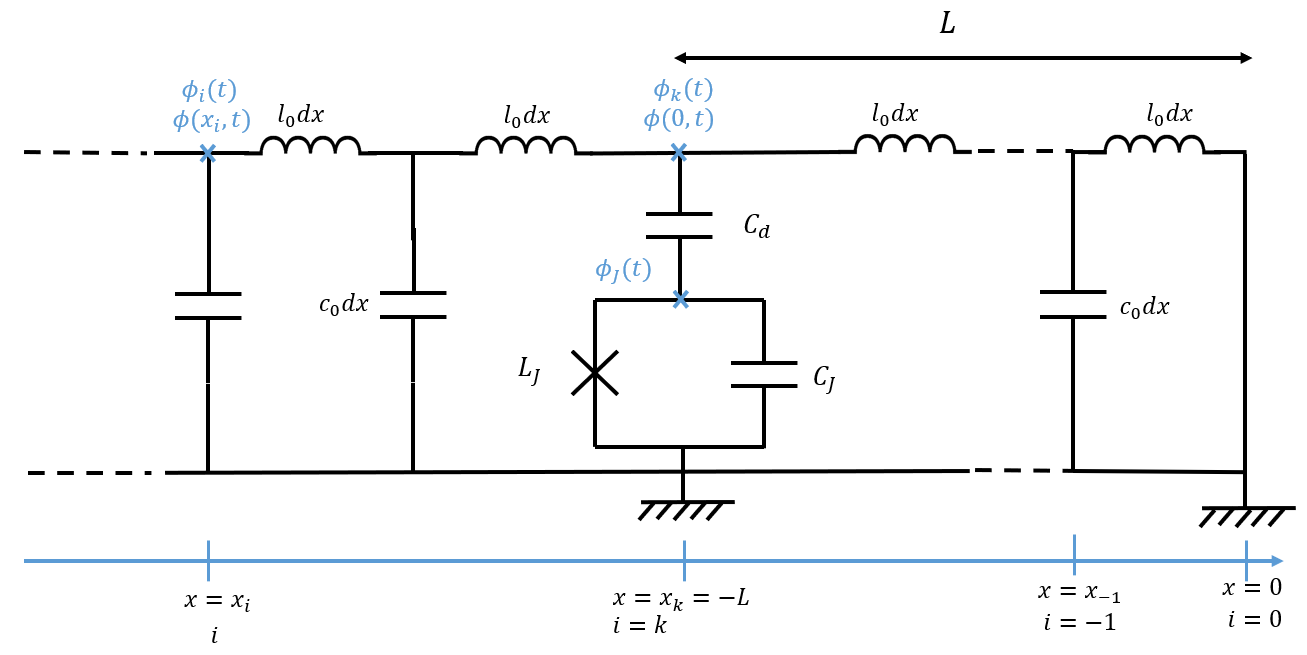}
\caption{A superconducting qubit is capacitively coupled to a semi-infinite waveguide. We first describe the physics with a discretized model, where the flux variable associated to the waveguide are indexed by integer numbers: $\{\phi_i(t)\}_{i=0,-1...-\infty}$ before considering the continuous version $\phi_i(t) \to \phi(x_i,t)$.}
\label{fig:transmon_capacitively_coupled_semi_infinite_waveguide}
\end{center}
\end{figure}

We first consider a discretized model. The Lagrangian of both the line and the superconducting qubit can be written as:
\begin{align}
\mathcal{L}=\sum_{i=-1, i \neq k}^{-\infty} \frac{1}{2} c_0 dx \dot{\phi}_i^2  +\frac{1}{2} C_d (\dot{\phi}_{k}-\dot{\phi}_{J}(t))^2+\frac{1}{2} C_J (\dot{\phi}_{J}(t))^2+E_J \cos\left(\frac{\phi_{J}}{\varphi_0}\right)- \sum_{i=-1}^{-\infty} \frac{1}{2 l_0 dx} (\phi_i-\phi_{i+1})^2
\end{align}
We deduce the canonical momentum associated to the coordinates $\{\phi_J,\phi_0,\phi_{-1},...\}$:
\begin{align}
&p_{J} = \frac{\partial \mathcal{L}}{\partial \dot{\phi}_{J}}=\dot{\phi}_{J}(C_J+C_d)-C_d \dot{\phi}_k\\
&p_k = \frac{\partial \mathcal{L}}{\partial \dot{\phi}_k}=C_d (\dot{\phi}_k-\dot{\phi}_J)\\
&p_{i<0,i \neq k}=c_0 dx \dot{\phi}_i,\\
\end{align}
and we can find the Hamiltonian:
\begin{align}
H \equiv &\sum_{i \leq 0} p_i \dot{\phi}_i + p_J \dot{\phi}_J-\mathcal{L} \notag \\
&= \sum_{i=-1, i \neq k}^{-\infty} \frac{1}{2 c_0 dx} p_i^2  +\frac{1}{2 C_d} p_k^2+\frac{1}{2 C_J}(p_k+p_J)^2-E_J \cos(\frac{\phi_{J}}{\varphi_0})+ \sum_{i=-1}^{-\infty} \frac{1}{2 l_0 dx} (\phi_i-\phi_{i+1})^2
\end{align}
In order to take the continuum limit, we define the generalized flux and momentum density fields as:
\begin{align}
&\phi(x_i,t) \equiv \phi_i(t)\\
&p(x_i,t) \equiv \frac{p_i(t)}{dx}.
\end{align}
However, we will keep the momentum $p_k$ as an independent variable not included in the field $p(x,t)$. Assuming a low waveguide impedance $Z_0$, and performing the continuous limit, the Hamiltonian describing the waveguide and qubit can be shown \cite{wiegand2020semiclassical} to be equivalent to: 
\begin{align}
H=\int_{-\infty}^0 dx \left(\frac{1}{2 c_0} p^2(x,t)+\frac{1}{2 l_0} (\partial_x \phi(x,t))^2 \right) + \frac{p_J^2}{2(C_d+C_J)}-E_J \cos(\frac{\phi_J}{\varphi_0})-\frac{C_d}{C_d+C_J}\frac{p(-L)}{c_0} p_J
\end{align}
And here we can identify the Hamiltonian of the waveguide, of the superconducting qubit and of the interaction between both:
\begin{align}
&H_{\text{Waveguide}}=\int_{-\infty}^0 dx \left(\frac{1}{2 c_0} p^2(x,0)+\frac{1}{2 l_0} (\partial_x \phi(x,0))^2 \right) \label{eq:Hwaveguide}\\
&H_{\text{J}}=\frac{p_J(0)^2}{2 C_{\Sigma}}-E_J \cos(\frac{\phi_J(0)}{\varphi_0}) \label{eq:HJ}\\
&H_{\text{int}}=-\frac{C_d}{C_{\Sigma}}\frac{p(-L,0)}{c_0}p_J(0) \label{eq:Hint}
\end{align}
Where we defined $C_{\Sigma}\equiv C_d+C_J$. We notice that we evaluated all the operators at the instant $t=0$ in this expression (the Hamiltonian of a closed system being time-invariant, it will give the appropriate dynamic).
Before quantizing, we can check that the Hamiltonian of the waveguide produces the wave equations \eqref{eq:telegraphwavesV}, and \eqref{eq:telegraphwavesI}. Indeed, $\mathcal{H}=\left(\frac{1}{2 c_0} p^2(x,t)+\frac{1}{2 l_0} (\partial_x \phi(x,t))^2 \right)$ corresponds to the Hamiltonian density, of a free propagating field \cite{peropadre2013scattering, goldstein2002classical}, leading to the equation of motion:
\begin{align}
\ddot{\phi}(x,t)-\frac{1}{c^2}\partial_x^2 \phi(x,t)=0
\label{eq:phi_free}
\end{align}
This equation implies eqs \eqref{eq:telegraphwavesV}, and \eqref{eq:telegraphwavesI} using the definitions $V(x,t)=\dot{\phi}(x,t)$, $I(x,t)=\partial_x \phi/l_0$ such that we find the results described in the previous part\footnote{$\partial_x \phi/l_0$ can be seen as a \textit{definition} of currents here as all the physics is described by the variable $\phi(x,t)$. Expressing it in terms of current needs to define within this context what the current is.}.

Now, as explained in the appendix \ref{app:quantized_fields} and \cite{wiegand2020semiclassical}, from the boundary condition $\dot{\phi}(0,t)=V(0,t)=0$, the field once quantized takes the expression, where we recognize stationnary modes:
\begin{align}
&\widehat{\phi}(x,t)=i \sqrt{\frac{\hbar Z_0}{\pi}}\int_{0}^{+\infty} \frac{d \omega}{\sqrt{\omega}}\sin(\frac{\omega x}{c}) \left(\widehat{b}(\omega) e^{-i \omega t}-\widehat{b}^{\dagger}(\omega) e^{i \omega t}  \right) \label{eq:phi_stat} \\
&\widehat{p}(x,t)=\sqrt{\frac{\hbar c_0}{c \pi}} \int_{0}^{+\infty} d \omega \sqrt{\omega} \sin(\frac{\omega x}{c}) \left(\widehat{b}(\omega) e^{ - i \omega_k t}-\widehat{b}^{\dagger}(\omega) e^{i \omega_k t} \right) \label{eq:p_stat}
\end{align}
The operators $\widehat{b}(\omega)$ here are in principle time dependent in the Heisenberg picture (because of the interaction with the transmon). They satisfy the bosonic commutation rules:
\begin{align}
&[\widehat{b}(\omega_1),\widehat{b}^{\dagger}(\omega_2)]=\delta(\omega_1-\omega_2)\\
&[\widehat{b}(\omega_1),\widehat{b}(\omega_2)]=[\widehat{b}^{\dagger}(\omega_1),\widehat{b}^{\dagger}(\omega_2)]=0
\end{align}

Finally, approximating the transmon as a harmonic oscillator and performing the rotating wave approximation \cite{haroche2006exploring}, as shown in the appendix \ref{app:waveguide_transmon_interaction}, the interaction can be described as:
\begin{align}
&H_{\text{Waveguide}}=\int_0^{+\infty} d\omega \hbar \omega \widehat{b}^{\dagger}(\omega) \widehat{b}(\omega) \label{eq:waveguide_final}\\
&H_{\text{J}}=\hbar \omega_0 \widehat{a}^{\dagger}_J \widehat{a}_J \label{eq:transmon_final}\\
&H_{\text{int}}=\int_0^{+\infty} d \omega \hbar g(\omega) (\widehat{a}_J e^{j \phi} \widehat{b}^{\dagger}(\omega)+\widehat{a}^{\dagger}_J \widehat{b}(\omega) e^{-j \phi}) \label{eq:int_final}
\end{align}
Where, here:
\begin{align}
&\phi=-\pi/2\\
&g(\omega)=\frac{C_d}{C_{\Sigma}} \sqrt{\frac{Z_0}{2 \pi L_J^0}} \sqrt{\frac{\omega}{\omega_0}} \sin(\frac{\omega L}{c})\\
\end{align}
In this section, we obtained a complete quantized description of a transmon (at this point approximated as a harmonic oscillator) capacitively coupled to a waveguide. 
\subsubsection{Input-output relations: dynamic of the qubit at $0$ temperature}
\label{sec:input_output}

Now, we are going to find the equation of motion of a driven superconducting qubit. We will then understand the origin of the noise in single-qubit operations. From now on, we will also remove the hats denoting the operators in order to simplify the notations. 

\textbf{Step 1: Approximating the Hamiltonian}

Our goal is now to find the equations of motion of the qubit being driven in the waveguide. We are also interested in finding the energetic cost in order to perform gate operations. One common way to access such quantities is via the input-output formalism. But to get to that point, we need to make a few approximations on the model. 

The first one is that we will approximate $g \equiv g(\omega) \approx g(\omega_0)$. This is called the Markov approximation \cite{gardiner1985input}. This approximation will be valid if one considers that the populated frequencies in the driving field are not spread too far away from the qubit frequency at $\omega_0$. More precisely, if the bandwidth $\Delta \omega$ of the injected signals satisfies $\Delta \omega/\omega_0 \ll 1$.

The second approximation will consist in extending the integral over frequencies down to $-\infty$, thus introducing "new" bosonic operators valid on negative frequencies. Such approximation will be correct as long as the populated frequency in the driving signal is high enough \cite{walls2007quantum}.

Finally, we will approximate the superconducting qubit by only considering its two first levels. It gives us the resulting Hamiltonian:

\begin{align}
&H=H_{\text{Waveguide}}+H_{\text{J}}+H_{\text{int}}\\
&H_{\text{Waveguide}}=\int_{-\infty}^{+\infty} d\omega \hbar \omega b^{\dagger}(\omega) b(\omega)\\
&H_{\text{J}}=-\frac{\hbar \omega_0}{2} \sigma_z\\
&H_{\text{int}}=\hbar g(\omega_0) \int_{-\infty}^{+\infty} d \omega (\sigma_J  b^{\dagger}(\omega)e^{i \phi}+\sigma_J^{\dagger} b(\omega)e^{-i \phi} )
\end{align}

\textbf{Step 2: solving the dynamic}

At this point, we won't make any further approximations and we will just solve the dynamic. First, we find the equation of motion for the field bosonic operators. In the Heisenberg picture, we get:
\begin{align}
\dot{b}(\omega,t)=\frac{1}{i \hbar}[b(\omega,t),H]=-i \omega b(\omega,t)-i g \sigma_J(t) e^{i \phi}
\label{eq:bath_operator_t_continuum}
\end{align}
This equation is a first order differential one. Postulating a solution: $b(\omega,t)=C(t) a^{-i \omega t}$, we can find $C(t)$ that solves the equation (this is known as the variation of constants method). We get in the end:
\begin{align}
b(\omega,t)=b(\omega,0)e^{-i \omega t}-ig \int_{0}^t \sigma_J(t') e^{-i \omega(t-t')} e^{i \phi} dt'
\end{align}
We now make the same resolution but for a given system operator: $O_S(t)$. We have:
\begin{align}
\dot{O_S}(t)=\frac{1}{i \hbar}[O_S,H]=\frac{1}{i \hbar}[O_S(t),H_J]-i g(\omega_0)\int_{-\infty}^{+\infty} d \omega e^{-i \phi} [O_S(t),\sigma_J^{\dagger}(t)] b(\omega,t)+e^{i \phi} b^{\dagger}(\omega) [O_S(t),\sigma_J(t)]
\end{align}
It finally gives us:
\begin{align}
&\dot{O_S}(t)=\frac{1}{i \hbar}[O_S(t),H_J]-i g(\omega_0)\int_{-\infty}^{+\infty} d \omega  \left(e^{-i \phi} [O_S(t),\sigma_J^{\dagger}(t)] b(\omega,0)e^{-i \omega t} +e^{i \phi} b^{\dagger}(\omega,0)e^{+i \omega t}[O_S(t),\sigma_J(t)]\right)\notag \\
&+g^2\int_{-\infty}^{+\infty} d \omega  \left(  \left(\int_{0}^t \sigma_J^{\dagger}(t') e^{+i \omega(t-t')}  dt'[O_S(t),\sigma_J(t)]\right)-\left( \int_{0}^t [O_S(t),\sigma_J^{\dagger}(t)] \sigma_J(t') e^{-i \omega(t-t')}  dt'  \right) \right)
\end{align}
At this point, we define the input field $b_{in}(t)$:
\begin{align}
b_{in}(t)=\frac{1}{\sqrt{2 \pi}}\int_{-\infty}^{+\infty} d \omega b(\omega,0) e^{-i \omega t}
\end{align}
This operator physically represent the free evolution of all the bosonic operators between $t=0$ and the instant $t$. This is why we call it the input field: as it corresponds to the evolution of the operator as if no interaction occured, it will correspond to what has been injected in the waveguide, thus the "input". Using this operator, we have:
\begin{align}
&\dot{O_S}(t)=\frac{1}{i \hbar}[O_S(t),H_J]-i g(\omega_0)\sqrt{2 \pi}\left(e^{-i \phi} [O_S(t),\sigma_J^{\dagger}(t)] b_{in}(t) +e^{i \phi} b^{\dagger}_{in}(t)[O_S(t),\sigma_J(t)]\right)\notag \\
&+\pi g^2 \left(  \sigma_J^{\dagger}(t) [O_S(t),\sigma_J(t)]-[O_S(t),\sigma_J^{\dagger}(t)] \sigma_J(t)  \right)
\end{align}

\textbf{Step 4: recognizing Bloch equations}

Finally, solving this equation for $O_S=\sigma$ or $O_S=A_{11} \equiv \ketbra{1}{1}$, we get:
\begin{align}
&\dot{\sigma}(t)=-i \omega_0 \sigma-ig(\omega_0)\sqrt{2 \pi} \sigma_z e^{-i \phi} b_{in} - \pi g^2 \sigma\\
&\dot{A}_{11}(t)=ig \sqrt{2 \pi} (b_{in}(t)^{\dagger} e^{i \phi} \sigma - e^{-i \phi} b_{in}(t) \sigma^{\dagger})-2 \pi g^2 A_{11}(t)
\end{align}
And, taking the average of those quantities, assuming a coherent state at frequency $\omega_0$ for the drive: $\ket{\alpha_{\omega_0}}$, we get\footnote{We assume to simplify that $\alpha_{\omega_0}>0$: the coherent state has "no phase".}:
\begin{align}
&\dot{\rho}_{10}=-i \omega_0 \rho_{10}-ig(\omega_0) \alpha_{\omega_0}e^{-j (\omega_0 t+\phi)}(\rho_{00}-\rho_{11})  - \pi g^2 \rho_{10}\\
&\dot{\rho}_{11}=ig(\omega_0) (\alpha_{\omega_0}e^{j (\omega_0 t+\phi)} \rho_{10} - \alpha_{\omega_0}e^{-j (\omega_0 t+\phi)} \rho_{01})-2 \pi g^2 \rho_{11}
\end{align}
We recognize that those equations have the same structure as the Bloch optical equations of a two-level system in contact with a bath at 0 temperature, classically driven by a field inducing rotation around the axis $\vec{n}=(\cos(\phi),\sin(\phi),0)$ of the Bloch sphere (it is represented on the figure \ref{fig:bloch_sphere}), at the Rabi frequency $\Omega$ \cite{cohen1998atom}:
\begin{align}
&\frac{\partial \rho}{\partial t}=\frac{1}{i \hbar}[H_S,\rho]+\gamma D[\sigma](\rho)\\
& D[\sigma](\rho) \equiv \left(\sigma \rho \sigma-\frac{1}{2}(\sigma^{\dagger} \sigma \rho + \rho \sigma^{\dagger} \sigma) \right)\\
& H_S \equiv -\frac{\hbar \omega_0}{2} \sigma_z + \frac{\hbar \Omega}{2}\left(\sigma e^{i \phi} e^{i \omega_0 t} + \sigma^{\dagger} e^{-i \phi} e^{-i \omega_0 t} \right)
\end{align}
Indeed, such equation admit the solution:
\begin{align}
\dot{\rho}_{10}&=-i \omega_0 \rho_{10} -\frac{i\Omega}{2}e^{-i(\phi+\omega_0 t)}( \rho_{00}-\rho_{11})-\frac{\gamma}{2} \rho_{10} \label{eq:drho10dt} \\
\dot{\rho}_{11}&=\frac{i \Omega}{2}\left( \rho_{10} e^{i \phi} e^{i \omega_0 t}-e^{-i \phi} e^{-i \omega_0 t} \rho_{01} \right)- \gamma \rho_{11} \label{eq:drho11dt}
\end{align}
The identification gives us:
\begin{align}
&g(\omega_0)=\sqrt{\frac{\gamma}{2 \pi}}\\
&\Omega=2 g(\omega_0) \alpha_{\omega_0}=\sqrt{\frac{2 \gamma}{\pi}} \alpha_{\omega_0}
\label{eq:relation_Omega_alpha}
\end{align}

In the end, here, we described the equation of motion of a superconducting qubit interacting with a continuum of bosonic modes at zero temperature and being driven by a coherent state resonant at its frequency.
\subsubsection{Dynamic in the presence of thermal noise}
\label{sec:dynamic_in_presence_thermal_noise}
What we presented so far assumes that the only reason why qubits are noisy comes from spontaneous emission. In practice, the field state that is realizing their evolution might contain noise that will perturb their final states. Calling $\overline{n}_{\text{tot}}$ the number of noisy thermal photons\footnote{In some cases, $\overline{n}_{\text{tot}}=n_{BE}(T_{\text{Q}})$, $T_{\text{Q}}$ being the qubit temperature, and $n_{BE}(T_{\text{Q}})$ the Bose einstein population at the qubit temperature \cite{cohen1998mecanique} but we will see in the section \ref{sec:single_qubit_example} that if the signals driving the gate is generated in the laboratory, additional thermal photons might be present}, their evolution can be modeled as \cite{haroche2006exploring}:
\begin{align}
&\frac{\partial \rho}{\partial t}=\frac{1}{i \hbar}[H,\rho]+\gamma_{\text{sp}} \overline{n}_{\text{tot}} D[\sigma^{\dagger}](\rho) + \gamma_{\text{sp}}(\overline{n}_{\text{tot}}+1) D[\sigma](\rho) \label{eq:evolution_noisy_gate_with_thermal_noise}
\end{align}
We see that the presence of noise in the line coupled to the qubit will induce additional relaxation of the excited state to the ground ($D[\sigma]$ is now multiplied by $\gamma_{\text{sp}}(\overline{n}_{\text{tot}}+1)$ instead of $\gamma_{\text{sp}}$), and there is also an additional term that will excite the qubit from the ground to the excited state: $\gamma_{\text{sp}} \overline{n}_{\text{tot}}$. The reason why we say that it might excite the qubit can be understood by finding the equation of motion for the coefficients $\rho_{10}$ and $\rho_{11}$ associated to \eqref{eq:evolution_noisy_gate_with_thermal_noise}. We would find that $\dot{\rho}_{11}$ in \eqref{eq:drho11dt} would now contain an additional term $+\gamma_{\text{sp}} \overline{n}_{\text{tot}} \rho_{00}$ describing the fact that population in the ground state might be transfered to the excited state because of the presence of noise.
\section{The models we use in the rest of the Ph.D.}
Here, we explain how we will model the noise affecting the qubits, the power consumption required by the gates, and we will provide the typical characteristics for the qubits and gates we are going to consider. Because a Ph.D. is a long-term project, the models used in the section \ref{sec:physical_model_chap3} of the third chapter will be \textit{very slightly} different from the ones we present here, which corresponds to the models used in the fourth and fifth chapter of this thesis. Appropriate minor comments will be made in \ref{sec:physical_model_chap3} to explain in what the physics is modeled in a slightly different manner.

The ending goal of this Ph.D. being to have a complete description of a superconducting quantum computer in order to have access to general trends in the power consumption of fault-tolerant quantum computing; we will keep simple models describing the gate physics, allowing us to access those general trends. It implies, among other things, that the only reason why the gates will be noisy in our models will be because of the limited qubit lifetime and thermal excitations.
\subsection{Performance of the operations}
\subsubsection{Noise models and infidelities for the gates}
\label{sec:infidelity_operations}
From \eqref{eq:evolution_noisy_gate_with_thermal_noise}, we can, in principle, estimate the "quantity of noise" that will be introduced by the evolution. We will, however, slightly simplify the description\footnote{In order to have analytical results for the worst-case infidelity of the operation for an arbitrary gate duration $\tau$, we must slightly simplify \eqref{eq:evolution_noisy_gate_with_thermal_noise}.}. We will consider that the noise introduced by the different gates used can be modeled by considering that first, each qubit evolves for a duration $\tau$ through the equation \eqref{eq:evolution_noisy_gate_with_thermal_noise_NO_HAMILTONIAN}, and that the "perfect" unitary implementation of the gate is applied afterward\footnote{In the language of the quantum channels that we introduce more properly in \ref{sec:tools_describing_noise}, we model the total evolution of a noisy gate that tries to implement a unitary quantum channel $\mathcal{U}$ as being $\mathcal{U} \circ \mathcal{N}$, where $\mathcal{N}$ is the evolution associated to \eqref{eq:evolution_noisy_gate_with_thermal_noise_NO_HAMILTONIAN} integrated for the gate duration.}.
\begin{align}
&\frac{\partial \rho}{\partial t}=\gamma_{\text{sp}} \overline{n}_{\text{tot}} D[\sigma^{\dagger}](\rho) + \gamma_{\text{sp}}(\overline{n}_{\text{tot}}+1) D[\sigma](\rho) \label{eq:evolution_noisy_gate_with_thermal_noise_NO_HAMILTONIAN}
\end{align}

In sections \ref{sec:single_qubit_example} and \ref{sec:energetic_cost_nisq_algo}, the noise will be quantified with the infidelity introduced by such evolution\footnote{The worst-case or average infidelity of a noisy operation followed by a perfect one (i.e., unitary in this context) being the same as the worst-case or average infidelity of the noisy operation alone, we can focus on the noisy part of the evolution in the reasoning.}. In order to calculate it, we need to explain what the fidelity between quantum states is. Let us consider that we have two quantum states. The first one is described by a pure density matrix $\ketbra{\psi}{\psi}$ and the second one by a mixed one $\rho$. We assume that $\ketbra{\psi}{\psi}$ was the ideal state we would like to have for our system, and $\rho$ is the state that has actually been prepared. The fidelity between $\rho$ and $\ketbra{\psi}{\psi}$ is defined as: 
\begin{align}
F \equiv \bra{\psi} \rho \ket{\psi}.
\end{align}
It is possible to show \cite{nielsen2002quantum} that $0 \leq F \leq 1$ where $F=1$ is reached when $\rho=\ketbra{\psi}{\psi}$. Because we will be interested in evaluating the quantity of noise in a quantum state, we prefer to define the \textit{infidelity} of a quantum state as being the quantity $IF \equiv 1-F$ (an infidelity being equal to $0$ means that the two states are identical and the closer the infidelity will be to $1$, the noisier $\rho$ will be).

Now that the definition for states has been provided, we can extend them to evolutions. Here, we define two infidelities for the evolution \eqref{eq:evolution_noisy_gate_with_thermal_noise_NO_HAMILTONIAN}. We consider first the maximum infidelity this evolution can induce. It means that we calculate the infidelity between an arbitrary initial state and the final state after having evolved through \eqref{eq:evolution_noisy_gate_with_thermal_noise_NO_HAMILTONIAN} during a duration $\tau$. By varying the initial preparation, we can find which one induces the biggest infidelity. It is the value we consider for maximum infidelity. We will do the same calculation but for the average infidelity (this time, we consider a uniform distribution over the Bloch sphere for the initial state, we compute the infidelity for each of those preparations, and we consider the average). Doing the calculation properly, we can show that the maximum and average infidelities satisfy:
\begin{align}
&\max(IF) \equiv \max(1-F)=(1+\overline{n}_{\text{tot}})\gamma_{\text{sp}}\tau \label{eq:worst_case_IF}\\
&\overline{IF}=1-\overline{F} =  \frac{1+2 \overline{n}_{\text{tot}}}{3}\gamma_{\text{sp}}\tau \label{eq:avg_case_IF}
\end{align}
\subsubsection{Energetic cost to perform a gate}
\label{sec:energetic_cost_to_perform_a_gate}
The equation \eqref{eq:power_wave_voltage_classical} gave us the expression of the power of right moving waves in the classical regime. The corresponding quantity in the quantum regime is \cite{cottet2017observing,monsel2020energetic} (see also the appendix \ref{app:injected_power}):
\begin{align}
P =\hbar \omega_0 \langle b_{in}^{\dagger}(t) b_{in}(t) \rangle=\hbar \omega_0\langle b_{in}^{\dagger}(0) b_{in}(0) \rangle
\end{align}
For a coherent state $\ket{\alpha_{\omega_0}}$, $\alpha_{\omega_0} > 0$ injected, we find:
\begin{align}
P=\frac{\hbar \omega_0}{2 \pi} \alpha_{\omega_0}^2
\end{align}
It allows us to find the relationship between the power injected and the Rabi frequency. Using \eqref{eq:relation_Omega_alpha}, we find:
\begin{align}
P=\frac{\hbar \omega_0}{4 \gamma_{\text{sp}}}\Omega^2
\label{eq:power_fct_rabi}
\end{align}
This relation will be central for the work that we are going to present in the next chapters. It corresponds to the power required to inject a signal that will drive a single qubit gate as a function of the qubit-waveguide coupling $\gamma_{\text{sp}}$ and the Rabi frequency of the gate $\Omega$. In principle, this power could be recovered after the signal has interacted with the qubit, but we will see in the two last chapters (for instance in section \ref{sec:single_qubit_example}) that because the signals are attenuated, it will not be possible. For this reason, this power will be one of the important elements that will play a role in the energetic cost of quantum computing.
\subsection{Parameters we will consider for the superconducting qubits}
\label{sec:state_of_the_art}
\subsubsection{Qubit characteristics}
\label{sec:state_of_the_art_qubit}
Now, we can give some order of magnitude of the parameters involved to describe superconducting qubits. First, a typical frequency for transmon qubits can be taken as $\omega_0/2 \pi \approx 6 GHz$ \cite{krantz2019quantum,kjaergaard2020superconducting}. Then, the coherence time of the qubits (at a temperature close to $0K$, in practice around $10mK$ \cite{krantz2019quantum}) have greatly evolved in the last years \cite{kjaergaard2020superconducting}. The decoherence times for transmon qubits are getting very close to the millisecond. In recent experiments, decoherences times being about $0.3 ms$ \cite{place2021new} have been shown. Other types of qubits such as the fluxonium are even passing the barrier of the millisecond \cite{somoroff2021millisecond}.

In our models, we assumed so far that the only reason why qubits are noisy is because of spontaneous emission and thermal photons, as one can see from \eqref{eq:evolution_noisy_gate_with_thermal_noise} and \eqref{eq:evolution_noisy_gate_with_thermal_noise_NO_HAMILTONIAN}. This is, of course, an idealistic approximation: we consider that the effect of pure dephasing is negligible compared to spontaneous emission. Under this assumption, we can estimate that our values of $\gamma_{\text{sp}}$ can typically be given by the inverse of the qubit lifetime. Considering the fact that state-of-the-art qubits are close to the millisecond coherence time, it will allow us to consider $\gamma_{\text{sp}} \sim 1kHz$ in our calculations.

In summary, the typical characteristics we will consider for the qubits in our models are:
\begin{itemize}
\item Qubit frequency: $\omega_0/2 \pi = 6 GHz$.
\item Qubit-waveguide coupling: $\gamma_{\text{sp}}=1kHz$.
\end{itemize} 
\subsubsection{Single qubit gates}
\label{sec:state_of_the_art_1qb}
There are other parameters to fix, such as the duration of the single-qubit gate $\tau$. Single qubit gate can have a typical duration in the $10 ns$ range \cite{almudever2017engineering}, even though faster gates based on optimized pulse techniques can in principle be implemented \cite{werninghaus2021leakage}. In this thesis, we will take the reference value of $\tau_{\text{1qb}}=25ns$. The reason why we take $25ns$ and not $10ns$ is that later on, we are going to include engineering aspects in the quantum computer, which are designed for this typical gate duration of $25ns$. The power required to implement a single qubit $\pi$-pulse will then correspond to \eqref{eq:power_fct_rabi} where $\Omega=\pi/\tau_{\text{1qb}}$. To simplify the discussions, we will consider that any single-qubit gate will last for $\tau_{\text{1qb}}=25ns$ and that they will consume the same amount of power as a single qubit $\pi$-pulse.

In summary, the typical characteristics for the single-qubit gates we will consider in our models are:
\begin{itemize}
\item Gate duration (for any single qubit gate): $\tau_{\text{1qb}}=25ns$
\item Power consumption of the single qubit gate: given in \eqref{eq:power_fct_rabi} with $\Omega=\pi/\tau_{\text{1qb}}$ (all gates consume as much as single qubit $\pi$-pulse)
\item Noise model: \eqref{eq:evolution_noisy_gate_with_thermal_noise_NO_HAMILTONIAN} integrated for the duration $\tau_{\text{1qb}}=25ns$.
\end{itemize} 
\subsubsection{two qubit gates}
\label{sec:state_of_the_art_2qb}
Different ways to perform two-qubit gates exist \cite{huang2020superconducting}. For instance, a two-qubit gate between qubits $A$ and $B$ can be implemented by tuning the frequency of a coupler qubit $C$. When $C$ is put at the appropriate frequency, $A$ and $B$ are having a mediated interaction through $C$ \cite{yan2018tunable,xu2020high}. There are also schemes in which the frequencies of the qubits are unchanged. One such scheme is the cross-resonance gate \cite{chow2011simple,sheldon2016procedure} which allows making two qubits $A$ and $B$ interact by sending a microwave pulse on qubit $A$ at the frequency of qubit $B$, the interaction being mediated by a bus. We can also cite \cite{bertet2006parametric,mckay2016universal}. 

In this thesis, we will consider using in our models the cross-resonance scheme. This scheme can, in principle, allow making two-qubit interaction between any pair of qubits connected to a bus \cite{chow2011simple} while having fixed-frequency qubits. Those gates are typically longer than the single-qubit gates, but as the qubits are fixed in frequency, the qubits lifetime can also be longer in principle. In recent proposal, the gate duration was getting close to $100ns$ \cite{sheldon2016procedure}. This is the typical duration we are going to consider for cNOT gates in the rest of the thesis. We will consider that they will induce noise on each of the two qubits involved in the dynamic in such a way that each qubit will have a noise model described by \eqref{eq:evolution_noisy_gate_with_thermal_noise_NO_HAMILTONIAN} (integrated for $\tau_{\text{cNOT}}=100ns$). At the moment this thesis is being written, the dominant source of noise for such gates does not come from the sole presence of spontaneous emission (and thermal noise) affecting the qubits. However, it is toward what the community is trying to tend, and we will assume it is the case in our model. Anyway, the two qubit-gates are too noisy to be able to do fault-tolerance\footnote{At least for the kind of code we will use in the rest of this thesis.} today, so we have to make some optimistic assumptions about their performances in the future. Here we make the "bet" that in the near future, the fidelity of those gates will mainly be given by the "intrinsic" qubit lifetime. Finally, we will also model their energetic cost by assuming that it is comparable to the cost of a single qubit $\pi$-pulse.

In summary, the typical characteristics for the single qubit gates we will consider in our models are:
\begin{itemize}
\item Two-qubit gate duration (for a cNOT): $\tau_{\text{cNOT}}=100ns$
\item Power consumption of the gate: same power than a single qubit $\pi$-pulse described in \ref{sec:state_of_the_art_1qb}.
\item Noise model: the noise for \textit{each} of the two qubit involved in the gate can be modelled by  \eqref{eq:evolution_noisy_gate_with_thermal_noise_NO_HAMILTONIAN} integrated for the duration $\tau_{\text{cNOT}}=100ns$. The infidelity of the gate will thus be equal to $2 \times IF$ for IF corresponding to either worst-case or average infidelity as defined in \eqref{eq:worst_case_IF} or \eqref{eq:avg_case_IF}\footnote{For $\tau=\tau_{\text{cNOT}}$ in those equations.}.
\end{itemize} 
\section{Summary}
In this chapter, we explained the basic principle behind the quantization of electrical circuits. We also explained how superconducting qubits are designed, and driven, first by a classical modelization of the driving signal, neglecting any phenomenon of propagations. Then in a fully quantized description where both the fields and the transmon are treated as quantum objects. In this quantum description, we explained that because the transmon is coupled to a continuum of mode, by construction, it will experience noise known as spontaneous emission. Finally, we provided the quantity that one needs to access in order to estimate the energetic cost of single-qubit gate operations.
\begin{appendices}
\chapter{Injected power}
\label{app:injected_power}
The goal is to show that the average injected power in the quantum regime satisfies:
\begin{align}
&\langle P_{in} \rangle = \hbar \omega_0 \langle b_{in}^{\dagger} b_{in} \rangle\\
&b_{in}(t) \equiv \frac{1}{\sqrt{2 \pi}} \int_{- \infty}^{+ \infty} d \omega \ b(\omega) e^{-i \omega t}
\end{align}
From \eqref{eq:power_wave_voltage_classical} we know that the power of the right moving waves in the classical regime reads:
\begin{align}
P_{\rightarrow}(x-ct)=\frac{(V_{\rightarrow}(t-x/c))^2}{Z_0}
\end{align}
We also recall that the voltage is the derivative of the generalized flux so that the quantum operator describing the power of the right moving waves reads:
\begin{align}
P_{\rightarrow}(t-x/c)=\frac{(\dot{\phi}_{\rightarrow}(t-x/c))^2}{Z_0}
\label{eq:power_right_moving_0}
\end{align}
The expression of $\dot{\phi}_{\rightarrow}(t-x/c)$ can be found from \eqref{eq:app_phi_omega} by only keeping the right moving waves (i.e we decompose $\sin(\omega x/c)$ in a sum of exponentials, and we keep the part of $\phi(x,t)$ composed of exponentials of the argument $\pm(wt-x/c)$). Applying the approximations described in section \ref{sec:input_output} (Markov approximation, and extending frequency range toward $-\infty$), we find the expression of $\dot{\phi}_{\rightarrow}(t-x/c)$, and by injecting it in \eqref{eq:power_right_moving_0}, we have:
\begin{align}
P_{\rightarrow}(t-x/c) &= A+B\\
 & A \equiv \frac{\hbar \omega_0}{4 \pi} \int_{-\infty}^{+\infty} d \omega_1 \int_{-\infty}^{+\infty} d \omega_2 \notag \\
&
 -(b(\omega_1)b(\omega_2)e^{-j(\omega_1+\omega_2)(t-x/c)}+b^{\dagger}(\omega_1)b^{\dagger}(\omega_2)e^{j(\omega_1+\omega_2)(t-x/c)}) \\
 & B \equiv \frac{\hbar \omega_0}{4 \pi} \int_{-\infty}^{+\infty} d \omega_1 \int_{-\infty}^{+\infty} d \omega_2 \notag \\
 &+b^{\dagger}(\omega_1)b(\omega_2)e^{j(\omega_1-\omega_2)(t-x/c)}+b(\omega_1)b^{\dagger}(\omega_2)e^{j(\omega_1-\omega_2)(t-x/c)}, \notag \\
 \end{align}
We have, using the commutation relation for the bosonic operators:
 \begin{align}
 B&=\hbar \omega_0 \int_{-\infty}^{+\infty} \frac{d \omega_1}{\sqrt{2 \pi}} b^{\dagger}(\omega_1) e^{i \omega_1 (t-x/c)} \int_{-\infty}^{+\infty} \frac{d \omega_2}{\sqrt{2 \pi}} b_{\rightarrow}(\omega_2) e^{-i \omega_2 (t-x/c)}+\Delta\\
 &= \hbar \omega_0 b^{\dagger}_{in}(t-x/c) b_{in}(t-x/c)+\Delta
 \end{align}
where $\Delta$ is a divergence which typically occurs when working with a continuum of modes\footnote{I did not find a source properly explaining why $\langle P_{in} \rangle = \hbar \omega_0 \langle b_{in}^{\dagger} b_{in} \rangle$ even though it is the correct expression \cite{cottet2017observing,monsel2020energetic}. I am then not entirely sure of the validity of this passage, this appendix is my own attempt to understand how to derive properly $\langle P_{in} \rangle = \hbar \omega_0 \langle b_{in}^{\dagger} b_{in} \rangle$}, which can in principle be removed via renormalization techniques \cite{schwartz2014quantum}. Thus, we won't consider it in the expression of power.
 
Assuming a coherent state injection, we would have:
 \begin{align}
 \langle P_{\rightarrow} \rangle = \bra{\alpha_{\omega_0}} A \ket{\alpha_{\omega_0}} + \bra{\alpha_{\omega_0}} B \ket{\alpha_{\omega_0}}
 \end{align}
 Where:
 \begin{align}
 \bra{\alpha_{\omega_0}} A \ket{\alpha_{\omega_0}}=\frac{\hbar \omega_0}{4 \pi} (-\alpha_{\omega_0}^2 e^{-2i \omega_0 t}-(\alpha_{\omega_0}^*)^2 e^{2i \omega_0 t})
 \end{align}
If we are only interested in the average power, this term will time-average to $0$. We also notice that for a coherent state, $\langle P_{\rightarrow}(t-x/c) \rangle=\langle P_{\rightarrow}(0) \rangle$ 
 
In the end, if we are interested in the time (and quantum) averaged power of the waves injected, for monochromatic (or close to monochromatic) signals, the appropriate quantity to consider is $\hbar \omega_0 \langle b_{in}^{\dagger}(0) b_{in}(0) \rangle$.
\chapter{Superconducting qubit}
\section{Superconducting qubit frequency}
\label{app:superconducting_qubit_frequency}
Here, we show how to find \eqref{eq:omega01} and \eqref{eq:omega12}. This calculation is done by applying first order perturbation theory \cite{cohen1998mecanique}. Given an Hamiltonian $H=H_0+W$, the eigenvalues of $H$ will correspond to the eigenvalues of $H_0$ up to a correction provided by $W$. More precisely, if $\ket{E_n}$ is an eigenvalue of $H_0$ for the non degenerated eigenvalue $E_n$, in the presence of $W \ll H_0$, this eigenvalue will be modified in such a way that $E_n \to E_n + \bra{n} W \ket{n}$.

We apply this method here for $H_0=\hbar \omega_0 a^{\dagger} a$ and $V$ defined in \eqref{eq:potential_V}. Calling $\ket{n}$ the eigenstates of $H_0$ of eigenvalue $n \hbar \omega_0$, we then need to calculate $\bra{n}V\ket{n}$. In order to do it, we start by calculating $(\widehat{a}+\widehat{a}^{\dagger})^2\ket{n}$.
\begin{align}
(\widehat{a}+\widehat{a}^{\dagger})^2\ket{n}&=(\widehat{a}+\widehat{a}^{\dagger}) \left(\sqrt{n} \ket{n-1}+\sqrt{n+1} \ket{n+1} \right) \notag \\
&=\left(\sqrt{n(n-1)} \ket{n-2}+(2n+1) \ket{n}+\sqrt{(n+1)(n+2)} \ket{n+2}\right)
\end{align}
We deduce:
\begin{align}
&\bra{n}V\ket{n}=-\frac{E_C}{12} \bra{n}(\widehat{a}+\widehat{a}^{\dagger})^4\ket{n}=-\frac{E_C}{12}\left(n(n-1)+(2n+1)^2+(n+1)(n+2)\right)\\
&\bra{n+1}V\ket{n+1}-\bra{n}V\ket{n}=-E_C (n+1)
\end{align}
Because of this perturbation, the energy gap between two consecutive level then reads:
\begin{align}
&\hbar \omega_{n,n+1} =  \hbar \omega_0 + (\bra{n+1}V\ket{n+1}-\bra{n}V\ket{n}) = \hbar \omega_0 - E_C(n+1)
\end{align}
\section{Superconducting qubit in waveguide}
\subsection{Expression of the quantized fields}
\label{app:quantized_fields}
Here, we show how the expression of the field has been found. We will start by expressing the free propagating fields without the short circuit boundary condition $\dot{\phi}(0,t)=0$. And we will then impose this boundary condition to find the resulting field. The classical solutions to the wave equation \eqref{eq:phi_free}, are:
\begin{align}
&\phi(x,t)=\phi_{\rightarrow}(x,t)+\phi_{\leftarrow}(x,t)\\
&p(x,t)=p_{\rightarrow}(x,t)+p_{\leftarrow}(x,t)
\end{align}
Where $\phi_{\rightarrow}(x,t)$ (resp $\phi_{\leftarrow}(x,t)$) represent the forward (resp backward) propagating generalized flux waves. Same principle for $p_{\rightleftarrows}(x,t)$. Their expression are, with $\omega_k=c|k|$:
\begin{align}
&\phi_{\rightleftarrows}(x,t)=\int_0^{+\infty} d k \left( \phi_{\rightleftarrows}(k) e^{i(\pm kx-\omega_k t)} + \phi^*_{\rightleftarrows}(k) e^{-i(\pm kx-\omega_k t)} \right)\\
& p_{\rightleftarrows}(x,t)= c_0 \dot{\phi}_{\rightleftarrows}(x,t) = -c_0 i \int_0^{+\infty} d k \omega_k \left(\phi_{\rightleftarrows}(k) e^{i(\pm kx-\omega_k t)} - \phi^*_{\rightleftarrows}(k) e^{-i(\pm kx-\omega_k t)} \right)
\end{align}
But to make calculation simpler in what follows, we will express the field in the following way:
\begin{align}
&\phi(x,t)=\int_{-\infty}^{+\infty} d k \  \phi(k) e^{i( kx-\omega_k t)}+\phi^*(k) e^{-i( kx-\omega_k t)}\\
&p(x,t)=c_0 i \int_{-\infty}^{+\infty} d k (-\omega_k) \left(\phi(k) e^{i( kx-\omega_k t)}-\phi^*(k) e^{-i( kx-\omega_k t)} \right)
\end{align}
Where we defined: $\phi(k>0) \equiv \phi_{\rightarrow}(k)$ and $\phi(k<0) \equiv \phi_{\leftarrow}(-k)$. Thus, the positive wavector represent forward propagating waves and the negative backward ones.
Following the approach from \cite{clerk2010introduction}, one can check that we have the following relations:
\begin{align}
&\phi(k)=\frac{1}{4 \pi} \int_{-\infty}^{+\infty} dx e^{-ikx} (\phi(x,0)+\frac{i}{c_0 \omega_k} P(x,0))\\
&\phi^*(k)=\frac{1}{4 \pi} \int_{-\infty}^{+\infty} dx e^{+ikx} (\phi(x,0)-\frac{i}{c_0 \omega_k} P(x,0))
\end{align}
We start to quantize the free field without boundary conditions. In this case, we will promote $\phi(x,t)$ and $p(x,t)$ into operators imposing the commutation relation:
\begin{align}
[\widehat{\phi}(x,t),\widehat{p}(y,t)]=i \hbar \delta(x-y)
\end{align}
From this, we can find the commutation relation satisfied between $\widehat{\phi}(k)$ and $\widehat{\phi}^{\dagger}(k)$. We have:
\begin{align}
[\widehat{\phi}(k_1),\widehat{\phi}^{\dagger}(k_2)]&=\frac{1}{(4 \pi)^2} \int_{-\infty}^{+\infty} dx \int_{-\infty}^{+\infty} dy e^{-i k_1 x + i k_2 y} \left(\frac{-i}{c_0 \omega_{k_2}}[\phi(x,0),P(y,0)]+\frac{i}{c_0 \omega_{k_1}}[P(x,0),\phi(y,0)] \right)\\
&=\frac{1}{(4 \pi)^2} \int_{-\infty}^{+\infty} dx \int_{-\infty}^{+\infty} dy \frac{1}{c_0} e^{-i k_1 x + i k_2 y} \delta(x-y)( \frac{\hbar}{\omega_{k_1}}+\frac{\hbar}{\omega_{k_2}}) \\
&=\frac{1}{(4 \pi)^2} \int_{-\infty}^{+\infty} dx \frac{2 \hbar}{c_0 \omega_k} e^{-i(k_1-k_2)x}\\
&=\frac{1}{(4 \pi)^2} \int_{-\infty}^{+\infty} dx \frac{2 \hbar}{c_0 \omega_k} e^{-i(k_1-k_2)x}\\
&=\frac{\hbar}{4 \pi c_0 \omega_k} \delta(k_1-k_2)
\end{align}
Thus, defining the operators:
\begin{align}
&\widehat{b}_k \equiv \sqrt{\frac{4 \pi c_0 \omega_k}{\hbar}} \widehat{\phi}(k)\\
&\widehat{b}_k^{\dagger} \equiv \sqrt{\frac{4 \pi c_0 \omega_k}{\hbar}} \widehat{\phi}^{\dagger}(k)
\end{align}
They follow the bosonic commutation relations:
\begin{align}
&[\widehat{b}_{k_1},\widehat{b}^{\dagger}_{k_2}]=\delta(k_1-k_2)\\
&[\widehat{b}_{k_1},\widehat{b}_{k_2}]=[\widehat{b}^{\dagger}_{k_1},\widehat{b}^{\dagger}_{k_2}]=0
\end{align}
The fields can then be written as:
\begin{align}
&\widehat{\phi}(x,t)=\sqrt{\frac{\hbar}{4 \pi c_0}}\int_{-\infty}^{+\infty} \frac{dk}{\sqrt{\omega_k}} \left(\widehat{b}_k e^{i (kx-\omega_k t)}+\widehat{b}^{\dagger}_k e^{-i (kx-\omega_kt)} \right) \\
&\widehat{p}(x,t)=-i \sqrt{\frac{\hbar c_0}{4 \pi}} \int_{-\infty}^{+\infty} dk \sqrt{\omega_k} \left(\widehat{b}_k e^{i(kx-\omega_kt)}-\widehat{b}^{\dagger}_k e^{-i(kx-\omega_kt)} \right)
\end{align}
Now, we want to describe the field in the semi infinite waveguide with the boundary condition $\forall t \ \dot{\widehat{\phi}}(0,t)=0$ because the voltage $\dot{\phi}$ vanishes in $x=0$. From this condition, we can find the expression of the field:
\begin{align}
&\widehat{\phi}(x,t)=i \sqrt{\frac{\hbar}{\pi c_0}}\int_{0}^{+\infty} \frac{dk}{\sqrt{\omega_k}}sin(kx) \left(\widehat{b}_k e^{-i \omega_k t}-\widehat{b}^{\dagger}_k e^{i \omega_k t}  \right) \\
&\widehat{p}(x,t)=\sqrt{\frac{\hbar c_0}{\pi}} \int_{0}^{+\infty} dk \sqrt{\omega_k} \sin(kx) \left(\widehat{b}_k e^{ - i \omega_k t}+\widehat{b}^{\dagger}_k e^{i \omega_k t} \right)
\end{align}
We can rewrite with the bosonic operators $\widehat{b}(\omega) \equiv \frac{\widehat{b}_k}{\sqrt{c}}$:
\begin{align}
&\widehat{\phi}(x,t)=i \sqrt{\frac{\hbar Z_0}{\pi}}\int_{0}^{+\infty} \frac{d \omega}{\sqrt{\omega}}sin(\frac{\omega x}{c}) \left(\widehat{b}(\omega) e^{-i \omega t}-\widehat{b}^{\dagger}(\omega) e^{i \omega t}  \right) \label{eq:app_phi_omega}  \\
&\widehat{p}(x,t)=\sqrt{\frac{\hbar c_0}{c \pi}} \int_{0}^{+\infty} d \omega \sqrt{\omega} \sin(\frac{\omega x}{c}) \left(\widehat{b}(\omega) e^{ - i \omega_k t}+\widehat{b}^{\dagger}(\omega) e^{i \omega_k t} \right)
\end{align}
Up to a phase in the operator $\widehat{b}(\omega)$ (which corresponds to a choice in the origin of time), those expressions are similar as the ones in \cite{wiegand2020semiclassical}. Also, using the relations:
\begin{align}
\int_{-\infty}^{+\infty} dx \sin(kx) \sin(k'x)=\pi ( \delta(k-k')-\delta(k+k'))
\end{align}
\begin{align}
\int_{-\infty}^{+\infty} dx \sin(kx) \sin(k'x)=\pi ( \delta(k-k')+\delta(k+k'))
\end{align}
Injecting \eqref{eq:phi_stat} and \eqref{eq:p_stat} in \eqref{eq:Hwaveguide}, one can show that it ends up in the following Hamiltonian:
\begin{align}
H_{\text{Waveguide}}=\int_{0}^{+\infty} d \omega \ \hbar \omega \widehat{b}^{\dagger}(\omega) \widehat{b}(\omega) 
\end{align}
\subsection{Waveguide-transmon interaction}
\label{app:waveguide_transmon_interaction}
Now, we can compute the expression of the waveguide-transmon coupling as a function of the bosonic operators. We recall that we had:
\begin{align}
H_{\text{int}}=-\frac{C_d}{C_{\Sigma}}\frac{\widehat{p}(-L,0)}{c_0}\widehat{p}_J(0)
\end{align}
First, we linearize the transmon and approximate it as if it was an harmonic oscillator. Its frequency would then be $\omega_0=\frac{1}{\sqrt{L_J^0 C_{\Sigma}}}$ as shown in \eqref{eq:transmon_approx_harmonic_oscillator_frequency}. What will play the role of the mass will be the capacitance $C_{\Sigma}$. Thus, we can express the momentum of the transmon as a function of creation and annihilation operators:
\begin{align}
p_J(0)=i\sqrt{\frac{\hbar \sqrt{\frac{C_{\Sigma}}{L_J^0}}}{2}}(\widehat{a}^{\dagger}_J-\widehat{a}_J)
\end{align}
Replacing with the expression of the field \eqref{eq:p_stat}, and performing the rotating wave approximation \cite{haroche2006exploring} which consists here in removing the terms $\widehat{a}^{\dagger}_J \widehat{b}^{\dagger}(\omega)$ and $\widehat{a}_J \widehat{b}(\omega)$, we end up with the Hamiltonian:
\begin{align}
&H_{\text{int}}=i \int_0^{+\infty} d \omega \ \hbar g(\omega) \left( \widehat{b}(\omega) \widehat{a}_J^{\dagger}-\widehat{b}^{\dagger}(\omega) a_J\right)\\
&g(\omega) \equiv \frac{C_d}{C_{\Sigma}} \sqrt{\frac{\omega}{\omega_0}} \sqrt{\frac{Z_0}{2 \pi L_J^0}} \sin(\omega L)
\end{align}
\end{appendices}

\chapter{Quantum error correction and fault-tolerance}

As we saw in the previous chapter, operations performed on qubits are noisy. And this is true even if the qubits are put at zero temperature because of spontaneous emission and pure dephasing\footnote{We did not model it in the previous chapter, as we assumed that the dominant source of noise is spontaneous emission in our models.}. It implies that the data stored on them has a limited lifetime: those qubits cannot be used in algorithms that require too many quantum gates as, by the time the algorithm will be over, the information contained on those qubits will be completely corrupted. Of course, in principle, physicists could focus on improving the quality of the qubits, trying to make their lifetime longer and longer (thus reducing the spontaneous emission, and pure dephasing rate in the context of superconducting qubits). But this direction is full of experimental challenges which might not be easy to solve. What we would like to have is a way to be able to implement longer and longer algorithms \textit{without} having to improve the quality of the devices each time. A way to do it is to use quantum error correction. Typically, the strategy behind this is to design quantum computers that, on a "logical" level, are implementing the desired algorithm, but "in the background", errors occurring on qubits are detected and corrected. With this strategy, even if the qubits are noisy, with more and more error correction, on a logical level, everything could be "as if" the lifetime of the qubits was longer and longer. The cost of this method is that it requires more qubits inside of the computer because some of them are used to perform the error correction (and do not directly participate in the algorithm), and because we need to encode the information in a greater number of degree of freedom than what would be strictly necessary: some redundancy of the information is required to detect errors. But there is overall a clear interest: no more experimental improvement in the quality of the qubits would be necessary, the reduction of errors would have a well-defined strategy that would not require any challenging innovation from the qubit technology\footnote{But of course it might be challenging from an engineering perspective as it would need to put more and more qubits inside of the computer.}\footnote{As explained in a few lines, the noise would also have to be below some threshold value to make this happen. Otherwise, error correction would actually make the situation worse.}. The goal of this chapter is thus to present the basic elements of quantum error correction we need to understand our work on the questions of scalability of quantum computing. Indeed the major part of our work relies on results from quantum error correction theory. After having introduced some basic elements we need to understand this theory in section \ref{sec:fundamentals_QEC}, we give some first intuitions in the section \ref{sec:stabilizer} behind the working principle of stabilizer codes. Those codes are in the same class of code as Steane code which is the one we used in the work done during this Ph.D., and having the general (but simplified) picture we will give there will allow us to have an easier grasp on Steane. The Steane code is explained in the section \ref{sec:Steane}. In the section \ref{sec:FT} we will explain why the knowledge provided by quantum error correction is not enough to scale up a quantum computer, and why the concepts behind fault-tolerance are required to make the quantum computer able to resist detect and correct errors in practice. To summarize briefly: error correction shows that it is possible to detect and correct errors "in principle", assuming a \textit{perfect} correction procedure. Fault-tolerance explains how to do it \textit{in practice}, taking into account the fact that the correction might also introduce errors. Fault-tolerance is needed to make error correction useful. Typically, it is in this section that we will introduce the quantum threshold theorem that states that error correction is only useful if the noise is below a threshold value and that if it is the case, then, using enough physical resources, the effect of the noise can be reduced as much as desired, without using an absurdly high number of physical resources. The fault-tolerant construction we are going to consider is called the concatenated construction. The reason why we use this specific construction and the Steane code is because they are both very well understood theoretically. For instance, we know exactly the circuits allowing to implement Steane code fault-tolerantly with this concatenated construction, and it allows us in particular to perform analytic calculations. It is also a construction in which the quantum threshold theorem can be formally derived. Finally, all the important numbers we need to estimate the energetic cost of quantum computing in the next chapters: how many qubits do we need, what is the accuracy the computer can get to as a function of this number of qubits, etc, will be deduced from the results provided in this chapter. 

This chapter does not contain any original result from this Ph.D. (apart from the exact estimation behind the table \ref{table:FT_gate_breakdown} in \ref{sec:number_gates_1Rec}, but I did not calculate myself\footnote{Those tables have been calculated by Jing Hao Chai.} those numbers), it is here to introduce the necessary concepts.
\section{Fundamentals of quantum error correction}
\label{sec:fundamentals_QEC}
The topic of error correction consists in providing algorithms that allow detecting and correct errors occurring on qubits, those errors being caused by the noise affecting the qubits. At first view, it is not guaranteed that it is conceptually possible to detect errors. Indeed the no-cloning theorem \cite{nielsen2002quantum} forbid copying a quantum state which would be of great help to encode the data in a more robust manner. Also, measuring a quantum system can perturb it because of the measurement postulate of quantum mechanics \cite{preskill1998lecture,cohen1998mecanique}. For those reasons, it looks unlikely to conceptually be possible to detect and correct errors occurring on quantum systems. Fortunately, error correction theory shows that despite all those limitations, it is still possible to detect and correct errors on quantum systems. Fault tolerance that we introduce in the next section will tell us how error correction has to be implemented \textit{in practice}. For instance, what it adds to the concepts of error correction is that it acknowledges the fact that error correction might be noisy as well. It also takes into account all the ancilla qubits that are required to be able to implement error correction in practice, and it explains how to implement operations on qubits in a robust (i.e., noise resilient) manner.
\subsection{Tools allowing to describe the noise in quantum systems}
\label{sec:tools_describing_noise}
Before starting, we need to understand what quantum noise is and how it is described mathematically. For this purpose, we need to recall some basic tools and definitions behind quantum information theory. We already used some of them in the previous chapter, but as we now need solid definitions, we take a step back and properly define mathematically the different objects we need. We consider a quantum system A. The state of this quantum system can often be described by a vector $\ket{\psi_A}$ (of norm $1$) living in a Hilbert space $H_A$ of dimension $d_A$. When it is the case, we will say that the system is \textit{pure}. It will be the case in the absence of classical uncertainty or if this system is not entangled with another system. But in order to be more general and to allow to describe systems submitted to classical uncertainty (or being subsystems of entangled systems), we prefer to use a more "general" tool called the density matrix that we note with the letter $\rho$ in what follows. It defines the state of this system, allowing to take into account this "lack of knowledge". This tool is a matrix acting on the space $H_A$, that admits eigenvalues which can be interpreted as probabilities. For this reason, $\rho$ admits positive eigenvalues which sum up to $1$. Mathematically the appropriate wording is to say that $\rho_A$ is positive semi-definite (positivity of the eigenvalues), and satisfies $Tr(\rho)=1$ (they sum up to $1$). 

At this point, we talked about the description of quantum states. But those states might be evolving in time because of interactions. And this evolution might introduce noise. Under some circumstances, solely knowing its value at an instant $t$: $\rho_A$, it is possible to deduce its value $\rho_A'$ at an instant $t'>t$. It will typically be the case, for instance, in the absence of initial correlations between S and any ancillary system. In such case, for any initial preparation $\rho_A$, the final density matrix satisfies $\rho_A'=\mathcal{E}(\rho_A)$, where $\mathcal{E}$ is some "map", i.e. an object that given a density matrix returns another density matrix. In order to be physical, this map must satisfy certain properties. Typically we must actually ensure that for any $\rho_A$, $\mathcal{E}(\rho_A)$ remains a density matrix. Those "physically valid" maps are called quantum channels, which is a term coming from communication theory, and that was initially used to characterize the deformation of a signal when it is being sent between two parties. Here we see that it is in some sense the equivalent on a quantum level as it describes how a state of a quantum system is being modified after some evolution. In order to be physical, $\mathcal{E}$ must thus satisfy the following set of axioms.
\begin{definition}{Quantum channel}
\label{def:quantum_channel}

A linear map $\mathcal{E} \in \mathcal{L}(\mathcal{L}(H_A))$, i.e., a linear map that takes as input a matrix in $\mathcal{L}(H_A)$ and gives as output another matrix living in this same space is called a quantum channel if it satisfies:
\begin{itemize}
\item $\mathcal{E}$ is a convex-linear map on the set of density matrix, i.e: for any distribution of probabilities $\{p_i\}$ and family of density matrices $\{\rho_i\}$, we have $\mathcal{E}(\sum_i p_i \rho_i)=\sum_i p_i \mathcal{E}(\rho_i)$.
\item $\mathcal{E}$ is trace-preserving, i.e for any $\rho_A$, $Tr(\mathcal{E}(\rho_A))=1$
\item $\mathcal{E}$ is completely positive, i.e: for any integer $n$, for any  $\rho \in \mathcal{L}(H_A \otimes H_{B_n})$, where $H_{B_n}$ is another Hilbert space of dimension $n$, $\left( \mathcal{E} \otimes \mathcal{I}_n \right)(\rho)$ is semi definite positive (i.e, it is still a matrix having positive eigenvalues) where $\mathcal{I}_n$ represents the identity operation applied on $H_{B_n}$.
\end{itemize}
\end{definition}

The first axiom of this definition is here to ensure that $\mathcal{E}$ follows the postulate of quantum mechanics: the evolution of a quantum system is linear\footnote{It is not the case when one measures and reads the outcome, but we are not interested in those cases here.}, thus, any mixture present in the initial quantum state will remain in such mixture after the evolution. Indeed, a classical mixture between a family of quantum states that we described as the family: $\{p_i, \rho_i\}$, where $p_i$ is the probability to have the matrix $\rho_i$ in this mixture, is described by a density matrix $\rho=\sum_i p_i \rho_i$. The condition $\mathcal{E}(\sum_i p_i \rho_i)=\sum_i p_i \mathcal{E}(\rho_i)$ means that the mixture is "preserved" through the evolution\footnote{We emphasize on the fact that it does not implies that the entropy will remain constant through the evolution.}. The second and third axioms are here to ensure that the resulting object will still be a density matrix. Indeed we want it to be of trace $1$ (second condition) and to be semi-definite positive (third condition) to keep the interpretation of eigenvalues as being probabilities. But the third condition is actually more restrictive. We do not only ask the resulting density matrix to be semi-definite positive but also that if we imagine the system of interest as being a subsystem of a larger system, i.e., there exists a system $B$ such that $\rho_A=Tr_B (\rho)$ ($Tr_B$ denotes the partial trace over the system $B$, we can find the definition of such operation in \cite{cohen2021mecanique}), where $\rho \in \mathcal{L}(H_A \otimes H_B)$ (i.e., $\rho$ defines the state of the composite system $AB$), the density matrix of this bigger system remains positive after the evolution of $\rho_A$ through $\mathcal{E}$. It is a requirement for the evolution to be physical. For instance, if we did not have this requirement, we could imagine that if the quantum state was initially entangled with some other party, after the evolution, some of the eigenvalues of the density matrix would become negative and we couldn't interpret them as probabilities anymore \cite{breuer2002theory}.

To summarize briefly, quantum channels are the objects that describe a class of evolution for quantum systems. Those evolutions are the ones in which only knowing the initial state of the system that is going to evolve; its final state can be perfectly deduced. In some sense, we can say that they can describe evolution in which the effect of the potential noise is "local". Those kinds of evolutions are the ones we are going to mainly focus on in the next chapters and also the ones on which a major part of quantum error correction theory is based upon. 
\subsection{Conditions on the noise allowing to correct errors}
Now, we can give some first insight about the very general principle behind quantum error correction before explaining Steane code. It will give us the intuition of why (and how) it is possible to correct errors occurring in quantum systems.

Let's assume that we have some data encoded on a density matrix $\rho_A=\ketbra{\psi_A}{\psi_A} \in \mathcal{L}(H_A)$. Our goal is to protect this data against any noise that might corrupt it. We assume that this noise is modeled by a quantum channel $\mathcal{E}$ that we call here the \textit{error channel}.
In an ideal world, we would like to find a quantum channel $\mathcal{R}$ that we call the \textit{error correction channel} which would have the following property:
\begin{align}
\forall \rho \in \mathcal{L}(H), \ \mathcal{R} \circ \mathcal{E} (\rho) = \rho
\end{align}
Physically, it would mean that whatever the initial state before the noise was, it would be possible to restore it. However, such map $\mathcal{R}$ cannot exist for an arbitrary $\mathcal{E}$. This can be understood intuitively: typically, if $\mathcal{E}$ introduces some irreversibility, $\mathcal{E}(\rho)$ could correspond to different initial states. There is not enough information in $\mathcal{E}(\rho)$ to be able to inverse the effect of $\mathcal{E}$ in general. This remark would thus also be true for classical systems. A way to solve this issue is to make the effect of errors "locally" invertible. Basically, what we can hope to get is that there exists a subspace of the total Hilbert space $H_C \subset H$ in which:
\begin{align}
\forall \rho_C \in \mathcal{L}(H_C), \mathcal{R} \circ \mathcal{E} (\rho_C) = \rho_C
\label{eq:cond_recovery_intro}
\end{align}
As we are going to see, such a requirement is possible. In practice, it means that one could encode the information to protect on the (smaller) subspace $H_C$, and error correction would then allow to protect it. The space $H_C$ is called \textit{code space}, but in the literature it is also frequently called the \textit{the quantum code}. Proceeding this way, it will be possible to detect and then correct the errors induced by the noise that is going to perturb the quantum state.
\begin{definition}{Code space}
\label{def:code_space}

Let $H_C$ be a subspace of $H$ the total Hilbert space describing the quantum system on which we wish to encode quantum information. We call this space the code space \cite{nielsen2002quantum}. 
\end{definition}
Now, we would like to have some conditions that give us necessary and sufficient conditions the error channel $\mathcal{E}$ has to satisfy in order that an error correction channel $\mathcal{R}$ satisfying \eqref{eq:cond_recovery_intro} for some $H_C$ exists. Those conditions are called the 
Knill-Laflamme conditions \cite{nielsen2002quantum}, but to understand them, we now need to define precisely what we call an \textit{error}.

In the classical world, an error is an event easy to interpret: it is some operation that has been applied in an undesired manner on the system encoding the information in such a way that this information has been corrupted. Maybe the person running the experiment will not know if the error occurred or not in practice, but it will be clear that either the error occurred either it did not. In the quantum world, this concept becomes directly more subtle. Typically, because of superpositions, we might be in a scenario in which an error "occurred and did not occur" at the same time. In addition to that, we can expect the errors to live in a continuous set, as quantum states are living in continuous space (the state of a qubit, for instance, is described by two continuous parameters). Being able to correct continuous errors might be very challenging. Now that we have given basic intuition of the difference we might expect with classical physics, we need a solid definition for the notion of error. Let us consider any error channel $\mathcal{E}$. What we would like to have is a given set of operators associated with $\mathcal{E}$ that are acting on the system encoding the data, and we would like to refer to this set as the possible set of errors that might be occurring. 

In principle, if the system evolves in an autonomous manner (i.e., it is not being actively measured by an experimentalist), it will follow a unitary evolution, possibly with another system $B$. Assuming $A$ was in an initial state $\ket{\psi_A}$, $B$ was in an initial state $\ket{b_0}$ and considering $\{\ket{b_k}\}$ an orthonormal family for $H_B$, the evolution reads:
\begin{align}
&U_{AB} \ket{\psi_A} \ket{b_0} = \left(\sum_{k} \ketbra{b_k}{b_k} \right) U_{AB} \left( \sum_l \ketbra{b_l}{b_l} \right) \ket{\psi_A} \ket{b_0}= \sum_k M_k \ket{\psi_A} \ket{b_k} \label{eq:Psi_purified} \\
&M_k \equiv  \bra{b_k} U_{AB} \ket{b_0},
\end{align}
where we used the fact $\sum_l \ketbra{b_l}{b_l}=Id$ ($Id$ is the identity operation).
Tracing out the environment, we realize that $\ketbra{\psi_A}{\psi_A}$ followed the evolution:
\begin{align}
\ketbra{\psi_A}{\psi_A} \to \mathcal{E}(\ketbra{\psi_A}{\psi_A})=\sum_k M_k \ketbra{\psi_A}{\psi_A} M_k^{\dagger}
\end{align}
The family of operators $\{M_k\}$ is being called Kraus operators. This way of describing the effect of noise occurring on a system $A$ by considering an extra system $B$ (an "environment") is pretty standard, and we will make use of it again in the section \ref{sec:error_detection_and_correction} dedicated to explain how the error correction works with Steane code. What we just said could be generalized for any initial density matrix $\rho_A$ for $A$ (in the case $A$ was not pure initially, its final evolution could be found from the same family of operators $\{M_k\}$), and we have the theorem \ref{theorem:kraus} \cite{nielsen2002quantum}.
\begin{theorem}{Kraus decomposition of a quantum channel}

\label{theorem:kraus}
A map $\mathcal{E} \in \mathcal{L}(\mathcal{L}(H_A))$ is a quantum channel if and only if for any $\rho_A \in \mathcal{L}(H_A)$ it can be written as:
\begin{align}
\mathcal{E}(\rho_A)=\sum_{k=1}^{K} M_k \rho_A M_k^{\dagger}
\end{align}
Where $1 \leq K \leq d_A^2$, $d_A$ being the dimension of $H_A$. The operators $M_k \in \mathcal{L}(H_A)$ are called \textit{Kraus operators} and they satisfy:
\begin{align}
\sum_{k=1}^{K} M_k^{\dagger} M_k=I
\end{align}
\end{theorem}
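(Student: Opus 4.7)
The plan is to prove the two directions of the equivalence separately. For the easy direction, I would fix an arbitrary operator-sum expression $\mathcal{E}(\rho_A) = \sum_{k=1}^{K} M_k \rho_A M_k^{\dagger}$ satisfying $\sum_k M_k^{\dagger} M_k = I$ and verify the three axioms of Definition \ref{def:quantum_channel}. Convex-linearity is immediate from the manifest linear dependence on $\rho_A$. Trace-preservation follows from cyclicity of the trace and the normalization condition, since $\mathrm{Tr}(\mathcal{E}(\rho_A)) = \sum_k \mathrm{Tr}(M_k^{\dagger} M_k \rho_A) = \mathrm{Tr}(\rho_A)$. For complete positivity, given any ancilla $H_{B_n}$ and any $\rho \geq 0$ on $H_A \otimes H_{B_n}$, one writes $(\mathcal{E} \otimes \mathcal{I}_n)(\rho) = \sum_k (M_k \otimes I_n)\,\rho\,(M_k \otimes I_n)^{\dagger}$, which is a sum of conjugations of a positive operator and hence positive.

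For the nontrivial direction, my strategy is to use the Choi--Jamiolkowski isomorphism. I would introduce an auxiliary copy $H_{A'}$ of $H_A$ with orthonormal basis $\{\ket{i}\}_{i=1}^{d_A}$, form the (unnormalized) maximally entangled vector $\ket{\Omega} \equiv \sum_{i=1}^{d_A} \ket{i}_A \ket{i}_{A'}$, and define the Choi matrix $J(\mathcal{E}) \equiv (\mathcal{E} \otimes \mathcal{I}_{A'})(\ketbra{\Omega}{\Omega})$. Applying complete positivity to the particular input $\ketbra{\Omega}{\Omega} \geq 0$ with $n = d_A$ forces $J(\mathcal{E}) \geq 0$. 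The key step is then to spectrally decompose $J(\mathcal{E}) = \sum_{k=1}^{K} \lambda_k \ketbra{v_k}{v_k}$ with $\lambda_k \geq 0$, where $K$ equals the rank of $J(\mathcal{E})$ and is therefore bounded by $\dim(H_A \otimes H_{A'}) = d_A^2$. Each eigenvector is reshaped into an operator $V_k \in \mathcal{L}(H_A)$ via $(V_k)_{ji} \equiv (\bra{j} \otimes \bra{i})\ket{v_k}$, and I would set $M_k \equiv \sqrt{\lambda_k}\, V_k$.

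A direct computation then shows this choice reproduces $\mathcal{E}$: by linearity it suffices to test the identity $\mathcal{E}(\ketbra{i}{j}) = \sum_k M_k \ketbra{i}{j} M_k^{\dagger}$ on basis elements, which follows from the partial-trace identity $\ketbra{i}{j} = \mathrm{Tr}_{A'}\bigl[(I_A \otimes \ketbra{j}{i}_{A'})\,\ketbra{\Omega}{\Omega}\bigr]$ combined with the definition of $J(\mathcal{E})$. The normalization $\sum_k M_k^{\dagger} M_k = I$ is then obtained by computing $\mathrm{Tr}_A J(\mathcal{E}) = \mathrm{Tr}_A(\ketbra{\Omega}{\Omega}) = I_{A'}$ in two different ways: once using the spectral decomposition and the reshaping that defines $V_k$, and once using the trace-preservation of $\mathcal{E}$ directly. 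The main obstacle, and the conceptual heart of the argument, is the reduction of the apparently infinite family of positivity conditions in the third axiom of Definition \ref{def:quantum_channel} (one per ancillary dimension $n$) to the single finite-dimensional condition $J(\mathcal{E}) \geq 0$; once this reduction is in hand, the construction of the $M_k$ is routine spectral decomposition plus index reshaping, and the bound $K \leq d_A^2$ drops out of the dimension of the Choi matrix. An alternative route, which avoids the Choi matrix and stays closer to the informal calculation preceding the statement, would be to first establish a Stinespring dilation and then expand exactly as in equation \eqref{eq:Psi_purified}; but proving the existence of the dilating isometry hinges on the same finite-dimensional reduction and is of comparable difficulty.
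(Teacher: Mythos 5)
Your proposal is correct, but it is worth saying up front that the paper does not actually prove this theorem: it states it with a citation to the literature, preceded only by the informal calculation around equation \eqref{eq:Psi_purified}, which shows that a unitary acting on the system together with an environment prepared in $\ket{b_0}$ induces a map of the form $\rho_A \mapsto \sum_k M_k \rho_A M_k^{\dagger}$ with $M_k = \bra{b_k}U_{AB}\ket{b_0}$. That discussion motivates one implication under the extra hypothesis of a unitary dilation; it establishes neither that every map satisfying Definition \ref{def:quantum_channel} admits such a dilation nor the bound $K \le d_A^2$. Your Choi-matrix argument supplies exactly the missing content: the easy direction is verified axiom by axiom, and the hard direction extracts the Kraus operators from the spectral decomposition of $J(\mathcal{E}) = (\mathcal{E}\otimes\mathcal{I}_{A'})(\ketbra{\Omega}{\Omega})$, with $K \le d_A^2$ falling out of the rank of a $d_A^2 \times d_A^2$ positive matrix. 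This is the standard proof (essentially the one in the reference the paper cites), and the individual steps you describe --- the reshaping $\ket{v_k} = (V_k\otimes I)\ket{\Omega}$, the block identity $\mathcal{E}(\ketbra{i}{j}) = \sum_k M_k\ketbra{i}{j}M_k^{\dagger}$, and the normalization via $\mathrm{Tr}_A J(\mathcal{E}) = I_{A'}$ --- all check out.

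One small correction to your closing remark: for the direction you actually carry out (channel implies Kraus form), no ``reduction of the infinite family of positivity conditions'' is needed. You only use the single instance $n = d_A$ of the third axiom to conclude $J(\mathcal{E})\ge 0$, which is a specialization, not a reduction. The reduction you describe (that $J(\mathcal{E})\ge 0$ already implies complete positivity for every $n$) would be required if you wanted to characterize channels by their Choi matrices alone, but your proof sidesteps it by re-deriving complete positivity directly from the operator-sum form in the converse direction. So the argument is logically closed as written; only your description of where the difficulty sits is slightly off.
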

This family of operators $\{M_k\}$ describes a set of errors that might affect our quantum system. It is possible to show that the Kraus decomposition of a given quantum channel is not unique \cite{nielsen2002quantum}; different sets of errors could describe the exact same error channel. This is also a difference with the classical world: there are infinitely many different ways to interpret the errors.

The reason why $\{M_k\}$ can really be interpreted as errors can be seen from \eqref{eq:Psi_purified}: measuring the environment in the basis $\{\ket{b_k}\}$ would modify $\ket{\psi_A}$ to $M_k \ket{\psi_A}/|| M_k \ket{\psi_A}||$ for some $k$: $A$ would have been directly "impacted" by the operators $\{M_k\}$.

We can now express the Knill-Laflamme conditions, which allow to give necessary and sufficient conditions for an error correction channel $\mathcal{R}$ to exist.
\begin{theorem}{Knill-Laflamme conditions}

\label{theorem:Knill_laflamme}
Let $H_C \subset H$ be the code space. We call $P_C$ an orthogonal projector on this space. We have the following property.

A quantum channel $\mathcal{R}$ allowing to correct for the errors introduced by an error channel $\mathcal{E}$ (which is also a quantum channel) will exist, i.e will satisfy
\begin{align}
\forall \rho_C \in \mathcal{L}(H_C), \mathcal{R} \circ \mathcal{E} (\rho_C)=\rho_C
\end{align}
if and only if it is possible\footnote{Infinitely many Kraus operator allowing to describe a quantum channel exist. Here, we ask that there exist one choice that satisfies the subsequent property.} to describe the $\mathcal{E}$ with a set of Kraus operators $\{M_i\}$ that satisfy for all $(i,j)$: 
\begin{align}
P_C M_i^{\dagger} M_j P_C = c_{ii} \delta_{ij} P_C,
\label{eq:Knill_Laflamme_diago}
\end{align}
where $c_{ii}$ are positive numbers. Those condititions are called the Knill-Laflamme conditions. The set of operators $\{M_i\}$ will be called \textit{errors} of the error channel.
\end{theorem}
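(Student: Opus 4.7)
My plan is to prove the two directions separately, using the flexibility in Kraus representations (different families $\{M_i\}$ can represent the same $\mathcal{E}$, related by a partial isometry) as the central technical tool.

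For the sufficiency direction (KL conditions imply a recovery channel exists), I would start from the diagonalized form $P_C M_i^\dagger M_j P_C = c_{ii}\delta_{ij} P_C$. Applying polar decomposition to each $M_i P_C$, I get $M_i P_C = \sqrt{c_{ii}}\, U_i P_C$ where $U_i$ is a partial isometry mapping $H_C$ isometrically onto a subspace $H_i \equiv U_i H_C$. The key observation is that the $H_i$ are mutually orthogonal: for $i \neq j$, $P_C U_i^\dagger U_j P_C = (c_{ii} c_{jj})^{-1/2} P_C M_i^\dagger M_j P_C = 0$. Setting $P_i$ as the orthogonal projector onto $H_i$, the recovery channel is built with Kraus operators $R_i = U_i^\dagger P_i$ (padded with an extra operator acting on the orthogonal complement of $\bigoplus_i H_i$ to ensure completeness). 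I would then verify directly that $\mathcal{R} \circ \mathcal{E}(\rho_C) = \sum_i c_{ii}\, \rho_C = \rho_C$, using $\sum_i c_{ii} = 1$ which follows from trace preservation of $\mathcal{E}$.

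For the necessity direction, suppose that $\mathcal{R}$ with Kraus operators $\{R_k\}$ satisfies $\mathcal{R} \circ \mathcal{E}(\rho_C) = \rho_C$ on all of $\mathcal{L}(H_C)$. The composition has Kraus operators $\{R_k M_i\}$. Restricted to states supported in $H_C$, it equals the identity channel, which admits $\{P_C\}$ as a (single) Kraus operator. By the uniqueness theorem for Kraus representations (equality of channels up to a partial isometry relating the Kraus families), I can deduce that $R_k M_i P_C = \lambda_{ki} P_C$ for some scalars $\lambda_{ki}$. Computing $\sum_k P_C M_i^\dagger R_k^\dagger R_k M_j P_C = P_C M_i^\dagger M_j P_C$ via $\sum_k R_k^\dagger R_k = I$ gives $P_C M_i^\dagger M_j P_C = \alpha_{ij} P_C$ where $\alpha_{ij} = \sum_k \lambda_{ki}^* \lambda_{kj}$ is Hermitian and positive semidefinite. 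To reach the stated diagonal form, I diagonalize $\alpha$ with a unitary $V$ and apply the corresponding basis change $M_i \mapsto \sum_j V_{ij} M_j$, which yields another valid Kraus representation of the same $\mathcal{E}$ satisfying $P_C M_i^\dagger M_j P_C = c_{ii}\delta_{ij} P_C$.

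The main obstacle is the uniqueness argument invoked in the second direction: I need equality of two channels on the restricted set of inputs supported in $H_C$ (not on all of $\mathcal{L}(H)$) to force the Kraus operators $\{R_k M_i\}$ to be scalar multiples of $P_C$ when acted on from the right by $P_C$. A clean way to handle this is to extend both channels to act on arbitrary states by sandwiching inputs with $P_C$, and to use the Choi-state characterization of a channel, which reduces the problem to comparing positive operators on $H_C \otimes H_C$ and reading off the relation between Kraus families. A secondary, easier, obstacle is guaranteeing that the recovery constructed in the first direction is a bona fide trace-preserving channel rather than merely trace non-increasing, which is handled by the extra Kraus operator on the orthogonal complement of $\bigoplus_i H_i$.
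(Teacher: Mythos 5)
Your proposal is correct and follows essentially the same route as the paper's own proof: polar decomposition of $M_iP_C$ and a measure-then-undo recovery for sufficiency, and the Kraus-representation uniqueness theorem applied to $\mathcal{R}\circ\mathcal{E}$ versus $\{P_C\}$ followed by diagonalization of the Hermitian matrix $\alpha$ for necessity. The two extra points of care you flag (the padding Kraus operator to make $\mathcal{R}$ trace preserving, and the need to handle channel equality only on inputs supported in $H_C$) are legitimate refinements of steps the paper treats more casually, but they do not change the argument.
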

It may not look at first view, but what the Knill-Laflamme conditions are asking is actually rather intuitive, and it consists in asking that the effect of two different errors would bring a given quantum state initially in the code space to two orthogonal subspaces. Then, an experimentalist could measure in which subspace the system went and would be able to deduce which error occurred, making the correction possible to apply. It also asks for this measurement to not be able to "betray" the information that was encoded, i.e., to get any information about what was the state $\ket{\psi_A}$ that had to be protected. An intuitive picture of the principle is provided in figure \ref{fig:principle_knill_laflamme}.
\begin{figure}[h!]
\begin{center}
\includegraphics[scale=0.5]{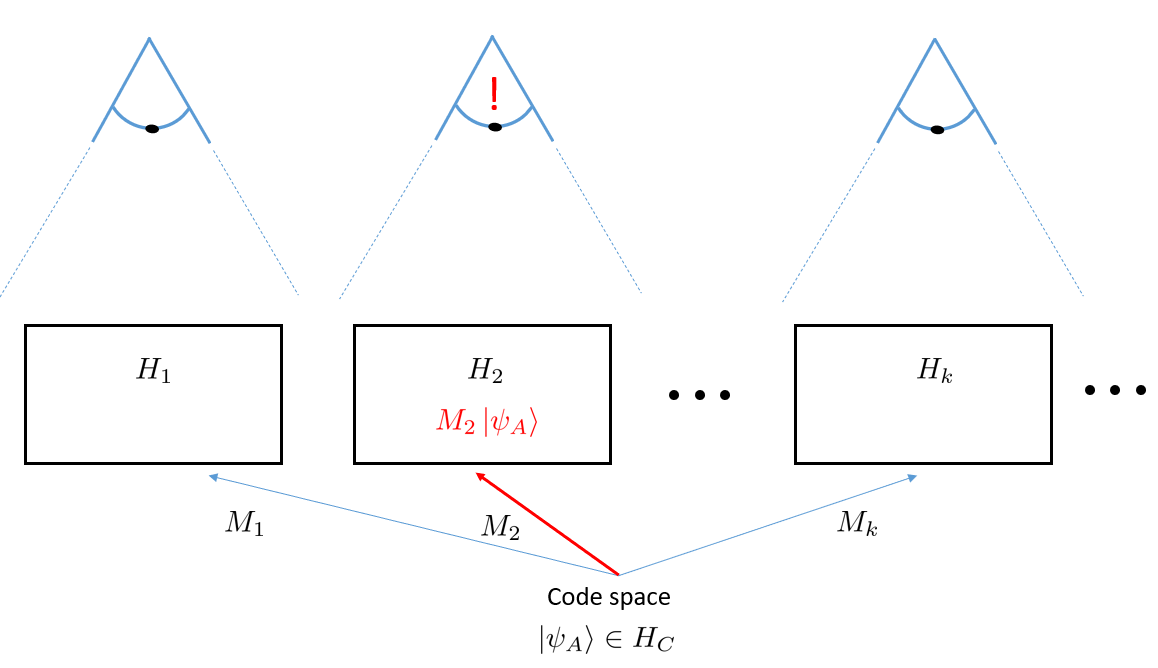}
\caption{The Knill-Laflamme conditions presented in theorem \ref{theorem:Knill_laflamme} express the fact that if there exist a Kraus-decomposition $\{M_k\}$ for an error channel $\mathcal{E}$ such that $M_i \ket{\psi_A} \perp M_j \ket{\psi_A}$ for $j \neq i$ and $\ket{\psi_A}$ initially in the code space $H_C$, then it is possible to detect which error occured by performing a projective measurement (and once the error is found, an appropriate correction can be applied). In this figure, we assumed that the error $M_2$ occured. Another requirement of the theorem is to avoid that the measurement can access any information about the specific state $\ket{\psi_A}$ that was initially prepared. The Knill-Laflamme condtions, as indicated by the theorem, are necessary \textit{and} sufficients condition for error correction.}
\label{fig:principle_knill_laflamme}
\end{center}
\end{figure}
\FloatBarrier
\textbf{Some additional comments}

The proof of this theorem is provided in the appendix \ref{app:Knill_Laflamme}. The interpretation provided in the figure \ref{fig:principle_knill_laflamme} can be found by considering a pure quantum state $\ket{\psi_C}$. The Knill Laflamme conditions imply that: $\bra{\psi_C} M_i^{\dagger} M_j \ket{\psi_C} = c_{ii} \delta_{ij}$: this term vanishes for two different errors ($i \neq j$) which illustrates the condition that two different error bring a state in two orthogonal subspaces. In such a case, the error correction basically consists in finding in which subspace the system went and apply an appropriate unitary to bring it back into the code space. The fact that this unitary always exists is shown in the proof of the theorem. However, all this interpretation relies on a specific choice of Kraus operators. Indeed, we recall that different sets of Kraus operators can describe the same channel, and as shown in the appendix, for another set of Kraus operator, the Knill-Laflamme condition would actually become:
\begin{align}
P_C M_i^{\dagger} M_j P_C = \alpha_{ij} P_C,
\label{eq:Knill_Laflamme_general}
\end{align}
with $\alpha_{ij}$ elements of an Hermitian matrix. In this case, the interpretation that two errors bring the code space into two different orthogonal spaces is no longer valid as $\alpha_{ij}$ might not vanish for $i \neq j$. But if the Kraus operators satisfy this new condition, then it is always possible to find another equivalent set of Kraus operators (i.e. describing the same error channel) such that the condition we gave in theorem \ref{theorem:Knill_laflamme} is valid, on which the interpretation is easier. 

Now, even though this theorem shows that it is possible to correct for an arbitrary error channel $\mathcal{E}$, it doesn't allow us to design a \textit{generic} error correction channel $\mathcal{R}$ which would be valid for different error channels at the same time. Indeed in the proof of the theorem, we built an error-correction channel that was a function of the error operators $\{M_i\}$. This is the reason why we need to introduce the concept of error discretization. Basically, we will see that if someone is able to correct for a finite set of error operators, it will be able to correct for an error that is any linear combination of elements of this set.
\subsection{Discretization of errors}
We finish this section with a last important general result called the discretization of errors. The Knill-Laflamme conditions gave us necessary and sufficient conditions in order to be able to correct errors coming from an error channel. But the exact correction procedure then depends on the particular error channel that is affecting the qubits. What would be desirable is to design a generic procedure which would work for a whole class of error channels. The principle allowing it is called the discretization of errors: as soon as it is possible to correct against a family of errors $\{M_i\}$, it is possible to correct for any error that is a linear combination of elements from this family. Thus, designing an error correction channel allowing to correct for the family $\{M_i\}$ will allow to resist against any linear combination of those errors.
\begin{theorem}{Discretization of errors}
\label{theorem:discretization_errors}

Let's assume $H_C \subset H$ is a code space. If the Knill-Laflamme conditions are satisfied for a set of error operators $\{M_i\}$, then they are satisfied for an arbitrary linear combination of those operators.
\end{theorem}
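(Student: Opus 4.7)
The plan is to start from the general Hermitian form of the Knill--Laflamme conditions, namely
\begin{align}
P_C M_i^{\dagger} M_j P_C = \alpha_{ij} P_C,
\end{align}
with $(\alpha_{ij})$ a Hermitian matrix, which was discussed right after the theorem as being equivalent (up to a change of Kraus representation) to the diagonal form \eqref{eq:Knill_Laflamme_diago}. The reason I work with the general form rather than the diagonal one is that linear combinations typically do not preserve the diagonal form but, as I will show, they do preserve the more general Hermitian form; the diagonal form can then be recovered at the end by invoking the freedom in the choice of Kraus operators.

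First, I would write an arbitrary linear combination of the $\{M_i\}$ as $N_k = \sum_i \beta_{ki} M_i$ for some complex coefficients $\beta_{ki}$. A direct computation using bilinearity of the map $(X,Y) \mapsto P_C X^{\dagger} Y P_C$ and substituting the generalized Knill--Laflamme conditions gives
\begin{align}
P_C N_k^{\dagger} N_l P_C = \sum_{i,j} \beta_{ki}^{*} \beta_{lj}\, P_C M_i^{\dagger} M_j P_C = \gamma_{kl} P_C, \qquad \gamma_{kl} \equiv \sum_{i,j} \beta_{ki}^{*} \beta_{lj} \alpha_{ij}.
\end{align}
So the ``projector times scalar'' structure is automatically preserved: the set $\{N_k\}$ satisfies the generalized Knill--Laflamme conditions with respect to the new matrix $(\gamma_{kl})$.

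Next, I would verify that $(\gamma_{kl})$ is itself Hermitian, which is needed for the conditions to genuinely fit the form discussed after theorem \ref{theorem:Knill_laflamme}. Using $\alpha_{ij}^{*} = \alpha_{ji}$ and relabeling indices, one obtains $\gamma_{lk}^{*} = \gamma_{kl}$. Hence $(\gamma_{kl})$ is Hermitian, so it can be diagonalized by a unitary change of basis on the labels $k$, i.e., one can define yet another set $\widetilde{N}_m = \sum_k U_{mk} N_k$ with $U$ unitary, and by the same algebraic manipulation as above the diagonal condition \eqref{eq:Knill_Laflamme_diago} holds for $\{\widetilde{N}_m\}$. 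Since a unitary reshuffling of Kraus operators defines the same quantum channel, the error channel built from the linear combinations admits a Kraus representation satisfying the original form of the Knill--Laflamme conditions, which by theorem \ref{theorem:Knill_laflamme} guarantees the existence of a recovery channel.

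The step that requires the most care is not the algebra itself (which is a one-line bilinear expansion) but the conceptual point that ``satisfying the Knill--Laflamme conditions'' is a property of the \emph{channel}, not of a specific Kraus representation; this is why one has to pass through the generalized Hermitian form and then diagonalize, rather than trying to verify \eqref{eq:Knill_Laflamme_diago} directly for the $N_k$. Once that is acknowledged, the remainder of the argument is a routine computation and an appeal to the unitary freedom in Kraus decompositions already mentioned in the excerpt.
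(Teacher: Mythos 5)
Your proposal is correct and follows essentially the same route as the paper's own proof: expand $P_C N_k^{\dagger} N_l P_C$ by bilinearity using the generalized Hermitian form of the Knill--Laflamme conditions, and then check that the resulting coefficient matrix is Hermitian. The only difference is that you make explicit the final diagonalization step (passing back to the form \eqref{eq:Knill_Laflamme_diago} via a unitary reshuffling of Kraus operators), which the paper leaves implicit by appealing to its earlier discussion of the equivalence between the two forms.
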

The proof of this theorem is straightforward but shown in the appendix as well. In practice, it will imply that if someone can protect against bit-flip (i.e $X$ error) and phase-flip (i.e $Z$ error) occurring on a single physical qubit, it is possible to correct for an arbitrary single-qubit error. Indeed, the single-qubit Pauli matrices form a basis for any single qubit operator, and we also have the fact that $Y=iXZ$. Thus it is enough to correct for $X$ and $Z$ errors to be able to correct for an arbitrary single-qubit error. Under this perspective, the error correction channel becomes somewhat independent on the exact expression of the error channel \footnote{Of course, it still depends on the error channel to some extent because it is designed to correct for a set of errors which linear combinations are able to reproduce any error of the error channel. But it won't require knowing the exact expression of the error channel for instance.}.

\section{Intuition behind stabilizer codes}
\label{sec:stabilizer}

Now that the general principle of quantum error correction has been provided, we are going to focus on a subclass of codes which are called stabilizer codes. Our work is based on Steane code which belongs in this category. However, we must keep in mind that there are actually many different classes of quantum error correction codes. We can think about CSS codes \cite{calderbank1996good,steane1996multiple}, bosonic codes \cite{noh2020fault,terhal2020towards}, topological codes \cite{dennis2002topological,noh2020fault,fowler2012surface},  large block codes \cite{zheng2018efficient,brun2015teleportation}, etc (those families are not necessarily exclusive from one another). But the family of stabilizer codes is a very important one as a wide variety of codes belongs in it. It relies on what is called the stabilizer formalism, for which we will try to give some general intuitions (we are not going to study all the details as it is a subject in itself, we refer to \cite{nielsen2002quantum, gottesman1997stabilizer,gottesman2010introduction} for further details). Then, we will apply this formalism in the simplest quantum error correction code that exists: the three-qubit bit-flip code in order to have an easy grasp of the concepts. It will also allow us to be able to easily understand Steane code which will be presented in the section \ref{sec:Steane} that follows. 
\subsection{What are stabilizers}
\label{sec:what_are_stabilizers}

In order to give a first basic intuition of what stabilizer codes are, we can start with explaining how quantum states can be described with the notion of stabilizers. Let us consider the following Bell state:
\begin{equation}
\ket{\text{Bell}} = \frac{\ket{00}+\ket{11}}{\sqrt{2}}
\end{equation}
We can notice that this state has the particularity to be an eigenstate of $+1$ eigenvalue for the operators: $Z_1 Z_2$ and $X_1 X_2$ where here (and for all the rest of this thesis), the notation $G_i$ where $G$ is any single-qubit Pauli matrix means that a Pauli operator $G$ is applied on the qubit $i$. It is also possible to show that it is actually the \textit{unique} (up to an arbitrary global phase) quantum state that is $+1$ eigenvalue of those two operators simultaneously. We say that $Z_1 Z_2$ and $X_1 X_2$ are \textit{stabilizers} of the state $\ket{\text{Bell}}$, or equivalently that the state is \textit{stabilized} by $Z_1 Z_2$ and $X_1 X_2$. This simple example illustrates the principle behind stabilizer formalism. Instead of providing an explicit expression of a quantum state, it is possible to define it by giving an appropriate set of stabilizers that define this state in a unique way. 

We can also stabilize \textit{spaces} and not only states. We can take an example with the total Hilbert space spanned by $n=3$ physical qubits:\footnote{The notation $Span(\ket{u},\ket{v})$ denotes the Hilbert space containing all the vectors than can be written as $a \ket{u}+b\ket{v}$ for any $a$ and $b$ complex coefficients.}
\begin{align}
Span \left( \ket{000},\ket{001},\ket{010},\ket{011},\ket{100},\ket{101},\ket{110},\ket{111} \right)
\label{eq:span_3physical}
\end{align}
Considering an arbitrary state in this vector space, it is possible to show that $Z_1 Z_2$ is stabilizing the space:
\begin{align}
Span \left( \ket{000},\ket{001},\ket{110},\ket{111} \right).
\end{align}
To understand it easily, we can notice that $Z_1 Z_2$ stabilizes the states where the two first qubits have the same parity, i.e., they are both either $00$ or $11$ but never $10$ or $01$. Any other computational state must then be removed from the linear span in \eqref{eq:span_3physical}. We notice that the dimension of the stabilized space has been divided by two. Considering an additional stabilizer $Z_2 Z_3$ (the two last qubits must have the same parity),
the final stabilized space is:
\begin{align}
Span \left( \ket{000},\ket{111} \right)
\end{align}
The dimension has again been divided by two. Thus, instead of talking explicitly about the space $Span \left( \ket{000},\ket{111} \right)$, we can describe it by saying that it is the space stabilized by the list of operators $\{Z_1 Z_2,Z_2 Z_3\}$. We see that each stabilizer here divided the dimension of the space by two. The intuition behind this is that one $n$-Pauli operator\footnote{We are not talking about the $n$-Pauli group, which is built from a tensor product of $n$ single-qubit Pauli matrices which can then be multiplied by $-1$, $i$ or $-i$. We try to avoid using any notion of group theory to keep the explanations simple.} (i.e., an operator built from a tensor product of $n$ single-qubit Pauli matrices) divides the space in half, each of the two halves corresponding to its eigenspace having $+1$ or $-1$ eigenvalue. Thus, as $2^n/2^{n-k}=2^k$, we need $n-k$ of those operators to reduce the dimension of the total space from $2^n$ to $2^k$. Finally, we can also add a sign $-1$ if necessary to the $n$-Pauli operator, it would then stabilize a different subspace (for instance, $\{-Z_1 Z_2, Z_2 Z_3\}$ would stabilize $Span \left( \ket{100},\ket{011} \right)$ as now $-Z_1 Z_2$ only keeps the state for which the two first qubit have different parities).

Now there are additional conditions to the sole fact to have a list of $n-k$ $n$-Pauli operators. Those extra conditions were satisfied in the previous example, but we briefly enumerate them here. Having $n-k$ operators is not the only condition necessary as we need the description to not be "redundant". Thus we ask to have a list of $n-k$ independent n-Pauli matrices in the sense that it wouldn't be possible to express one as a product of the others. For instance, $\{Z_1 Z_2,Z_2 Z_3, Z_1 Z_3\}$ is not a family of independent matrices because $Z_1 Z_3 = Z_1 Z_2 Z_2 Z_3$. We also ask for the operators that they all commute with each other; this is necessary to not stabilize a space only composed of the null vector (as we are only dealing with $n$-Pauli matrix, two operators will necessarily either commute either anti-commute, there is no other option \cite{nielsen2002quantum}). For instance if it appeared that two elements in this list, $g$ and $g'$ anti-commuted, then for any $\ket{\psi}$ stabilized by those operators we would have $g g' \ket{\psi}=-g' g \ket{\psi}$ and thus $\ket{\psi}=-\ket{\psi} \Rightarrow \ket{\psi}=0$. Finally, any product involving elements in the list cannot be equal to $-I$. Otherwise, we would also stabilizer a trivial space. This is because a product of operators stabilizing a space will stabilize the same space. If it happens that this product that we call $g$ satisfies $g=-I$ we then have for any $\ket{\psi}$ in this space $g\ket{\psi}=\ket{\psi}=(-I)\ket{\psi}$ which implies $\ket{\psi}=0$. See \cite{nielsen2002quantum} for derivations of all those properties.

\subsection{Simple example of stabilizer code: the three qubit code}
\label{sec:three_qubit_code}
Now, we can give a first example of a stabilizer code: the three-qubit code that is able to correct for a bit-flip occurring on any of the three physical qubits that are composing one logical qubit. A code using $n$ physical qubits means that the total Hilbert space describing the physics is $2^n$. We say that it protects $k<n$ logical qubits if the code space has dimension $2^k$. Indeed, in this case, one could protect $k$ "qubit unit of information", which we call logical qubits.

Here, the physics is thus occurring in the Hilbert space of dimension $2^3$ described in \eqref{eq:span_3physical}. When no error has occurred, the state to protect is in the code space. We define the code space as being the space stabilized by the operators $\{Z_1 Z_2, Z_2 Z_3\}$: it corresponds to $Span(\ket{000},\ket{111})$. If the logical qubit containing the logical information is written as $a \ket{0_L}+b\ket{1_L}$ (we put the index $L$ which means "logical" in order to describe the information that we wish to protect), it can in practice be encoded in the physical qubits as the state $a \ket{000}+b\ket{111}$. Assuming only bit-flip noise, i.e., the error channel is composed of errors being either $X_1,X_2$ or $X_3$, assuming that only one physical qubit can be affected by the noise at a time, the state can, because of that noise become either of the three possibilities below:
\begin{itemize}
\item $X_1$ error: $a \ket{100}+b\ket{011}$
\item $X_2$ error: $a \ket{010}+b\ket{101}$
\item $X_3$ error: $a \ket{001}+b\ket{110}$
\end{itemize}
The effect of the error is then to change the values of the stabilizers. For instance, if an $X_1$ error occurs, $Z_1 Z_2$ will no longer stabilize the state, but $-Z_1 Z_2$ will. By measuring the stabilizers, an experimentalist could deduce which qubit has been impacted by an error and apply from that an appropriate recovery. This is shown in the figure \ref{fig:3bitcode}. We can notice an important property. The fact an error changes a stabilizer in its opposite is related to the fact this error either commutes or anti-commutes with the stabilizer (those are the only two possibilities as both the errors and the stabilizers are n-Pauli operators here). To see it we can notice that if a state $\ket{\psi}$ is stabilized by $g$ ($g \ket{\psi}=\ket{\psi}$), then if an error $X_i$ occurred, we have: $X_i \ket{\psi}=X_i g \ket{\psi} = + g X_i \ket{\psi}$ if $[X_i,g]=0$ and $-g X_i \ket{\psi}$ if $\{X_i,g\}=0$.: $X_i \ket{\psi}$ is stabilized by $g$ if $X_i$ commutes with $g$ and by $-g$ if $X_i$ anti-commutes with $g$. We will make use of this property in the section \ref{sec:error_detection_and_correction}.
\begin{figure}[h!]
\begin{center}
\includegraphics[width=0.9\textwidth]{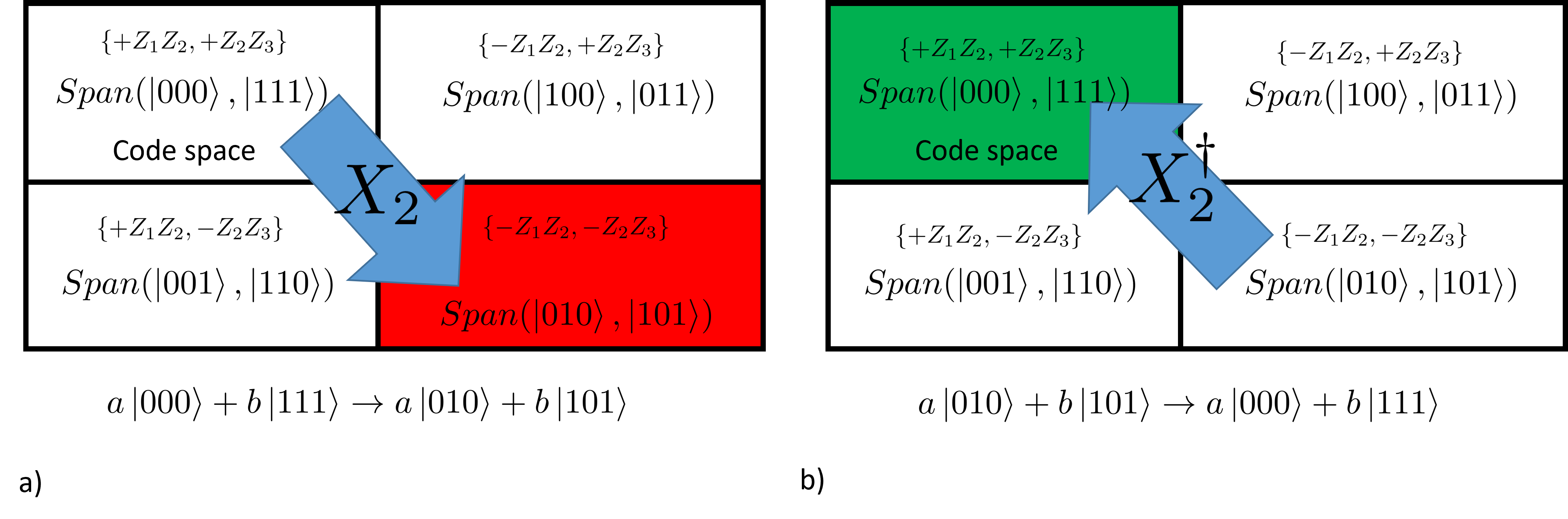}
\caption{Effect of an error and correction on a logical qubit state $a \ket{0_L} + b \ket{1_L}$ encoded on the physical qubits as $a \ket{000}+b\ket{111}$. \textbf{a)} An error $X_2$ affected the physical qubits composing the logical one. It makes the state going outside of the code space. A measurement of the stabilizers allows to find in which subspace the state has been going and to deduce from that that the error was indeed $X_2$. The state is then no longer stabilized by $\{Z_1 Z_2, Z_2 Z_3 \}$ but by $\{-Z_1 Z_2, -Z_2 Z_3 \}$ as $X_2$ anti-commuted with both $Z_1 Z_2$ and $Z_2 Z_3$. \textbf{b)} The measurement of the stabilizers allowed to identify that the error was indeed $X_2$. A correction can be applied by applying $X_2^{\dagger}$ and we recover the initial state $a \ket{000}+b\ket{111}$, putting the system back in the code space.}
\label{fig:3bitcode}
\end{center}
\end{figure}
\FloatBarrier
We see that the physics we are explaining here is analog to what we presented in \ref{fig:principle_knill_laflamme}. But it is phrased in the language of stabilizer codes, which, as we are going to see, allows to describe the physics in much simpler terms. 

We can also wonder what happens if \textit{two} errors are affecting the qubits, which is more than what the code is designed to protect. This is represented on the figure \ref{fig:3bitcode_error}.
\begin{figure}[h!]
\begin{center}
\includegraphics[width=0.9\textwidth]{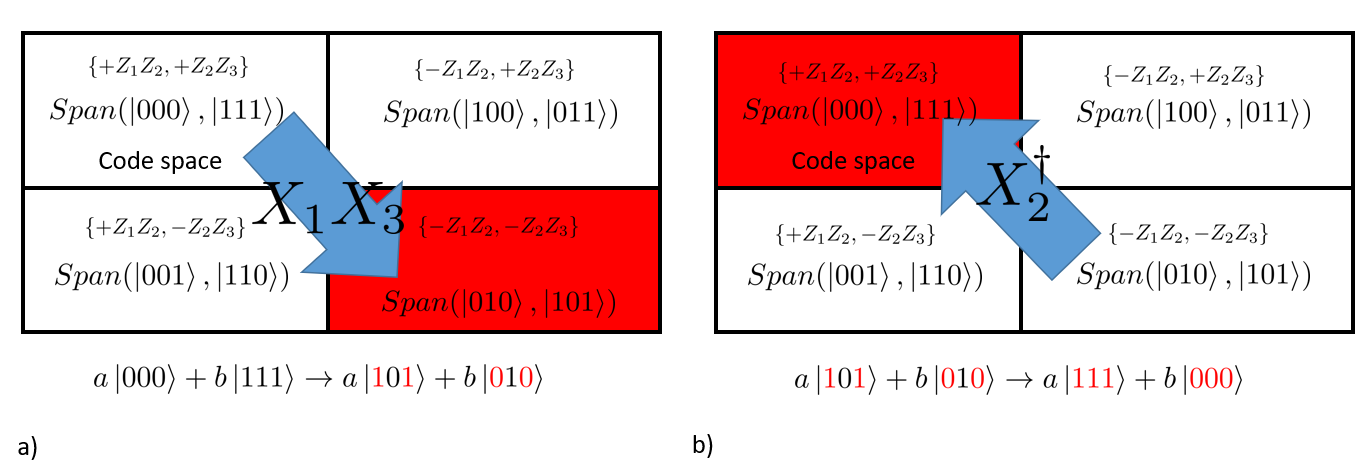}
\caption{What happens if there are more errors than what the code can protect. \textbf{a)} An error $X_1 X_3$ affected the physical qubits composing the logical one. As this operator anti-commutes with $Z_1 Z_2$ and $Z_2 Z_3$, the state is no longer stabilized by $\{Z_1 Z_2, Z_2 Z_3 \}$ but by $\{-Z_1 Z_2, -Z_2 Z_3 \}$. A measurement of the stabilizers will then wrongly interpret $X_1 X_3$ are being an $X_2$ error because the code assumes by construction that a unique bit-flip error can occur at a time.  \textbf{b)} Because the error $X_1 X_3$ was mis-interpreted as $X_2$, the correction will apply $X_2^{\dagger}=X_2$ and the state after correction becomes $X_2 X_1 X_3(a \ket{000}+b\ket{111})=(a \ket{111}+b\ket{000})$ which corresponds to a logical state $a \ket{1_L}+b\ket{0_L}$: the state of the system is back in the code-space but the value of the logical qubit is no longer the good one.}
\label{fig:3bitcode_error}
\end{center}
\end{figure}
\FloatBarrier
We see that the correction is then made in a wrong manner, and the state of the system might be back in the code space, but the logical qubit has now what is called a logical error: initially, we had a logical qubit being $a \ket{0_L}+b\ket{1_L}$, and after the error and correction, it became $a \ket{1_L}+b\ket{0_L}$. If the logical qubit was used in a computation, the outcome of the computation might then be wrong. Once a logical error occurs, it is in principle not possible to recover the initial data (excepted if, by some luck, a subsequent logical error will compensate this one).

The example of the three-qubit code is very instructive as it contains the main logic behind the stabilizer codes. For this reason, we are now ready to introduce Steane code which is able to correct for an arbitrary single-qubit error occurring on any physical qubit composing the logical one.

\section{Correcting arbitrary single qubit errors with Steane stabilizer code}
\label{sec:Steane}
The Steane code is a stabilizer code that is able to protect a quantum state from an arbitrary single-qubit error. To be able to do it, it uses $7$ physical qubits that are encoding one logical qubit. 

\subsection{The stabilizers of Steane code}
The code space of the Steane code\footnote{Actually, the definition of a code is the same as the definition of code space, the two terms are synonym: thus Steane code is actually by \textit{definition} the space that is stabilized by the stabilizers provided in \eqref{eq:list_stab_steane}.} is the space of dimension $k=1$ that is contained in the space of dimension $n=7$ stabilized by the $6$ following $7$-Pauli operators. We call those operators the \textit{stabilizers} of Steane code.
\begin{align}
& g_1=X_4 X_5 X_6 X_7 \notag\\
& g_2=X_2 X_3 X_6 X_7 \notag\\
& g_3=X_1 X_3 X_5 X_7 \notag\\
& g_4=Z_4 Z_5 Z_6 Z_7 \notag\\
& g_5=Z_2 Z_3 Z_6 Z_7 \notag\\
& g_6=Z_1 Z_3 Z_5 Z_7 \label{eq:list_stab_steane}
\end{align}
We recall that to be "stabilized" by operators means to be in the common eigenspace of eigenvalue $+1$ of those operators. This is not entirely obvious to see it here, but by performing appropriate calculations, we could show that this list of stabilizers satisfy the conditions we gave in the last paragraph of the section \ref{sec:what_are_stabilizers}. First, they all commute together. Then, any product involving those stabilizers cannot be equal to $-I$. Those two conditions ensure that the stabilized space is not reduced to $\{0\}$. They also form an independent family: any stabilizer cannot be written as a product of the others. Those conditions will imply that each of the stabilizers participates in stabilizing a smaller subspace: each one of them "divides" the dimension of the Hilbert space by two. The total Hilbert space being of dimension $2^7$, we end up with a code space being of dimension $2^7/2^6=2$: a unique logical qubit is being protected. Rigorous derivations behind our claims can be found in \cite{nielsen2002quantum}.
\subsection{Defining the logical states}
At this point, the code space is well defined. We saw that $7$ physical qubits are encoding one logical qubit, and this logical qubit lives by definition in the two-dimensional code space. But we would like to know what is now playing the role of the $\ket{0}$ and $\ket{1}$ states in the computation which we write $\ket{0_L}$ and $\ket{1_L}$ to recall that they are logical qubit states. In principle, any two orthogonal states of the code space could play this role. One possible way to define them is also based on the stabilizer formalism. For this, we can notice that for a qubit, $\ket{0}$ can be defined as the state stabilized by $Z$ (and $\ket{1}$ by the state stabilized by $-Z$). Adopting the same philosophy, we can first define the logical $Z$ operators, $Z_L$ in order to, then, define the states $\ket{0_L}$ and $\ket{1_L}$. Defining $Z_L$ actually follows the same logic as what we would need to define the code space: we want to build $Z_L$ in order that $Z_L$ added to the list of stabilizers in \eqref{eq:list_stab_steane} stabilizes a \textit{unique}\footnote{Unique up to some arbitrary global phase.} state. If the properties given in the last paragraph of \ref{sec:what_are_stabilizers} are satisfied for the resulting list, then the state will necessarily be in the code space. The stabilized state will \textit{by definition} be $\ket{0_L}$. We can also define $\ket{1_L}$ as the state stabilized by the same list, up to the modification $Z_L \to -Z_L$. One possible choice is to consider\footnote{It would satisfy the conditions given in the last paragraph of \ref{sec:what_are_stabilizers}.}:
\begin{align}
Z_L \equiv Z_1 Z_2 Z_3 Z_4 Z_5 Z_6 Z_7
\end{align}
From it (and the expression of the stabilizers), we can write down the expression of the logical states, a proper calculation \cite{nielsen2002quantum} would show that:
\begin{align}
&\ket{0_L} =\frac{1}{\sqrt{8}} \left( \ket{0000000}+\ket{0001111}+\ket{0110011}+\ket{0111100}+\ket{1010101}+\ket{1011010}+\ket{1100110}+\ket{1101001} \right)\\
&\ket{1_L}=\frac{1}{\sqrt{8}} \left(\ket{1111111}+ \ket{0101010} + \ket{1001100}+ \ket{0011001}+ \ket{1110000}+ \ket{0100101}+\ket{1000011}+\ket{0010110} \right)
\end{align}
As we see, their expression is rather cumbersome. This is also why the stabilizer formalism is used; it allows to be very economical in order to describe the quantum states. The interest in having written them is that we see that those states are really "not obvious": they are highly entangled states. But this is not a coincidence. This code is able to protect against local errors (because it can correct errors occurring on a single physical qubit). To do it, it will encode the information in a delocalized way, thus using highly entangled states.

Now that a computational basis for the logical state has been defined, an important question remains: how to manipulate this logical qubit? For any given unitary that the algorithm requires, how can we "translate" it to the logical qubit level? The answer to this question can easily be accessed by noticing that (i) the Pauli matrices $\{I,X,Y,Z\}$ form a basis for the operators acting on a qubit, (ii) $Y=iZX$. Indeed, if we are able to define the logical $X$ operator: $X_L$, then any single logical qubit operation could be described using $X_L$ and $Z_L$. For this, we can build $X_L$ in such a way that it satisfies all the properties given in the last paragraph of the section \ref{sec:what_are_stabilizers} (to make sure that $X_L$ added on the stabilizer list will stabilize a unique quantum state), and that it anti-commutes with $Z_L$ in order to properly respect the Pauli algebra. An operator satisfying those conditions is:
\begin{align}
X_L \equiv X_1 X_2 X_3 X_4 X_5 X_6 X_7
\end{align} 
And now that single logical qubit Pauli operators are defined, by using the tensor product, we can define multi logical qubit Pauli operators, and we can finally build an arbitrary operator. Thus, \textit{conceptually}, we know how to manipulate the logical state. But in practice, the situation is more complicated. Indeed what we need is to implement operations in a \textit{robust} manner such that we do not introduce "too many" errors when manipulating the qubit. This is the topic of fault tolerance that we introduce in the next pages.
\subsection{Error detection and correction}
\label{sec:error_detection_and_correction}
Now that Steane code has been described and that we understand how the logical computational states are defined, we need to understand how errors are detected and corrected. First, we recall that Steane code is able to correct for an arbitrary\footnote{As it is able to correct for $X$ and $Z$ errors, it is able to correct for an arbitrary single-qubit error from the theorem \eqref{theorem:discretization_errors}.} single-qubit error affecting \textit{one} physical qubit composing the logical qubit. Typically, if two physical qubits composing the logical one are affected by an error at the same time, it will not be possible to recover from it\footnote{This is not entirely true: Steane code cannot correct for errors such as $X_1 X_2$ but it can correct for error like $X_1 Z_2$ for instance. Typically if two \textit{different} errors are affecting two different qubits, the recovery will succeed.}. The code is first doing the detection of the error followed by the correction. We describe those two steps here.
\subsubsection{Error detection: the syndrome measurement}
Here, we are going to explain with a little bit more details the principle of error detection, also called \textit{syndrome measurement}. The principle is very similar to the measurement of the stabilizers we presented for the three-qubit code in section \ref{sec:three_qubit_code}, but in order to properly understand the correction procedure and the principle of error discretization, we consider an arbitrary error channel and explain how this correction works in practice. We will also see that even if no correction is being applied, the simple fact to detect the error already introduces a "partial" correction. 

We consider an error channel $\mathcal{E}$ described by a family of Kraus operators $\{M_k\}$. This error channel is affecting the state of the physical qubits encoding the logical qubit: $\ket{\psi_A}$. Considering in the description the environment with which $\ket{\psi_A}$ will get entangled (it is a way to model the effect of the noise as we explained in the text around \eqref{eq:Psi_purified}), we can describe the state of the system+environment once the noise has acted as $\ket{\Psi}=\sum_k M_k \ket{\psi_A} \ket{b_k}$ (the family of orthogonal states $\{\ket{b_k}\}$ are associated to the environment). Now, we can decompose the Kraus operators on some $n$-Pauli operators $\{E_i\}$ such that: $M_k = \sum_i c_{ik} E_i$ for some complex coefficients $c_{ik}$. This is always possible as the $n$-Pauli operator form a basis on which any operator acting on $n$ qubits can be decomposed \cite{nielsen2002quantum}. If we define $\ket{\widetilde{b}_i}=\sum_k c_{ik} \ket{b_k}$, we obtain:
\begin{align}
\ket{\Psi}= \sum_i E_i \ket{\psi_A} \ket{\widetilde{b}_i}
\label{eq:Psi}
\end{align}
At this point, we can clarify what it means to have only single physical qubit errors. It means that any of the $n$-Pauli operators $E_i$ actually contains \textit{one} non-trivial Pauli operator. For instance, it would mean that it is allowed to have $E_i=X_3$ but not $E_i=Z_1 Z_2$ as the latter contains Pauli operators affecting two different qubits. We can also see here that the notion of having or not having an error is less clear than with the Kraus decomposition: as the different states $\{\ket{\widetilde{b}_i}\}$ do not represent an orthonormal family, having an error $E_i$ is not entirely distinguishable from having an error $E_j$ for instance. 

Now, in order to detect the error, we have to measure all the stabilizers on the state $\ket{\Psi}$ in \eqref{eq:Psi}. In a similar fashion as what was done in the section \ref{sec:three_qubit_code}, we will make use of the that fact $E_i$ commutes or anti-commutes with any of the stabilizer $g_k$ of Steane code. If it commutes with it, $g_k$ will still stabilize the state. Otherwise, $-g_k$ will. For this reason, any term $E_i \ket{\psi_A}$ in the sum \eqref{eq:Psi} will, after measurement, bring the state in a subspace that will be stabilized by $\pm g_k$ for any stabilizer $g_k$ given in \eqref{eq:list_stab_steane}. By measuring the eigenvalues of the Stabilizers, the experimentalist will then be able to deduce which error occurred and will be able to correct it. We also notice that the syndrome measurement already performs a "partial" correction. Indeed, because a projective measurement is performed, once the syndrome measurement is being performed, any overlap of the system between different eigenspaces of the stabilizers will be destroyed. If we think about the example of the three-qubit code, the state of the system after syndrome measurement would either be in one of the four rectangles of the figure \ref{fig:3bitcode}, but it cannot extend on different rectangles at the same time. We are going to see in the section \ref{sec:FT} that this syndrome measurement will then be "enough" in practice if classical processing is keeping in memory the outcome result: the syndrome does the "most important" part of the correction.

Now we simplified a bit the story by saying that from the measurement outcome of the stabilizers, we can directly identify which error occured\footnote{For the conditions on the errors given in this paragraph, different errors can in principle lead to the same syndrome while ensuring a correction to be possible.}. We saw with the three-qubit code example that if we allowed for two physical qubit errors such as $X_1 X_3$ it is not possible to identify that it is the error that occurred. The conditions we need to allow the different errors to be corrected is that for all $(i,j)$: $E_i E_j^{\dagger}$ is either a product of the stabilizers, either anti-commutes with at least one of the stabilizers \cite{nielsen2002quantum}. For Steane code, if we assume that all the possible n-Pauli errors $\{E_i\}$ actually contain only one non-trivial Pauli operator, we can check that those conditions will be satisfied (we provided those conditions for the sake of completeness).
\subsubsection{Error correction:}

The error correction procedure is very simple; once the syndrome has been found, the experimentalist will find which error occurred, and it simply has to apply its inverse. For instance if an error $X_1$ occurred, the stabilizer of the state containing the error will be $\{g_1,g_2,g_3,g_4,g_5,-g_6\}$. The fact that the last stabilizer changed allows the experimentalist to deduce that the error $X_1$ occurred, and it will then apply $X_1$ on the system to implement the correction (of course, it works if no more than one error occurred).

\section{Fault-tolerant quantum computing}
\label{sec:FT}
\subsection{Error correction is not enough: the need for fault-tolerance}
In the previous section, we explained how the Steane code is constructed, and we gave the basic intuition behind the more general formalism of Stabilizer codes. We showed that it is conceptually possible to detect and correct errors occurring on quantum systems. But unfortunately, this is not enough in practice. Indeed, so far, we assumed that we had a given quantum state on which we want the information to be preserved, and that after the noise is occurring, we can \textit{perfectly} detect and correct for the errors it induced. This is, of course, an unrealistic assumption: those operations will be noisy as well. Fault-tolerant quantum computing is the step further. It explains how it is possible to implement \textit{successfully} quantum error correction, acknowledging that this is also a noisy operation by providing explicit circuit construction. Those circuits will be at the roots of the energetic estimations we will do in the next chapters. It will also provide us with all the theoretical tools required to understand how to scale up the level of protection: we will see that given some conditions, the effective noise felt by the logical qubits can be put as close to zero as desired.

To understand the issue with the sole use of error correction, let us design a simple circuit that allows measuring all the stabilizers of the Steane code. In order to do so, we will need the following property that allows us to measure observables easily. Its proof is given in the appendix. 
\begin{property}{Measuring observable having $\pm 1$ as eigenvalues}
\label{prop:meas_obs}

To measure an observable $M$ (possibly acting on multiple qubits) that admits eigenvalues $\pm 1$, one can design the following circuit.
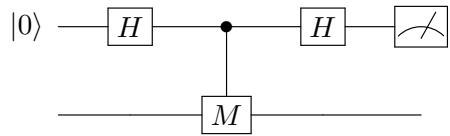
\begin{figure}[h!]
\centerline{
\Qcircuit @C=1.7em @R=1.7em {
\lstick{\ket{0}} & \gate{H} & \ctrl{1} & \gate{H} & \meter \\
 & \qw & \gate{M} & \qw & \qw \\
}}
\caption{The measurement in the $\sigma_z$ basis of the ancilla qubit measures the observable $M$ of the system measured. The gates $H$ are Hadamard \cite{nielsen2002quantum} operations. The black dot represents a controlled operation. Here we thus apply $M$ on the system qubit (second line) if the ancilla qubit (first line) is in the state $\ket{1}$ and we don't apply it otherwise. The effect for superposition of the control are deduced by the linearity of unitary transformations.}
\label{measurement_M}
\end{figure}
\FloatBarrier
If the ancilla (top line) is being found in $\ket{1}$ after measurement, it means that the system (bottom line) is in $\ket{-_M}$. If the ancilla is being found in $\ket{0}$ it means that the system is in $\ket{+_M}$. Where $\ket{\pm_M}$ are eigenstates associated to the eigenvalue $\pm 1$ of $M$.
\label{measuring_observable_pm1}
\end{property}
Based on this property, we can consider measuring the stabilizers of Steane code, i.e., perform the syndrome measurement, by using the circuit represented on the figure \ref{fig:meas_stab_Steane}. The correction is not represented, but from the measurement outcomes, one would simply have to apply on the physical qubits the appropriate unitary to correct as we explained in the section \ref{sec:error_detection_and_correction}. We also take the opportunity to introduce some further definitions. The top $7$ physical qubits in these circuits are the qubits composing the logical qubits. We call them physical \textit{data} qubits as they are qubits encoding the information used by the algorithm. The bottom $6$ qubits represented are the physical ancilla qubits. They are not data qubits as they don't directly participate in the implemented algorithm: they are only here to implement the error correction procedure. 

At first view, we could believe that we just have to implement this circuit to detect the errors. But the issue is that each of the gates in this circuit might induce errors. It is then not guaranteed at all that we are able to correct errors with it. It might be the opposite; it could be possible that \textit{more} errors would be introduced because all the gates required might be noisy. As a concrete example, let us imagine that an error occurred before one of the controlled gates. This error might propagate on the other physical data qubits involved in this gate in such a way that a unique gate that "failed" (we give a precise definition of gate failure in the following section) during this implementation will induce errors occurring on multiple physical data qubits. The sole fact to try detecting the errors will introduce more errors and actually make the situation worse. We need \textit{robust} implementation of quantum error correction for which the explanations of the rest of this chapter are dedicated to. 
\begin{figure}[h!]
\centerline{
\Qcircuit @C=1.7em @R=1.7em {
 & \qw & \qw & \qw & \qw & \sgate{X}{2} & \qw & \qw & \sgate{Z}{2} & \qw & \qw  \\
 & \qw & \qw & \qw & \sgate{X}{1} & \qw & \qw & \sgate{Z}{1} & \qw & \qw & \qw \\
 & \qw & \qw & \qw & \sgate{X}{3} & \sgate{X}{2} & \qw & \sgate{Z}{3} & \sgate{Z}{2} & \qw & \qw \\
 & \qw & \qw & \sgate{X}{1}& \qw  & \qw & \sgate{Z}{1} & \qw & \qw & \qw & \qw \\
 & \qw & \qw & \sgate{X}{1} & \qw & \sgate{X}{2} & \sgate{Z}{1} & \qw & \sgate{Z}{2} & \qw & \qw \\
 & \qw & \qw & \sgate{X}{1} & \sgate{X}{1} & \qw & \sgate{Z}{1} & \sgate{Z}{1} & \qw & \qw & \qw \\
 & \qw & \qw & \gate{X} & \gate{X} & \gate{X} & \gate{Z} & \gate{Z} & \gate{Z} & \qw & \qw \\
 & \lstick{\ket{0}} & \gate{H} & \ctrl{-1} & \qw & \qw & \qw & \qw & \qw & \gate{H} & \meter \\
 & \lstick{\ket{0}} & \gate{H} & \qw & \ctrl{-2} & \qw & \qw & \qw & \qw & \gate{H} & \meter \\
 & \lstick{\ket{0}} & \gate{H} & \qw & \qw & \ctrl{-3} & \qw & \qw & \qw & \gate{H} & \meter \\
 & \lstick{\ket{0}} & \gate{H} & \qw & \qw & \qw & \ctrl{-4} & \qw & \qw & \gate{H} & \meter  \\
 & \lstick{\ket{0}} & \gate{H} & \qw & \qw & \qw & \qw & \ctrl{-5} & \qw & \gate{H} & \meter  \\
 & \lstick{\ket{0}} & \gate{H} & \qw & \qw & \qw & \qw & \qw & \ctrl{-6} & \gate{H} & \meter \\
}}
\caption{Syndrome measurement of Steane code. The top $7$ qubits are the physical (data) qubits, i.e., the qubit that compose the logical qubit. The top $6$ below are the ancilla qubits performing the syndrome measurement. The $H$ gates are Hadamard gates. The gates involving a black dot represent controlled operations. For instance, the first of those gates, i.e., the further on the left, will apply $X_4 X_5 X_6 X_7$ on the data qubits if the first ancilla is in the state $\ket{1}$. If it is in $\ket{0}$ the identity is being applied on the data qubits. Not represented here is the correction procedure in itself that must be applied afterward in order to correct. Based on the property \ref{prop:meas_obs}, this circuit measures the six stabilizers of Steane code provided in \eqref{eq:list_stab_steane} (each measurement outcome will directly give the eigenvalue of the associated stabilizer).}
\label{fig:meas_stab_Steane}
\end{figure}
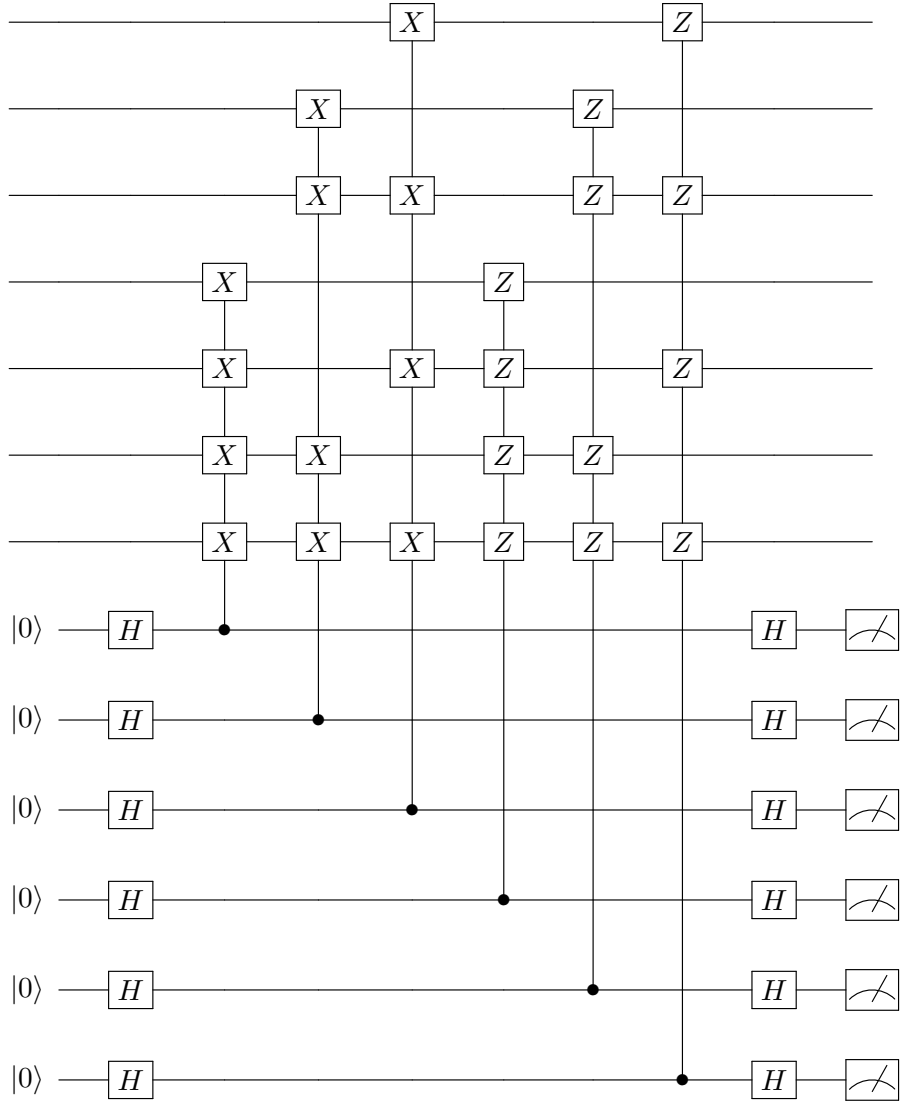
\FloatBarrier
\subsection{The principle of fault-tolerant quantum computing}
One of the main ideas behind fault-tolerant quantum computing is to perform error correction before that errors propagate "too much". To illustrate what we mean, we can take a look at Figure \ref{fig:fault_errors}.
\begin{figure}[h!]
\begin{center}
\includegraphics[scale=0.5]{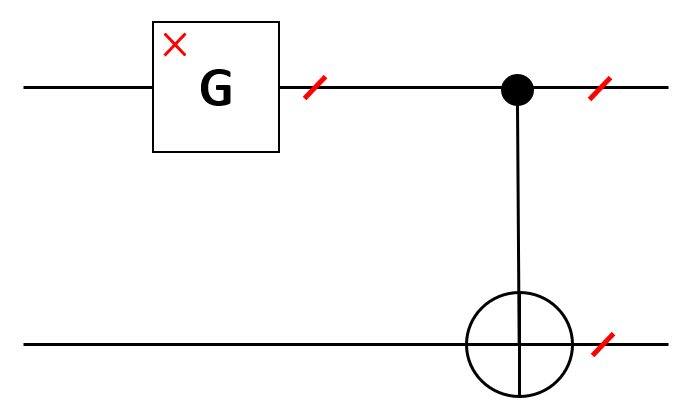}
\caption{Each black line represents a physical qubit. The red cross represents a fault: a position in the circuit where a gate "failed". The oblique red lines represent errors: physical qubits composing the logical one that has been corrupted. If an error occurred on the top qubit, it might propagate to the bottom qubit because of the two-qubit cNOT gate. This kind of behavior is to avoid for a robust design.}
\label{fig:fault_errors}
\end{center}
\end{figure}
Let us imagine that the top physical qubit has been "badly" manipulated in the gate $G$. Then, this qubit will have what we call an error: it is not in the state it is supposed to be. And this error might propagate as soon as a two-qubit gate involves this qubit, such as a cNOT with another qubit. One error will induce two errors simply because of interactions. This propagation of errors is something to avoid at all costs for the design of a robust quantum computer because the number of errors might increase in an uncontrollable manner. Fault-tolerant quantum computing provides an approach to design a quantum computer in a way that errors do not propagate "too much" before correction is applied. It is a way to design circuits that allows keeping the situation "under control".

\subsubsection{Fault and errors}
\label{sec:fault_and_errors}
The starting point is to make a clear distinction between errors and faults. For this we can still refer to Figure \ref{fig:fault_errors}. A fault (represented by a red cross on this figure) is a position in the circuit where a gate did not operate as expected. It thus induces errors (represented by oblique red lines) on qubits. Thus, the notion of errors is a concept related to the qubits, while the notion of faults is a concept related to (noisy) gates. To say things simply, a fault is the reason why errors are occurring. But to understand better we need to be more precise. Indeed, in general, all the gates will introduce a little bit of noise on the manipulated qubits. The answer to that question can rely upon the error discretization principle we already talked about, but in this context, we need to add one extra layer in the description. Indeed now we do not only have noise that is acting, but we also desire to manipulate the qubits. The issue is that both noise and gate operation are acting "at the same time", there is not, at first view a well-defined "noise" that is acting separately from the gate. This is why we are going to introduce the concept of \textit{noise channel} of a quantum gate, which will allow us to actually "separate" the noise introduced by a gate from its ideal implementation. We take an example to explain it. Let's assume that we are implementing a quantum circuit. This circuit will, in practice, be implemented by a succession of elementary operations that experimentalists are able to implement in the laboratory. This is what we are going to call quantum gates. Calling $N$ the number of (noisy) quantum gates that there are inside the circuit, assuming that each quantum gate can be modeled by a quantum channel, the total evolution of the circuit can be written as:
\begin{align}
\mathcal{G}=\mathcal{G}_N \circ \mathcal{G}_{N-1} \circ ... \circ \mathcal{G}_1,
\label{eq:seq_N_noisy_gates}
\end{align}
where each of the $\mathcal{G}_i$ is a quantum channel\footnote{In a general circuit, gates may be acting in parallel (which is mathematically described by a tensor product), or in sequence (which is described as a composition). By taking the convention that $\mathcal{G}_i$ will apply identity operation on the qubits not involved in the dynamic of the gate, we can always write the evolution of the general circuit as \eqref{eq:seq_N_noisy_gates}}. We can \textit{define} the noise channels associated with each of those maps as the maps $\mathcal{N}_i$ which satisfy:
\begin{align}
\mathcal{G}_i = \mathcal{U}_i \circ \mathcal{N}_i,
\label{eq:def_noise_channels}
\end{align}
where $\mathcal{U}_i$ is the ideal (unitary) operation that we tried to implement. This definition\footnote{Because $\mathcal{U}_i$ is unitary, it admits an inverse; thus, there is only one map $\mathcal{N}_i$ satisfying \eqref{eq:def_noise_channels}, it is $\mathcal{N}_i \equiv \mathcal{U}_i^{-1} \circ \mathcal{G}_i$.} means that a noisy gate can mathematically be seen as an ideal gate preceded by some noise map (even though in the laboratory the gate and the noise are acting "at the same time"). The fact we have chosen to make the noise acting "before" the gate is a matter of convention; we could have considered that it acts "after" by reversing the order of the composition in \eqref{eq:def_noise_channels} (it would change the expression of $\mathcal{N}_i$). Now, to understand what it means to have a fault at some position in the circuit, we can use again the vision we used around \eqref{eq:Psi_purified} in which noise acting on a system can be modeled by considering that the system of interest is getting entangled with some environment. We call $\{M^i_k\}$ the Kraus operators associated to $\mathcal{N}_i$. The density matrix at the end of the evolution: $\rho'=\mathcal{G}(\ketbra{\psi}{\psi})$ ($\ket{\psi}$ being the initial quantum state at the beginning of the algorithm) can be written as $\rho'=Tr_{E}(\ketbra{\Psi}{\Psi})$ where $E$ is the environment that has been introduced in the modeling:
\begin{align}
\ket{\Psi} = \sum_{k_1,...,k_N} U_N M^N_{k_N}...U_1 M^1_{k_1} \ket{\psi} \ket{\epsilon_{k_1,...,k_N}},
\end{align}
where $\{\ket{\epsilon_{k_1,...,k_N}}\}_{k_1,...,k_N}$ is a family of orthonormal vectors belonging in the environment $E$ used to purify. We can now follow the same approach we did around \eqref{eq:Psi}, and decompose each of the Kraus operator in the $n$-Pauli basis family, which allows us to rewrite $\ket{\Psi}$ as:
\begin{align}
\ket{\Psi} = \sum_{j_1,...,j_N} U_N E_{j_N}...U_1 E_{j_1} \ket{\psi} \ket{\widetilde{\epsilon}_{j_1,...,j_N}},
\label{eq:purification_faults}
\end{align}
where each $E_{j_i}$ is an $n$-Pauli operator, and where we have introduced the new family of states of the environment: $\ket{\widetilde{\epsilon}_{j_1,...,j_N}}$ (which is not orthonormal in general).

At this point, we can properly define what we mean by "a fault". Let us consider a non vanishing term in the sum \eqref{eq:purification_faults} (thus when $||\ket{\widetilde{\epsilon}_{j_1,...,j_N}}|| \neq 0$). We will say that the $l$'th gate in $U_1 E_{j_1}...U_N E_{j_N} \ket{\psi} \ket{\widetilde{\epsilon}_{j_1,...,j_N}}$ had $p$ faults if $E_{j_l}$ contains $p$ non trivial Pauli operators in its tensor product. For instance, if $E_{j_l}$ is equal to $X_1 X_2$, $Z_1 X_6$ or $Y_1 Z_2$ it would contain two faults with this definition. Of course, a $Q$-qubit gate can contain a maximum of $Q$ faults. We notice that we keep having some "quantum fuzziness" about the number of faults occurring: it is not very clear to know how many faults occurred, the state $\ket{\Psi}$ being a superposition between different states composed of a varying number of faults. It is only for each of the terms in this sum that the number of faults is well defined. 

In the same spirit, we can also revisit the notion of errors to make the connection with faults. An equivalent way to describe the noisy evolution is to define the quantum channel $\mathcal{N}$ such that $\mathcal{G}=\mathcal{N} \circ ( \mathcal{U}_N \circ ... \circ \mathcal{U}_1)$ (here we basically introduced the unique noise channel $\mathcal{N}$ occurring after the ideal evolution of the $N$ gates). Calling $\ket{\psi}_{\text{ideal}}=U_1 ... U_N \ket{\psi}$, we can purify the final quantum state (which is physically the same as the one in \eqref{eq:purification_faults}) and decompose the purification on the basis of $n$-Pauli operators as we did before. We find:
\begin{align}
\ket{\Psi}=\sum_i E_i \ket{\psi_{\text{ideal}}} \ket{\widetilde{\epsilon}_i} \label{eq:error_state_definition},
\end{align}
where again $E_i$ is an $n$-Pauli matrix, and the $\ket{\widetilde{\epsilon}_i}$ belong to a family of states of some environment that are not necessarily orthogonal between each other. Then, we can take any term inside this sum, and we will say that the associated quantum state had $p$ \textit{errors} if the associated $n$-Pauli operator contains $p$ non-trivial Pauli operators in its tensor product. We see that the errors are really associated with the quantum state as there is no notion of "where in the circuit" this $n$-Pauli operator occurred as opposite as what we discussed for faults. Now, surely, if no fault occurred, no error would be in the final state. But there is, in general, no direct "easy" relationship between the number of faults and the number of errors at the end of the process. One could, for instance, imagine that two consecutive faults compensate each other such that the associated event will be associated with no error. Or, more critically, a single fault could induce an error that would propagate into many errors. This is what would occur in Figure \ref{fig:fault_errors} if the fault associated with $G$ is an $X$ Pauli operator; after the cNOT, both qubits will have $X$ errors. We discuss the propagation of errors for different quantum gates in the appendix \ref{app:error_propagation}.

In summary, we saw that faults are associated with quantum gates that did not operate as expected. Errors are associated with quantum states and represent the fact that a quantum state is not in the state it is supposed to be. To make a precise definition of those concepts, we need to decompose the evolution on a basis (here Pauli operators). Once it is done, it is possible to clearly identify a number of faults or errors only for each term within those sums. The reason why such approaches are considered is that they will allow us to properly choose conditions a circuit must satisfy in order to avoid a too big propagation of errors. 

\subsubsection{Avoiding errors to propagate} 
\label{sec:axiomatic_FT}
Now that we have precise definitions of faults and errors, we can give the principle behind fault-tolerant construction. All the results presented here assume that the noise model behind the different gates is local, which actually means that each quantum gate can be modeled by a quantum channel as we already used in \eqref{eq:seq_N_noisy_gates}.
Those were some of the initial assumptions behind the quantum threshold theorem \cite{aliferis2005quantum,knill1998resilient,preskill1998reliable},  which has then been extended to non-Markovian noise\footnote{To be precise, the fact that individual quantum gates can be modeled by quantum channels includes some cases of non-Markovianity. But it does not include the cases in which the memory effect between two \textit{different} gates are not negligible. In \cite{terhal2005fault}, it is shown that the result of fault tolerance can be extended to this case.} \cite{terhal2005fault}, and to non-local, long-range correlated noise \cite{aharonov2006fault}.

We wish to design circuits that are robust to avoid error propagation, and which implement quantum error correction in order to protect the logical qubits. The basic idea is to replace each gate in an algorithm by its implementation on logical qubits, followed by quantum error correction. We call the level-0 concatenation a quantum circuit that implements an algorithm without quantum error correction. Any gate in the level-0 concatenation will be called 0-Ga, (for level-0 gate), whatever the exact single or two-qubit gate it is. A 0-Ga is thus implemented on physical qubits in the exact same manner as the one described by the algorithm. We call the level-1 concatenation (or first level of concatenation) an algorithm in which each of the physical qubits are replaced by logical qubits, and in which each 0-Ga is replaced by what we call 1-Rec, for level 1 rectangle\footnote{The element of language "Ga", "Ec", "Rec", "exRec" that we will use in this chapter are the conventional one used for instance in \cite{aliferis2005quantum}.}. The reason why it is the "first" level is because such construction would only improve the accuracy once. We will see in the following section how to increase even more the level of protection by performing more concatenations. A 1-Rec is an entity that is acting on the logical level and is composed of two elements. First, there is the implementation on the logical level of the 0-Ga gate itself; we call it the 1-Ga (for level-1 gate). This entity performs on a logical level the same operation as what the 0-Ga did (if without error correction we wanted to implement a Hadamard gate, the 1-Ga associated will implement the Hadamard gate but now on the logical level). In the 1-Rec, the 1-Ga is then followed by error correction that we call 1-Ec. The 1-Ec will first consist in performing the syndrome measurement, and then applying the appropriate correction \footnote{We will see that implementing the correction is not always necessary if the experimentalist keeps track of the errors that occurred.}, see figure \ref{fig:1_Rec}. In summary, a 1-Rec is then simply the translation of a 0-Ga now on the logical level followed by error correction. On a physical level, both 1-Ec and 1-Ga are composed of many physical gates (thus 0-Ga) that allow to perform the appropriate operations on the logical level, as represented in the figure \ref{fig:inside_level1}. To be a little bit more precise, if the 0-Ga gate was a two-qubit gate, we actually need to implement error correction on each of those two qubits when they are being replaced by logical ones: we would need \textit{two} 1-Ec boxes as represented on the figure \ref{fig:1_Rec}.
\begin{figure}[h!]
\begin{center}
\includegraphics[scale=0.5]{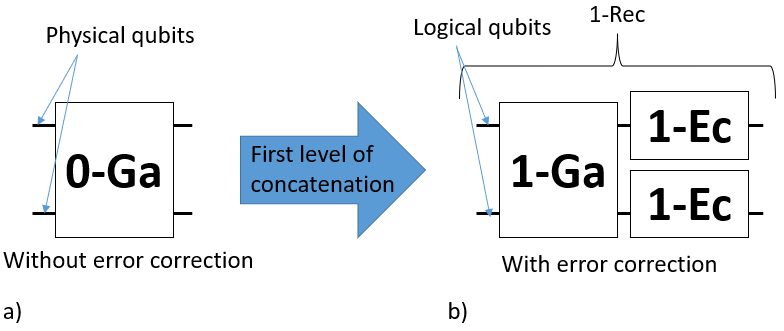}
\caption{\textbf{a)}: A gate (here a two-qubit gate) implemented on physical qubits is called 0-Ga. \textbf{b)}: When performing error correction (i.e., applying the first level of concatenation), the physical qubits are replaced by logical qubits, and the gate is replaced by a 1-Rec that is defined as a 1-Ga followed by one 1-Ec for each logical qubit on which the 1-Ga is acting. The 1-Ga performs the same operation as 0-Ga in terms of information processing, but it now acts on logical qubits. The 1-Ec are performing error correction on each logical qubit after that the gate has been applied. As 1-Ga, 1-Ec, 1-Rec are acting on a logical level, they are composed of physical elements: physical gates (i.e. 0-Ga) that are acting on the physical qubits.}
\label{fig:1_Rec}
\end{center}
\end{figure}
\FloatBarrier
\begin{figure}[h!]
\begin{center}
\includegraphics[scale=0.5]{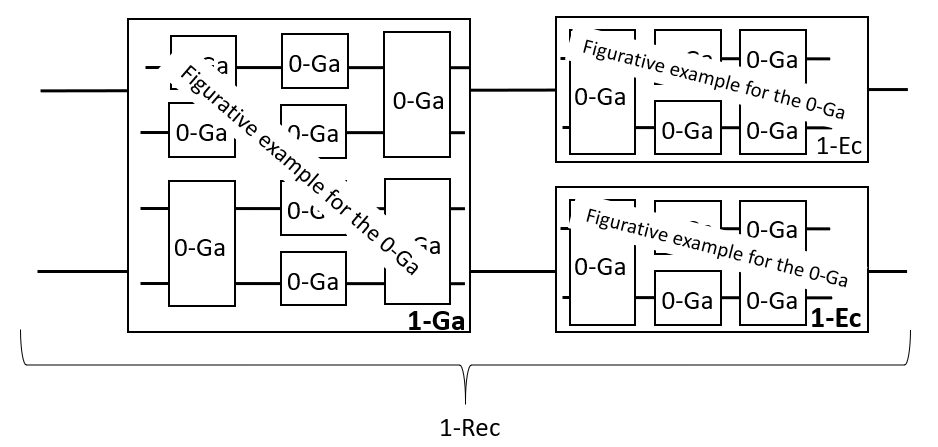}
\caption{Inside a 1-Ga or a 1-Ec, many physical gates are acting on the physical qubits to perform the appropriate logical gate (1-Ga) or error correction (1-Ec). The exact 0-Ga gates represented are just here for the illustration: the information to understand here is that inside 1-Ga and 1-Ec there are 0-Ga elements.}
\label{fig:inside_level1}
\end{center}
\end{figure}
\FloatBarrier
Now, it is possible to provide a list of axioms \cite{aliferis2005quantum} that, if satisfied, allow us to be certain that the outcome of the algorithm matches the ideal (i.e., noiseless) algorithm would provide. We will assume for simplicity that the preparation and the measurements of the logical qubits can be done perfectly (otherwise, we would have to enter in a little bit more details). The goal consists in demanding that (i) the faults must be sufficiently sparse inside the algorithm and (ii) the 1-Ga, 1-Ec do not propagate errors on different physical qubits composing a given logical one, nor create multiple errors inside a logical qubit in the case they are faulty, i.e., when they contain a 0-Ga that had a fault. If such requirements are satisfied, \textit{then}, it is possible to show that the algorithm will be successfully \textit{simulated}: i.e., its outcome will match the outcome of the ideal algorithm. 

In practice, we will ask for (i) that there is a maximum of one fault that can occur in what we call a 1-exRec for level-1 extended rectangle (this is the last technical term we introduce). A 1-exRec is composed of a 1-Rec in which we add the previous 1-Ec as represented by the colored rectangles on the figure \ref{fig:level_1_simulation_without_measprep}. The 1-exRec are central elements in the construction as they represent the appropriate group of components allowing to understand why a logical gate can be more resilient to faults and errors. The assumption of one fault maximum per 1-exRec is our assumption of "the faults are sufficiently sparce". For (ii), we will ask 1-Ga and 1-Ec to be fault-tolerant according to the two definitions that follow. In those definitions, we will say that a logical qubit contains (or not) one error. To say that a logical qubit contains (or not) one error means that its quantum state is "affected" (or not) by one error (thus by a 1-Pauli operator in the sense of the definition for errors given around \eqref{eq:error_state_definition}).
\begin{definition}{Fault-tolerant design for 1-Ga}

\label{def:FT_1Ga}
A 1-Ga will be said to be implemented fault-tolerantly if:
\begin{itemize}
\item If it contains no fault, and if there are no errors for the logical qubits at its input, then its output will contain no errors.
\item If it contains no faults, and there is one error for one of the logical qubits at its input, then the logical qubits at its ouput\footnote{There will have one logical qubit at its output for a one qubit logical gate, but two for a two-qubit logical qubit gate.} will contain one error per logical qubit maximum.
\item If there is one fault inside, and the logical qubits at its input contain no errors, then the logical qubits at its output will contain one error per logical qubit maximum.
\end{itemize}
\end{definition}
The first item asks that the 1-Ga does not introduce errors if no fault occurred and no errors were initially there. The second item asks that in the case there was initially an error, the 1-Ga does not propagate this error "too much", more precisely not on multiple physical qubits within a logical qubit. What the last item asks is somehow in the same spirit with the difference that it is now the gate that is being faulty while the input was good. It demands that the gate does not create too many errors if only one fault occurred. We can also ask for properties that a fault-tolerant 1-Ec should satisfy.
\begin{definition}{Fault-tolerant design for 1-Ec}

\label{def:FT_1Ec}
A 1-Ec will be said to be implemented fault-tolerantly if:
\begin{itemize}
\item If it contains no fault, any input logical qubit having a maximum of one error will be mapped to an output having no errors.
\item If it contains one fault, for any input logical qubit having no error, the logical qubit at its output will have one error maximum.
\end{itemize}
\end{definition}
The first item basically asks for an error correction that "properly does the job": it will detect and correct a potential error if it is not faulty\footnote{And of course not create one if there were no error initially.}. With this axiom, we assumed implicitly that the code considered can correct for one error maximum, which is the case for Steane code. The second item asks to avoid that it creates errors on multiple physical qubits composing a logical one in the case it was faulty (while having a good input). Thus it avoids that the 1-Ec will create "too many errors" if it is faulty.

We are now ready to understand the example represented in the figure \ref{fig:level_1_simulation_without_measprep}. We assume working with a unique logical qubit from the beginning to the end of the algorithm (this is also a simplification in the explanations, but the principle behind it would still work for any algorithm). The faults are still represented by the red crosses, while errors are symbolized by red oblique lines. We assume that the first 1-exRec receives at its input a logical qubit containing no errors. We also assume that this 1-exRec contains a fault in the first 1-Ec. As the input logical qubit contained no errors, the output will contain one error maximum from the second item of definition \ref{def:FT_1Ec}. The logical qubit then goes through the 1-Ga, which cannot be faulty because of the sparse assumption. This error may thus "remain" \textit{but will not be amplified}, i.e., there will still have a unique error on the logical qubit after the 1-Ga as a consequence of the second item in the definition \ref{def:FT_1Ga}. The logical qubit then enters in the last 1-Ec of the first 1-exRec which removes the errors from the first item of \ref{def:FT_1Ec} as this 1-Ec is not faulty (because again, there can only have one fault per 1-exRec). This 1-Ec is also part of the second 1-exRec (as they are overlapping), and this second 1-exRec can contain one fault that we assumed to be in the 1-Ga. This fault may induce an error on the logical qubit from the third item of definition \ref{def:FT_1Ga} that will be again corrected by the following 1-Ec. In the last 1-exRec we assumed that the last 1-Ec was faulty. It creates one error on the logical qubit. But this error will then be corrected by the following 1-Ec in the (non represented) next 1-exRec. We see that fault tolerance is basically asking for circuit designs that avoid errors propagating in an uncontrollable manner such that it is possible to keep the errors under control. In those examples, we saw that errors occurring are necessarily corrected a few steps later on. 

Here we did not talk about what happens on the boundaries: for the logical qubit preparation (called 1-prep) and its measurement (called 1-measurement). Basically, the same kind of axioms will be associated with those components, with some slight changes. For instance, we ask for the 1-measurement to be more "resilient" to errors and fault. As an example, we ask it to provide a correct output even if there were an error at its input (while it was not faulty). Thus having a unique error before the final measurement (it is what we would have in figure \ref{fig:level_1_simulation_without_measprep} if we assumed that the logical qubit is then measured) would not be an issue. Further details and the exact assumptions to make for the 1-preparation and 1-measurement can also be found in \cite{aliferis2005quantum}.
\begin{figure}[h!]
\begin{center}
\includegraphics[scale=0.5]{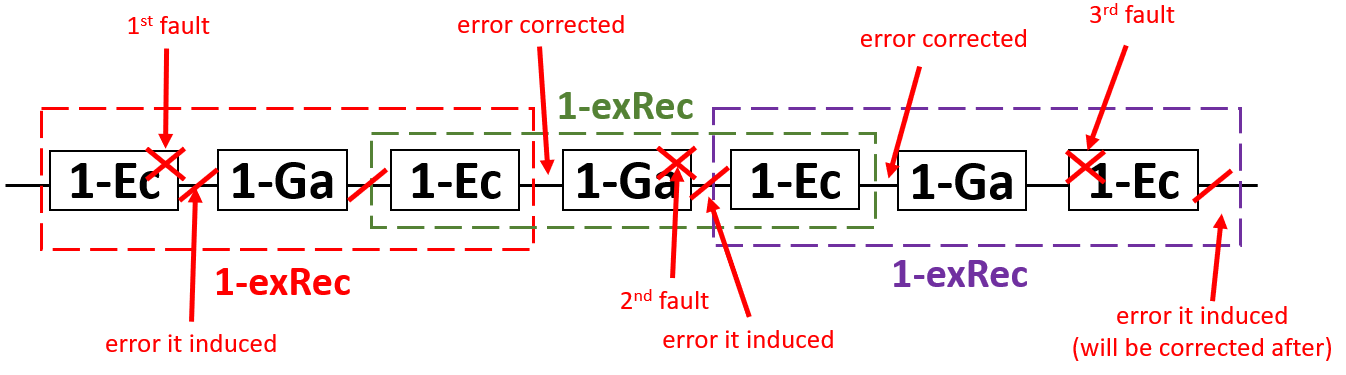}
\caption{example of a successful first concatenation level. The colored rectangles represent what is called a 1-exRec, which corresponds to a 1-Ga preceded and followed by 1-Ec. The 1-exRec are overlapping: the third 1-Ec belongs in the first and second 1-exRec at the same time. We randomly chose some faults (represented by the red crosses) occurring in a way that there is a unique fault per 1-exRec (it is the assumption of "the error are sparse". As explained in the main text, from the assumption we made on the behavior of the 1-Ec and 1-Ga, the errors (represented by the red oblique lines) are kept under control from error correction.}
\label{fig:level_1_simulation_without_measprep}
\end{center}
\end{figure}
\subsubsection{Quantitative estimation of the first level of protection}
\label{sec:quantitative_estimation_first_level_protection}
Now, we can distinguish the two requirements we made. We asked that the 1-Ga, 1-Ec satisfy some properties about error propagation. Those properties are what will constraint the circuits that are implementing the operations. As we will see, circuits satisfying those constraints exist, such that this assumption is fair to make. The other assumption was based on the fact the faults are occurring in a sufficiently sparsed manner. This is not guaranteed in general, but what can happen is that if the noise is local and weak enough, then the errors will likely be sparse. 

We are now going to estimate in a quantitative manner how much the effect of the noise can be reduced by applying quantum error correction. Those quantitative estimations will assume that the noise channels of the different gates can be described as probabilistic Pauli noise (applied on each physical qubit composing the logical one). Those assumptions will remain along with all this thesis (up to one exception in the section \ref{sec:long_range_correlated_crosstalk} of the next chapter). The principle of fault-tolerance does not depend on the specific noise model for the gates (as soon as the noise is local), but the quantitative estimations about how much error correction improves the situation depend on it. A probabilistic Pauli noise acting on a single qubit is a noise channel satisfying:
\begin{align}
\mathcal{N}(\rho)=p_0 \rho + p_1 X \rho X + p_2 Y \rho Y + p_3 Z \rho Z
\end{align}
where the family $\{p_i\}$ correspond to probabilities, i.e $\sum_i p_i=1$ and $\forall i, p_i \geq 0$.

Defining $\eta$ as being the biggest probability to have a physical gate that failed (even if we model the noise as probabilistic Pauli, different gates can have different strength for this noise), i.e., that it will apply at least one non-trivial (i.e., not identity) Pauli operator on the qubit(s) that it manipulates,  we can find that without error correction, an algorithm composed of $N_L$ gates will provide a wrong answer with a probability that is upper bounded by $N_L \eta$. To obtain this, we just sum the probability that each gate fails, under the assumption $\eta \ll 1$, which is a general assumption made in fault-tolerance (the noise must be low in order for error correction to be useful). 

To improve the level of protection, we can do one level of concatenation. The first concatenation level will have a probability to fail that is upper bounded by the probability that any of the 1-exRec failed. We say that a 1-exRec fails if there is at least a \textit{pair} of faults that is occurring inside. The reason why looking at pair of faults is relevant can be understood from the figure \ref{fig:level_1_simulation_without_measprep}: if a unique fault occurred in a 1-exRec we found that the errors did not propagate (and the algorithm would provide a correct answer). If a pair of faults are occurring within a 1-exRec then the situation might lead to uncorrectable errors. Calling $A$ the number of fault locations (i.e. the number of "places" in which faults can occur), under the assumption $\eta \ll 1$, the probability that one 1-exRec fails is then upper bounded by:
\begin{align}
p_L^{(1)}(\eta)=\binom{A}{2} \eta^2.
\label{eq:pL1}
\end{align}
The number $A$ can be considered here to be equal to the number of physical gates inside a 1-exRec (it is very close to it). But strictly speaking, it is not exactly equal to it. Indeed, for instance, the initialization of a physical qubit in $\ket{0}$ is not considered as a gate. But it is a place where something could go wrong (a bad initialization could occur). Thus it counts as a fault location. The difference between the number of physical gates and fault location being small, we will assume they are the same in those explanations.
  
The probability that the first concatenation level fails is then simply upper bounded by $N_L p_L^{(1)}$ as the number of gates the algorithm has to implement is equal to the number of 1-exRec (because each 1-Ga is inside a unique 1-exRec). As soon as $p_L^{(1)} < \eta$, the error correction would have increased the accuracy of the computation. This condition would mean that the faults are "sparse" enough. We understand from \eqref{eq:pL1} and the algorithm failure probability that the 1-exRec is the good group of components to consider when we need to think about a logical gate. Indeed, the probability that a 1-exRec fails plays a similar role as the probability that a physical gate failed without error correction played.

Now, this construction would work to improve protection once. What we would like is to find a way to "scale up" the protection and to make the algorithm able to resist against two faults, three faults, ... $k$ faults occurring in those exRecs. One way to do it is to "increase" the concatenation level, as we are now going to explain.
\subsubsection{Improving the protection to an arbitrary level: the principle of concatenations.}
\label{sec:arbitrary_accurate_quantum_computing}
The principle of concatenations is one that allows to provide an arbitrarily high level of protection given the fact that the probability that a physical gate is faulty is below some threshold. The overall principle is based on a Russian-dolls like construction. We keep the example of an algorithm composed of $N_L$ gates, and we still assume that the faults on physical gates are occurring with a probability $\eta \ll 1$. Without error correction, we "directly" implement the algorithm on physical qubits, and the probability that it fails can be estimated as $N_L \eta$. To improve the protection, we can do one concatenation. We replace each of the 0-Ga gates by a 1-Rec. As explained before, the probability that the first concatenation level would fail is now $N_L p_L^{(1)}$, with $p_L^{(1)}$ defined in \eqref{eq:pL1}. 

To increase further the level of protection, we can apply "more" quantum error correction. We apply the \textit{exact same recipe} we already applied: we replace each of the 0-Ga contained in the circuit, implementing the first concatenation level by a 1-Rec. This is called the second level of concatenation. Thus if we had one physical gate at the very beginning, this gate would have been replaced by a 1-Rec, which is composed of physical gates (this was the first level of concatenation). And we now replace all the physical gates again inside this 1-Rec by "new" 1-Rec (it defines a 2-Rec). This is really analog to a Russian dolls construction; the figure \ref{fig:recursive_concatenation} shows the particular example we are presenting and the general "Russian dolls" philosophy. 

In some sense, we can say that we do "twice" more error correction. Indeed, error correction is implemented on level-2, but also on level-1 "inside" this level-2. A 1-exRec (in the level-1) will fail if a pair of physical gates are faulty. We already estimated the probability for such event: $p_L^{(1)}$ in \eqref{eq:pL1}. But we also implement error correction on level-2. And what played the role of physical gates inside the level-1 is now being played by the 1-exRec. The construction at level-2 will then fail if a pair of 1-exRec are failing. From this, we can find the probability that a 2-exRec fails as being: \begin{align}
p_L^{(2)}(\eta)=\binom{A}{2} \left(p_L^{(1)}(\eta)\right)^2.
\end{align}
where a 2-exRec is formally defined as a 1-exRec in which all the 0-Ga have been replaced by 1-Rec. But formally, we should really think of $p_L^{(2)}$ as being the probability that a gate implemented by the algorithm fails when two concatenations are being performed. It is the exact analog of $\eta$ when no error correction was performed.
\begin{figure}[h!]
\begin{center}
\includegraphics[width=0.9\textwidth]{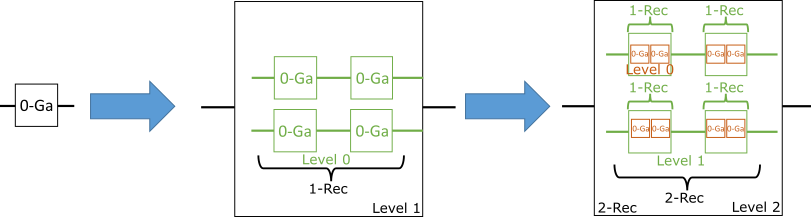}
\caption{The recursive "Russian doll" construction from 0 concatenation to 2 concatenations. Initially, there are only physical gates (i.e 0-Ga) in the algorithm. They are then replaced by logical components (more precisely, they are replaced by 1-Rec) at level-1, which introduces new physical gates to implement the logical gates and error correction. To do the second level of concatenation, we then replace "again" the 0-Ga inside the first concatenation by 1-Rec. It defines what is called a 2-Rec. In practice, for each increment of the concatenation level, the 0-Ga are replaced by 1-Rec. The 0-Ga represented inside the level-1 and the level-2 are here for illustration purposes (they do not represent a "real" circuit)}
\label{fig:recursive_concatenation}
\end{center}
\end{figure}

And we can continue to do those replacements recursively. Basically, each 0-Ga in the level-2 concatenation can be replaced again by a 1-Rec, which would give the level-3 concatenation level. And so on up to an arbitrary level $k$. The level-k exRec is defined from the level-(k-1) exRec in which each 0-Ga has been replaced by a 1-Rec. After $k$ levels of concatenations, the probability of having a bad level-k exRec, which is the good entity to consider if we want to think about a protected logical gate, is upper bounded by $p_L^{(k)} \equiv \binom{A}{2}(p_L^{(k-1)})^2$ which from a recursive reasoning is equal to:
\begin{align}
&p_L^{(k)}=\eta_{\text{thr}}\left(\frac{\eta}{\eta_{\text{thr}}}\right)^{2^k}
\label{eq:probability_error_k_concatenations}\\
& \eta_{\text{thr}} \equiv \frac{1}{\binom{A}{2}}
\end{align}
$p_L^{(k)}$ is the analog of $\eta$ when $k$ concatenations have been performed. For this reason, the probability to have an unsuccessful algorithm after $k$ level of concatenation can then be upper bounded by:
\begin{align}
p^{(k)}_{\text{unsuccessful}}=N_L p_L^{(k)}
\label{eq:probability_error_algo_k_concatenations}
\end{align}
From those equations, we understand that an arbitrary accurate computation is possible given the fact that $\eta<\eta_{\text{thr}}$. $\eta_{\text{thr}}$ is called \textit{the threshold} and is about $10^{-4}$ for a probabilistic noise model \cite{aliferis2005quantum}\footnote{This value is obtained from numerical simulations and is not strictly equal to the number of pair of fault locations, it is a "more accurate" numerical estimate.}. 

What this equation shows is that if the noise is lower than the threshold $\eta_{\text{thr}}$, then quantum error correction improves the situation. But if $\eta > \eta_{\text{thr}}$ it does not because "too many" physical components are required to implement the error correction, and the fact their probability of failing is too big, they would degrade the situation "more" than what quantum error correction "improves". 

We can also comment that the interpretation of $p_L$ as being a \textit{probability} of error for a logical gate, and the exact value of the threshold would be different outside of the probabilistic fault models we considered here \cite{aharonov2008fault}, but the general formula \eqref{eq:probability_error_k_concatenations}, and the physical intuition about how the effect of noise is being reduced would remain. 

It is actually standard to approximate quantum noise by probabilistic noise models in the literature \cite{jayashankar2021achieving,beale2018quantum}, and this is what we are going to consider in many places along this thesis. A completely rigorous treatment of the noise here would ask us to compute a norm for the process $\mathcal{N}$, and to use this norm as what plays the role of $\eta$, but this approach, even though more rigorous, would add complications and it would lead to a very poor upper bound of the estimation for the occurrence of a fault for a logical gate that is unnecessarily pessimistic.

The equation \eqref{eq:probability_error_k_concatenations} and the formula \eqref{eq:probability_error_algo_k_concatenations} are the central ones we are going to use in all this thesis. It is with them that we will estimate how many concatenations have to be performed to implement the various algorithm as a function of the algorithm size. Those formulas are also the main ingredient and spirit that are behind the quantum threshold theorem. However, what this theorem states is a bit more than only providing those expressions, and this is the reason why, for the sake of completeness of this chapter, we will express it.
\subsubsection{The quantum threshold theorem}
\label{sec:quantum_threshold_theorem}
The concatenated construction allows us to reduce in an effective manner the strength of the noise. But it comes with a cost: the number of physical elements required grows with the level of protection. What the quantum threshold theorem will tell us is that (i) the noise can be effectively reduced as much as desired when enough error correction is being done (we already saw it, but it formalizes slightly differently the notion of error of the algorithm), and (ii) the number of physical components, as well as the total algorithm duration required to make this improvement, does not grow "too fast" as a function of the size of the algorithm we want to implement on the quantum computer: error correction can be implemented using a "reasonable" amount of resources, and the algorithm will run in a "reasonable" amount of time.

More formally, we call $\{p_i^{\text{actual}}\}$ the probability distribution of the measurement performed on the logical qubits at the very end of the noisy (but protected by error correction) circuit, and $\{p_i^{\text{ideal}}\}$ this same distribution but for a perfect implementation of the circuit. The index $i$ is thus associated to a possible measurement outcome (if we have $Q_L$ logical qubits, $i \in [1,2^{Q_L}]$). Now, we can define the error of the computation as being\footnote{It corresponds to the $L^1$ distance between probability distributions \cite{aliferis2005quantum}}:
\begin{align}
\delta \equiv \sum_{i} |p_i^{\text{actual}}-p_i^{\text{ideal}}|
\end{align}
In the case all the $k$-exRec are not faulty, we would have, for all $i$, $ p_i^{\text{actual}}=p_i^{\text{ideal}}$ (because the ideal algorithm would then be perfectly simulated). We notice that in general, $p_i^{\text{ideal}}$ might not be peaked, i.e., equal to a delta Kronecker if, by essence, the algorithm does not provide an answer with certainty (it would be the case for Grover algorithm, for instance, \cite{nielsen2002quantum}).

Then, we have\footnote{In \eqref{eq:proba_i_actual} we use for each term the fact that $P(A \cap B)=P(A)P(B|A)$ where $P$ denotes a probability, $A$ and $B$ are two events and $A|B$ means $A$ knowing $B$. For instance for the first term, $A$ is "The algorithm is successful", and $B$ is "The outcome $i$ has been found".}:
\begin{align}
p_i^{\text{actual}}=(1-p^{(k)}_{\text{unsuccessful}})p_i^{\text{ideal}}+p^{(k)}_{\text{unsuccessful}}p_i^{\text{fail}}
\label{eq:proba_i_actual}
\end{align}
The probability of finding the outcome $i$ in the noisy "actual" implementation of the algorithm is equal to the probability to find this outcome whether the implementation is successful (this is the first term in the sum) or not (this corresponds to the second term of this sum). The first term corresponds to the probability to have a successful algorithm \textit{and} to find the outcome $i$. In this case, as the algorithm was successful, this last probability is equal to $p_i^{\text{ideal}}$. The second term corresponds to the probability to have an unsuccessful algorithm \textit{and} to find the outcome $i$. We don't know exactly what will be the probability to find the outcome $i$, knowing the algorithm is unsuccessful, but there exists some probability distribution for that that we call  $p_i^{\text{fail}}$.

From those expressions, we deduce that:
\begin{align}
\delta = p^{(k)}_{\text{unsuccessful}} \sum_i |p_i^{\text{fail}}-p_i^{\text{ideal}}| \leq 2 p^{(k)}_{\text{unsuccessful}} \leq N_L p_L^{(k)}, 
\end{align}
where we used the fact that $\sum_i |p_i^{\text{fail}}-p_i^{\text{ideal}}| \leq 2$ (this is a general property for $L^1$ distance between probability distributions \cite{aliferis2005quantum}). As, under the assumption $\eta < \eta_{\text{thr}}$, the right handside converges to $0$ for $k$ big enough (see \eqref{eq:probability_error_k_concatenations}), we deduce that the error $\delta$ can be put as close to $0$ as desired. This is the first result of the quantum threshold theorem.

The second result is that it can be done with a "reasonable" amount of physical resources, and the algorithm will run for a "reasonable" amount of time. More formally, calling $d$ the maximal depth of a 1-Rec, the depth of a circuit being defined as the number of timesteps the circuit is composed of\footnote{This definition only makes sense when all the physical gates have the same duration. We can make sense of this by calling $\tau_{\text{longest}}$ the duration of the longest physical gate that is used in the algorithm. Then, we can consider that any gate faster than that will have to be completed by an identity operation such that the sequence of gate+identity operation lasts for $\tau_{\text{longest}}$. Under this angle, the notion of timestep is well defined.}, and $l$ the maximum number of locations in a 1-Rec (roughly speaking, the number of gates inside), an algorithm composed of $N_L$ locations and having a depth being $D_L$ can be implemented with the concatenated construction with an error lower or equal to $\delta$ with $N_L^*$ locations and a depth $D_L^*$ such that \cite{aliferis2005quantum}:
\begin{align}
&N_L^*=O(N_L(\log(N_L))^{\log_2(l)}) \label{eq:Lstar}\\
&D_L^*=O(D_L(\log(N_L))^{\log_2(d)}), \label{eq:Dstar}
\end{align}
where $O$ is the "big-O" Landau notation. For two sequences $U_n$ and $V_n$, $U_n=O(V_n)$ means that $|U_n/V_n| \leq C$ for any $n$ bigger than some $n_0$, for some positive constant $C$. Here, \eqref{eq:Lstar}, \eqref{eq:Dstar} mean that the depth and number of locations of the error-protected circuit will not grow faster than what there is inside of the $O(.)$. 

In practice, it implies that the number of physical elements required to perform the error correction for a fixed physical error value and a fixed accuracy to reach $\delta$ will in the worst case grow "a bit faster" as to how the algorithm is expected to grow ("a bit faster" because it doesn't grow only proportionally to $N_L$ but to $N_L \log(N_L)^{\log_2(l)}$), but for instance, we know that at least the growth is not exponential. And it implies that the algorithm will run in a "reasonable" amount of time as the depth of the error protected algorithm will, in the worst case, grow "a bit faster" than the depth of the algorithm to implement ("a bit faster" because it doesn't grow only proportionally to $D_L$, but to $D_L (\log(N_L))^{\log_2(d)}$ in the worst case). This theorem is a result showing that quantum error correction can be useful in practice. But it does not give a quantitative estimation of all the resources (physical qubits, gates of each type, gates active in parallel, etc.) that will be required for a concrete problem for a given technology. Determining those elements is among the main goals of the two last chapters of this thesis.
\subsubsection{Fault-tolerant implementation of Steane code with the Steane method}
\label{sec:FT_implementation_Steane_method}
Now that we explained the principle of concatenations, we need to find the concrete circuits allowing us to implement error correction fault-tolerantly: they will allow us to perform our detailed energetic calculations later on. Here, we will provide the exact circuits allowing to do the first level of protection, i.e., the first level of concatenation. In practice, it means that we will explain here how a 1-Rec can be realized with Steane code for any 0-Ga (what is the exact quantum circuit allowing to implement it). Then, from the "Russian dolls" recursive construction, we will be able to access the number of physical gates of each type and the number of physical qubits required for any concatenation level, but this will be done in the section \ref{sec:adapting_framework_FT} of the chapter $4$. 

The first question we must address is how we can design circuits that implement gates fault-tolerantly, i.e., that implement 0-Ga in such a way that the errors do not propagate on multiple physical qubits composing a logical qubit as we saw that it is a requirement to perform concatenations (see definition \ref{def:FT_1Ga} for the exact requirements). One way to do it is based on transversal implementations. Let us assume that we want to apply a single qubit gate $G_L$ on a logical qubit. We will say that the implementation of this gate is performed transversally if to implement this gate, we need to apply the corresponding physical gate $G$ on all the physical qubits composing this logical qubit. For a two-qubit gate such as a cNOT, we will say that the gate is implemented transversally between two logical qubits if it is realized by applying the corresponding physical gate between the physical qubits, such that the $i$'th physical qubit of the first logical qubit is interacting with the $i$'th physical qubit of the second logical qubit through this physical gate\footnote{Actually the exact definition of transversal operation asks to not make two physical qubits composing a logical qubit interact between each other. We took a little bit of freedom on this definition in the main text.}. This concept is illustrated on the figure \ref{fig:logical_transversal}. Transversal gates are by construction fault-tolerant as they never make two physical qubits within the same logical qubit interact with each other. Thus one error inside a logical qubit cannot cause two errors within \textit{the same} logical qubit after that the gate has been implemented (if the gate was not faulty). In short, a transversal gate will be fault-tolerant with respect to the definition \ref{def:FT_1Ga}. Unfortunately, the Eastin-Knill theorem \cite{eastin2009restrictions} states that it is not possible to have a complete gateset of logical operation using only transversal operations: an arbitrary algorithm can then not be implemented only based on transversal gate implementations; we come back on that point later. For Steane code, it is possible to implement transversally any logical Pauli operator\footnote{We saw this explicitly in the previous section: $X_L \equiv X_1 X_2 X_3 X_4 X_5 X_6 X_7$ is the logical Pauli $X$ operator for Steane code for instance.}, cNOT, Hadamard, the $S \equiv e^{-i (\pi/4) Z }$ gate \cite{aliferis2005quantum}. This is why for the rest of this thesis, we will consider using the logical gateset $\mathcal{G}_L \equiv \{Id,H,S,X,Y,Z,cNOT\}$ ($Id$ is the logical identity gate that we anyway need). Completing this gateset with the $T \equiv e^{-i (\pi/8) Z}$ gate (which has to be implemented through another procedure than the one we describe here), an arbitrary logical gate can be implemented \cite{aliferis2005quantum}\footnote{More precisely, an arbitrary gate can be approximated to an arbitrary level of precision by implementing a finite sequence of elements in $\mathcal{G}_L$ and $T$. The fact that any algorithm can be implemented with this gateset also assumes that the qubits can only be prepared in the computational basis, and that only measurements in the computational basis can be performed.}.
\begin{figure}[h!]
\begin{center}
\includegraphics[scale=0.4]{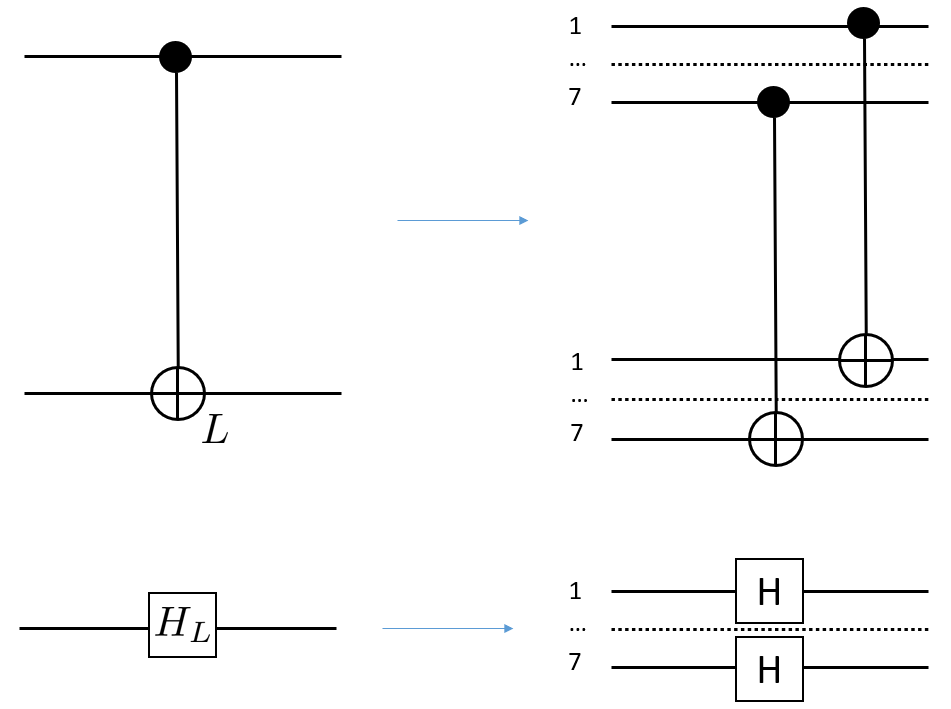}
\caption{Top: implementation of a logical cNOT transversally (in Steane code as we assume the logical qubit being composed of $7$ physical qubits). Bottom: implementation of a logical Hadamard transversally.}
\label{fig:logical_transversal}
\end{center}
\end{figure}

Now that the fault-tolerant implementation of 1-Ga has been explained, we need to explain how to implement error correction (i.e., a 1-Ec) fault-tolerantly in the sense of the definition \ref{def:FT_1Ec}. There are different ways to do this, but in this Ph.D., we considered Steane's method (not to be confused with the Steane code). This method is based on the fact that for codes in which stabilizers are a tensor product of either only $X$ or only $Z$ Pauli (which is the case for Steane code), a logical cNOT can be implemented transversally. This property can be used to extract errors affecting the data qubits into the ancilla without destroying the encoded state on the data qubits while avoiding an uncontrollable propagation of errors. To understand the principle, we can look at Figure \ref{fig:steane_method_error_propagation_Zstab} which explains how the $Z$ stabilizers can be measured in a fault-tolerant manner. 
\begin{figure}[h!]
\begin{center}
\includegraphics[scale=0.4]{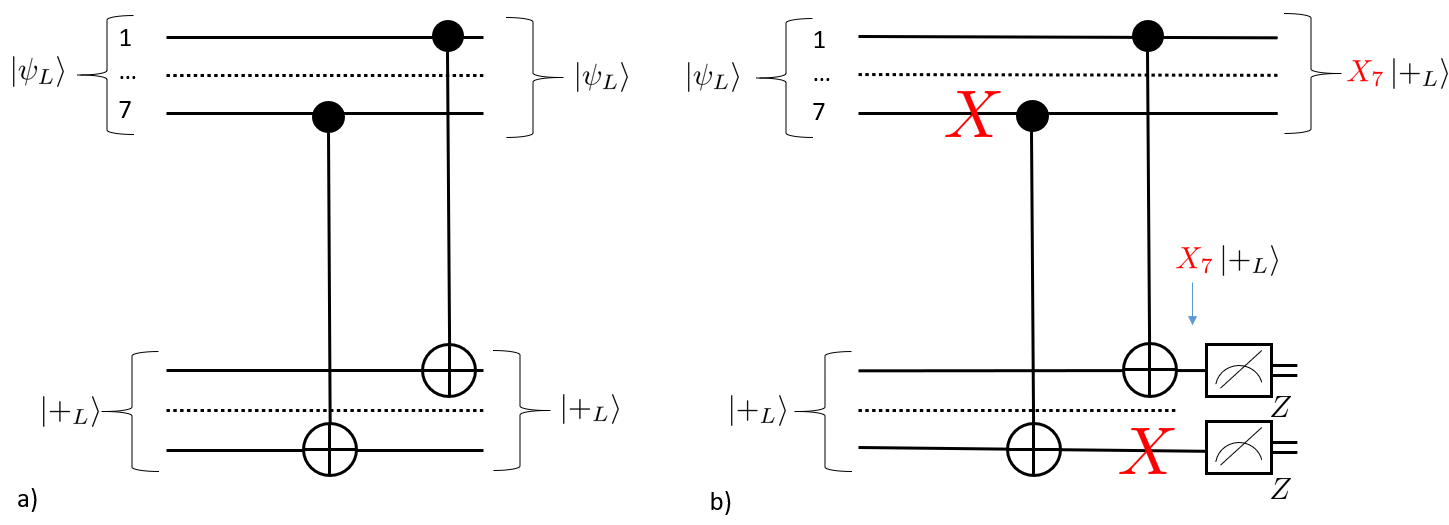}
\caption{The principle behind Steane method: measurement of the $Z$ stabilizers. We assumed that a logical qubit is composed of $7$ physical qubits as in Steane code. \textbf{a)}: A logical cNOT is performed transversally, i.e. it is implemented by realizing physical cNOT between each corresponding physical qubits composing the two logical qubits. The ancilla is prepared in $\ket{+_L}$. Because it is an eigenstate of NOT ($X$) operator, it won't evolve and the transformation is $\ket{\psi_L}\ket{+_L} \to \ket{\psi_L}\ket{+_L}$. \textbf{b)} In the presence of an error $X_i$ on the logical qubit affecting the control, the $i$'th qubit of the logical qubit affecting the target is also affected. The experimentalist then measures the observables $Z_i$, $i \in [|1,7|]$ which will allow to find out the eigenvalues of the $Z$ stabilizers as explained in the main text.}
\label{fig:steane_method_error_propagation_Zstab}
\end{center}
\end{figure}
On this figure, we represented the $7$ physical qubits composing the logical data qubit on the top. A data qubit is a qubit that participates in the implemented algorithm (as opposed to ancilla qubits which are here to perform error correction). On the bottom, we have $7$ physical qubits associated with the (logical) ancilla. The ancilla is initially prepared in the state $\ket{+_L}$ which is the logical $\ket{+}$ state. It allows to make it "insensitive" to the cNOT on the logical level, as explained in the figure caption. In practice, it means that they will not get entangled with the logical data qubits. On the figure \ref{fig:steane_method_error_propagation_Zstab} \textbf{b)}, we can see that an $X$ error affecting the $i$'th physical qubit of the logical data qubit is "transferred" on the corresponding physical qubit of the ancilla. In this sense, the ancilla will have the exact same $X$ errors as the data qubits. Then, one measures the observables $Z_i$, $i \in [|1,7|]$ of the ancilla and store the measurement outcomes in a vector $z=(z_1,...,z_7)$. From those measurements outcomes, it could deduce the measurement outcomes of any of the $Z$ stabilizers. For instance, the measurement of the stabilizer $Z_4 Z_5 Z_6 Z_7$ is being performed by calculating the dot product (modulo 2) $b.z$ with $b=(0001111)$. Now, all this does not strictly show that this circuit is fault-tolerant. But one could check that it is the case according to the definition \ref{def:FT_1Ec} that we gave (or to the axioms provided in \cite{aliferis2005quantum}). 

Here we talked about the $Z$-stabilizer measurements. But we can measure the $X$ stabilizers in a very similar way as represented on the figure \ref{fig:steane_method_error_propagation_Xstab}. 
\begin{figure}[h!]
\begin{center}
\includegraphics[scale=0.4]{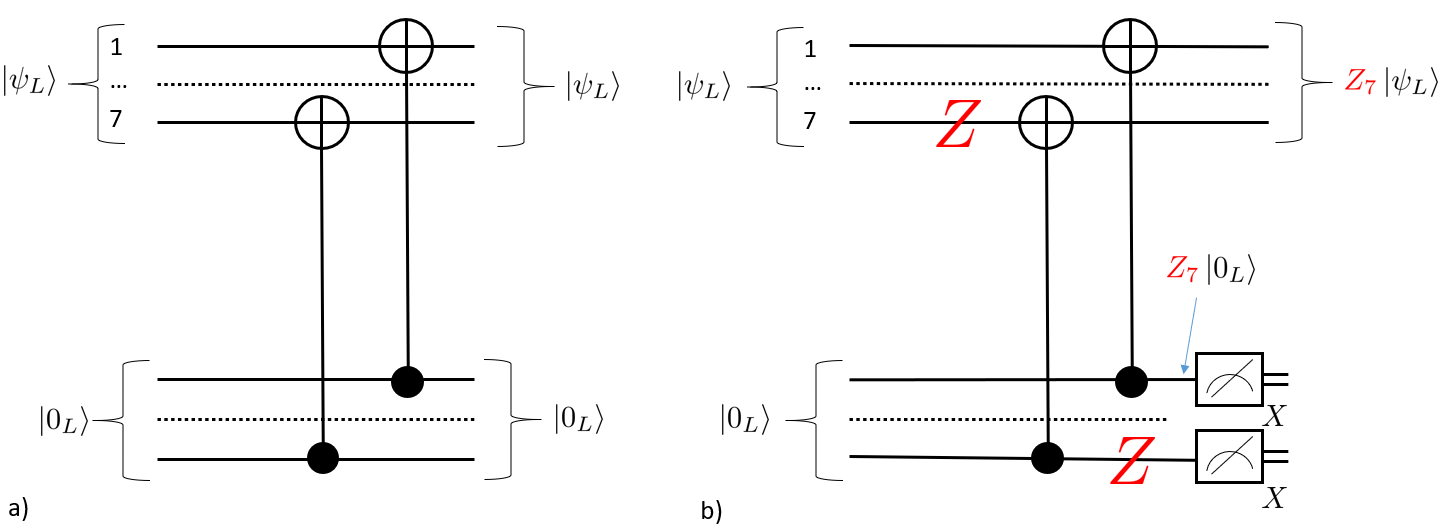}
\caption{The measurement of the $X$ stabilizers follows the same logic as for the $Z$ ones. We need to put the ancilla in a logical state that will not be affected by the logical cNOT, and in such a way that the $Z$ errors affecting the data qubits are "replicated" in the ancilla qubits. Because the $Z$ errors are propagating from the target to the control as explained in the appendix \ref{app:error_propagation}, putting the cNOT as on this graph allows to detect those errors on the ancilla. We finally measure all the observables $X_i$, $i \in [|1,7|]$.}
\label{fig:steane_method_error_propagation_Xstab}
\end{center}
\end{figure}

Assuming the states $\ket{+_L}$ and $\ket{0_L}$ of the ancilla are "given for free", those circuits can be shown to be fault-tolerant in the sense that they satisfy the axioms a 1-Ec must satisfy \cite{aliferis2005quantum} \footnote{Strictly speaking, we must add the correction. But the correction simply consists in applying an appropriate $X$ or $Z$ operator on the physical qubits composing the logical one to put them back in the code space. Such operation is fault-tolerant as it does not involve operation between different physical qubits composing a logical one. The conclusion that the circuit is fault-tolerant would then remain.}. However, to be implemented, it requires to initialize the ancilla in the states $\ket{0_L}$ or $\ket{+_L}$ depending on the stabilizers to measure. No known fault-tolerant procedure to do this is known: including directly a preparation would then lead to an uncontrollable error spreading. For instance, the states $\ket{0_L}$ and $\ket{+_L}$ can be prepared by the circuit represented on the figure \ref{fig:preparation_0logical}, where we see that if one physical gate is faulty, it might induce errors on many qubits. One way to escape this issue is to prepare multiple ancillae and to only select those that have been correctly prepared. It requires a verification procedure. This verification is represented by the orange box on the figure \ref{fig:steane_method_full}. For instance, the $\ket{0_L}$ ancilla that is required to measure the $X$ stabilizers (it is on the second line, we call it the $X$-syndrome ancilla) interacts with a logical cNOT with another ancilla qubit called verifier, initialized as well in $\ket{0_L}$. Its role is to check if the $X$-syndrome ancilla contains $X$ errors. Depending on the measurement outcome of the $Z$ measurement performed on the verifier, the ancilla is accepted or rejected. We won't detail the criteria and why exactly it ensures that the circuit is fault-tolerant, but we refer to \cite{aliferis2005quantum} for the details. In the case the ancilla is rejected, another ancilla is tried for until the verification succeeds\footnote{We say "until" the verification succeeds, but because qubits have limited lifetime, all those verifications are done "in parallel" to not make the physical qubits decohere.}. Once the ancilla has been verified, we enter in the green box, which performs the syndrome measurement exactly in the way described from the previous figures \ref{fig:steane_method_error_propagation_Xstab} and  \ref{fig:steane_method_error_propagation_Zstab}. Then an appropriate correction can be applied to the logical qubit. And the algorithm continues with the next logical gate. 
\begin{figure}[h!]
\begin{center}
\includegraphics[scale=0.4]{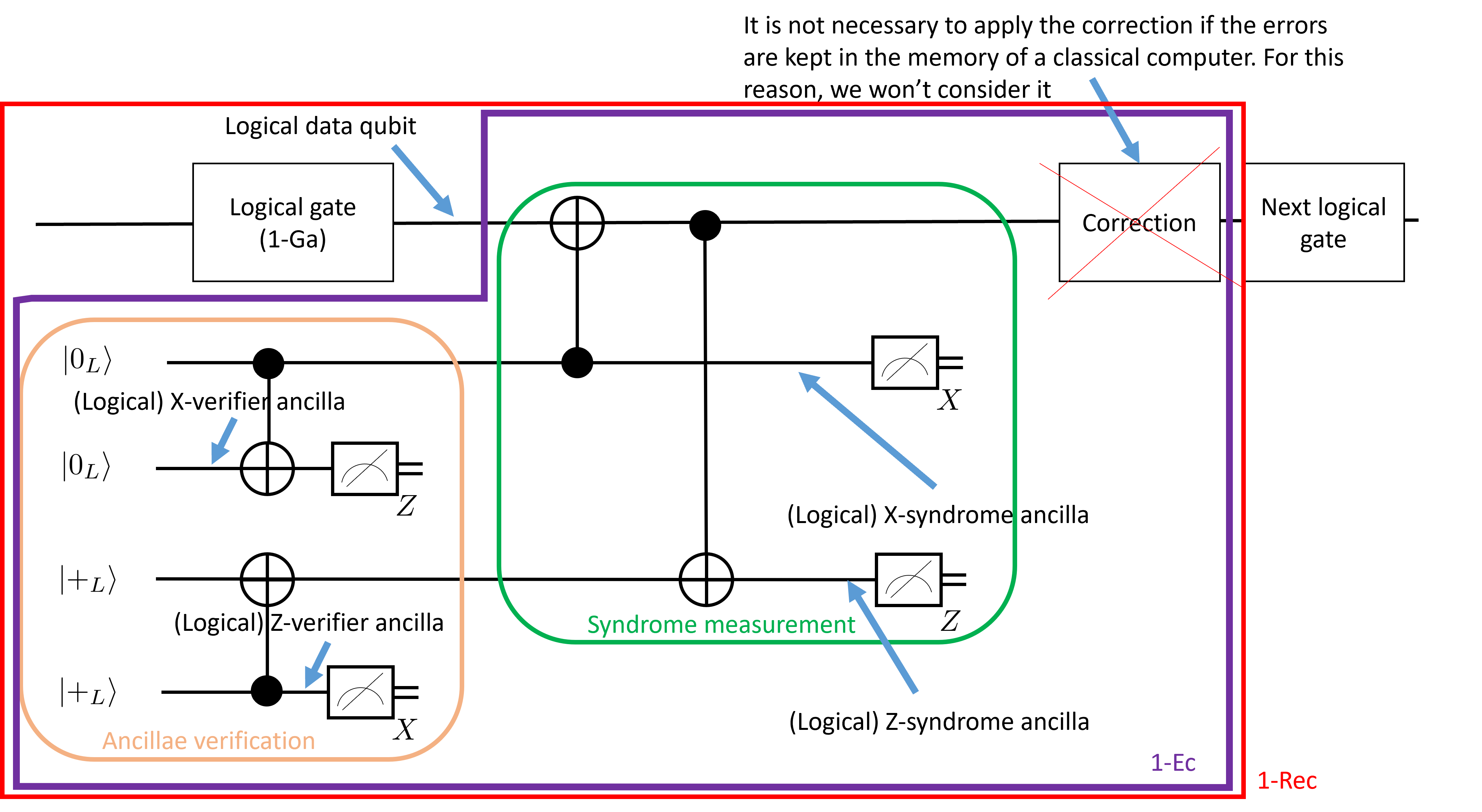}
\caption{Each line represents a logical qubit. Any gate represented here is a logical gate that is implemented transversally. The first line represents the logical data qubit, i.e., the logical qubit that implements a gate of the algorithm. The four bottom lines are four ancillae that will be here to perform the error correction. The second and fourth lines are the ancillae that participate to the $X$ and $Z$ syndrome measurement respectively, if they are accepted by the verifiers ancillae, which are on the third and last lines. The green rectangle corresponds to the syndrome measurement that we already described in the figures \ref{fig:steane_method_error_propagation_Xstab} and \ref{fig:steane_method_error_propagation_Zstab}. As long as the errors are kept in some classical memory, the correction doesn't need to be applied if all the gates implemented are in the set $\mathcal{G}_L=\{Id,H,S,X,Y,Z,cNOT\}$. The reason for that is related to the concept of Pauli frame; we refer to \cite{knill2005quantum,riesebos2017pauli} and reference therein for further information. The red box represents the 1-Rec and includes the 1-Ga followed by the 1-Ec (purple box).}
\label{fig:steane_method_full}
\end{center}
\end{figure}

\begin{figure}[h!]
\begin{center}
\includegraphics[scale=0.6]{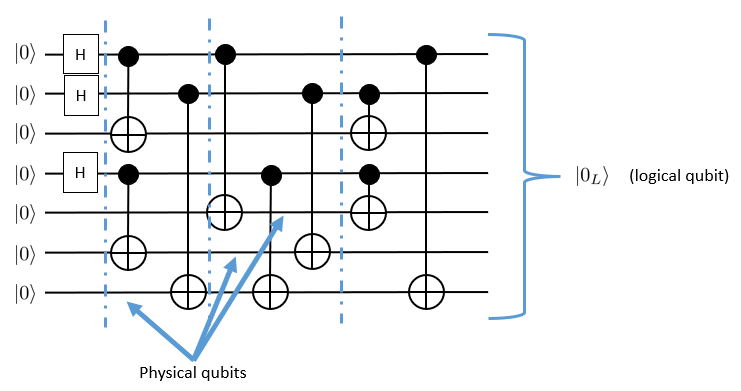}
\caption{preparation of the $\ket{0_L}$ logical state. All the lines represent physical qubits (which regrouped, in the end, create the logical qubit $\ket{0_L}$). The preparation of the $\ket{+_L}$ logical state can be done via a similar procedure where (i) the direction of all the cNOTs is being reversed (ii) four Hadamards gates are added on the four qubits not affected by Hadamard here, right after their initialization in $\ket{0}$, (iii) the three Hadamard of this figure are being removed. This circuit is not fault-tolerant in the sense that a fault occurring on a given gate during the propagation will induce an error that might propagate on all the physical qubits. This is why a verification procedure is required, as explained in the main text. This circuit can be run in $4$ timesteps which are delimited by the blue vertical dashed lines.}
\label{fig:preparation_0logical}
\end{center}
\end{figure}
The figures \ref{fig:steane_method_full} and \ref{fig:preparation_0logical} contain all the information necessary in order to know how many qubits and gates are required to implement transversal gates when we will scale up the computer. Indeed, this circuit allows us to go from a level-0 to level-1 simulation exactly. Then, following the "Russian dolls" principle of concatenation, each 0-Ga in the level-1 simulation is replaced by a 1-Rec following the schemes described on the Figures \ref{fig:steane_method_full} and \ref{fig:preparation_0logical}. And we can perform this recursive construction up to level-k. Strictly speaking, as some ancillae will be rejected by the verifiers, the total number of qubits and gates to implement will not be known exactly as there will have a probability to have a good preparation. Actually, as discussed in the appendix \ref{app:ancilla_rejected}, as the probability to be accepted is higher than to be rejected, and as this acceptance probability is actually high, it can be reasonably agreed that the number of physical qubits and gates required can be estimated by assuming the verification always succeeds (the number of "extra" ancillae and gates needed will be dominated by the number of ancillae required for a "one-shot" success). We also did not discuss how to implement the necessary non-transversal gates, which are required to have a complete gateset. Their implementation requires a different procedure. There are different proposals to do this, requiring a very different amount of physical qubits and gates. Some try to implement the gates that cannot in principle be implemented transversally by approximating them by a transversal implementation  \cite{yang2020covariant,paetznick2013universal}, such that error correction will then correct the errors of this approximation. There is also a procedure called state injection which often relies upon magic state distillation that allows to implement those gates \cite{chamberland2020very,litinski2019magic,bravyi2005universal}. In this thesis, we will not model the non-transversal gates; it belongs to the list of outlooks we are considering in conclusion. However \textit{first investigations} on this question seem to indicate that including those gates in our estimation will not modify very much our estimations. The order of magnitudes of physical components required for such gates should not dominate the one for Clifford operation if appropriates optimization are performed. 
\subsubsection{Number of gates required in a 1-Rec}
\label{sec:number_gates_1Rec}
We now provide the explicit number of physical gates that are required in the 1-Rec of the different gates that we can implement fault-tolerantly\footnote{Assuming the verification always succeeds.}. This calculation is based on an explicit counting of the number of gates there are in the figures \ref{fig:preparation_0logical} (and its analog for the $\ket{+_L}$ preparation, see the caption) and \ref{fig:steane_method_full}.

The results are presented in the table \ref{table:FT_gate_breakdown}. What we refer to as a single-qubit physical gate will either correspond to a Pauli, Hadamard, or $S$ gate (we will consider in our energetic calculations that all the single-qubit gates have the same cost). In order to make this counting, we assume that we can only do measurements in the computational basis ($Z$-measurement): measuring in the $X$ basis requires applying Hadamard gates before a $Z$-measurement. We also assume that qubits can only be available in their ground state $\ket{0}$ (preparing $\ket{+}$ requires applying a Hadamard on $\ket{0}$). With those assumptions we find that we need $7 \times 4+9 \times 4=64$ cNOTs: $7 \times 4$ for the transversal implementation of all the cNOTs represented on the figure \ref{fig:steane_method_full}, and $9 \times 4$ for the preparation of the $\ket{+_L}$ and $\ket{0_L}$ states for the ancilla (see figure \ref{fig:preparation_0logical}). For the measurement gates (we call measurement gate the action of measuring a qubit), we need $7 \times 4$ of them to measure all the ancilla. We also need single-qubit gates: $7$ for the transversal implementation of the gate at the logical data qubit, $7 \times 2$ Hadamard to apply before the $X$-measurement (because we can only do $Z$-measurements), and $3 \times 2 + 4 \times 2 = 14$ Hadamard for the preparation of the ancilla states as explained in the caption of figure \ref{fig:preparation_0logical}. It gives us $35$ single-qubit gates to apply. We also need to make some qubits necessarily wait at some point. For instance, the logical syndrome qubits have to wait for the measurement outcome of the verifier to know if they can be injected or not: they are then doing a noisy physical identity. We won't go into the detail of this counting, but in the end, we would find $36$ physical identity gates by looking at all the places the physical qubits composing the logical ones necessarily have to wait. In the end, doing an appropriate counting for each type of logical gate that has to be implemented, we have access to the table \ref{table:FT_gate_breakdown} which represents for each level-1 gate (on the line), the number of physical elements (on the column) its 1-Rec is composed of \footnote{To be very precise, the 1-Rec of a measurement is defined as 1-Ec followed by 1-measurement (the 1-Ec is "before" the 1-Ga measurement gate). This is defined this way for theoretical reason to prove the threshold \cite{aliferis2005quantum}. As we are interested in finding the number of physical elements when concatenating, the elements in this 1-Ec would already have been counted by the gate before the measurement: taking them into account "again" would lead to an overcounting of the number of physical elements. This is why we remove the elements that would be in the 1-Ec before the measurement gates, which explains why the last line of table \ref{table:FT_gate_breakdown} is composed of three zeros.}. We see from this table that, as we could expect, a logical identity is the same as a logical single-qubit gate, excepted that the transversal operation is composed of identity gates. This is why it contains $7$ less single-qubit gate but $7$ more identity gates than a logical single-qubit gate.
\begin{table*}[t]
\begin{center}
\begin{tabular}{|c|c|c|c|c|}
	\hline
	& lvl-0 cNOT & lvl-0 Single & lvl-0 Identity & lvl-0 Measurement\\
	\hline
   lvl-1 cNOT & $135$ & $56$ & $72$ & $56$ \\
   \hline
   lvl-1 Single & $64$ & $35$ & $36$ & $28$ \\
   \hline
   lvl-1 Identity & $64$ & $28$ & $43$ & $28$\\
   \hline
   lvl-1 Measurement & $0$ & $0$ & $0$ & $7$\\
   \hline
\end{tabular}
\caption{Each row lists a FT logical gate and tabulates the lower level components required for the listed gate as columns. Those level-0 components are thus 0-Ga physical gates. The single qubit gates here are either Hadamard, Pauli or $S$ gates. This table includes the gates required to prepare the ancilla and verifier and assumes that the verification always succeeds (there is no need to prepare more than one $Z$ or $X$ syndrome ancilla for instance). We discuss the motivation behind this assumption in the appendix \ref{app:ancilla_rejected}.}
\label{table:FT_gate_breakdown}
\end{center}
\end{table*}
\FloatBarrier
\subsubsection{Number of timesteps for the logical gate and the ancilla}
\label{sec:number_timesteps_for_logical_and_ancilla}
The last thing we need to access from this chapter is the number of timesteps that are required to implement the 1-Rec. It will be a piece of information we will use in the section \ref{sec:estimation_physical_qubits}, where we will estimate the number of physical qubits and the average number of gates acting in parallel. There are actually two different possible numbers associated to the number of timesteps a 1-Rec lasts for. 

First, from the point of view of the logical data qubits, a 1-Rec is only lasting for $3$ physical timesteps: the transversal implementation of the 1-Ga, the $X$ syndrome measurement, and the $Z$ syndrome measurement (as said in the caption of \ref{fig:steane_method_full}, we don't need to implement the correction if we just use the gates in $\mathcal{G}_L$.). After those three timesteps, the logical data qubits are implementing the next logical gate and its associated 1-Rec. 

From the point of view of the ancilla, however, things take much longer. Here we are only interested in finding the number of timesteps of the ancilla that take the longest time to be implemented. It corresponds to the $X$ and the $Z$ syndromes ancilla: both are taking a total amount of $9$ timesteps before being measured. For instance, for the $X$-syndrome, it takes $4$ timesteps to prepare it (see \ref{fig:preparation_0logical}), followed by $2$ timestep to be verified (one cNOT, and it then has to wait for the $Z$ measurement outcome of the $X$ verifier). Then in the syndrome measurement, it is doing one cNOT with the data qubit, and as it is being measured in the $X$ basis, it takes one Hadamard followed by the measurement: a total of $3$ timesteps here. In conclusion, it takes $9$ timesteps in total.

\section{Conclusion}
In this chapter, we explained how the Steane code works (and gave intuition behind the more general formalism of stabilizer codes). We also introduced the basics behind fault-tolerant quantum computing through the concatenated code construction. We then provided concrete circuits that allow to implement gates in a fault-tolerant manner. Those circuits and the estimation of the number of physical elements inside level-1 gates (more precisely, 1-Rec) will allow us to determine the exact number of physical gates and qubits that are required in order to reach a targetted accuracy (thus for a given concatenation level) for a fault-tolerant implementation of an algorithm. A general intuition that we can get from what we presented is that the concatenated construction seems demanding in term of physical resources. Indeed, each time a concatenation is added, the number of physical elements drastically increases. The quantum threshold theorem \ref{sec:quantum_threshold_theorem} might tell us that the number of elements required does not grow fastly with the algorithm size, but it does not give us information about the prefactors to expect. One important criterion is also to access the number of physical qubits required, and this estimation highly depends on the level of recycling it is possible to do, i.e., when one qubit ancilla has finished working, when can it be reused? A poor level of recycling might, for instance, dramatically increase the number of physical resources required. We are going to study it in section \ref{sec:estimation_physical_qubits}. All those estimations are important to know in order to build a concrete quantum computer as they might significantly impact the power consumption and the overall feasibility of the device. We will do those estimations in the sections \ref{sec:estimation_physical_gates} and \ref{sec:estimation_physical_qubits} of the fourth chapter of this thesis, and we will use them in the quantitative examples we will treat in the section \ref{sec:application_to_QFT} of the last chapter of the thesis.
\begin{appendices}
\chapter{Fundamentals of quantum error correction}
\section{Knill-Laflamme}
\label{app:Knill_Laflamme}
Here, we give the proof of the Knill-Laflamme conditions.
\begin{proof}{~}

\textbf{Knill-Laflamme conditions are sufficient:}

This proof is inspired from \cite{nielsen2002quantum}, with further details. We first prove that the Knill-Laflamme condition is sufficient for the existence of an error correction channel. To show it, we will explicitly construct this channel. We start by performing a polar decomposition \cite{nielsen2002quantum} on the operator $M_i P_C$. It allows us to know that there exists a unitary $U_i$ satisfying:
\begin{align}
M_i P_C = U_i \sqrt{(M_i P_C)^{\dagger} M_i P_C}=\sqrt{c_{ii}} U_i P_C,
\end{align}
where we made use of the Knill-Laflamme condition for the last equality. From this equation, \textit{we see that the effect of an operator $M_i$ on the code-space is to apply a unitary $U_i$} (up to the proportionality coefficient $\sqrt{c_{ii}} \geq 0$). Using the fact that if $P$ is a projector on some space $H$, then for any unitary $U$, $U P U^{\dagger}$ will be a projector on $U H$, we deduce that, when $c_{ii} \neq 0$, $M_i P_C$ has the effect of transforming any state in the code space to a state in the space having for projector:
\begin{align}
P_i =  U_i P_C U_i^{\dagger}
\end{align}
If: $c_{ii} \neq 0$, we have $P_i=\frac{M_i P_C U_i^{\dagger}}{\sqrt{c_{ii}}}$, and thus:
\begin{align}
\forall (i, j) \text{ such that } c_{ii} \neq 0, c_{jj} \neq 0 : P_i P_j = P_i^{\dagger} P_j= \frac{1}{\sqrt{c_{ii}c_{jj}}} U_i P_C M_i^{\dagger}  M_j P_C U_j^{\dagger} =\delta_{ij} P_i
\end{align}
For the cases which $c_{ii}=0$, we will have $M_i P_C=0$ and thus $P_i=0$. Considering $P_C$ was an orthogonal projector, so are $P_i$ and $P_j$. This results shows that for $i \neq j$, $M_i$ and $M_j$ either bring the codespace into orthogonal subspaces, either bring the code space to the null vector (cases in which $c_{ii}$ or $c_{jj}$ vanish, implying $P_{i}=0$ or $P_{j}=0$).

From those remarks, \textit{a legitimate "guess" of recovery operation is to: (i), perform a projective measurement associated with the projectors $\{P_i\}$ followed by (ii) the unitary operation $U_i^{\dagger}$ in order to correct the error.} We thus define $\mathcal{R}(\rho) \equiv \sum_j U_j^{\dagger} P_j \rho P_j U_j$. We have:
\begin{align}
(\mathcal{R} \circ \mathcal{E}) (P_C \rho P_C) &= \sum_{i,j} U_j^{\dagger} P_j M_i P_C \rho P_C M_i^{\dagger} P_j U_j \notag \\
&= \sum_{i,j} c_{ii} U_j^{\dagger} P_j P_i U_i \rho U_i^{\dagger} P_i^{\dagger} P_j U_j \notag\\
&=\sum_{i} c_{ii} U_i^{\dagger} P_i U_i \rho U_i^{\dagger} P_i  U_i \notag\\
&= \left(\sum_{i} c_{ii}\right) P_C \rho P_C = P_C \rho P_C
\end{align}
Where we used $\sum_{i} c_{ii}=1$ that comes from the fact that as $\mathcal{E}$ is trace preserving, then $\sum_i M_i^{\dagger} M_i = I$, and thus $\sum_i P_C M_i^{\dagger} M_i P_C = (\sum_i c_{ii}) P_C \Rightarrow \sum_i c_{ii}=1$. Thus, at this point we showed that if the noise channel satisfies Knill-Laflamme condition, then an error correction channel exists.

\textbf{Knill-Laflamme conditions are necessary:}

Let's assume that there exists a CPTP operation $\mathcal{R}$ verying: $\forall \rho: (\mathcal{R} \circ \mathcal{E}) (P_C \rho P_C) = P_C \rho P_C$. Then, defining $\{R_i\}$ one of its Kraus decomposition we get:
\begin{align}
(\mathcal{R} \circ \mathcal{E}) (P_C \rho P_C) = \sum_{ij} R_i M_j P_C \rho P_C M_j^{\dagger} R_i = P_C \rho P_C
\end{align}
The map $\mathcal{R} \circ \mathcal{E}$ admits a family of Kraus operator being $\{R_i M_j\}$. But it also admits a family of Kraus operators being $\{P_C,0,...0\}$ from the right handside. As two CPTP operations are identical iff their Kraus operator are related by some unitary transformation, we deduce that there is some complex number $c_{ij}$ such that $R_i M_j = c_{ij} P_C$. Finally, we have:
\begin{align*}
P_C M_i^{\dagger} R_k^{\dagger} R_k M_j P_C = c_{kj} c^*_{k i} P_C
\end{align*}
Which implies by summing on $k$:
\begin{align*}
P_C M_i^{\dagger} M_j P_C = \sum_k ( c_{kj} c^*_{k i} ) P_C
\end{align*}
Calling $\alpha_{ij}=\sum_k ( c_{kj} c^*_{k i} )$, the matrix $\alpha$ admitting for matrix elements the $\alpha_{ij}$ is Hermitian, and thus diagonalizable in an orthonormal basis. Calling $u$ a unitary matrix that diagonalizes $\alpha$, we have:$\alpha=u.d.u^{\dagger}$, where $d$ is diagonal. Thus, $\alpha_{ij}=\sum_{kl} u_{ik} d_{kl} u^{\dagger}_{lj}$. And, we get:
\begin{align}
P_C M_i^{\dagger} M_j P_C = \sum_{kl} u_{ik} d_{kl} u^{\dagger}_{lj} P_C \Leftrightarrow P_C \widetilde{F}_m^{\dagger} \widetilde{F}_n P_C = c_{mm} \delta_{mn} P_C.
\end{align} 
Where:
\begin{align}
&\widetilde{F}_n=\sum_{j} u_{jn} M_j \notag\\
&c_{mm} = d_{mm}
\end{align}
$\{\widetilde{F}_n\}$, as being related to the $\{M_j\}$ family through a unitary transformation is thus an equivalent set of Kraus operator describing $\mathcal{E}$ \cite{nielsen2002quantum} that satisfies the Knill-Laflamme conditions which proves the necessary condition.
\end{proof}
\begin{theorem}{Discretization of errors}

Let's assume $H_C \subset H$ is a code space. If the Knill-Laflamme conditions are satisfied for a set of error operators $\{M_i\}$, then they are satisfied for an arbitrary linear combination of those operators.
\end{theorem}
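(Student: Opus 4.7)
The plan is to reduce the statement to the general (non-diagonal) form of the Knill--Laflamme conditions already derived in the proof of Theorem~\ref{theorem:Knill_laflamme}, and then exploit the bilinearity of the expression $P_C A^\dagger B P_C$ in $(A,B)$.

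First, I would let $\{M_i\}$ be a family of error operators satisfying the Knill--Laflamme conditions in their diagonal form $P_C M_i^\dagger M_j P_C = c_{ii}\delta_{ij} P_C$. In the second part of the proof of Theorem~\ref{theorem:Knill_laflamme} it was already observed that this diagonal form is equivalent to the more general form $P_C M_i^\dagger M_j P_C = \alpha_{ij} P_C$ with $\alpha$ a Hermitian matrix (and conversely the general form can always be brought back to the diagonal one by a unitary re-mixing of the Kraus operators). I would work with the general form because it is stable under linear combinations.

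Next, consider an arbitrary linear combination $F_k \equiv \sum_i \lambda_{ki} M_i$, where the $\lambda_{ki}$ are arbitrary complex coefficients. Using bilinearity of the map $(A,B) \mapsto P_C A^\dagger B P_C$, I would compute
\begin{align}
P_C F_k^\dagger F_l P_C = \sum_{i,j} \lambda_{ki}^{*}\lambda_{lj}\, P_C M_i^\dagger M_j P_C = \Bigl(\sum_{i,j} \lambda_{ki}^{*}\lambda_{lj}\,\alpha_{ij}\Bigr) P_C \equiv \beta_{kl} P_C .
\end{align}
The matrix $\beta$ defined by $\beta_{kl} = \sum_{i,j}\lambda_{ki}^{*}\lambda_{lj}\alpha_{ij} = (\Lambda\, \alpha\, \Lambda^\dagger)_{kl}$ is Hermitian (as a conjugation of the Hermitian matrix $\alpha$ by the rectangular matrix $\Lambda$ with entries $\Lambda_{ki} = \lambda_{ki}$), so the family $\{F_k\}$ satisfies the Knill--Laflamme conditions in the general form $P_C F_k^\dagger F_l P_C = \beta_{kl} P_C$.

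Finally, I would invoke the diagonalisation argument already used at the end of the proof of Theorem~\ref{theorem:Knill_laflamme}: since $\beta$ is Hermitian, it can be diagonalised by some unitary $u$, and defining the re-mixed operators $\widetilde{F}_m \equiv \sum_k u_{km} F_k$ yields $P_C \widetilde{F}_m^\dagger \widetilde{F}_n P_C = d_{mm}\delta_{mn} P_C$, which is exactly the diagonal Knill--Laflamme form for a set equivalent to $\{F_k\}$. No step here is technically delicate; the only minor pitfall is to keep track of complex conjugates correctly and to remember that the two forms of the Knill--Laflamme conditions are equivalent up to a unitary change of basis, so that one is free to state the conclusion in either version.
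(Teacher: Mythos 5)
Your proposal is correct and follows essentially the same route as the paper's proof: expand $P_C F_k^\dagger F_l P_C$ by bilinearity using the general Hermitian form $P_C M_i^\dagger M_j P_C = \alpha_{ij} P_C$, and check that the resulting matrix $\beta$ is again Hermitian (your closing remark about re-diagonalising $\beta$ is a harmless addition the paper leaves implicit, since it already established the equivalence of the two forms in the proof of the Knill--Laflamme theorem). The only nitpick is notational: with your index convention $\beta = \Lambda^* \alpha \Lambda^T$ rather than $\Lambda \alpha \Lambda^\dagger$, but this is still of the form $X\alpha X^\dagger$ and hence Hermitian, so the argument stands.
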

\begin{proof}{~}

We just have to check that Knill-Laflamme condition are satisfied for any family of operators $\{E_i\}$ such that: $E_i = \sum_{k} c_{ik} M_k$. We have:
\begin{align}
P_C E_i^{\dagger} E_j P_C = \sum_{kl} c_{ik}^* c_{jl} P_C M_k^{\dagger} M_l P_C = \sum_{kl} c_{ik}^* c_{jl} \alpha_{kl} P_C
\end{align}
Where we used for the last equality the fact that $\{E_i\}$ satisfies the Knill-Laflamme condition given in \eqref{eq:Knill_Laflamme_general}. The remaining thing to verify is that $\beta_{ij} \equiv \sum_{kl} c_{ik}^* c_{jl} \alpha_{kl}$ represent element of an Hermitian matrix. And indeed, we have:
\begin{align}
\beta_{ij}^* = \sum_{kl} c_{ik} c_{jl}^* \alpha^*_{kl}=\sum_{kl} c_{ik} c_{jl}^* \alpha_{lk}=\sum_{kl}  c_{il} c_{jk}^* \alpha_{kl} = \beta_{ji}
\end{align}
The Knill-Laflamme condition are thus satisfied.
\end{proof}
\chapter{Fault-tolerant quantum computing}
\section{Propagation of errors through gates}
\label{app:error_propagation}
We are interested in finding how the Pauli errors are "propagating" when a gate is acting. Given a gate $U$, asking this question means to find the operator $A_E$ such that $U E = A_E U$, where $E$ is an $n$-Pauli matrix. Indeed $A_E$ will be the "translation" on the output of the Pauli matrix that was applied on the input. Also, using the fact: $Y=iXZ$, we can reason on X or Z type Pauli error to deduce their effect on a Y error. It is possible to show the following behaviors \cite{nielsen2002quantum}:
\begin{figure}[h!]
\begin{center}
\begin{tabular}{|c|c|c|}
\hline
Gate & Input Pauli error $E$ & Output error $A_E$\\
\hline
\multirow{4}{*}{cNOT} 
& \multicolumn{1}{c|}{$X_1$} & \multicolumn{1}{c|}{$X_1 X_2$} \\ \cline{2-3}
& \multicolumn{1}{c|}{$X_2$} & \multicolumn{1}{c|}{$X_2$} \\\cline{2-3}
& \multicolumn{1}{c|}{$Z_1$} & \multicolumn{1}{c|}{$Z_1$} \\\cline{2-3}
& \multicolumn{1}{c|}{$Z_2$} & \multicolumn{1}{c|}{$Z_1 Z_2$} \\\hline
\multirow{2}{*}{H} 
& \multicolumn{1}{c|}{$X$} & \multicolumn{1}{c|}{$Z$} \\ \cline{2-3}
& \multicolumn{1}{c|}{$Z$} & \multicolumn{1}{c|}{$X$} \\ \hline
\multirow{2}{*}{X} 
& \multicolumn{1}{c|}{$X$} & \multicolumn{1}{c|}{$X$} \\ \cline{2-3}
& \multicolumn{1}{c|}{$Z$} & \multicolumn{1}{c|}{$-Z$} \\ \hline
\multirow{2}{*}{Y} 
& \multicolumn{1}{c|}{$X$} & \multicolumn{1}{c|}{$-X$} \\ \cline{2-3}
& \multicolumn{1}{c|}{$Z$} & \multicolumn{1}{c|}{$-Z$} \\ \hline
\multirow{2}{*}{Z} 
& \multicolumn{1}{c|}{$X$} & \multicolumn{1}{c|}{$-X$} \\ \cline{2-3}
& \multicolumn{1}{c|}{$Z$} & \multicolumn{1}{c|}{$Z$} \\ \hline
\end{tabular}
\end{center}
\caption{Error propagation for typical gates. For the cNOT, the qubit $1$ is the control. The qubit $2$ is the target. $E=X_1$ means for instance that the control qubit has been affected by an $X$ error before the perfect cNOT acts.}
\label{table:error_prop}
\end{figure}
\chapter{Various properties}
\section{Measuring observables with ancilla}
\begin{property}{Measuring observable having $\pm 1$ as eigenvalues}

To measure an observable $M$ (possibly acting on multiple qubits) that admits eigenvalues $\pm 1$, one can design the circuit of figure \ref{measurement_M}.
\begin{figure}[h!]
\centerline{
\Qcircuit @C=1.7em @R=1.7em {
\lstick{\ket{0}} & \gate{H} & \ctrl{1} & \gate{H} & \meter \\
 & \qw & \gate{M} & \qw & \qw \\
}}
\caption{The measurement in the $\sigma_z$ basis of the ancilla qubit measures the observable $M$ of the system measured.}
\end{figure}
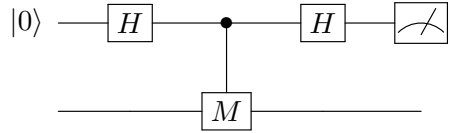
\FloatBarrier
If the ancilla (top line) is being found in $\ket{1}$ it means that the system (bottom line) is in $\ket{-_M}$. If the ancilla is being found in $\ket{0}$ it means that the system is in $\ket{+_M}$. Where $\ket{\pm_M}$ are eigenstates associated to the eigenvalue $\pm 1$ of $M$.
\end{property}
\begin{proof}{ ~ }

We start from $\ket{\psi}=\ket{0}\ket{\phi}$ where $\ket{\phi}=a \ket{+_M} + b \ket{-_M}$. After the first Hadamard, we have:
\begin{equation}
H \ket{\psi} = \frac{1}{\sqrt{2}} \left(\ket{0}\left( a \ket{+_M} + b \ket{-_M} \right) + \ket{1}\left( a \ket{+_M} + b \ket{-_M} \right) \right)
\end{equation}
We apply the controlled-M $U_M$:
\begin{equation}
U_M H \ket{\psi} = \frac{1}{\sqrt{2}} \left(\ket{0}\left( a \ket{+_M} + b \ket{-_M} \right) + \ket{1} \left( a m_+ \ket{+_M} + b m_- \ket{-_M} \right) \right)
\end{equation}
Where actually $1=m_+=-m_-$. We apply the last Hadamard:
\begin{equation}
H U_M H \ket{\psi} = \frac{1}{2} \left(\ket{0}\left( a(1+m_+) \ket{+_M} + b(1+m_-) \ket{-_M} \right) + \ket{1}\left( a(1-m_+) \ket{+_M} + b(1-m_-) \ket{-_M} \right) \right)
\end{equation}
Now, as $1=m_+=-m_-$:
\begin{equation}
H U_M H \ket{\psi} = a \ket{0} \ket{+_M} + b\ket{1} \ket{-_M}
\end{equation}
Thus, if the measurement of the ancilla returns $1$, the state of the data qubit is in $\ket{-_M}$ and otherwise, it is in $\ket{+_M}$. The outcome probabilities are the same as if we had directly measured the observable $M$ of the measured system.
\end{proof}
\end{appendices}

\chapter{Fault-tolerance with a scale-dependent noise}

In the section \ref{sec:arbitrary_accurate_quantum_computing} of the previous chapter, we saw that arbitrarily accurate quantum computing is possible provided that the probability of fault per physical gate is below a threshold. All this discussion implicitly relied on the fact that whatever the concatenation level is, the probability that a physical gate fails is the same: the noise per physical gate is assumed to be independent of the computer size. In practice, it implies that an operation performed on a quantum computer composed of one physical qubit is as noisy as this same operation performed on a physical qubit inside a quantum computer composed of millions of qubits. This is not reflecting current experiments where the noise is often scale-dependent \cite{schutjens2013single,theis2016simultaneous,rosenberg2020solid,monroe2013scaling}. 

Crosstalk issues \cite{piltz2014trapped} are a first reason for that. It is basically the fact that when more than one qubit are inside the quantum computer, they might either interact in an undesired manner (increasing the physical noise), either they do not interact, but an experimentalist trying to address a specific qubit might address other ones around at the same time in an undesired manner. What we can expect is that the greater the number of qubits there are inside the quantum computer and the bigger the strength of the associated noise is. 

A frequent scenario, which will be central in the energetic estimation of quantum computing, is what happens for a scale-dependent noise induced by the presence of limited resources (which could be energy, power, a limited amount of available frequencies for the qubits...). If the quality of operations performed on physical qubits is related to a resource $R$ that appears to be limited, such that the more this resource is available, the better the fidelity of the operation performed is, then, the greater the number of physical qubits there are inside of the quantum computer and the lower the quality of those operations on physical qubits will be. This kind of behavior will then give rise to a scale-dependent noise. One simple example of that can be the power used for cryogenics. It might be easy to maintain a few superconducting qubits at $10mK$. But maintaining millions of qubits at this temperature might be more complicated: a limited power for the cryostat would mean that the temperature of the qubits would have to increase the more qubits there are. We will see that understanding the physics behind a scale-dependent noise in this context will actually allow us to estimate the minimum amount of resources that are required to perform a calculation. 

In summary, in this chapter, we will study what happens in the presence of a scale-dependent noise for fault-tolerance theory. We will see that the accuracy of the quantum computer will then often be limited and that the physics behind such noise models can be related to resource estimation.

This chapter is the first containing original results.\footnote{I realized most of the analysis and calculations presented in this chapter, excepted the section \ref{sec:noise_growth_prop} which has mainly been done by Robert Whitney.}
\section{Scale-dependent noise: generalities}
\subsection{Description of the problem}
We recall the expression of the probability of error of a logical gate after $k$ concatenations that we have shown in \eqref{eq:probability_error_k_concatenations}, in the section \ref{sec:arbitrary_accurate_quantum_computing}:
\begin{align}
&p_L^{(k)}(\eta)=\eta_{\text{thr}}\left(\frac{\eta}{\eta_{\text{thr}}}\right)^{2^k}
\label{eq:logical_error_probability}
\end{align}
Here, $\eta$ is the probability of fault per physical gate, and $\eta_{\text{thr}} \approx 10^{-4}$ for probabilistic Pauli noise \cite{aliferis2005quantum} is called the quantum threshold\footnote{As explained in the lasts paragraphs of \ref{sec:arbitrary_accurate_quantum_computing}, for other noise models than Pauli the same expression would still hold for $p_L^{(k)}$, excepted that the value of the threshold, $\eta_{\text{thr}}$ might be different and the interpretation of $p_L^{(k)}$ as being a probability of fault might also differ, it is usually considered as being an operator norm in more general noise models. But this fact will not change the principle of the analysis we do in this chapter.}. Assuming $\eta$ constant, as soon as $\eta < \eta_{\text{thr}}$, arbitrarily accurate computation is possible. Indeed, $p_L^{(k)}$ can be put as close to $0$ as desired for a big enough value of $k$. In this chapter, we will assume that $\eta$ grows with the computer size, and thus it will, in particular, grow with $k$ as the more concatenations there are, the more physical elements there are within the computer. This scenario can be treated by directly injecting the law $\eta(k)$ in \eqref{eq:logical_error_probability}. What we can expect now is the accuracy to be limited: indeed, if there exists a concatenation level $k_0$ such that for any $k>k_0, \eta(k_0) > \eta_{\text{thr}}$, then further concatenations become necessarily counter-productive. This is what we are going to study now.
\subsection{The general situation we consider: the noise grows with the computer size}
\label{sec:general_situation}
\subsubsection{The hypotheses we make}
\label{sec:hypotheses_we_make}
To understand the problem, we can start by treating the general situation we are interested in. Let us assume that $\eta$ is an increasing function of $k$: the physical error grows with the computer size (and thus with the concatenation level). If we also assume that it was below the threshold initially, we can always rewrite it as $\eta(k)=\eta_0 f(k)$, where:
\begin{enumerate}[(i)]
\item $f(0)=1$
\item $f$ is a strictly increasing function of $k$
\item $\eta_0 < \eta_{\text{thr}}$
\end{enumerate}
As $f(0)=1$, $\eta_0$ represents the probability of fault for a physical gate without concatenating. The conditions we have put on $f$ represent a general case of noise that increases with the size of the computer. We also add the fact that the physical fault probability is below the threshold without concatenations. If it wasn't the case, then $p_L^{(k>0)} \geq \eta_0$: error correction would necessary deteriorate\footnote{Or eventually it would keep the noise equal to $\eta_0$.} the situation. 

Here, we can notice that with those hypotheses, arbitrarily accurate quantum computing could still be possible under the condition\footnote{The $\sup$ (supremum) is the smallest upper bound which corresponds to the maximum in the case the set $\{\eta(k)\}$ admits a maximum.} $\eta^{\sup} \equiv \sup_k[\eta(k)] < \eta_{\text{thr}}$. Indeed, we would have: $p_L^{(k)}(\eta(k)) < \eta_{\text{thr}}\left(\eta^{\sup}/\eta_{\text{thr}}\right)^{2^k}$ where the right handside can be put arbitrarily close to $0$ for a big enough value of $k$. We will briefly comment about what interesting features we can still have in this regime later on, but for the problem of resources estimations (and some kind of crosstalks models\footnote{Crosstalk issues can also in some cases be related to a problem of limited resources if we consider the frequency bandwidth for the qubits being the resource of interest.}), we cannot expect the noise to be bounded with the computer size (or at least to be below the threshold whatever the concatenation level is). This is why in our study, we consider adding the following condition:
\begin{enumerate}[(i)]
\setcounter{enumi}{3}
\item There exists a number $k_0$ such that $\eta(k_0) \geq \eta_0 f(k_0) = \eta_{\text{thr}}$.
\end{enumerate}
The conditions (i)-(iv) thus consist in asking that the noise grows with the computer size, that it was below the threshold initially (without concatenating), and that at some point, it gets higher or equal to the threshold. All our study is mainly based on those conditions that will physically be motivated by resource estimation later on. 

\subsubsection{The maximum accuracy of the computer is limited}
\label{sec:max_accuracy_is_limited}
The first natural conclusion we get to with our hypotheses is that the accuracy is limited because at some point, there will exist an integer $k'$ such that for all  $k > k'$, we have $\eta(k)>\eta_{\text{thr}}$. Thus there is no point in concatenating more than $k'$ times, and the accuracy is limited. We call $k_{\max}$ the maximum level of concatenation it is interesting to do in order to increase the accuracy. It is defined as the integer satisfying that for all integer $k$, $p_L^{(k_{\max})}(\eta(k_{\max})) \leq p_L^{(k)}(\eta(k))$, and if there are different solutions for that, we only keep the lowest one. Concatenating more than $k_{\max}$ times is not productive as the probability of fault of the logical gate would be higher or equal to $p_L^{(k_{\max})}(\eta(k_{\max}))$. In other words, $k_{\max}$ is the number of concatenations minimizing the logical error probability\footnote{And if there exists two concatenations level satisfying this condition, $k_{\max}$ is defined as the lowest one}. Thus, $p_L^{(k_{\max})}(\eta(k_{\max}))$ is the minimum logical error. If $k_{\max}$ is finite, then the minimal logical error is finite, which would be the case with the hypotheses (i)-(iv) we made.

In full generality, there is no easy formula to find $k_{\max}$, but we can at least upper bound it. Indeed, $k_{\max}$ is necessary strictly lower than the real number $k_0$ satisfying $\eta_0 f(k_0) = \eta_{\text{thr}}$. Because of that, we can upper-bound $k_{\max}$ as: 
\begin{align}
k_{\max} < f^{-1}\left(\frac{\eta_{\text{thr}}}{\eta_0}\right)
\end{align}
But the important (and expected) message here is simply that with the hypotheses we made, \textit{the accuracy of the quantum computer is necessarily limited}.

\subsubsection{Behavior of the maximum accuracy in different scenarios}
\label{sec:behavior_max_accuracy}
As we just said, there is no easy formula allowing us to access $k_{\max}$, and it is instructive to understand why. What could happen in general is that $p_L^{(k)}(\eta_0 f(k))$ admits non trivial variations: any of the curves represented on the schematic diagram \ref{fig:qualitative_behavior_scale_dep} are possible. 
In this figure, the probability of error for a logical gate as a function of the concatenation level is represented. The black dotted lines represent the standard fault-tolerance theory where $\eta(k)=\eta_0$. In this case, as soon as $\eta_0<\eta_{\text{thr}}$, arbitrarily accurate quantum computation is possible. The solid blue lines represent what happens in the presence of a scale-dependent noise where the noise always increases with the size of the computer (and gets higher than $\eta_{\text{thr}}$ at some point). The red points on those curves are what defines $k_{\max}$: the concatenation level leading to maximum accuracy. We see that it may be useful to perform some concatenations, but at some point, it becomes detrimental. Because of that, the accuracy the computer can get to is naturally limited (illustrated by the fact $k_{\max}$ is always finite, or equivalently, by the fact the blue curves always end up diverging).

Let us comment on the easiest examples first. The curves C2 and C3 represent the situation where the noise was higher than the threshold initially. As $\eta(k)$ is strictly increasing, concatenations cannot be helpful, represented by the fact that $\eta(k) \geq \eta_0$ on those curves. The curve C4 represents a situation in which the noise was initially below the threshold. If it increases slowly enough, we could expect error correction to be useful at the beginning before deteriorating the situation. It is represented here by the fact $k_{\max}=1$ on this curve. In the unfortunate case it increases too brutally, the behavior of the curve C1 would occur. In this example, even if $\eta_0 < \eta_{\text{thr}}$, if $f$ increases fast enough in $[0,1]$ such that $\eta_0 f(1)>\eta_{\text{thr}}$, performing one concatenation level would already "come too late" and would induce $p_L^{(1)} \geq \eta_{\text{thr}} > \eta_0$. We are going to see in the next section that this behavior will imply much more stringent conditions on the quality of physical gates than with standard fault-tolerance theory.

Then, we can expect "stranger" behaviors in principle. Sometimes (curve C6), many minima can be present. We also see on curve C5 that it is possible that error correction degrades the situation before improving it. Those examples might be surprising at first view: how non-trivial variations could occur while $\eta(k)$ is strictly increasing? We provide one such example in the appendix \ref{app:non_monotonous_behavior_pL}.
\begin{figure}[h!]
\begin{center}
\includegraphics[width=0.7\textwidth]{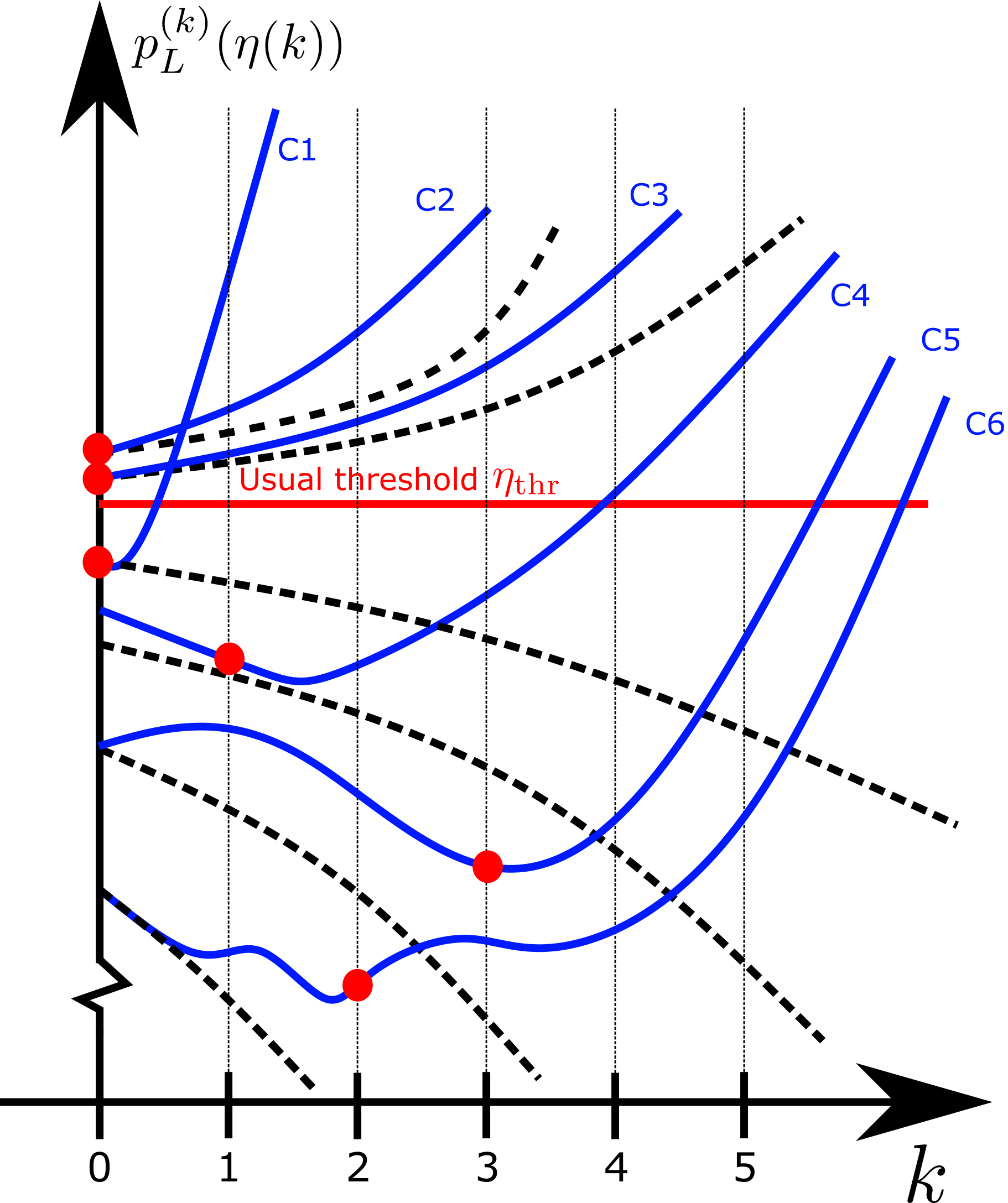}
\caption{Probability of fault for a logical gate as a function of the concatenation level. This is a schematic diagram for pedagogic purposes. The black dotted lines represents the standard fault-tolerance theory where $\eta(k)=\eta_0$. The blue solid lines represent what happens in presence of a scale-dependent noise where the noise always increases with the size of the computer (and gets higher than $\eta_{\text{thr}}$ at some point). The red points correspond to the maximum accuracy one curve can get to, and the associated concatenation level is what we call $k_{\max}$.}
\label{fig:qualitative_behavior_scale_dep}
\end{center}
\end{figure}
\FloatBarrier
In summary, in general, the hypothesis (i)-(iv) can give rise to the behavior of any curve represented in figure \ref{fig:qualitative_behavior_scale_dep}: many minima can occur, but at some point, concatenations do not help to make the computation more accurate. However, all the physical examples we are going to treat in the main text will correspond to either the curves C1 or C4.
\section{Physical examples}
\label{sec:physical_examples}
Now that we explained what we could expect in the presence of scale-dependent noise, we will study physically motivated examples. The main question we are interested in is how much resources it would cost to implement an algorithm. To answer this question, we must understand what it precisely means to "implement" an algorithm. 

If we are interested in energy or power as a resource, we can, of course, spend (almost) no energy or power to perform a computation. If we use superconducting qubits, it would mean, for instance, that the qubits are not inside a cryogenic unit, and there is no energy required to maintain them cool. But if someone does this, the answer of the algorithm will be extremely noisy and not trustable (even with error correction as the probability of fault for the physical gates would be higher than the threshold such that error correction wouldn't help). The resource is probably "minimized" but in a manner that is not satisfying. Another approach could consist in asking what is the amount of energy one has to spend in order to have the \textit{maximum} accuracy in the computation. But this question is not satisfying either as it would mean (among other things) that the qubits would have to be at $0$ Kelvin to minimize the amount of thermal noise. It would require an infinite amount of energy (or power). 

A good way to tackle this problem is thus to find the \textit{minimum} amount of resources in order to have an algorithm that reaches a \textit{targetted} success rate, decided by the experimentalist. Typically if the answer of the algorithm is good with a probability greater than $1/2$, running the algorithm a few times and doing some majority vote would allow the experimentalist to know the output of the algorithm with very high fidelity. For this reason, a commonly used target in the literature is to ask an algorithm to be successful $2/3$ of the time. The concept of finding the minimum resource under the constraint of a targetted algorithm accuracy will be at the root of all the resources estimations we will do in this chapter and in the following. 

In a second time, we will also see that there are mathematical connections between some crosstalk models and the problem of resource estimation in the sense that both will give the same type of scale-dependent noise. Thus, the calculations we are going to do for scale-dependent noise from limited resources will directly be applicable to some crosstalks models.
\subsection{Relationship between scale-dependent noise and resources estimation}
\label{sec:relation_scale_dep_resource_estimation}
Let us assume that each physical gate in the quantum computer needs some resource in order to perform an operation such that the less this resource is available, the noisier the operation will be. Calling $R$ the resource those physical gates require, what can happen is that the physical noise can be expressed as a function of this resource\footnote{This will be the case in the examples of this chapter, but in general $R$ and $\eta$ might only be "correlated": $\eta$ is not necessarily a function of $R$.}: $\eta=g(R)$, where $g$ is a \textit{decreasing} function of $R$ (the more the resource is available, the less noisy the physical gates are). From now on, we consider that we want to perform an algorithm composed of a unique logical gate (the generalization to any algorithm will be straightforward). Performing $k$ levels of concatenations, each physical gate and qubit will have been replaced by many physical gates and qubits in a recursive manner in order to perform error correction. Then, the total amount of resources available for the logical gate, we call it $R_{L}$, will be shared among all those physical elements. In the figure \ref{fig:resource_conservation}, we can see an example where the resource appears to be shared among the physical gates.
\begin{figure}[h!]
\begin{center}
\includegraphics[scale=0.4]{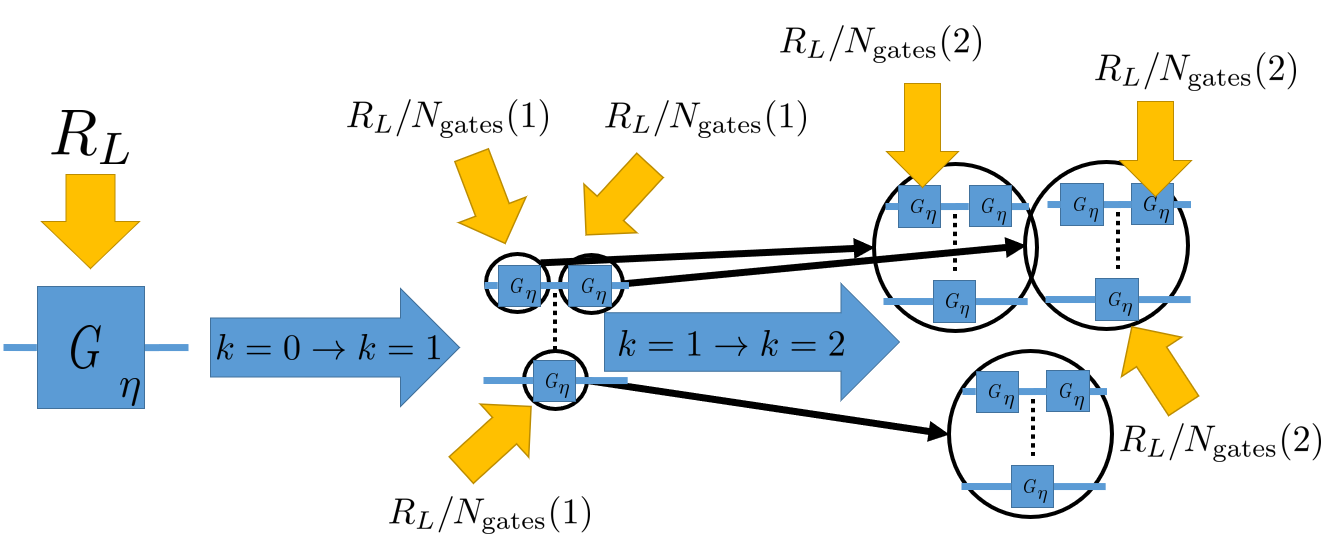}
\caption{Example of resource being shared among physical gates. If there is a total amount of resources $R_{L}$ available in the laboratory to implement a logical gate, the more concatenations are performed and the less of this resource will be available for each physical element. If the resource is shared among physical gates, calling $N_{\text{gates}}(k)$ the number of physical gates within a logical gate for a concatenation level $k$, each physical gate receive $R_{L}/N_{\text{gates}}(k)$ of this resource after $k$ concatenations. This figure follows the exact same principle of the figure \ref{fig:recursive_concatenation}.}
\label{fig:resource_conservation}
\end{center}
\end{figure}
But in general, the resource might be shared on other physical elements than the total number of gates, it can be shared on the number of gates that are active in parallel for instance (it would be the case for the power required to drive the quantum gates), or we could imagine a resource that is being shared on the total number of qubits. Calling $N(k)$ the number of physical elements on which the resource is being shared (this number increases with $k$), and assuming $N(0)=1$ (without concatenation a unique physical element is present, generalizations for $N(0)>1$ would be straightforward), each of those elements would receive an amount $R=R_L/N(k)$ of the resource (assuming that the resource is being shared equally on all the physical elements). Thus, in general, after $k$ level of concatenations, the probability of failure of the physical gates follows a law $\eta(k)=g\left(R_L/N(k)\right)$, where $g$ is a decreasing function of the variable $R=R_L/N(k)$. As $N(k)$ increases with $k$, $g$ is an increasing function of $k$: we are in the presence of a scale-dependent noise. 

To use the same notations introduced in the section \ref{sec:hypotheses_we_make}, we can rewrite $\eta(k)=\eta_0 f(k)$ with $\eta_0=g \left(R_L \right)$ and $f(k)=g \left(R_L/N(k)\right)/ \eta_0$. From the behavior of $g$, we already know that $f$ increases as a function of $k$, and $f(0)=1$ by construction. In order to have the remaining hypotheses (iii)-(iv) of \ref{sec:hypotheses_we_make} satisfied, we must have $\eta_0<\eta_{\text{thr}}$, and the fact there exist some real number $k_0$ such that $\eta(k_0) \geq \eta_{\text{thr}}$. It is what we can usually expect for a wide variety of situations, and it will be the case in the following examples. 

Now, we can establish the connection with resource estimation. We consider an algorithm composed of $N_L$ logical gates. We recall from \eqref{eq:probability_error_algo_k_concatenations} in the section \ref{sec:arbitrary_accurate_quantum_computing} that a way to estimate the probability of failure of this algorithm is through the formula $p^{(k)}_{\text{unsucessful}}=N_L p_L^{(k)}(\eta(k))$ at first order in $p_L$. Expressing this quantity in function of the resource available per logical gate $R_L$, we thus get:
\begin{align}
p^{(k)}_{\text{unsucessful}}=N_L p_L^{(k)}\left(\eta_0 g\left( \frac{R_L}{N(k)} \right)\right)
\label{eq:psuccess_algo}
\end{align}
It gives the connection between the success of the algorithm and the available resource $R_L$ provided to each of the logical gates. What is interesting is, of course, to actually relate it to the total amount of resource available for the \textit{entire} algorithm $R_\text{Algo}$. For a resource that is shared among gates, we would simply have $R_{L}=R_\text{Algo}/N_L$. Otherwise, the appropriate conservation law has to be injected. But in the end, conceptually, using the appropriate resource conservation law, \textit{we can express the probability that the algorithm fails as a function of the resource of interest and the concatenation level}, which give a law $p^{(k)}_{\text{unsucessful}}(R_{\text{Algo}})$. Then, if someone wants to implement an algorithm in such a way that the algorithm succeeds with a probability being at least $p^{\text{target}}_{\text{success}}$, while minimizing its expense, it will have to solve the following equation:
 \begin{align}
\min_k \left[R_{\text{Algo}}\right]_{\big |p^{(k)}_{\text{unsucessful}}(R_{\text{Algo}}) \leq 1- p^{\text{target}}_{\text{success}}}.
\label{eq:min_kRalgo}
\end{align}
This minimization under constraint gives the optimum level of concatenation to perform in order to minimize the resource spend. It establishes the connection between scale-dependent noise and resource estimation: we fix the total amount of resources available to some value $R_{\text{Algo}}$ which gives rise to a scale-dependent noise. Then, we can solve \eqref{eq:min_kRalgo} in order to find the minimum resource required to perform the computation. We also notice that if the accuracy of the gates is limited (i.e., $p_L^{(k)}$ is bounded), the size of the algorithms that the computer can run successfully (i.e., with a success probability being $2/3$ for instance) will be bounded as $p^{(k)}_{\text{unsucessful}}$ depends on $N_L$ as shown in \eqref{eq:psuccess_algo}. The maximum accuracy of the logical gates is directly related to the maximum size of the algorithms the computer can run.

\subsection{Noise growing proportionally with the number of physical elements}
\label{sec:noise_growth_prop}
In many cases of interest and in particular, in the one we are interested in what follows, the physical error can be expressed as $\eta(k)=\eta_0 D^{\beta k}$ for some $\beta \geq 0$ and $D \geq 0$. Such expression would come from a noise model in which the physical error is proportional to the number of components within the computer (up to some power exponent $\beta$). Indeed, because of the recursive ("Russian dolls") structure of the concatenations shown in the section \ref{sec:arbitrary_accurate_quantum_computing}, the number of elements after $k$ concatenation typically follows a law $N(k)=D^k$ where $D$ is the number of physical elements (qubits or gates) within one concatenation level\footnote{As we will justify more properly in the sections \ref{sec:estimation_physical_gates} and \ref{sec:estimation_physical_qubits} of the next chapter, this law is actually an approximation to the exact numbers.}. Thus, for a noise proportional to the number of physical elements after $k$ concatenations, up to some power $\beta$, we would have:
\begin{align}
\eta(k)=\eta_0 N(k)^{\beta}=\eta_0 D^{\beta k}
\end{align}
for some $D$. A physical realization of such noise occurs for instance if the noise of a physical gate follows a law $\alpha R^{-\beta}$ where $R$ is the resource spent to implement this gate, $\alpha \geq 0$ some proportionality coefficient and $\beta \geq 0$ (because the more resource, the better the physical gate is). We would then have $\eta(k)=g(R_L/N(k))=\alpha (R_L/N(k))^{-\beta}$ which gives the following scale-dependent noise:
\begin{align}
&\eta(k)=\eta_0 f(k)\\
&\eta_0=\alpha R_L^{-\beta}\\
&f(k)=N(k)^{\beta}=D^{\beta k}
\end{align}
Such models can easily be understood analytically. For instance the maximum accuracy one can get to with such model is straightforward to obtain. Treating $k$ as a continuous parameter, we find that (i) there is a \textit{unique} minimum for $p_L^{(k)}(\eta(k))$ and (ii) it is reached for $k_{st}$ (st for stationnary point) at:
\begin{align}
k_{st}=-\frac{1}{\ln(2)}-\frac{\ln(\eta_0 / \eta_{\text{thr}})}{\beta \ln(D)}.
\end{align}
Because there is a unique minimum, $k_{\max}$, the concatenation level leading to the maximum accuracy is reached for one of the two closest integers to $k_{st}$. More precisely, because $p_L^{(k)}(\eta(k))$ admits a (unique) minimum, there exists $\widetilde{k}$ such that $p_L^{(\widetilde{k}-1)}(\eta(\widetilde{k}-1))=p_L^{(\widetilde{k})}(\eta(\widetilde{k}))$. And $k_{\max}$ will be the lowest integer in the range $[\widetilde{k}-1,\widetilde{k}]$, thus it satisfies:
\begin{align}
&k_{\max}=\ceil[\bigg]{\widetilde{k}}-1\\
&\widetilde{k}=-\frac{\ln(\eta_0 D^{\beta}/\eta_{\text{thr}})}{\ln(D^{\beta})}
\label{eq:ktilde}
\end{align}
where $\ceil{x}$ denotes the ceiling function (i.e the function that rounds to the closest higher integer). All those different numbers are represented on figure \ref{fig:p_kmax_singleminimum} for more clarity.
\begin{figure}[h!]
\begin{center}
\includegraphics[scale=0.35]{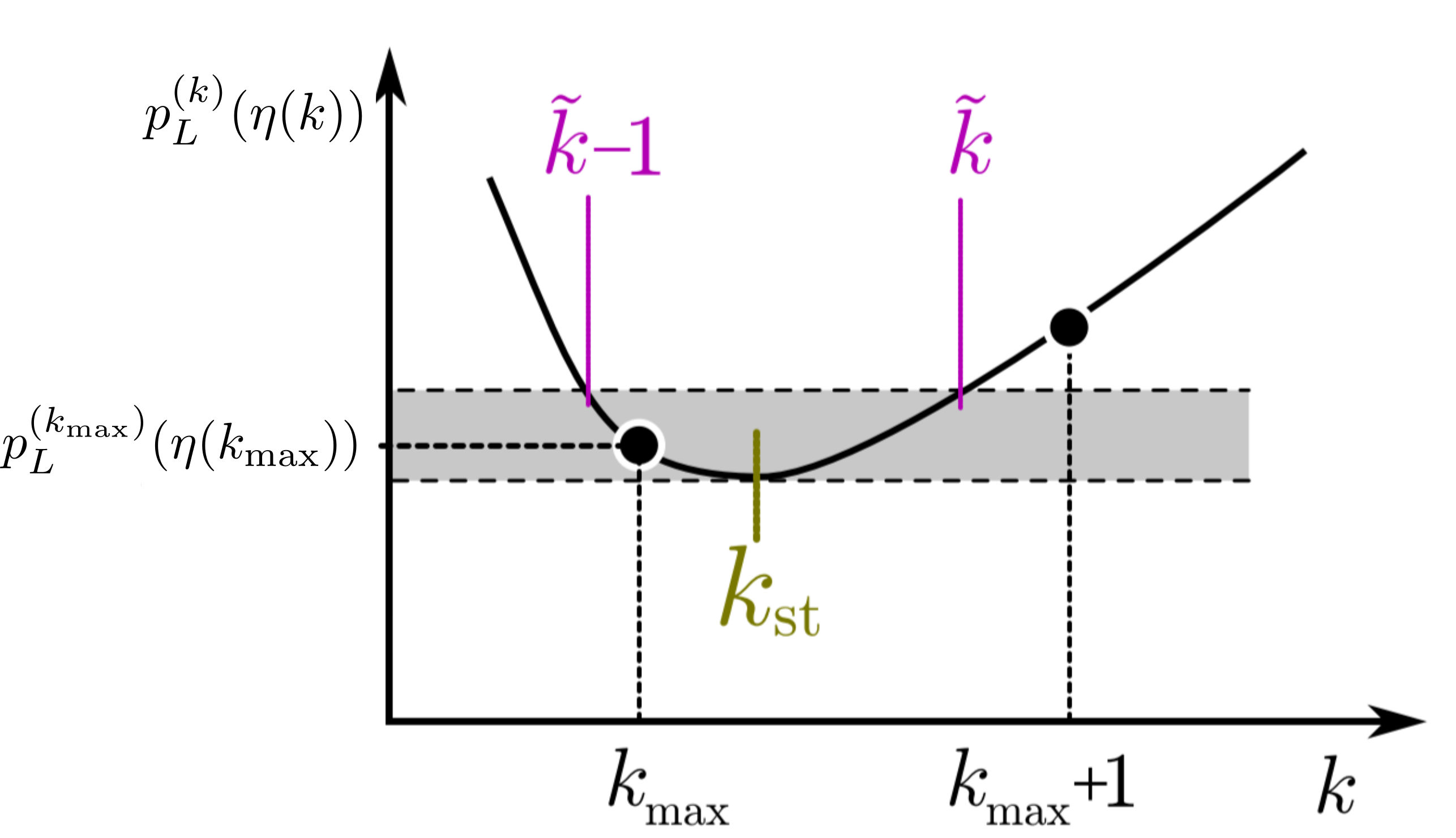}
\caption{$p_L^{(k)}(\eta(k))$ around its minimum.}
\label{fig:p_kmax_singleminimum}
\end{center}
\end{figure}
We can also plot the maximum accuracy reachable for the logical gates: $p^{(k_{\max})}(\eta(k_{\max}))$ as a function of $\eta_0$. The graph is represented on the figure \ref{fig:p_Delta_kmax} \textbf{a)} with $D=291$. It corresponds to a model in which the noise grows proportionally with the number of physical gates. Indeed, $D=291$ represents the number of fault locations inside one logical cNOT (this concept has been defined in the section \ref{sec:quantitative_estimation_first_level_protection}) \cite{aliferis2005quantum} for one level of concatenation, which is roughly\footnote{The curves presented in this chapter have been done based on approximate estimations for the number of physical components. More quantitative analysis will be done in the next chapter in the section \ref{sec:estimation_physical_gates}.} corresponding to the number of physical gates in a cNOT\footnote{The number of physical gate inside a cNOT for one concatenation level could exactly be estimated based on the table \ref{table:FT_gate_breakdown}. We would see that the number of fault locations and physical gates would be about the same order of magnitude but different. It can be explained by a various number of technical details, but, for instance, an initialization of a qubit in the state $\ket{0}$ is not considered as a gate, but as fault could occur there, it counts as a fault location. Also, the calculation leading to $291$ fault locations has been done in \cite{aliferis2005quantum}, and it assumes that measurements in the $\sigma_x$ basis can be done natively, i.e., without having to apply a Hadamard gate before doing a $\sigma_z$ measurement (which is not what is assumed in the counting leading to the table \ref{table:FT_gate_breakdown}). But those are small differences that will not change the general shape of the curves we are presenting.}. We will see in the section \ref{sec:estimation_physical_gates} of the next chapter a more accurate estimation of the number of physical gates after $k$ concatenation, but we consider it as being a valid approximation in order to access qualitative figures.

From this analysis, a striking behavior is occuring. We can wonder ourselves how low should the physical fault probability of a single isolated gate, $\eta_0$ be in order to make at least one concatenation level usefull. For this, we must find for which condition we have $k_{\max} \geq 1$ which means $\widetilde{k} >1$. We find that it implies:
\begin{align}
\eta_0 < \eta_{\text{thr}} D^{-2 \beta}
\end{align}
Considering $D = 291$, and $\beta=1$, thus if the noise grows proportionally to the number of gates, it would mean that $\eta_0$ should be more than five orders of magnitude lower than the typical threshold to make error correction useful at all. The situation $ \eta_{\text{thr}} D^{-2 \beta} < \eta_0 < \eta_{\text{thr}}$ is an example which could induce the behavior of the curve C1 that we represented on figure \ref{fig:qualitative_behavior_scale_dep}. The reason behind this is that the noise grows too fastly. Even though someone could believe error correction to be useful (because $\eta_0 < \eta_{\text{thr}}$), the high number of elements required for $k=1$ and the fact the noise grows proportionally to the number of physical elements makes error correction actually useless here.

At this point, we conceptually introduced what we need to estimate the resource a calculation will cost. We are going to see a concrete physical example in a following section. But one could argue that the reason behind our scale-dependent noise seems a little bit fictive. Indeed, why would someone \textit{impose} an arbitrary limitation in some resource? Actually, this could be the case in some experiments depending on the resource, but we would like to find an example in which it is clear that having a scale-dependent noise is inescapable. As discussed in the introduction, some physical situations such as crosstalk, for instance, due to long-range interactions between qubits, can induce this behavior. Luckily, the calculations we just did in this section can directly be mapped to some of these situations.
\subsection{Long-range correlated crosstalk}
\label{sec:long_range_correlated_crosstalk}
Crosstalk is a kind of noise that can occur when more than one qubit is inside a quantum computer. We can consider two kinds of crosstalk. The first one can be described with local noise models and is usually due to the fact that an experimentalist trying to control one given qubit to do a single qubit gate might in an undesired manner drive other qubits than the one initially targetted, at the same time \cite{heinz2021crosstalk,schutjens2013single}. In those situations, the noise can be considered as being local because the state of the qubits that have been manipulated in an undesired manner will not depend on the state of other qubits in the computer. We can understand those noise models with our approach; one has to find the law $\eta(k)$ describing such situation and inject it in \eqref{eq:logical_error_probability}. The maximum accuracy of the computer could then be found by calculating $k_{\max}$. We can also make the remark that local crosstalk models can occur experimentally because of a limited frequency bandwidth \cite{schutjens2013single}. Seeing it as a resource, it would be an example that could enter in the general approach behind the section \ref{sec:relation_scale_dep_resource_estimation}. There also exists correlated crosstalk issues (see for instance \cite{mckay2019three,takita2017experimental,proctor2019direct}). In this case, because there are parasitic qubit-qubit interactions, the state of a qubit $i$ inside the computer might be affected by the state of a qubit $j \neq i$ in an undesired manner. This would lead to a noise model that is non local and for this reason we cannot, in principle, understand its effect with \eqref{eq:logical_error_probability} (in the first paragraph of the section \ref{sec:axiomatic_FT} we explained that \eqref{eq:logical_error_probability} requires $\eta$ to describe a local noise model). Here, we will see that we can actually understand the effect of correlated crosstalk models by adapting a little bit the calculations we already presented.

The situation we consider can be described by a total Hamiltonian $H$ composed of two terms:
\begin{align}
H=H_S+H_{\text{int}}.
\end{align}
The Hamiltonian $H_S$ is the (time-dependent) Hamiltonian that implements the ideal circuit, and $H_{\text{int}}$ is a Hamiltonian making the qubits interact in an \textit{undesired} manner. Here, we assume that $H_{\text{int}}=\sum_{\langle i,j \rangle} H_{ij}$ where $H_{ij}$ is a two-qubit Hamiltonian making the qubits $i$ and $j$ interact. The sum is performed on \text{any} pair of qubits in the quantum computer. For this reason, the noise model will correspond to a long-range\footnote{Because any pair of qubits can interact in principle, this model is called "long-range".} correlated noise. Our goal here is to understand how accurate the computer can be, given the presence of $H_{\text{int}}$.

We understand from this model that we are leaving the standard assumption of fault-tolerance in which each gate can be described by a quantum channel: there are correlations between all the qubits within the computer such that the notion of fault for a physical gate that we used in the section \ref{sec:axiomatic_FT} becomes ill-defined: it is not possible to ignore those correlations to understand the dynamic. It has been shown in \cite{aharonov2006fault} that despite this fact, such noise models can still be understood with standard fault-tolerance theory.

In order to treat such noise models, the approach done in \cite{aharonov2006fault} consists in introducing the quantity 
\begin{align}
\Delta(N_{\text{qubits}})=\sup_i \left[ \sum_{j \neq i} ||H_{ij}|| \right],
\label{eq:Delta_def}
\end{align}
where $N_{\text{qubits}}$ is the number of physical qubits in the computer, and $||.||$ is the sup-operator norm \cite{treil2013linear,aharonov2006fault,todd2010operator}. $\Delta$ depends on $N_{\text{qubits}}$ because the sum involved in its definition is performed on all the qubits within the computer. Calling $t_0$ the duration of the longest physical gate in the computer (which gives the duration of one timestep of error correction), $\Delta(N_{\text{qubits}}) t_0$ can be interpreted as an error amplitude because $(\Delta(N_{\text{qubits}}) t_0)^2$ is proportional to the loss in fidelity of the state containing all the qubits in the computer, due to those parasitic two qubit interactions, during this time $t_0$, see \cite{terhal2005fault} for further details.

Now, we would like to understand how fault-tolerance works in the presence of such non-local noise. It can be done by replacing $\eta$ in \eqref{eq:logical_error_probability} by:
\begin{align}
\eta = e^{1+1/2e}\sqrt{2 \Delta(N_{\text{qubits}}) t_0}.
\label{eq:etacrosstalk}
\end{align}
The square root in \eqref{eq:etacrosstalk} is only here for mathematical reasons, allowing to map a non-local noise model on the results of fault-tolerance, which in principle applies to local noise models. The proof behind this mathematical result is in \cite{aharonov2006fault}. Now, the quantity $p_L^{(k)}=\eta_{\text{thr}}(\eta/\eta_{\text{thr}})^{2^k}$ has the same interpretation than $t_0 \Delta(N_{\text{qubits}})$ of being an error amplitude. The reason is that the principle of concatenations allows to reduce the noise in the way it has been quantified. If the noise strength has a probabilistic interpretation (which was the case until now, as explained in the last paragraphs of \ref{sec:arbitrary_accurate_quantum_computing}), then $p_L^{(k)}$ will have this interpretation. If the noise strength has an error amplitude interpretation, then $p_L^{(k)}$ will have this same interpretation. It is something to keep in mind when we show the results.

As explained in this paper, and as we can directly understand here, if $\Delta(+\infty)$ is finite, arbitrarily accurate quantum computing remains possible given the fact that $\eta < \eta_{\text{thr}}$ for $N_{\text{qubits}}=+\infty$. Indeed it would exactly correspond to the hypothesis we made in section \ref{sec:general_situation}, removing (iv) (we commented right before (iv) that having a scale-dependent noise is not an issue for arbitrarily accurate computation if it satisfies $\sup_k[\eta(k)]<\eta_{\text{thr}}$).

Now, the quantity $\Delta$ might have many possible behaviors as a function of $N_{\text{qubits}}$ because many two-qubit Hamiltonians $H_{ij}$ exist. Here, we are interested in the case in which $\eta$ might diverge and lead to a scale-dependent noise satisfying the general assumptions we made in this chapter. It is a regime that has not been studied in \cite{aharonov2006fault}. More precisely, we are going to assume that: 
\begin{align}
||H_{ij}|| = \delta/r_{ij}^z
\label{eq:Hij}
\end{align}
where $r_{ij}$ is the distance between the qubits $i$ and $j$, and $z$ a positive power describing the speed at which the interaction decreases. We will work in the regime where $\Delta(+\infty)=+\infty$ which will give some conditions on $z$. Our goal is thus to determine $\Delta(N_{\text{qubits}})$ which, as $N_{\text{qubits}}$ is a function of $k$, will give us access to the law $\eta(k)$. In order to make further connections with local noise models, we define $\epsilon=e^{1+1/2e} \sqrt{2}$, and $\Delta^{(k)}$ such that:
\begin{align}
t_0 \Delta^{(k)}=\frac{\eta_{\text{thr}}^2}{\epsilon^2} \left( \frac{\epsilon^2 t_0 \Delta(N_{\text{qubits}}(k))}{\eta_{\text{thr}}^2} \right)^{2^k}.
\label{eq:t0_times_deltak}
\end{align}
Then, we directly have $p_L^{(k)}=\epsilon \sqrt{t_0 \Delta^{(k)}}$. We notice in \eqref{eq:t0_times_deltak} that $t_0 \Delta^{(k)}$ as a function of $t_0 \Delta(N_{\text{qubits}}(k))$ is analog to $p^{(k)}_L$ as a function of $\eta(k)$ in \eqref{eq:logical_error_probability} under the transformation $\eta_{\text{thr}} \to (\eta_{\text{thr}}/\epsilon)^2$. This will be one of the reasons why the graphs a) and b) on the figure \ref{fig:p_Delta_kmax} will have strong similarities and it is what motivated us to define $\Delta^{(k)}$ this way.

Now, we need to evaluate $\Delta(N_{\text{qubits}})$. The details of the calculations are presented in the appendix \ref{app:long_range_correlated_proof}. Here we will just explain the very basic principle and detail the two scenarios we are considering. First, the exact value of $\Delta(N_{\text{qubits}})$ depends on the geometry on which the qubit are spread. Here we consider two standard scenarios: in the first one, the qubits are regularly spaced on a $d=1$ dimension grid where the distance between each qubit is called $a$. In the second one the qubits are positioned on a $\sqrt{N_{\text{qubits}}} \times \sqrt{N_{\text{qubits}}}$ square lattice (with still a lattice spacing being $a$). We then have two functions to evaluate: $\Delta_{z,d}(N_{\text{qubits}})$ ($d$ being the dimension of the lattice). It gives us the following two sums to evaluate:
\begin{align}
&\Delta_{z,d=1}(N_{\text{qubits}})=\frac{2 \delta}{a^z} \sum_{j > 0}^{N_{\text{qubits}}/2} \frac{1}{j^z}
\label{delta1}\\
&\Delta_{z,d=2}(N_{\text{qubits}})=\frac{\delta}{a^{z}} \sum^{\sqrt{N_{\text{qubits}}}/2}_{i=-\sqrt{N_{\text{qubits}}}/2} \sum_{j=-\sqrt{N(k)}/2}^{\sqrt{N_{\text{qubits}}}/2} \frac{c_{ij}}{\sqrt{i^2+j^2}^z},
\label{delta2}
\end{align}
where we defined $c_{ij}$ such that $c_{ij}=0$ iff $i=j=0$. Those sums diverge for $z \leq d$ which is thus the regime we are interested in. As shown in the appendix, using the fact that the number of qubits grow with a law being approximately $N_{\text{qubits}}(k)=Q_L D^k$, with $D=291$, $Q_L$ being the of logical qubits (which corresponds to the number of physical qubits for $k=0$)\footnote{The number of physical qubits after $k$ concatenations knowing that there are $Q_L$ logical qubits will also be calculated more precisely in \ref{sec:estimation_physical_qubits}. This "quick estimation" used here is motivated by the principle of Russian dolls behind fault-tolerance construction, that naturally leads to a growth close to $D^k$.}, upper and lower bounding those sums by appropriate integrals, we find in both cases that
\begin{align}
\Delta_{z,d}(N_{\text{qubits}}(k)) \approx \Delta^{(0)}_{z,d}D^{k(1-z/d)}
\label{eq:Delta_zd}
\end{align}
in the limit where $N_{\text{qubits}}(k) \to +\infty$, where the constant $\Delta^{(0)}_{z,d}$ satisfies depending on the value of $d$:
\begin{align}
&\Delta^{(0)}_{z<d,d=1}=\frac{\delta  Q_L^{1-z} 2^z}{a^z(1-z)}\label{eq:delta0_d1} \\
&\Delta^{(0)}_{z<d,d=2}=\frac{\delta 2^{z} C_z Q_L^{1-z/2}}{a^z} \label{eq:delta0_d2} 
\end{align}
with $C_z=(2/(2-z)) \int_{\pi/4}^{\pi/2} d \theta sin(\theta)^{z-2}$. We also treated the case $z=d$ (which gives $\Delta_{z,d}(N_{\text{qubits}}(k))$ that grows as $C \ln(k)$ for some $C$) in \cite{fellous2020limitations}, but it corresponds to an additional study we prefer to not consider here. 

To summarize, here we did the following. First, we used the results from \cite{aharonov2006fault} which allows to understand how non-local noise affects the accuracy of the logical gates by doing the replacement \eqref{eq:etacrosstalk}, where $\Delta(N_{\text{qubits}})$ is defined in \eqref{eq:Delta_def}. We studied the physics for those non-local noise in the case $||H_{ij}||$ is being given by the law \eqref{eq:Hij}, and in the regime $\Delta(+\infty)=+\infty$ (it gives a condition on how fast $||H_{ij}||$ decreases with the distance between the qubits $i$ and $j$). Then, using the fact that $p_L^{(k)}=\epsilon \sqrt{t_0 \Delta^{(k)}}$, and $t_0 \Delta^{(k)}$ given in \eqref{eq:t0_times_deltak}, we can deduce the maximum accuracy of the gates. It is represented on figure \ref{fig:p_Delta_kmax} b).
 
\begin{figure}[h!]
\begin{center}
\includegraphics[width=0.9\textwidth]{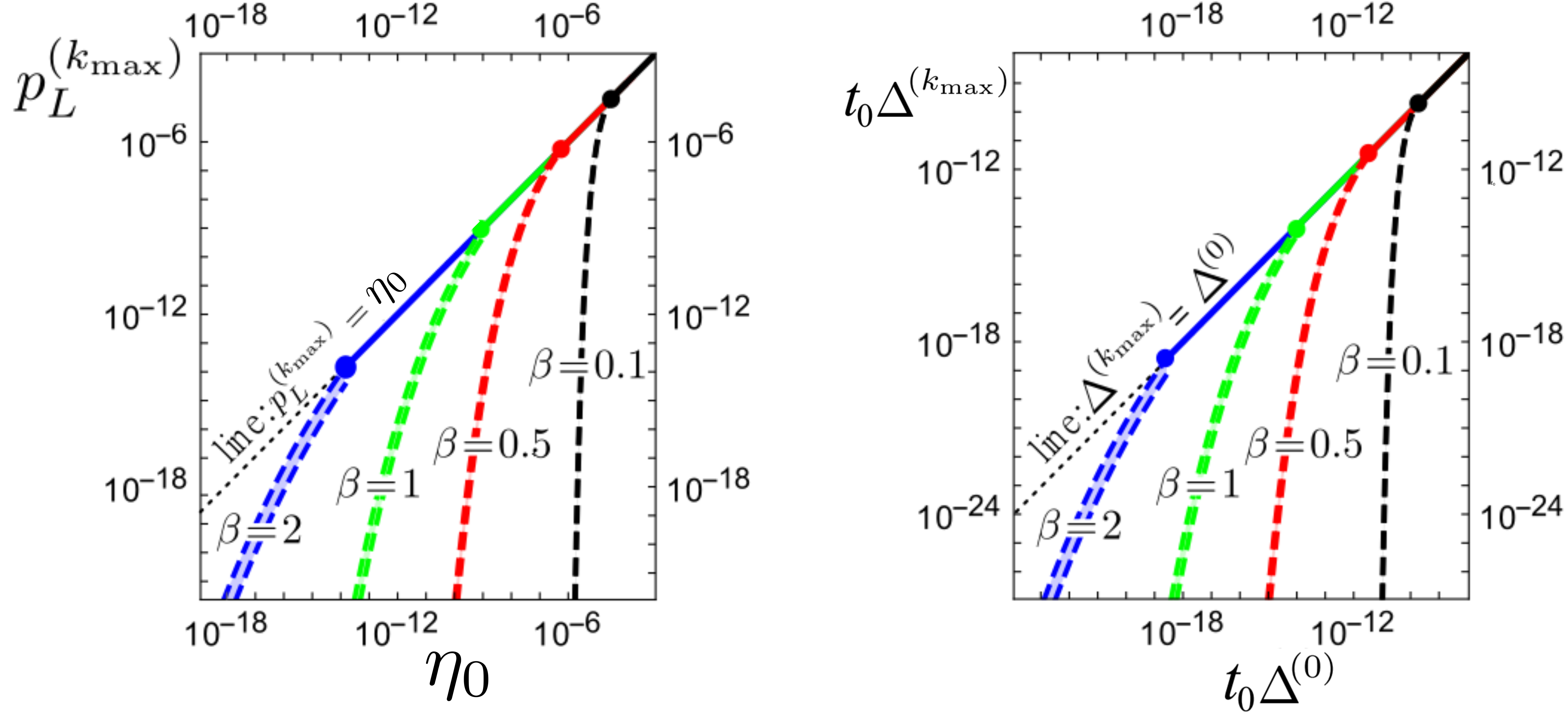}
\caption[Caption]{\textbf{a)}: Lowest probability of error of a logical gate as a function of $\eta_{0}$ for a physical error probability $\eta(k)=\eta_0 D^{\beta k}$, corresponding to a noise growing proportionally to the number of physical elements up to some power (it corresponds to the models treated in section \ref{sec:noise_growth_prop}) \protect \footnotemark. \textbf{b)}: $t_0 \Delta^{(k_{\max})}$ as a function of $t_0 \Delta^{(0)}$ for a physical error amplitude $\Delta(N_{\text{qubits}}(k))=\Delta^{(0)} D^{\beta k}$ corresponding to the non-local long range interaction model due to crosstalk. The expressions of $\Delta^{(0)}$ are given in \eqref{eq:delta0_d1} and \eqref{eq:delta0_d2}. This number allows to deduce $p_L^{(k_{\max})}=\epsilon \sqrt{t_0 \Delta^{(k_{\max})}}$ ($\epsilon \approx 4.6$) which has a (logical) error amplitude interpretation (it must be squared to get something "homogeneous" to a probability of failure). Even though error correction might help to limit correlated noise, we see that in order to be useful the noise strength must be so small that one could already perform huge calculation without needing to use correction. In both graphs $D=291$ as $291^k$ can be considered as being a rough approximation of the number of physical qubits (resp physical gates) inside one logical qubit (resp logical gate). The reason why a) and b) look similar (qualitatively, not quantitatively: the values on the axis are very different) is, first because the physical errors $t_0 \Delta(N_{\text{qubits}}(k))$ and $\eta(k)$ in a) and b) follow similar laws as a function of $k$, and then, because those errors on a logical level also follow a similar law as explained around \eqref{eq:t0_times_deltak}: surprisingly the long-range correlated noise model behaves not so differently than the resource constrained model described in \ref{sec:noise_growth_prop}.}
\label{fig:p_Delta_kmax}
\end{center}
\end{figure}
\FloatBarrier
\footnotetext{Strictly speaking, we represented the area between $p_L^{(\widetilde{k}-1)}$ and $p_L^{(\widetilde{k})}$ with $\widetilde{k}$ defined in \eqref{eq:ktilde} and the text above (this is why there are two blue lines visible for instance). We recall that $p_L^{(k_{\max})} \in [p_L^{(\widetilde{k}-1)},p_L^{(\widetilde{k})}]$. It would have been possible to directly plot $p_L^{(k_{\max})}$ but when this thesis was written the codes allowing to do such graph were no longer available. Same explanations for the graph $b)$.}

In conclusion, we understand that our approach can also allow us to consider the maximum accuracy one can get in the presence of non-local, correlated noise by doing a "remapping" of $\eta$ as indicated in \eqref{eq:etacrosstalk}. It extends our approach to more general noise models. The example we took in this section also illustrates the fact that scale-dependent noise models might be intrinsic to some architecture because of "parasitic" interactions. Surprisingly, the crosstalk model we took gives curves that have the same qualitative features as local noise induced by resource limitations, as explored in \ref{sec:noise_growth_prop} (but the quantitative values on the axis are very different). However, we also see that even though the long-range correlated crosstalk noise we studied can be reduced with error correction, the noise must be so low from the start that for all practical purposes, error correction might not be needed in those regimes. 

\section{From scale dependence to energetic estimation of a fault-tolerant algorithm}
\label{sec:from_scale_dep_to_energetic_estimation}
\subsection{The algorithm: Quantum Fourier transform}
\label{sec:algo_QFT}
Our goal now is to show the relationship between scale-dependent noise and resource estimation in a concrete example in order to find the minimum resource required to implement an algorithm. In order to be concrete, we must consider some algorithms to implement. We decided to choose an algorithm that is (i) well documented and (ii) used as a subroutine for many quantum algorithms. The Quantum Fourier Transform (QFT) belongs in this family. It is, for instance, one of the subroutines used in the Shor factoring algorithm, which allows decrypting a message encrypted with the classical RSA algorithm. The RSA algorithm allows securing messages by using the fact that a classical computer will take a very long time to be able to factorize into prime numbers some integer $P$ if $P$ is big enough \cite{nielsen2002quantum}. The Shor algorithm will use a quantum computer to speed up this calculation such that a factorization of this number will be found in a reasonable amount of time. It requires, among other things, to perform a quantum Fourier transform on $N = \log_2(P)$ qubits: this number corresponds to the number of bits required to encode the integer $P$. A way to implement this algorithm is represented in the figure \ref{fig:circuit_qft}.
\begin{figure}[h!]
\begin{center}
\includegraphics[width=0.9\textwidth]{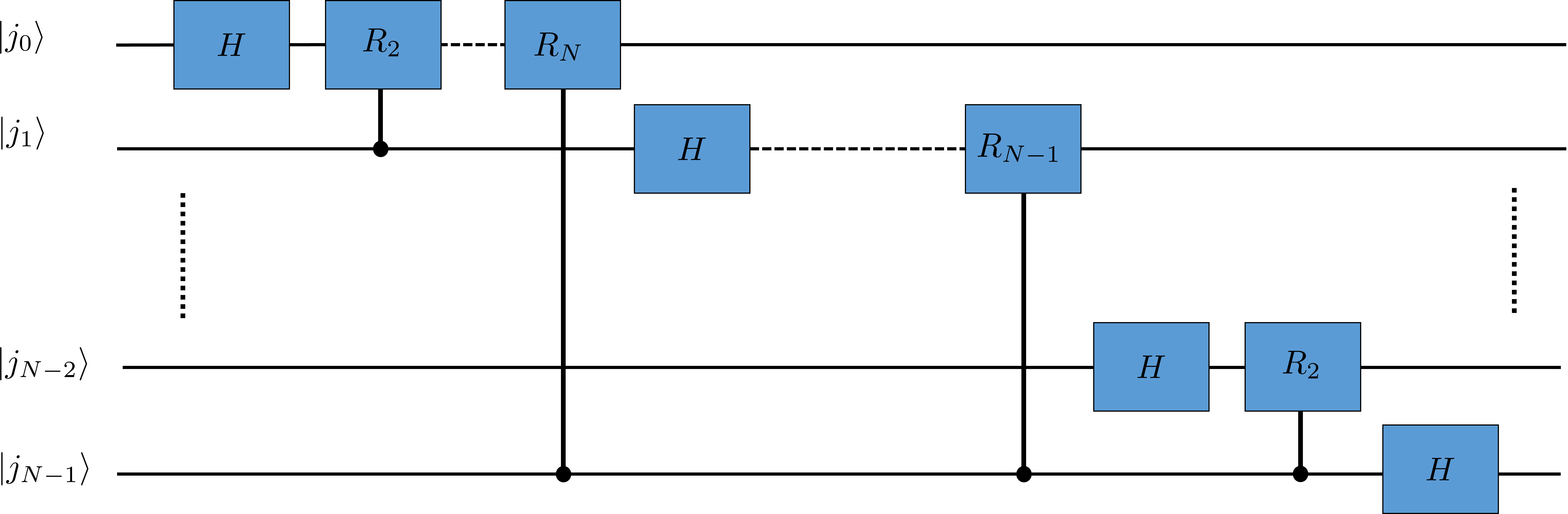}
\caption{Circuit describing a quantum Fourier transform performed on $N$ qubits. The controlled operations are controlled rotations which are implemented as shown on figure \ref{fig:controlled_R}.}
\label{fig:circuit_qft}
\end{center}
\end{figure}
\FloatBarrier
\begin{figure}[h!]
\begin{center}
\includegraphics[width=0.5\textwidth]{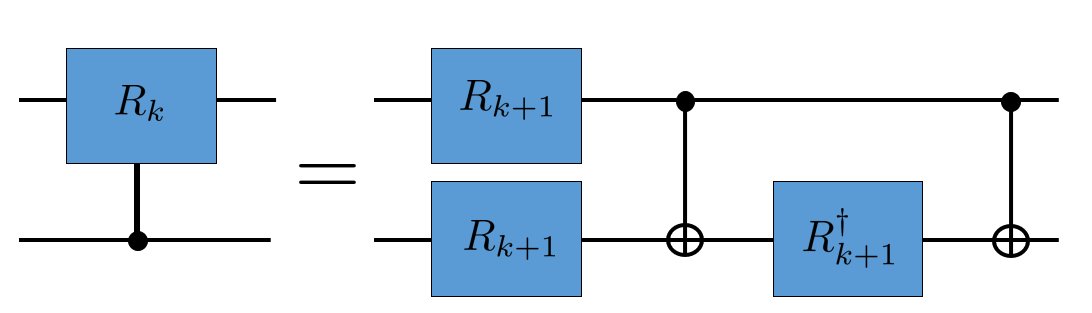}
\caption{The controlled $R_k$ operations can be decomposed by two cNOT and $R_{k+1}$ operations, where $R_k$ is a rotation of angle $\frac{\pi}{2^k}$ around the $z$ axis of the Bloch sphere as defined in \eqref{eq:rk} \cite{kim2018efficient}}
\label{fig:controlled_R}
\end{center}
\end{figure}
\FloatBarrier
It requires to implement controlled rotation around the $z$ axis, which requires to implement cNOT and $R_k$ gates where the latter is defined as:
\begin{equation}
R_k \equiv \begin{pmatrix}
1 & 0\\
0 & e^{i \frac{2 \pi}{2^k}}
\end{pmatrix}
\label{eq:rk}
\end{equation}
This is the algorithm we will base our energetic estimation upon. We can figure out from this algorithm that there are $N(N-1)/2$ controlled rotations and $N$ Hadamard gates. A typical value of $N$ we can consider is $2048$, which corresponds to the quantum Fourier transform involved in an algorithm that would decrypt a message encoded with an RSA key used today. To simplify the discussion, and as we are interested in explaining the concepts in a simple way, we will assume that all the controlled gates involved are cNOT gates having a probability of failure $p_L^{(k)}(\eta(k))$ for $k$ concatenations. The total number of logical gates is then of order $N^2$, and the probability of failure becomes:
\begin{align}
p^{(k)}_{\text{unsucessful}}=N^2 p_L^{(k)}(\eta(k))
\label{eq:p_failure_QFT}
\end{align}
At this point, we need to find the function $\eta(k)$. It will be given by the physical model describing the gate that we now introduce.
\subsection{The physical model: superconducting qubit in waveguide}
\label{sec:physical_model_chap3}
To have a model of noise for the gates, we must consider a physical system. We will assume that all the gates in the quantum computer will be as noisy as single qubit $\pi$-pulses. We will also consider that the only reason why the evolution is noisy is because of spontaneous emission, such that we will use the noise model described in the section \ref{sec:dynamic_in_presence_thermal_noise} and consider that $\overline{n}_{\text{tot}}=0$ in \eqref{eq:evolution_noisy_gate_with_thermal_noise}. This is, of course, an oversimplifying assumption: different physical gates are obviously described by different models. But we are only interested in qualitative behaviors here. This is the reason why we will consider that all the physical gates are as noisy as single qubit $\pi$-pulses which entirely defines our noise model.

Here, our goal will be to minimize the energy required for the pulses that are driving the gates. For this reason, we must express the parameters behind this problem as a function of this energy. We recall the following expressions given in the first chapter of this thesis. The power associated to a pulse resonant at the qubit frequency $\omega_0$, for a Rabi frequency $\Omega$ is: $P_g=\hbar \omega_0 \Omega^2/(4 \gamma_{\text{sp}})$ (see \eqref{eq:power_fct_rabi} in section \ref{sec:energetic_cost_to_perform_a_gate}). We deduce that the energy a $\pi$-pulse contains is simply: $E=P_g t_{\pi}=\hbar \omega_0 \pi  \Omega/(4 \gamma_{\text{sp}})$ (we used $t_{\pi}=\pi/\Omega$). It gives us the expression of the Rabi frequency as a function of the energy contained in the pulse: we obtain $\Omega=\frac{4 \gamma_{\text{sp}} E}{\pi \hbar \omega_0}$. Using this result and the fact that the evolution must be integrated for a duration $t_{\pi}=\pi/\Omega$, the dynamic is completely expressed as a function of the energy contained in the pulses. The only remaining parameters to fix are associated with the characteristics of the qubit and their coupling to the waveguide, basically the values of $\gamma_{\text{sp}}$ and $\omega_0$. We consider for them the values given in \ref{sec:state_of_the_art_qubit}.

Now that the dynamic has been described and that all the parameters we need have been provided, we must find the expression of the scale-dependent noise induced by the limited amount of energy in the pulses. In order to catch the strength of noise induced by this evolution, we define the map $\mathcal{N}$ such that:
\begin{align}
\mathcal{E}=\mathcal{U} \circ \mathcal{N},
\end{align}
where $\mathcal{E}$ is the quantum channel that is associated to the master equation \eqref{sec:dynamic_in_presence_thermal_noise} (we follow the approach described in the section \ref{sec:fault_and_errors}). Solving the dynamic, the process $\mathcal{N}$ can be computed. More precisely, it can be decomposed on the Pauli matrices basis such that: $\mathcal{N}(\rho)=\sum_{ij} \chi_{ij} \sigma_i \rho \sigma_j$, where the $4 \times 4$ matrix $\chi$ having for elements $\chi_{ij}$ for $i$ and $j$ integers between $0$ and $3$ entirely describe the process. The physical gate fault probability is finally defined as $\eta \equiv \max_{i>0} \chi_{ii}$. The reason why we consider this quantity as corresponding to the physical gate fault probability can be seen as an approximation: in some sense, this simplification means that we model the noise by considering that it corresponds to a probabilistic Pauli noise of "strength"\footnote{There are three coefficients describing a single qubit Pauli noise, here everything behaves as if two of those components where equal to $0$ and the last one to $\eta$. We could also consider that $\eta$ is the strength of a depolarizing channel which would give very similar results. What is important here is that we approximate our noise channel by a probabilistic noise model: this is the main approximation we are doing in order to be able to reason with probabilities and to avoid using norm-based estimations.} $\eta$. A completely rigorous treatment of the noise here would ask us to compute a norm for the process $\mathcal{N}$ and to use this norm as what plays the role of $\eta$, but this approach, even though more rigorous, would add un-necessary complication for what is here an example of principle. Also, it would lead to a very poor upper bound of the probability of logical error of the gate that is unnecessarily pessimistic. We can note that it is frequent in the literature to approximate noise models by Pauli noise \cite{jayashankar2021achieving,beale2018quantum}.

Performing the calculations, we find that
\begin{align}
&\eta=\chi_{11}=\frac{\pi^2}{16} \frac{\hbar \omega_0}{E}.
\label{eq:eta_phys}
\end{align}
Assuming that the energy for one logical gate is being fixed to $E_L$, we deduce that after $k$ concatenations, the physical fault probability becomes.
\begin{align}
\eta(k)=\frac{D^k \pi^2}{16} \frac{\hbar \omega_0}{E_L}
\end{align}
We are in the presence of a scale-dependent noise induced by limited resources. It allows us to estimate the minimum energy we have to spend in order to successfully run the algorithm.
\subsection{Minimum energy to perform the QFT}
On the figure \ref{fig:p_kmax_QFT}, we plot in black solid lines $\min_k [p_L^{(k)}(\eta(k))]=p_L^{(k_{\max})}(\eta(k_{\max}))$ as a function of $\overline{n}_L=E_L/\hbar \omega_0$ (the number of photons in the pulses that are being used by the logical gate). It corresponds to the maximum accuracy a logical gate can get to for a given photon budget for the logical gate. This curve admits discontinuity in its derivative. They correspond to the moment when $k_{\max}$ jumps from one integer value to another, as also illustrated in figure \ref{fig:kmax_QFT}. Typically, in the low photon regime $k_{\max}=0$ then it goes to $k_{\max}=1$ and $k_{\max}=2$. On figure \ref{fig:p_kmax_QFT} are also plotted, in gray dotted lines, the curves $p_L^{(k)}(\eta(k))$ in function of $\overline{n}_L$ for $k=0,1$ or $2$. Those curves will match $p_L^{(k_{\max})}$ on some regime, typically when $k_{\max}$ will have the same value of the $k$ associated with a gray line. The solid black line represents the "lower envelope" of all the gray dotted lines. This is expected as it corresponds to the maximum accuracy, i.e., the \textit{lowest} probability of fault as a function of the number of photons. No gray curve can thus be below the black one, by definition. Let us go back on the question of energetics. We can interpret further the exact meaning of the solid black line. It can actually be interpreted from two perspectives. For a given number of photons, it allows to deduce the maximum accuracy we can get to, as we just explained, but it also represents the \textit{minimum} number of photons one has to spend (for the logical gate) in order to reach a given accuracy target, i.e., a given value of $p_L^{k_{\max}}(\eta(k_{\max}))$. Those two different questions are actually connected. For instance, if we want the logical gate to have a probability of failure being $10^{-5}$, we would need approximately $10^5$ photons for the logical gate. And reciprocally, if we have $10^5$ photons for the logical gate, the maximum accuracy we can get to is a logical gate having a probability of failure being $10^{-5}$. Now we can interpret physically what happens when $k_{\max}$ is changing, which corresponds to a discontinuity in the derivative of the black curve. Reading the curve from low photon to high photon regime, the moment when there is a change of slope for $p_L^{(k_{\max})}(\eta(k_{\max}))$ physically means that an extra concatenation level starts helps to minimize the energetic expenses. Let us focus for instance on the change $k=0 \to 1$ occurring for $\overline{n}^0_L \approx 10^9$. It could be possible to increase the accuracy keeping $k=0$ and considering $\overline{n}_L>\overline{n}^0_L$. It corresponds to the gray dotted lines $p_L^{(k=0)}$. But doing so would not be smart in the sense that the accuracy could be reduced even more by increasing the concatenation level without spending more photons.
\begin{figure}[h!]
\begin{center}
\includegraphics[width=0.6\textwidth]{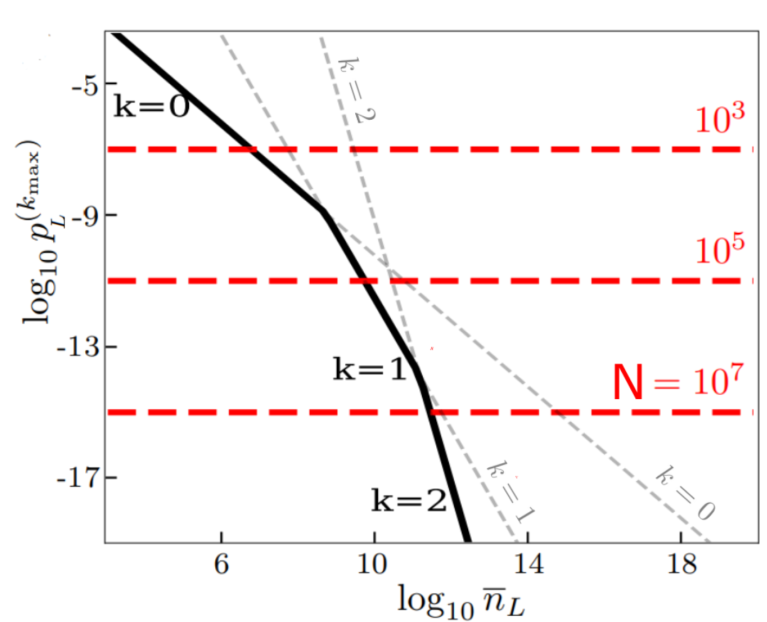}
\caption{In black solid lines: maximum accuracy one can get to for a given budget of photon ($p_L^{(k_{\max})}(\eta({k_{\max}}))$). The gray dotted lines are associated to $p_L^{(k)}(\eta({k}))$ for $k \in \{0,1,2\}$. The red horizontal lines correspond to different size of QFT, associated to Shor algorithm trying to factorize a key encoded on $N$ bits.}
\label{fig:p_kmax_QFT}
\end{center}
\end{figure}
\begin{figure}[h!]
\begin{center}
\includegraphics[width=0.6\textwidth]{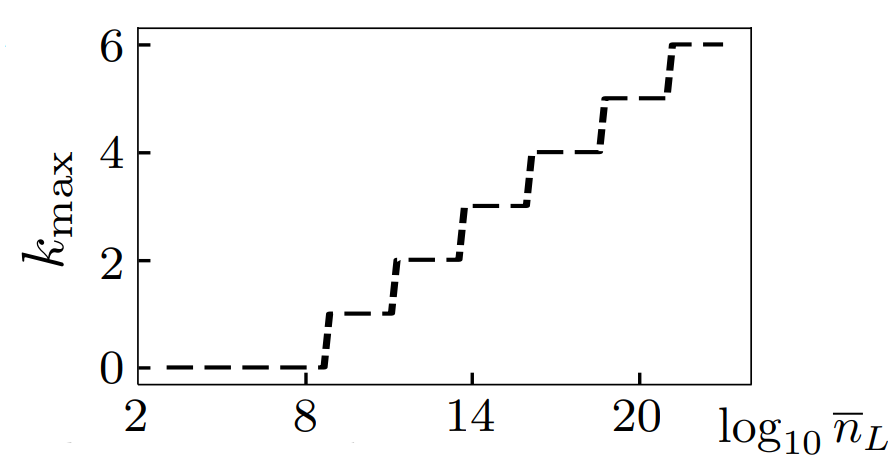}
\caption{Concatenation level allowing to reach the maximum accuracy, $k_{\max}$, as a function of $\overline{n}_L$.}
\label{fig:kmax_QFT}
\end{center}
\end{figure}

In order to be concrete, we want to know what is the accuracy we need in order to run a quantum Fourier transform and to deduce from that the minimum energy we need to run this algorithm. Assuming we want the algorithm to succeed $2/3$ times, we can deduce the fault probability each logical gate has to have. Indeed, using \eqref{eq:p_failure_QFT}, we deduce that we should have in this case: $p_L^{(k)}=\frac{1}{3 N^2}$. The remaining thing to know is thus the value of $N$. The red horizontal lines on figure \ref{fig:p_kmax_QFT} represent the fault probability we should have in order to have a QFT algorithm working $2/3$ times properly and having $N=10^3$, $10^5$ or $10^7$ qubits. To give an idea of how big $N$ can be in practice, for the QFT used within the Shor algorithm, the number of qubits $N$ required would be equal to $2048 \sim 10^3$ for today encryption protocols. But let us take $N=10^5$ for the example. Looking at the curve, we deduce that we would need a \textit{minimum} of $\overline{n}_L=10^9$ photons per logical gates. Using the fact that there would have about $N^2$ logical gates, considering $\omega_0/2 \pi \approx 6 GHz$, the \textit{minimum} total energy required would be about $E=N^2 \hbar \omega_0 \overline{n}_L  \approx 30 \mu J$. This energy is very low, and it allows us to see with a concrete example that having a scale-dependent noise is not necessarily an issue in itself. Indeed, if we assumed that we were limited to $1J$ of energy, the accuracy would have been fundamentally limited, but at a level in which huge algorithms could already be implemented. It would be interesting to adapt such calculations to detailed models of scale-dependent noise due to limited frequency bandwidth in order to see quantitatively if such problematics are really an issue for scalability (and if so, what is the largest algorithm it would be possible to implement in the presence of such noise), now that we provided a way to tackle this kind of questions.

In the end, this example illustrates the principle of resource estimation with a very simplified model. It contains the basic ingredients we are going to use in order to estimate the energetic cost in a more realistic manner in the following chapters: we will use this connection between noise and resource in order to minimize the energetic expense, which seems promising to make the quantum computer energy efficient. Now, here are some comments about the quantitative results we obtained here. First, it would seem from our calculation that no error-correction would be required for $N=10^3$ (which corresponds to what would be needed for many applications such as the Shor algorithm applied for today's encryption protocols). The reason behind this in our models is that (i) we only take into account the noise due to spontaneous emission, but mainly (ii) we assume that the physical noise can be put arbitrarily close to $0$ using enough photons. Indeed, $\eta$ in \eqref{eq:eta_phys} converges to $0$ in the limit of infinite energy. In practice, this behavior wouldn't occur because there are always other sources of noise that we neglect here. For instance, in practice, superconducting qubits are not exactly two-level systems: there are extra energy levels. Too short pulses (corresponding here to our regime where $\eta \to 0$) would induce leakage errors in the dynamic that are not caught with our model. Then, we find that the energy required is very low. This is explained by the fact we are only taking into account the energy contained in all the pulses that are driving the quantum gates. In practice, the energetic cost of quantum computing for superconducting qubits will be orders of magnitudes higher and mainly due to the cryogenic cost. In the last chapter of this thesis, we are going to consider a complete model that will include all those cryogenic costs, allowing us to make more realistic estimations. 
\section{Conclusion}

In this chapter, we have explained what has to be expected in the presence of a scale-dependent noise, i.e., a noise that grows with the size of the computer. Typically, the accuracy of the computer will be limited as soon as the scale-dependent noise grows in an unbounded manner. We provided the tools and concepts that allow estimating this accuracy. The maximum accuracy the computer can get to is not trivial to estimate in general, but we studied what happens in the presence of limited resources and for long-range correlated noise models. In some of those cases, the maximum accuracy can easily be accessed. We also showed that estimating the cost in a resource to perform a fault-tolerant computation can be phrased as a problem of finding the minimum resources required under the constraint of targetting a given accuracy, for a scale-dependent noise induced by a resource limitation. Indeed, assuming a resource to be limited, and assuming that the fewer this resource is available, the noisier the physical gates will be, the noise will grow with the computer size from resource conservation. This vision allows us to estimate what is the maximum accuracy the computer can get to, assuming that we have at disposal a given amount of the resource, and reciprocally, what is the minimum resource it costs to reach a given accuracy for a computation. We illustrated those concepts with a light-matter interaction model based on superconducting qubits embedded into waveguides. This example was a concrete illustration of a scale-dependent noise that is not a threat: fixing the total amount of energy to some "reasonable" macroscopic value induced a scale-dependent noise forbidding the quantum threshold theorem to apply (because the probability of fault for the physical gates grows with the concatenation level). But this maximum accuracy was large enough for all practical purposes. Our analysis was, however, too simple to be used for realistic estimation of the energetic cost of quantum computing: it should be understood as a first step toward that direction which establishes first basic concepts behind this problem. Typically we see that the key point is to relate the accuracy of the computation to the resource one wants to estimate and minimize. The purpose of the next chapters will be to generalize our framework and to do a more quantitative estimation of the energetic cost of quantum computing.
\begin{appendices}
\chapter{Example of non monotonous behaviors for $p_L^{(k)}$}
\label{app:non_monotonous_behavior_pL}
It might be surprizing that if $\eta(k)$ is a strictly increasing function, we can still expect multiple minima for $p^{(k)}_L(\eta(k))$. To give a basic intuition, we can design a fictive but instructive situation in which a similar behavior as on curve C5 would occur. Let us consider that $\eta_0$ is very small (i.e $\eta_0 \ll \eta_{\text{thr}}$). If $\eta(k)$ is such that it increases brutally in the range $[0,1]$ \textit{while satisfying} $\eta(1) < \eta_{\text{thr}}$, it could be the case that $p_L^{(1)}(\eta(1))>p_L^{(0)}(\eta(0))=\eta_0$. But if in addition, $f$ then \textit{increases slowly for further concatenation levels} it might exist a concatenation level $k$ such that $p_L^{(k)}(\eta(k))<p_L^{(0)}(\eta(0))$. Indeed as an extreme scenario, if $\eta(k \geq 1)$ is almost flat, this region would correspond to the standard fault-tolerance scenario and we will know that at some point concatenations would help. In this case we would first see concatenation as detrimental, then as profitable, then detrimental again. As a numerical example, we can consider what happens for  $\eta_0=10^{-8}$, $\eta_{\text{thr}}=10^{-4}$, $f(k\geq 1)=10^{3+0.21k}$ ($f(0)=1$). This function satisfies the assumption (i)-(iv) we made, and we find numerically that the situation is degraded then improved then degraded as a function of the concatenation levels as one can see on figure \ref{fig:example_nontrivial_generalcase}. The intuitive reason being the one we explained, $f(0)=1$ but $f(1)=1000$: the noise increases fastly at the beginning (while having $\eta(1)<\eta_{\text{thr}}$). Then it increases in a much slower manner allowing to make concatenation usefull at some point. And as $\eta(k)$ increases in an unbounded manner, at some point $p_L^{(k)}$ starts to diverge.
\begin{figure}[h!]
\begin{center}
\includegraphics[width=0.7\textwidth]{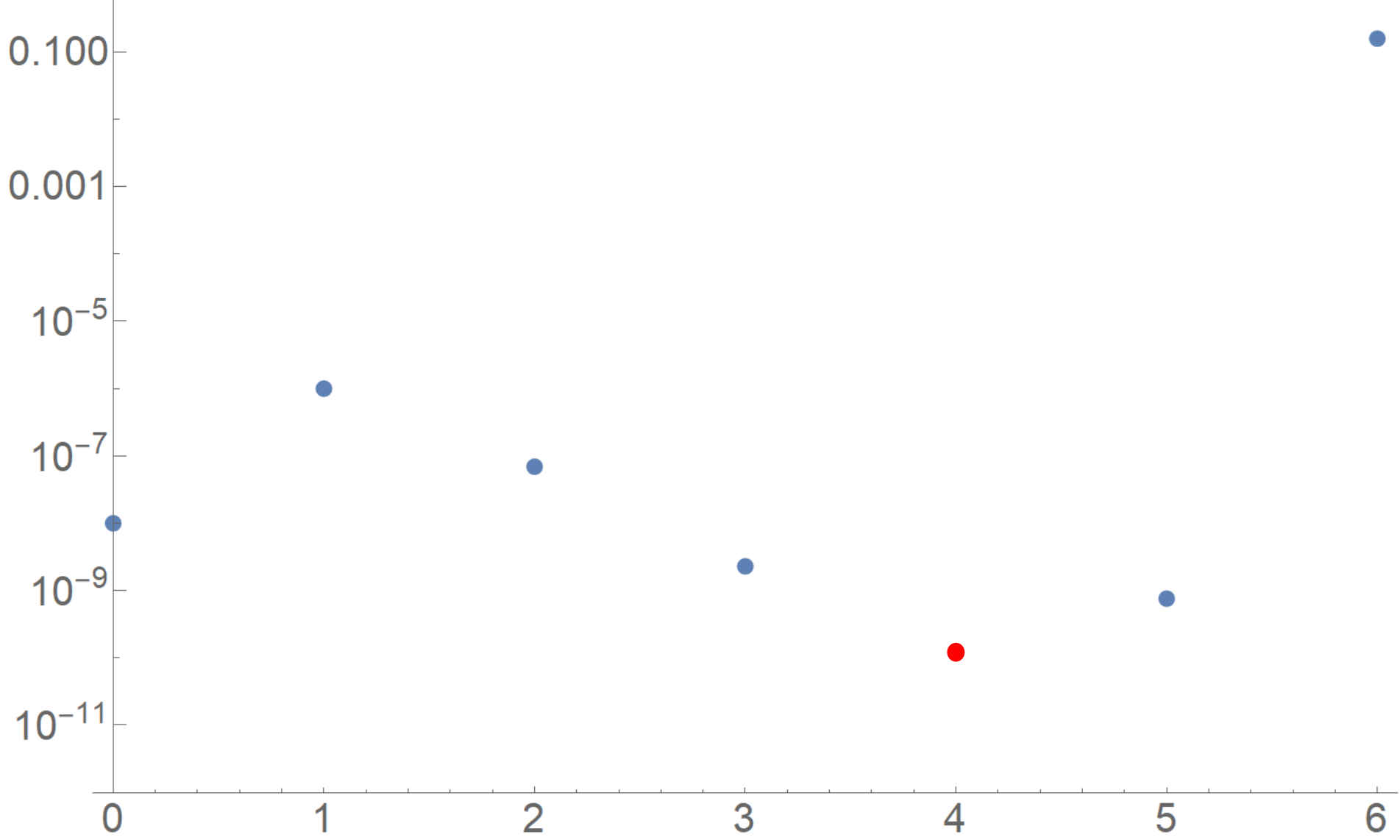}
\caption{On this figure is represented the probability of error of the logical gate after $k$ concatenations: $p_L^{(k)}(\eta(k))$ as a function of the concatenation level $k$ for the scale-dependent noise $\eta(k)$ described in the main text. It shows that concatenations might be initially detrimental before improving the protection and finally being detrimental again. The maximum accuracy is reached at $k_{\max}$ represented by the red point.}
\label{fig:example_nontrivial_generalcase}
\end{center}
\end{figure}
\FloatBarrier
\chapter{Long range correlated noise}
\label{app:long_range_correlated_proof}
As we are going to see, using the fact that $N_{\text{qubits}}=Q_L D^{k}$ where $Q_L$ is the number of logical qubits and $D$ is the number of physical qubits within a logical qubit for one concatenation level, this quantity will surprisingly give rise to a scale-dependent noise $\eta(k)=\eta_0 D^{\beta k}$ in the limit of large number of physical qubits in the computer. Thus, those crosstalks models behave mathematically the same way as the resource constraints models we studied. The exact expression of $\Delta(N_{\text{qubits}})$ depends on the way the qubits are spread. If we assume that they are spread on a one-dimensional lattice of regular spacing $a$, the sum has to be performed on a 1D topology. Using the fact that the strongest interaction will be felt by the qubit in the center of the computer, we get
\begin{align}
\Delta_{z,d=1}(N_{\text{qubits}})=\frac{2 \delta}{a^z} \sum_{j > 0}^{N_{\text{qubits}}/2} \frac{1}{j^z},
\label{delta1_app}
\end{align}
where the $2$ that multiplies $\delta$ comes from the fact that we exploited the symmetry around the central qubit to only sum on positive values of $j$. On the other hand, if the qubits are spread on a squared two-dimensional lattice of regular spacing $a$, we have, where $c_{ij}=0$ iff $i=j=0$ and $c_{ij}=1$ otherwise:
\begin{align}
\Delta_{z,d=2}(N_{\text{qubits}})=\frac{\delta}{a^{z}} \sum^{\sqrt{N_{\text{qubits}}}/2}_{i=-\sqrt{N_{\text{qubits}}}/2} \sum_{j=-\sqrt{N(k)}/2}^{\sqrt{N_{\text{qubits}}}/2} \frac{c_{ij}}{\sqrt{i^2+j^2}^z}.
\label{delta2_app}
\end{align}
Both \eqref{delta2} and \eqref{delta1} are diverging for $z \geq d$, where $d$ is the dimension on which the qubits are spread, which gives rise to a scale-dependent noise. Our goal is now to estimate $\Delta_{z,d}(N_{\text{qubits}}(k))$. In general, its expression is complicated and involves the infinite sum we wrote, but to simplify calculations and as fault-tolerance requires a high number of physical qubits, we will realize an asymptotic expansion of those quantities in the limit $N_{\text{qubits}}(k) \to +\infty$. We now give the main principles behind those derivations.

First, we can focus on the one dimensional case. We define $M=N_{\text{qubits}}/2$. We need to evaluate \eqref{delta1}. It can be easily be done by bounding $1/j^z$ by two integrals. Indeed, using the fact that $x \to 1/x^z$ is decreasing, we have:
\begin{align}
&\int_{n}^{n+1} dx \frac{1}{x^z}\leq \frac{1}{n^z} \leq \int_{n-1}^n dx \frac{1}{x^z}\\
& \int_{2}^{M+1} dx \frac{1}{x^z}\leq \sum_{n=2}^{M} \frac{1}{n^z} \leq \int_{1}^M dx \frac{1}{x^z}
\end{align}
For $z<1$, we have:
\begin{align}
\int_{1}^M dx \frac{1}{x^z} \sim \frac{M^{1-z}}{1-z}
\label{integral_1D}
\end{align}
Thus, we deduce that for large $M$,
\begin{align}
\Delta_{z,d=1} \sim \frac{2 \delta}{a^z}\sum_{n=2}^{M} \frac{1}{n^z} \sim \frac{2 \delta}{a^z} \frac{M^{1-z}}{1-z}
\end{align}
And replacing $M=N_{\text{qubits}}(k)/2=Q_L D^{k}/2$, we find the formula used in the main text:
\begin{align}
\Delta_{z<d,d=1} \sim \frac{2 \delta}{a^z}\sum_{n=2}^{M} \frac{1}{n^z} \sim \frac{ \delta 2^z}{a^z} \frac{Q_L^{1-z}}{(1-z)} \ D^{k(1-z)}
\end{align}
We can also express everything in term of the variable $\eta$, we find:
\begin{align}
&\eta(k,d=1,z<1) \sim \eta_0(1,z<1) D^{\beta_z k}\\
&\eta_0(1,z<1)=e^{1+1/2e} 2\sqrt{\frac{\delta t_0 Q_L^{1-z}}{a^z(1-z)2^{1-z}}}\\
&\beta=\frac{1-z}{2}
\end{align}
For the case $z=d=1$, the same reasonning can be performed (the integral \eqref{integral_1D} giving rise to a logarithm). And we obtain for large $N(k)=Q_L D^k$:
\begin{align}
&\eta(k,d=1,z=1)=2e^{1+1/2e}\sqrt{\frac{\delta t_0}{a}} \sqrt{\ln\left(\frac{Q_L D^k}{2}\right)}
\end{align}
Now, we can also perform the calculation in two dimensions. The principle is roughly the same but some further care must be taken to be sure to have the appropriate scaling. Now considering that $M=\sqrt{N_{\text{qubits}}}/2$, \eqref{delta2} can be expressed as:
\begin{align}
&\Delta_{z,d=2}(k)=\frac{4 \delta}{a^z}\left( \widetilde{\Delta}_{II,z}+\widetilde{\Delta}_{I,z} \right)\\
&\widetilde{\Delta}_{II,z}=\sum_{n=1}^{M} \sum_{m=1}^M \frac{1}{(m^2+n^2)^{z/2}}\\
&\widetilde{\Delta}_{I,z}=\sum_{m=1}^{M} \frac{1}{m^z},
\end{align}
where we also used the symmetries of the problem, which explains the factor $4$ in front of $\delta$. We also notice that $\widetilde{\Delta}_{I,z}$ has already been determined by the 1D case that we just treated. Thus, we only need to compute $\widetilde{\Delta}_{II,z}$. For this purpose, we can bound $U_{nm} \equiv (m^2+n^2)^{-z/2}$ by two integrals using the fact that $x \to (x^2+n^2)^{-z/2}$ and $y \to (x^2+y^2)^{-z/2}$ are decreasing functions and it gives us:
\begin{align}
\int_{2}^{M+1} dx \int_{2}^{M+1} dy \frac{1}{\sqrt{x^2+y^2}^z} \leq \sum_{n=2}^{M} \sum_{m=2}^M \frac{1}{\sqrt{m^2+n^2}^{z}} \leq \int_{1}^M dx \int_{1}^M dy \frac{1}{\sqrt{x^2+y^2}^z}
\label{encadrement_serie}
\end{align}
Now, if $\widetilde{\Delta}_{II,z}$ diverges, we have $\widetilde{\Delta}_{II,z} \sim \sum_{n=2}^{M} \sum_{m=2}^M \frac{1}{\sqrt{m^2+n^2}^{z}}$. Thus we can focus on this quantity to know the perturbative behavior of $\widetilde{\Delta}_{II,z}$. This perturbative behavior will be deduced by the perturbative behaviors of the integrals on the left and on the right. For this reason we now define $I_{a,z}(M)$:
\begin{align}
I_{a,z}(M) \equiv \int_a^M dx \int_a^M dy \frac{1}{\sqrt{x^2+y^2}^z}
\label{Ia}
\end{align}
Performing a change of variable to polar coordinates, this integral can be re expressed as a function of the variables $(r,\theta)$ and the integration area is represented by the black square on figure \ref{fig:integral_Ia}.
\begin{figure}[h!]
\begin{center}
\includegraphics[width=0.5\textwidth]{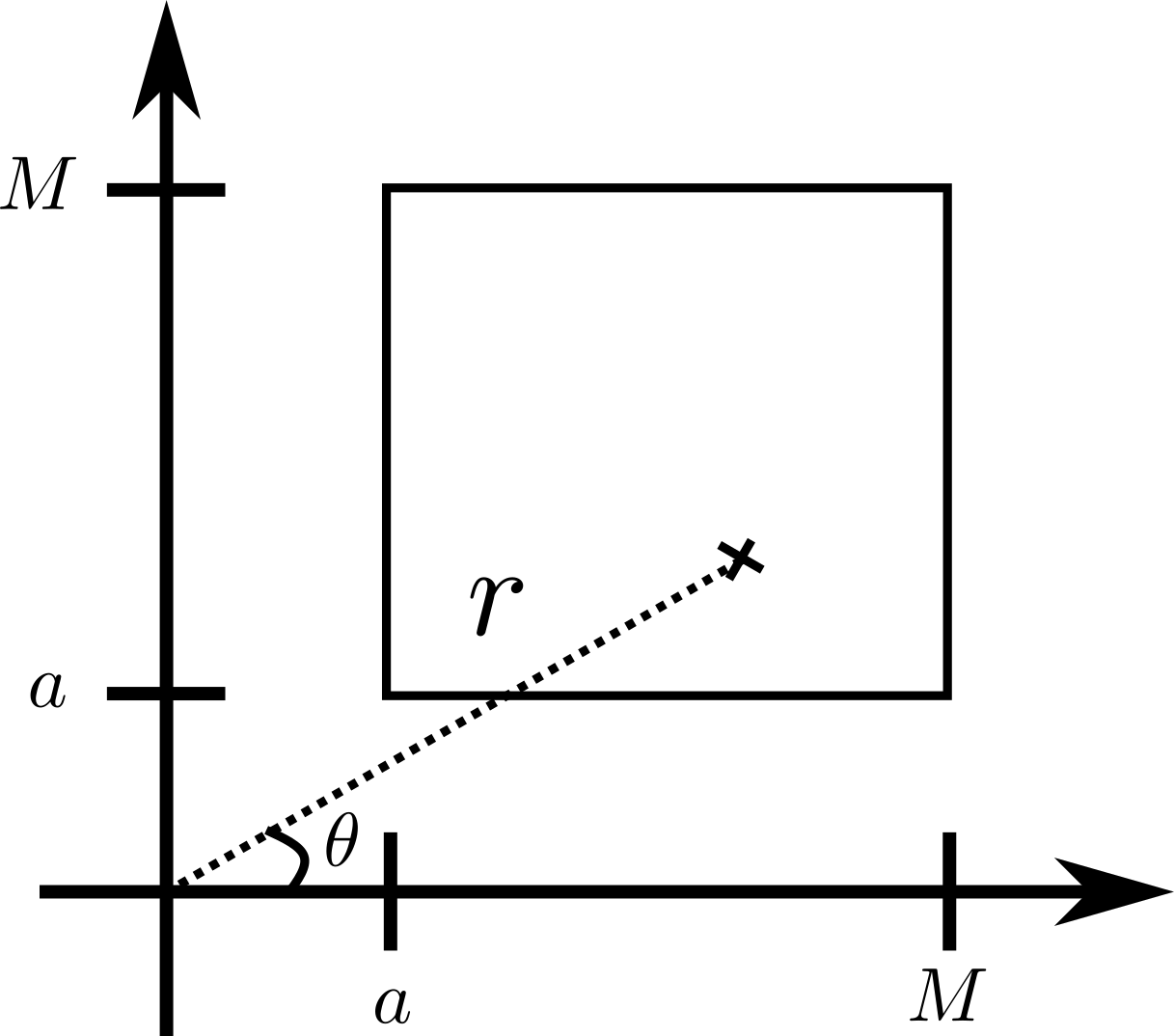}
\caption{Area of integration associated to the integral \eqref{Ia}.}
\label{fig:integral_Ia}
\end{center}
\end{figure}
Using the appropriate boundaries, we have (we only show the results for $z<2$ here, the case $z=2$ gives rise to a logarithmic dependence for which we will only write the result):
\begin{align}
I_{a,z}(M)=&\int_{\arctan(a/M)}^{\pi/4} d \theta \int_{a/\sin(\theta)}^{M/\cos(\theta)} dr \ r^{1-z}+\int_{\pi/4}^{\arctan(M/a)} d \theta \int_{a/\cos(\theta)}^{M/\sin(\theta)} dr \  r^{1-z} \notag \\
&=M^{2-z}\frac{2}{2-z} \int_{\pi/4}^{\arctan(M/a)} d \theta \frac{1}{\sin(\theta)^{2-z}}-2 \frac{a^{2-z}}{2-z} \int_{\arctan(a/M)}^{\pi/4} d \theta \frac{1}{\sin(\theta)^{2-z}}
\end{align}
Both terms in this last line can diverge for large $M$. Indeed the first one scale as $M^{2-z}$ always diverges under our assumption that $z<2$, and the second one involves an improper integral that can diverge for $M \to +\infty$. Actually, we can show that it diverges "slower" than $M^{2-z}$ such that we don't have to take it into account for the asymptotic expressions. 

To show it, we notice that there exists a constant $c$ such that $\sin(\theta) \geq c \theta$ on the range $[0,\pi/4]$, thus, as $2-z>0$, $1/\sin(\theta)^{2-z} \leq 1/(c \theta)^{2-z}$:
\begin{align}
0 \leq \int_{\arctan(a/M)}^{\pi/4} d \theta \frac{1}{\sin(\theta)^{2-z}} \leq \frac{1}{c^{2-z}} \int_{\arctan(a/M)}^{\pi/4} d \theta \frac{1}{\theta^{2-z}} 
\end{align}
If $2-z>1$, the upper bound converges. If $2-z < 1$, we have:
\begin{align}
\frac{1}{c^{2-z}} \int_{\arctan(a/M)}^{\pi/4} d \theta \frac{1}{\theta^{2-z}} \sim \epsilon \arctan(a/M)^{z-1} \sim \epsilon' M^{1-z}
\end{align}
For some constants $\epsilon, \epsilon'$, and if $2-z=1$, we have (for some other constants $\epsilon'', \epsilon'''$):
\begin{align}
\frac{1}{c} \int_{\arctan(a/M)}^{\pi/4} d \theta \frac{1}{\theta} \sim \epsilon'' \ln(\arctan(a/M)) \sim \epsilon''' \ln(M)
\end{align}
Thus for $z \leq 2$, in all cases, we find out that $\int_{\arctan(a/M)}^{\pi/4} d \theta \sin(\theta)^{z-2}=o(M^{2-z})$, and we deduce that:
\begin{align}
&I_a(M) \sim M^{2-z} C_z\\
&C_z=\frac{2}{2-z} \int_{\pi/4}^{\pi/2} d \theta \frac{1}{\sin(\theta)^{2-z}}.
\end{align}. 
Using \eqref{encadrement_serie} it implies $\widetilde{\Delta}_{II,z} \sim  M^{2-z} C_z$. On the other hand, in the 1D case our calculation showed that $\widetilde{\Delta}_{I,z}=o(M^{2-z})$ (because it will either grow as a logarithm or as a law in $M^{1-z}$). All this allow us to conclude in the end that $\Delta_{z,d=2} \sim \frac{4 \delta}{2^{2-z}a^z} C_z M^{2-z}$. 

Finally, replacing $M=\sqrt{Q_L D^{k}}/2$, we deduce:
\begin{align}
&\Delta_{z<d,d=2}(k) \sim \frac{2^z \delta Q_L^{1-z/2}}{a^z} C_z D^{k(1-z/2)}\\
& \eta = e^{1+1/2e}\sqrt{2 \Delta(k) t_0} \sim e^{1+1/2e} \sqrt{2 t_0 \frac{2^z \delta Q_L^{1-z/2}}{a^z} C_z }D^{k(1/2-z/4)}
\end{align}
\end{appendices}

\chapter{The energetic cost of quantum computing: full-stack framework}
\label{chap:Fullstack1}
In this chapter, we are going to design the general method we propose in order to do resource estimation of quantum computing. What we call resource is in principle very general (it could be any cost function), but we directly apply it to the problematics of energetics, more precisely, of power consumption. The central point on which our approach is built consists of relating the quality of the algorithm output to the power that is spent inside of the quantum computer. 

We will see that, by asking to minimize the power consumption of the computer \textit{under the constraint} that the algorithm succeeds, we will have access to the minimum power consumption required to implement successfully the algorithm \textit{as well} as the optimal architecture the computer and the algorithm should have in order to reach this minimum. What we mean is that we will be able to optimize all the tunable parameters associated with the hardware (qubit temperature, attenuation,...) but also the way the algorithm is implemented and the level of error correction needed (this last item will be done in the next chapter). Our goal here is to provide a global and unified framework that includes aspects of engineering, algorithm, quantum gate physics, as well as the idea of optimizing this architecture to minimize power consumption. 

Our method is said to be full-stack in the sense that it allows to include aspects coming from a variety of fields such as engineering, algorithms, and quantum physics, in order to do the energetic estimation. The problem of energetics being intrinsically transverse, such transverse approaches are required to understand it. The concept of full-stack is very recent in the field of quantum computing, and it consists in including in the model the different layers ("stack") required in a quantum computer. Such layers are sometimes identified as corresponding to the quantum algorithm, the compiler (the software translating the algorithm in a form that can be run on the hardware), the hardware qubit technology, the way the qubits are controlled, etc., \cite{gokhale2020full}, but the general philosophy, in the end, is to have a "multi-layer" description of the computer in the model. We can cite the following recent work based on those approaches \cite{murali2019full,rodrigo2020exploring,amy2020staq}. However, the aspects of energetics or cryogenics are usually not considered in those approaches. In order to scale up quantum computers, this is something that is important to consider in the design, especially for superconducting qubits that must be maintained at very low temperatures. This is what we are going to focus on.

On the topic of resource estimation, different works have also been done. In the context of fault-tolerant quantum computing, the resources optimized are almost always the number of physical qubits and gates required by the computation \cite{li2015resource,kim2021fault,di2020fault}. This is usually done by comparing different error-correction schemes to find the one that uses the least amount of resources while having the best efficiency in detecting and correcting errors. Outside of fault tolerance, the idea of relating a success to a resource (often power or energy) in order to minimize the latter is a concept that has been recently explored in various contexts, see \cite{abah2019energetic,ikonen2017energy,campbell2017trade,deffner2021energetic,robert2021resource} and references therein. But overall, the few energetic studies done are usually focused on algorithms implemented without error correction, and where only a very specific component of the energetic cost (typically the energy exactly required by the quantum gates) is taken into account in the final bill. It doesn't include the necessary global vision required (the energy strictly required by the qubits will usually be a very small component of the total energy needed in a quantum computer). More quantitative estimations have also been done \cite{krinner2019engineering,mcdermott2018quantum}, but they focus on the engineering aspects of the quantum computer without really including algorithm considerations, and they usually do not study what happens for fault-tolerance\footnote{When such aspects are considered, they are usually only based on a rough estimation of the number of physical qubits without taking in consideration the physical gates for instance. In the next chapter, we will do an "in-depth" study that includes such components.} where the energetic cost will be a crucial element to take into account in the design. More importantly, those quantitative approaches do not use any knowledge about the noise strength in their model in order to optimize the architecture\footnote{Of course some aspect of noise are present, for instance in \cite{krinner2019engineering} the fact that the superconducting qubits should be isolated from thermal noise is taken into account. But the qubits are forced to be at $10mK$ (which has an important influence on the energetic cost), and the attenuation is forced to be at a typical value.}. In the end, to assess the energetic cost of quantum computing and make it energy efficient, having a global and optimized vision is necessary, as a computer is, by essence, a multidisciplinary object. Our goal in this chapter is to provide such an approach.

To understand the method, we are going to consider two simple toy examples, where no error correction will be performed. In the first one, in section \ref{sec:single_qubit_example} we will find the minimum power required to implement a single-qubit gate. This example will illustrate that there are many ways to reach a given targeted fidelity associated with very different power consumptions. Minimizing the power consumption under the constraint of targeting a given fidelity allows making the gate much more energy efficient. This example is followed by the sections \ref{sec:min_pow_increases_with_accuracy} and \ref{sec:equivalence_minpow_maxacc} where we will give some general properties this minimization under constraint will usually satisfy.

The second example treated in section \ref{sec:energetic_cost_nisq_algo} will allow us to optimize the way an algorithm is implemented for the same goal of minimizing the power. In some sense, we generalize here the approach of the previous chapter as the noise will no longer have to be a function of the resource to minimize, and because we will be able to optimize the quantum computer architecture while performing the minimization. The method we propose can then, in principle, be applied both in quantum algorithms using quantum error-correction (fault-tolerant quantum computing) or not using it (however, adapting it for \text{hybrid} quantum-classical algorithms frequently used in NISQ would require further investigations as explained in the last paragraph of \ref{sec:energetic_cost_nisq_algo}). 

We finish this chapter by explaining how to adapt our framework for fault-tolerance in the section \ref{sec:adapting_framework_FT}, where we will also do all the quantitative estimations about the number of physical qubits and gates required for a large-scale calculation.

This chapter\footnote{In term of contribution, I realized the major part of the work presented in this chapter, excepted the simulations behind the figures \ref{fig:min_power_fct_metric_1qb}, \ref{fig:competition_power_noise} and \ref{fig:min_power_fct_compression} that has been done by Jing Hao Chai (but I designed the examples behind the plots).} is at the end dedicated to explain the general principle and to provide simple toy examples where it can be applied. We will also see what are the differences between energetic estimation for algorithms implemented with or without quantum error correction. The following chapter will consist in using this method in order to make a quantitative and detailed energetic estimation of implementing an algorithm on a fault-tolerant quantum computer based on superconducting qubits.
\section{The energetic cost of quantum computing: general vision}

\subsection{Formulating the question as a minimization under constraint}
The general purpose of an algorithm is to provide an answer to some computational task with a targeted success probability considered as "high enough": there is no point in having a bad answer. The probability of finding a successful answer will be related to how much some resource an experimentalist is ready to invest in the calculation. Indeed, implementing an algorithm successfully (in the sense that it provides an answer having a high probability of success) has a cost. We saw examples in the previous chapter, on the figure \ref{fig:p_kmax_QFT} with the energy that we allowed to spend in order to create the pulses driving the qubits. When we restricted too much this energy, the accuracy of the logical gate was too low, which limited the success probability of an implemented algorithm.

Without loss of generality, we will now assume that the resource we are interested in is the \textit{power} required to implement an algorithm. Because the power and the success probability of the algorithm are related, and because we are interested in using the lowest amount of power in order to make the quantum computer power efficient, the problem we are interested in solving is then 
\begin{align}
P_{\min} \equiv \min_{\bm{\delta}}(P(\bm{\delta}))_{\big | p_{\text{failure}}(\bm{\delta}) \leq p^{\text{target}}_{\text{failure}}},
\label{eq:ideal_solve}
\end{align}
where $P$ is the power spent for the calculation, $p_{\text{failure}}$ is the probability that the algorithm fails, i.e that it provides a wrong answer.
This equation thus consists in finding the \textit{minimum} power required to be sure that the algorithm fails less frequently than a given target. We see that we introduced the family of parameters $\bm{\delta}$ on which both $p_{\text{failure}}$ and $P$ depend (the fact it is a family is represented by the bold notation associated to vectors). They represent the \textit{tunable} parameters that the experimentalist can vary in order to reach the minimum. A typical example would be the qubit temperature $T_{\text{Q}}$: changing it would at the same time modify the power it costs to perform the computation (the electrical power for the cryogenic typically depends on $T_{\text{Q}}$), but the success probability as well (a too high temperature would induce a poor quality in the answer). Thus, in general, both the probability of success and the power depend on the family\footnote{It is possible that one parameter in the family of parameters $\bm{\delta}$ only affect the success and not the power cost, and reciprocally.} $\bm{\delta}$. We also see that not only the minimum power can be found performing this minimization, but the optimum set of parameters allowing to reach this minimum, $\bm{\delta}_{\text{opt}}$ will also be found\footnote{In general, we could expect more than one set of optimal parameters $\bm{\delta}_{\text{opt}}$ minimizing the power consumption. This is not really a problem for us, but it should be noted.}. This approach is thus more general from what we explained in \ref{sec:relation_scale_dep_resource_estimation}: here, the failure probability doesn't have to be a function of the power: both might be indirectly related through the set of parameters $\bm{\delta}$. 

Now, solving this problem is, in general, too complicated: it is usually not possible to know the probability that a given algorithm fails. For this reason, what we will do is to estimate the "quantity of noise" there is in the algorithm output. It can be done with some "metrics"\footnote{Of course, the only thing that really matters experimentally is to have a successful algorithm. So the metric that one chooses must have nice properties allowing to estimate or at least bound the probability that the algorithm fails.} (the worst-case infidelity or the probability of having an unsuccessful algorithm in the context of fault tolerance would be two examples of such metrics). The equation to solve then takes the following expression:
\begin{align}
P_{\min} \equiv \min_{\bm{\delta}}(P(\bm{\delta}))_{\big | \mathcal{M}(\bm{\delta}) \leq \mathcal{M}_{\text{target}}}.
\label{eq:Pmin_general}
\end{align}
Now, for the exact same reason as before, solving this equation would allow us to find the minimum power required for the computer as well as the optimal family of parameters in the architecture, allowing us to reach this minimum. This is the central equation we will use in the last two chapters of this thesis: the problem of resource estimation of quantum computing can be seen as a problem of minimization under constraint.
\subsection{General properties implied by the question}
Here, we will show that despite the fact \eqref{eq:Pmin_general} is a very general problem of minimization under constraint, making some reasonable hypotheses on the behaviors of $\mathcal{M}$ and $P$ allows to find interesting properties.
\subsubsection{The minimum power increases with the targetted accuracy}
\label{sec:min_pow_increases_with_accuracy}
Let's assume that there exists \textit{at least one} parameter $\delta_j$ in the list of parameters  $\bm{\delta}=(\delta_0,\delta_1,...,\delta_n)$ such that, when we increase it, the power decreases while the quantity of noise increase. Mathematically it means that, for any value of $\bm{\delta}$:
\begin{align}
&\partial_{\delta_j} P(\bm{\delta})<0\\
&\partial_{\delta_j} \mathcal{M}(\bm{\delta})>0
\end{align}
We also add the condition that for any values of $\delta_{i \neq j}$, we have:
\begin{align}
\mathcal{M}(\delta_j=\delta_j^{\max}) \geq \mathcal{M}_{\text{target}},
\end{align}
where $\delta_j^{\max}$ is the maximum value $\delta_j$ can reach (it could be equal to $+\infty$).

One example of such a parameter will often be the temperature of the qubits. If we increase the qubit temperature, the power it costs to maintain them cool will typically decrease, but the quantity of noise will usually increase because they will face more thermal noise. Also, typically for qubits at ambient temperature (corresponding to $\delta_j=\delta_j^{\max}$), the noise is likely to be very high and we expect $\mathcal{M}(\delta_j=\delta_j^{\max}) \geq \mathcal{M}_{\text{target}}$. In summary, a parameter behaving as $\delta_j$ will typically exist in relevant physical systems. 

These hypotheses will imply the following two properties. First, (i) the minimum power required such that the quantity of noise of the algorithm is equal \textit{or lower} to $\mathcal{M}_{\text{target}}$ is the same as the minimum power required such that the quantity of noise is \textit{equal} to $\mathcal{M}_{\text{target}}$, i.e., the following equality is true:
\begin{align}
P_{\min} \equiv \min (P(\bm{\delta}))_{\big | \mathcal{M}(\bm{\delta}) \leq \mathcal{M}_{target}} = \min (P(\bm{\delta}))_{\big | \mathcal{M}(\bm{\delta}) = \mathcal{M}_{target}}
\end{align}
Let us do a proof by contradiction of this fact. We assume that the minimum power is found when $\mathcal{M} < \mathcal{M}_{\text{target}}$. As $\partial_{\delta_j} \mathcal{M}>0$, we can increase $\delta_j$ for some range without violating the condition $\mathcal{M} \leq \mathcal{M}_{\text{target}}$. Doing so, we will also decrease the power as $\partial_{\delta_j} P<0$. Thus, as soon as $\mathcal{M}<\mathcal{M}_{\text{target}}$ we do not minimize the power consumption: the minimum is necessarily reached for $\mathcal{M}=\mathcal{M}_{\text{target}}$\footnote{It is possible to exactly reach that point because $\mathcal{M}(\delta_j=\delta_j^{\max}) \geq \mathcal{M}_{\text{target}}$}. In the same line of thoughts, we can also prove that (ii) $P_{\min}$ is a \textit{strictly decreasing} function of $\mathcal{M}_{\text{target}}$. To show this, let us consider two targets such that $\mathcal{M}^{(1)}_{\text{target}}<\mathcal{M}^{(2)}_{\text{target}}$. Then, as:
\begin{align}
\min(P)_{\big | \mathcal{M} \leq \mathcal{M}^{(2)}_{\text{target}}}=\min(P)_{\big | \mathcal{M} = \mathcal{M}^{(2)}_{\text{target}}},
\end{align}
the minimum power obtained for $\mathcal{M}<\mathcal{M}^{(2)}_{\text{target}}$ is necessarily bigger or equal to the one obtained for $\mathcal{M}=\mathcal{M}^{(2)}_{\text{target}}$. Because of that, as $\mathcal{M}^{(1)}_{\text{target}}<\mathcal{M}^{(2)}_{\text{target}}$, we deduce that: $P_{\min}(\mathcal{M}^{(1)}_{\text{target}}) \geq P_{\min}(\mathcal{M}^{(2)}_{\text{target}})$. It remains to show that they cannot be equal. We also do a proof by contradiction. Let us consider a set of parameters $\bm{\delta}$ such that: $P(\bm{\delta})=P_{\min}(\mathcal{M}^{(1)}_{\text{target}})$, and $\mathcal{M}(\bm{\delta})=\mathcal{M}^{(1)}_{\text{target}}$\footnote{We say "a" set of parameters and not "the" set of parameters as the minimum power consumption could be in principle reached for many different values of $\bm{\delta}$.}. If we increase $\delta_j$ until the moment $\mathcal{M}=\mathcal{M}^{(2)}_{\text{target}}$, as $\partial_{\delta_j} P < 0$, it has for effect to \textit{strictly} decrease the power consumption while ensuring $\mathcal{M}=\mathcal{M}^{(2)}_{\text{target}}$. Thus, the minimum power reached for $\mathcal{M}=\mathcal{M}^{(2)}_{\text{target}}$ is necessarily \textit{strictly} lower than the one reached for $\mathcal{M}=\mathcal{M}^{(1)}_{\text{target}}$.

In summary, here we proved the following property that we will frequently use.
\begin{property}{Behavior of the minimum power}

\label{prop:behavior_min_power}
If there exists at least one parameter $\delta_j$ in the family of parameters $\bm{\delta}$ such that, for any value of $\bm{\delta}$
\begin{align}
&\partial_{\delta_j} P(\bm{\delta})<0\\
&\partial_{\delta_j} \mathcal{M}(\bm{\delta})>0
\end{align}
And:
\begin{align}
\mathcal{M}(\delta_j=\delta_j^{\max}) \geq \mathcal{M}_{\text{target}},
\end{align}
where $\delta_j^{\max}$ is the maximum value $\delta_j$ can reach (it could be equal to $+\infty$). Then, we have:
\begin{align}
P_{\min} \equiv \min (P(\bm{\delta}))_{\big | \mathcal{M}(\bm{\delta}) \leq \mathcal{M}_{target}} = \min (P(\bm{\delta}))_{\big | \mathcal{M}(\bm{\delta}) = \mathcal{M}_{target}},
\end{align}
which means that the minimum power consumption is found when we ask for an accuracy that it is exactly equal to what we target: asking for a better accuracy is necessarily more costly. We also have that
$P_{\min}$ is a strictly decreasing function of $\mathcal{M}_{\text{target}}$ (for the same reason).
\end{property}
This property properly establishes the intrinsic connection between noise and power: asking for a better accuracy has an energetic cost. But again, this connection is true because of the hypotheses we made on the behavior of power and metric with respect to the parameter $\delta_j$. We believe that these hypotheses will be true for many physical systems. It is certainly true for the systems studied in the two last chapters of this thesis.
\subsubsection{Equivalence between minimum power for a targetted accuracy and maximum accuracy for a given power}
\label{sec:equivalence_minpow_maxacc}
Now, we can show that asking to minimize the power in order to reach a given accuracy is, under a precise meaning we will clarify, equivalent to asking to find the best accuracy the computer can get to for a given (i.e., fixed) amount of power available. This property is true as soon as $P_{\min}$ is a decreasing function of $\mathcal{M}_{\text{target}}$ (which will usually be the case as explained in property \ref{prop:behavior_min_power}).

Those two questions are equivalent in the sense of what is represented in figure \ref{fig:equivalence_minpow_maxacc}. On this curve is represented the minimum power required to implement the algorithm as a function of the maximum quantity of noise acceptable in its answer: $\mathcal{M}_{\text{target}}$. In this case, this graph can be read in two directions. Indeed, if we invert the meaning of the axis, i.e., we interpret it as a curve representing a value of some quantity of noise as a function of some injected power, we can also interpret it as the \textit{minimum} quantity of noise it is possible to reach as a function of this injected power. More precisely, let us consider any couple $(\mathcal{M}^0_{\text{target}},P^0 \equiv P_{\min}(\mathcal{M}^0_{\text{target}}))$. By definition of $P_{\min}$, if someone wants to have an accuracy $\mathcal{M}^0_{\text{target}}$, the minimum power required will be $P^0 \equiv P_{\min}(\mathcal{M}^0_{\text{target}}))$. But reciprocally, if this person has a given amount of power $P^0$ available, the minimum quantity of noise it can get to (thus maximum accuracy) is $\mathcal{M}^0_{\text{target}}$.
\begin{figure}[h!]
\begin{center}
\includegraphics[width=0.5\textwidth]{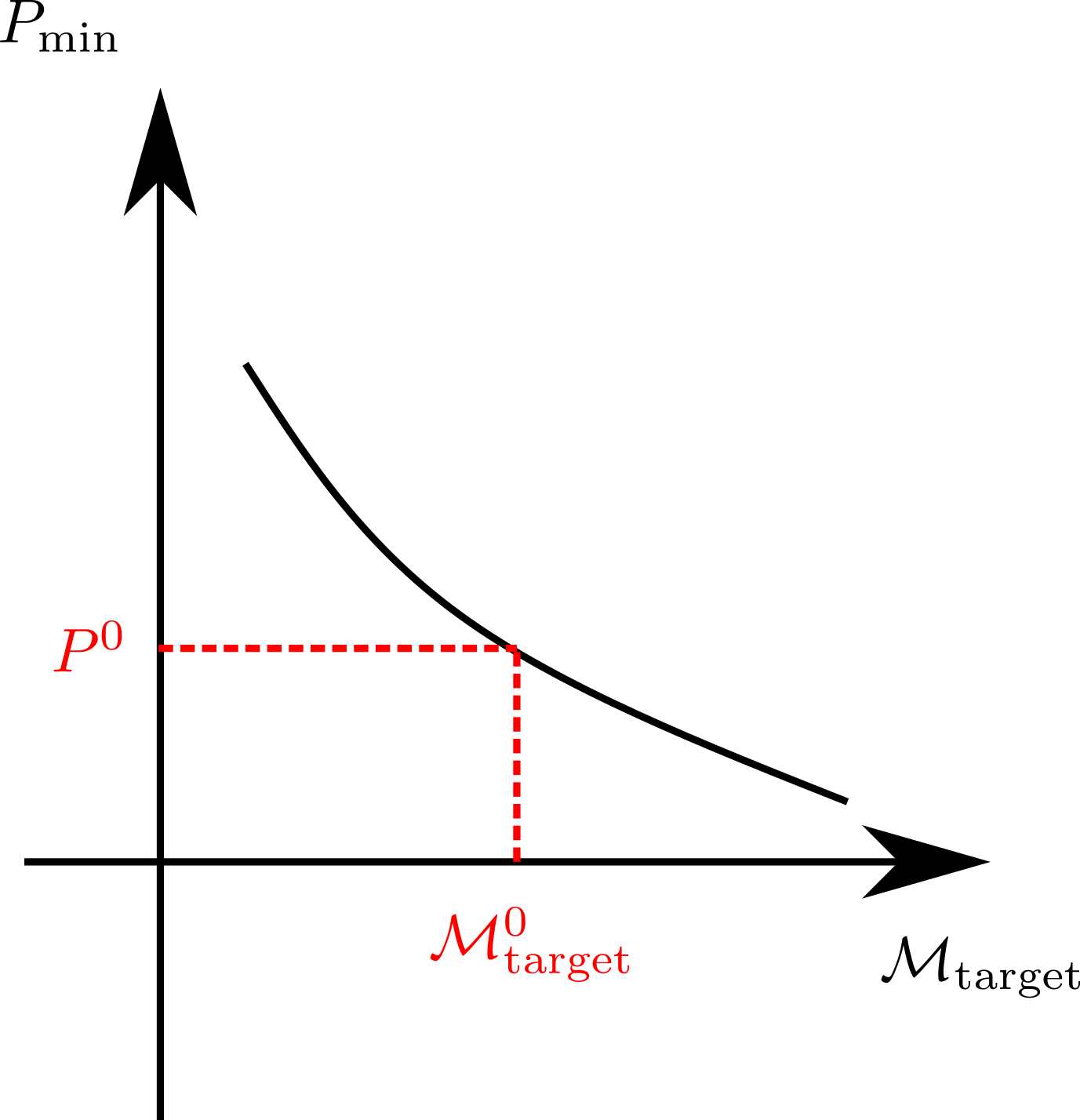}
\caption{The minimum power necessary as a function of the maximum quantity of noise acceptable targetted, $\mathcal{M}_{\text{target}}$ is represented. Because this function is monotonous, the curve can also be interpreted as the lowest quantity of noise reachable in the algorithm output (i.e maximum accuracy possible) for a given amount of power available as explained in the main text.}
\label{fig:equivalence_minpow_maxacc}
\end{center}
\end{figure}
\FloatBarrier

In order to show it, we consider again the couple $(\mathcal{M}^0_{\text{target}},P^0 \equiv P_{\min}(\mathcal{M}^0_{\text{target}}))$. We now ask: what is the maximum accuracy, i.e the lowest quantity of noise $\mathcal{M}$ the computer can get to if it has at its disposal the power $P^0$. Necessarily, it can be \textit{at least} as accurate as $\mathcal{M}^0_{\text{target}}$: $\mathcal{M}$ must satisfy $\mathcal{M} \leq \mathcal{M}^0_{\text{target}}$. Indeed $\mathcal{M}^0_{\text{target}}$ is an explicit example of target reachable for an injected power $P^0$. Now it remains to prove that the strict inequality is not possible: no better accuracy is reachable for this amount of power. This can be shown using again a proof by contradiction: let's assume that we find that the best accuracy it is possible to reach for an injected power $P^0$ satisfies $\mathcal{M} < \mathcal{M}^0_{\text{target}}$. We assume that this power is reached for the family of parameters $\bm{\delta}=\bm{\delta}_0$. We necessarily have $P^0=P(\bm{\delta}_0) \geq P_{\min}(\mathcal{M})$: indeed $P^0$ amount of power would be used to have an accuracy $\mathcal{M}$, but it might be possible to use less power to reach that accuracy. As on the other hand $P^0=P_{\min}(\mathcal{M}^0_{\text{target}})$, we deduce that $P_{\min}(\mathcal{M}^0_{\text{target}})\geq P_{\min}(\mathcal{M})$. Now, this result is absurd. Indeed, as $\mathcal{M}<\mathcal{M}^0_{\text{target}}$, and as $P_{\min}$ \textit{strictly} decreases as a function of $\mathcal{M}$, necessarily $P_{\min}(\mathcal{M}^0_{\text{target}}) < P_{\min}(\mathcal{M})$. 

All this shows that the curve can be read in two directions: \textit{there is an equivalence between asking to minimize the noise for a given amount of power available and asking to minimize the power in order to target a given quantity of noise.}
\subsection{An introductive example: energetic cost of a single-qubit gate}
\subsubsection{Description of the problem}
\label{sec:single_qubit_example}
Let us illustrate the concepts with a very simple example: we wish to minimize the power to spend in order to implement a $\pi$-pulse on a qubit. For this, we need to specify: (i) the noise model describing the physics, (ii) the way we quantify this noise, i.e., which metric we are going to choose, (iii) the expression of the power we want to minimize and (iv) the list of parameters we would like to optimize in order to minimize this power.

For (i) and (ii), we will consider that the only reason why the qubits are noisy is because of spontaneous emission, and thermal photons in the line driving the qubits. The characteristics of this single-qubit gate we will consider are provided in the section \ref{sec:state_of_the_art_1qb}, and the characteristics of the qubits are in \ref{sec:state_of_the_art_qubit}. The metric we will use will either be the worst case or average infidelity. We recall their expression, from the section \ref{sec:infidelity_operations}
\begin{align}
\mathcal{M}=(X+Y \overline{n}_{\text{tot}}) \gamma_{\text{sp}} \tau,
\end{align}
where $(X_{\text{worst}},Y_{\text{worst}})=(1,1)$ and $(X_{\text{avg}},Y_{\text{avg}})=(1/3,2/3)$ for worst-case infidelity and average one respectively. The metrics are then defined. But we cannot be satisfied of those expressions because we don't know what $\overline{n}_{\text{tot}}$, which represents the number of thermal photons in the line\footnote{It can actually represent any kind of noisy photons, but we are focusing on the thermal noise in this PhD.}, is equal to.

The value of $\overline{n}_{\text{tot}}$ will depend on how the signals that are going to drive the qubit to make the gates will be generated. In the case the qubit is at the temperature $T_{\text{Q}}$, and \textit{if the signals are generated at this same temperature}, we would simply have: $\overline{n}_{\text{tot}}=\overline{n}_{\text{BE}}(T_{\text{Q}}) \equiv 1/(e^{\beta_{\text{Q}} \hbar \omega_0}-1)$, with $\beta_{\text{Q}}=1/(k_b T_{\text{Q}})$, and $\omega_0/2 \pi$ is the qubit frequency. $\overline{n}_{\text{BE}}$ is the Bose Einstein population for bosons of frequency $\omega_0/2 \pi$, at temperature $T_{\text{Q}}$ \cite{cohen1998mecanique}. But if the signals are generated at a temperature $T_{\text{Gen}}$, then the noisy thermal photons at this temperature would affect the qubit evolution. This is why attenuators are typically put on the driving lines. Their role is to remove the noise (and in particular the thermal noise) coming from the higher temperature stages. An attenuator is composed of resistive elements that will dissipate into heat any signal injected at its input as shown on figure \ref{fig:attenuator_principle}. If $P_{\text{in}}$ is the input power before the attenuator, and $P_{\text{out}}$ the power at its output, we define the attenuation ratio by $A \equiv P_{\text{in}}/P_{\text{out}}$. It is possible to show \cite{pozar2011microwave} that for signals generated at $T_{\text{Gen}}$, an attenuator put at the temperature $T_{\text{Q}}$ will generate an amount of noisy photons $\overline{n}_{\text{tot}}$ being equal to:
\begin{align}
\overline{n}_{\text{tot}}(T_{\text{Q}},T_{\text{Gen}},A)=\frac{A-1}{A} \overline{n}_{\text{BE}}(T_{\text{Q}})+\frac{\overline{n}_{\text{BE}}(T_{\text{Gen}})}{A} \stackrel{A \gg 1}{\approx} \overline{n}_{\text{BE}}(T_{\text{Q}})+\frac{\overline{n}_{\text{BE}}(T_{\text{Gen}})}{A}
\end{align}
Let us first comment on the formula in the $A \gg 1$ regime. It means that the number of noisy photons felt by the qubits will be equal to the number of noisy photons at the attenuator temperature, on which we add the noisy photons coming from the higher temperature stage but attenuated by a factor $A$. The noise is basically attenuated in the same way as the signals sent. For infinite attenuation, the qubits will be entirely isolated from any thermal noise coming from higher temperature stages. However, this cannot be the exact expression. For instance for $T_{\text{Q}}=T_{\text{Gen}}$ (if everything is thermalized at $T_{\text{Q}}$), we would not find $\overline{n}_{\text{tot}}=\overline{n}_{\text{BE}}(T_{\text{Q}})$ with the approximated expression. This is the reason why the exact expression is actually the one of the left: it correctly describes the situation when $T_{\text{Q}}=T_{\text{Gen}}$ and $A=1$ (no attenuation: everything behave as if the qubits where thermalized at $T_{\text{Gen}}$), such formula is what is used to describe the propagation of noise in microwave circuits \cite{pozar2011microwave}. Physically we can understand the situation as follows: the signal generation stage can be thought of as some cavity which is thermalized at $T_{\text{Gen}}$. This cavity thus emits blackbody radiation in a one-dimensional waveguide: it is the thermal noise. Then this noise propagates in a ballistic manner in the waveguide (as a waveguide will by construction be a good conductor for microwave frequencies, the noise won't be attenuated there). This noise is then dissipated inside of the attenuator and "converted" into phonons: this is a kind of heat that can be evacuated by cryogenics. But the attenuator, being thermalized at $T_{\text{Q}}$, also generates thermal noise that will interact with the qubits. It corresponds to $\overline{n}_{\text{BE}}(T_{\text{Q}})$.
\begin{figure}[h!]
\begin{center}
\includegraphics[width=0.8\textwidth]{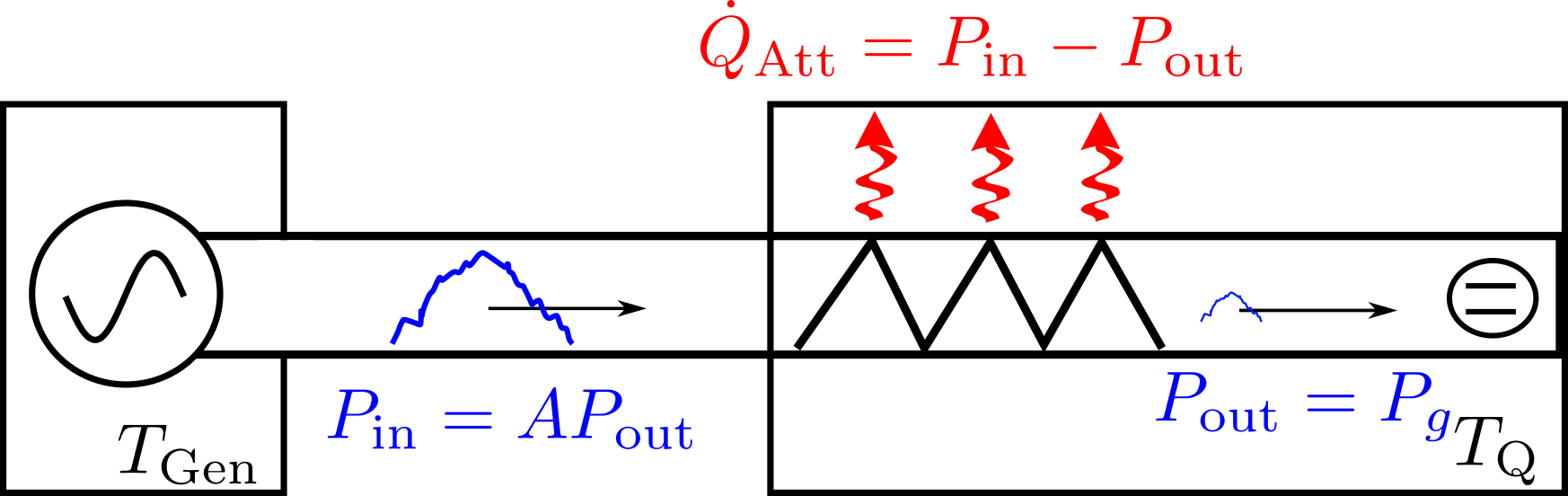}
\caption{Principle behind the attenuation. The signals are generated at a temperature $T_{\text{Gen}}$. Thus, in addition to the signals there is noise at this temperature that is propagating inside of the waveguide as represented by the noisy blue gaussian. The qubit is preceeded by an attenuator at the temperature $T_{\text{Q}}$. The attenuator will dissipate the noise before it reaches the qubit, but it dissipates the signal in the same manner. It is the fact that the heat $\dot{Q}_{\text{Att}}$ has to be evacuated by a cryogenic unit that makes the gate energetically costly (as shown by \eqref{eq:power_single_qubit_gate}).}
\label{fig:attenuator_principle}
\end{center}
\end{figure}
\FloatBarrier
Now, in practice, the effect of the thermal noise coming from the higher temperature stages will be reduced by the following strategy. First, the signal that must arrive on the qubit has a well defined value: we must have $P_{\text{out}}=P_g$, where we recall from \ref{sec:energetic_cost_to_perform_a_gate}, $P_g=\hbar \omega_0 \Omega^2/(4 \gamma_{\text{sp}})$ with $\Omega=\pi/\tau_{\text{1qb}}$ ($\tau_{\text{1qb}}$ is the single-qubit gate duration, and we want to implement a $\pi$-pulse). Then, the attenuation $A$ is \textit{usually} chosen in such a way that $\overline{n}_{\text{BE}}(T_{\text{Gen}})/A < \overline{n}_{\text{BE}}(T_{\text{Q}})$: the noise coming from the higher temperature stage does not dominate the physics. From the knowledge of $A$, the experimentalist deduces what power the generated signals should have. The general philosophy is to generate signals of high enough amplitude such that after attenuation there is a high signal over noise ratio while having a signal of appropriate power for the physics of the quantum gate. At this point, we can say that the metrics we expressed are now "well defined" because we know what $\overline{n}_{\text{tot}}$ is equal to as a function of the characteristics of the problem.

Now, we need to give the expression of the power (point (iii)). Here, we will be interested in the electrical power that we need in order to remove the heat dissipated inside of the attenuators. We will assume that our cryogenic unit has a Carnot efficiency\footnote{We will make further comments about the cryogenic efficiencies of realistic cryostat in the section \ref{sec:efficiency_is_carnot} of the next chapter.}. The heat dissipated per unit time is simply equal to $P_{\text{in}}-P_{\text{out}} \approx (A-1) P_g \approx A P_g$\footnote{We implicitly neglect the heat that will be dissipated when the signals will "go back", i.e after reflection on the qubit. Indeed, the power that would be dissipated by this reflection will be lower than $(A-1) P_g$, and we will neglect it. Furthermore, because $A \gg 1$ will typically be true, we can consider $(A-1)P_g \approx A P_g$. We also neglect the heat dissipated in the attenuators by the thermal noise, which is negligible compared to $A P_g$.}, the low-temperature stage on which the heat has to be evacuated has a temperature $T_{\text{Q}}$, and the high-temperature stage is basically the laboratory at temperature $300K$. Assuming that $A \gg 1$ (if it is not the case, the evolution would be way too noisy), we get:
\begin{align}
P(T_{\text{Q}},A)=\frac{300-T_{\text{Q}}}{T_{\text{Q}}} A P_g
\label{eq:power_single_qubit_gate}
\end{align}
Finally, for our last point (iv), we specify the variables we want to optimize. It will be $\bm{\delta}=(T_{\text{Q}},A)$ in this first example as we will assume that $T_{\text{Gen}}=300K$ (the signals are generated outside of the cryogenics here). We are now ready to solve the minimization under constraint:
\begin{align}
P_{\min} \equiv \min_{(T_{\text{Q}},A)} P(T_{\text{Q}},A)_{\big | \mathcal{M}(T_{\text{Q}},A) = \mathcal{M}_{\text{target}}}
\end{align}
\subsubsection{The competition between noise and power}
We first illustrate in figure \ref{fig:competition_power_noise} in a concrete manner the intrinsic competition there is between asking for low noise and low power consumption. In the rest of this thesis, we will also use the decibel: "dB" unit to represent the attenuation. This unit is related to the attenuation in "natural unit", i.e when $A=P_{\text{in}}/P_{\text{out}}$ as:
\begin{align}
A_{\text{dB}}=10 \log_{10}(A)
\end{align}
For instance, a signal being attenuated of a factor $10$ will correspond to $10 \text{dB}$. A signal attenuated of a factor $10^5$ will correspond to $50 \text{dB}$. 
\begin{figure}[h!]
\begin{center}
\includegraphics[width=0.8\textwidth]{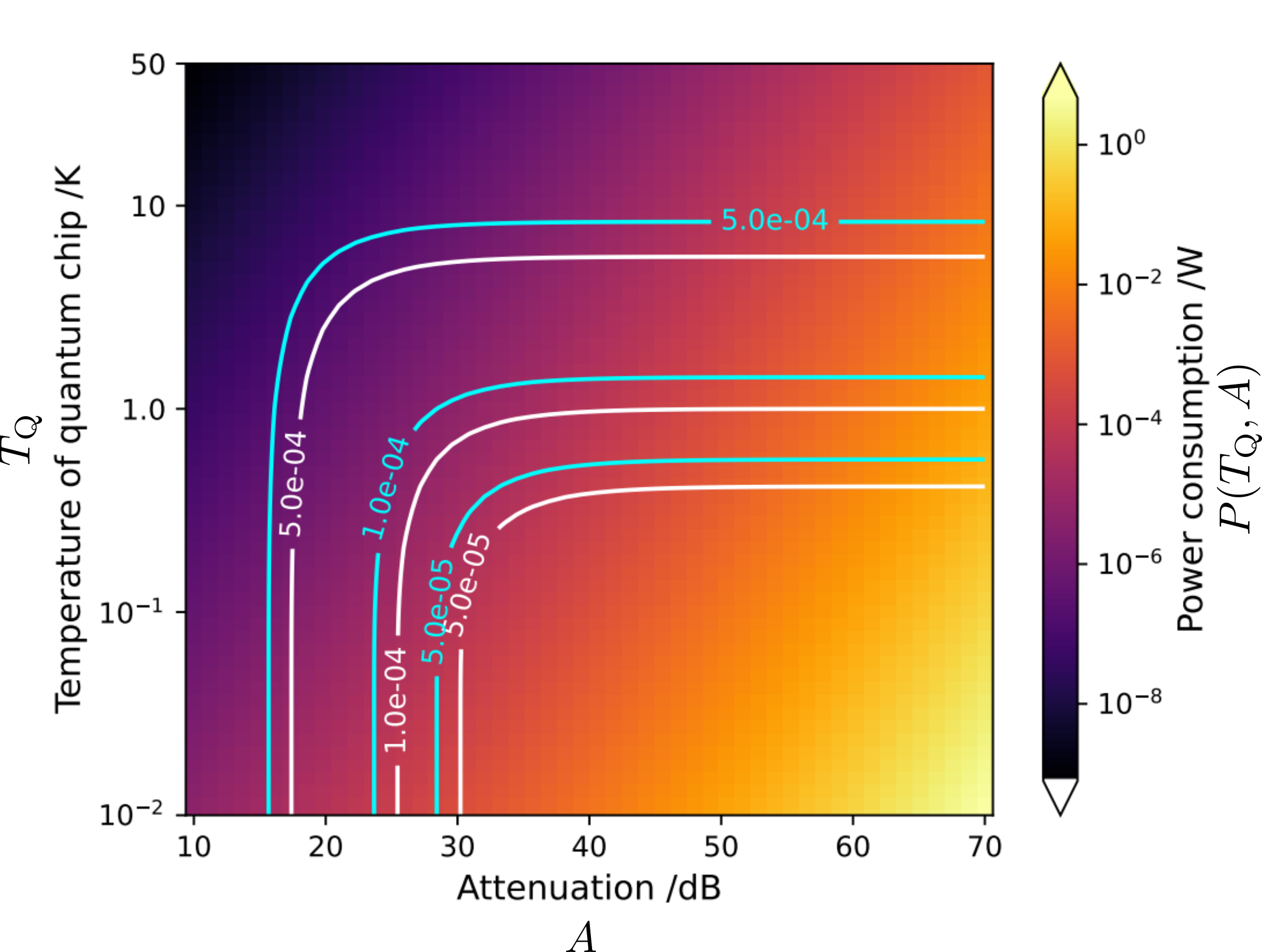}
\caption{Unoptimized power consumption as a function of the attenuation and the qubit temperature (color-plot). The cyan and white contour lines respectively represent the average and worst case infidelity.}
\label{fig:competition_power_noise}
\end{center}
\end{figure}
\FloatBarrier
On this graph, we see that putting a high attenuation and a low qubit temperature induces a high power consumption. Reciprocally a high qubit temperature and a low attenuation lead to small power consumption. We can also see that the regions of high power consumption correspond to regions of low noise, and regions of lower power consumption correspond to regions where the noise is high. Indeed the contour lines have lower values in the bottom right part of this graph which corresponds to a region of high power consumption (and higher values in the upper left part of this graph where the power is low). It illustrates the intrinsic opposite behavior between noise and power: in order to reach a high accuracy, one has to pay the bill!

Let us now focus on a given contour line: we wish to implement the single-qubit gate in order to reach a targetted accuracy. The power consumption on those lines can drastically vary. For instance, the white contour line (worst-case infidelity), which is at the bottom right of this graph, is associated with power varying in the range (approximately) $[10^{-3},10^{-1}]$. It illustrates that choosing wisely $T_{\text{Q}}$ and $A$ is a key asset to potentially make important energetic saves: there are many possible ways to reach a given accuracy, and each of those choices can be associated to a drastically different power consumption. Of course, this example is very simple as we only modeled the heat dissipated in the attenuators, but we see that it can potentially lead to quantitative saves in power consumption, and we can naturally expect that it would remain true for more complicated models\footnote{In particular because more tunable parameters will be involved, the optimum choice of parameters allowing to minimize the power consumption might be not trivial for more complicated models.}.
\subsubsection{Minimizing the power consumption}
Now, we can minimize the power consumption in order to reach a given target. This is shown on figure \ref{fig:min_power_fct_metric_1qb}.
\begin{figure}[h!]
\begin{center}
\includegraphics[width=0.8\textwidth]{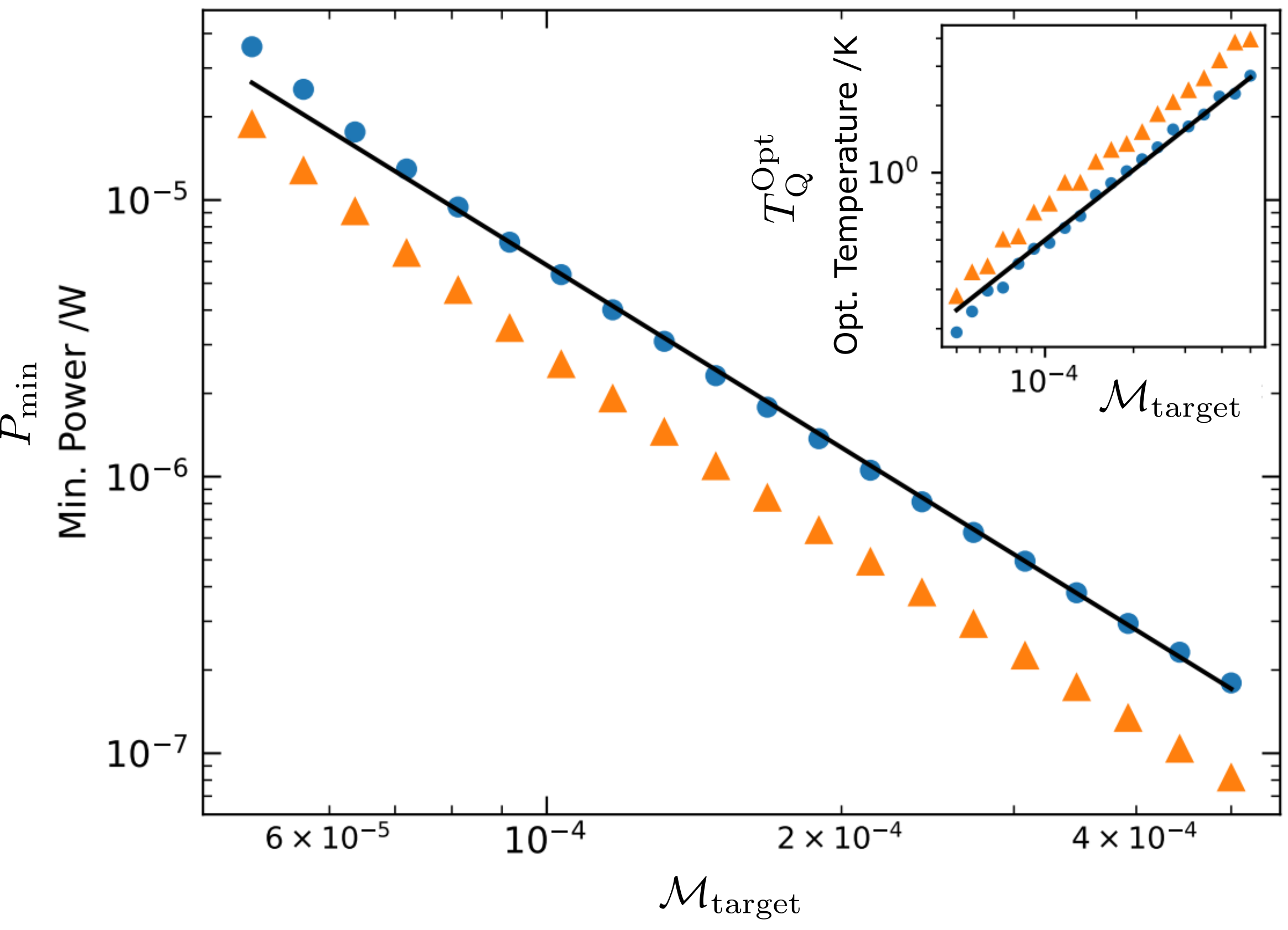}
\caption{Minimum power consumption as a function of the metric target. Inset: optimum qubit temperature $T_{\text{Q}}^{\text{Opt}}$ allowing to reach this minimum. Blue points: worst-case fidelity, orange: average. The black line is a guide for the eye. As explained in \ref{sec:equivalence_minpow_maxacc}, this curve can also be interpreted as the minimum metric target it is possible to reach as a function of the injected power.}
\label{fig:min_power_fct_metric_1qb}
\end{center}
\end{figure}
\FloatBarrier
We see on that graph that the minimum power is a strictly decreasing function of $\mathcal{M}_{\text{target}}$ (whatever the metric we choose: worst-case or average infidelity). This is an illustration of the property \ref{prop:behavior_min_power}. Indeed in the case we study, when $T_{\text{Q}}$ increases, the power decreases while the metric increases, satisfying this way the hypotheses behind the property\footnote{As a side remark, in this case, we would have the same kind of behavior for $1/A$: increasing it (thus reducing the attenuation) lowers the power but increases the amount of noise}. We also see from the inset \textit{how} the power consumption is being decreased for a higher target: the temperature of the qubits increases which increases the (Carnot) efficiency at which the heat is being removed. We did not represent it here but we would also see that the optimum attenuation would decrease as a function of $\mathcal{M}_{\text{target}}$. As explained in \ref{sec:equivalence_minpow_maxacc}, we can also switch the interpretation of this curve and realize that it also represents the maximum accuracy (i.e., the minimum value of $\mathcal{M}_{\text{target}}$) it is possible to reach as a function of some injected power. Finally, this graph also illustrates that the minimum power required depends on the metric that is used. The worst-case fidelity as being "more demanding" will ask for a bigger power consumption to reach a given target in contrary to the average fidelity. To make a connection with the previous section, the minimum power obtained for $\mathcal{M}_{\text{target}}=10^{-4}$ for the average infidelity (it is about $2.10^{-6}$) corresponds to the minimum power we would find along the upper left cyan contour line metric line in figure \ref{fig:min_power_fct_metric_1qb}.
\section{Establishing a full-stack framework: hardware, software, noise, resource approach}
\label{sec:establishing_fullstack_framework_physical}
At this point, we have shown how parameters describing the "hardware" part of the computer can, in principle, be optimized given the competition between noise and power ($A$ and $T_{\text{Q}}$ were the variables playing this role). But our example was about a single-qubit gate which is not very interesting in practice: the goal of a computer is to run algorithms composed of many gates. Furthermore, a computer is composed of much more elements than the one we described. This is why we would like to have a full-stack model for our quantum computer. What we mean by full-stack is that we would like in our approach of power minimization to include all the elements coming from the engineering part of the quantum computer (the heat conduction in the cables, the energetic cost of signal generation, the cryogenic aspects,...), from the algorithm (the shape of the implemented algorithm has a role on the power consumption), and of course, the elements coming from the quantum hardware (the physics of the quantum gates, the physics of quantum noise, etc.). Our minimization under constraint will include the physical architecture of the computer but also of the shape of the implemented algorithm through the family of parameters $\bm{\delta}$. Because of that, we will optimize the quantum computer, from software to hardware. This is what we mean in the title of this section by hardware, software, noise (the constraint consists in targeting a given value for the metric quantifying the noise strength), resource (here the resource is the power) approach. We will only consider physical qubits and gates in this section (thus, the framework is described at this point for an algorithm implemented without quantum error correction). We are going to adapt it at the logical level in the section \ref{sec:adapting_framework_FT}.
\subsection{Establishing the framework at the physical level} 
\subsubsection{General model behind the resource (power) and the quantity of noise}
\label{sec:formulation_problem_physical}
Our goal here is to find a generic expression for the power function $P(\bm{\delta})$ and for the metric $\mathcal{M}(\bm{\delta})$ quantifying the noise. The first thing to acknowledge is that there might have a constant energetic cost regardless of what the computer is doing, i.e., if it is running an algorithm or not. We call it the \textit{static} energetic cost. One typical example of that is heat conduction: all the cables necessary to control the qubits will always be inside of the computer, and they will bring heat that will have to be evacuated\footnote{We will see other examples of static consumption outside from heat conduction later on.} which costs electrical power for the cryogenic. Here, we will assume that the static consumption scales proportionally with the number of (physical) qubits $Q_P$ inside the computer\footnote{It might not always be the case. For instance, in an architecture in which the qubits are implemented on a 2D grid, such as \cite{li2018crossbar} (single-qubit gates can be controlled by sending a signal in the appropriate (line, column) coordinate), the power would also contain a term proportional to $\sqrt{Q_P}$.}. Such scaling comes from the fact that the number of cables and amplifiers (amplifier cannot usually be shut down when not used; thus, they are within the static consumption) is usually increasing proportionally to the \textit{total} number of qubits inside the computer. But in addition to static consumption, there are also the \textit{dynamic} costs. They will increase the power consumption in a manner that is time-dependent. The best example of that is the power associated with signals that are being sent to the qubits: they only have to be sent if an algorithm is running, and this power might also vary in time within an algorithm\footnote{One concrete example is the power required to remove the heat dissipated into the attenuators: this heat will have to be evacuated only if a gate is acting on the qubits.}. We have:
\begin{align}
P=Q_P a + \sum_{i \in \mathcal{G}} N^{(i)}_{P,\parallel} b^i,
\label{eq:physical_power_gen}
\end{align}
where $\mathcal{G}$ describes the primitive gateset considered for the gates (it corresponds to the physical gates on which the algorithm is decomposed). The coefficient $b^i$ represents the power it costs to implement \textit{one} physical gate of type $i$ (the type of gate corresponds to the precise gate that is being implement among the gateset), the term $N^{(i)}_{P,\parallel}$ is the number of gate of this type that are activated at a given instant in the algorithm.
The coefficient $a$ in \eqref{eq:physical_power_gen} represents the power it costs \textit{per qubit} to run the computer. From now on, we will assume that we are only interested in the \textit{average} power consumption such that $N^{(i)}_{P,\parallel}$ should be understood as an \textit{average} number of gates activated in parallel (in principle this term might be time-dependent).

We gave the expression of the power function. Now we must give the expression of the quantity of noise introduced by the algorithm, i.e., the expression of the metric. In all that follows, we assume that the noise level is low such that we can reason perturbatively. In order to be suitable for theoretical calculations, one requirement that we can ask for the metric is that from its value on each individual gate, we can bound it for the entire algorithm (this requirement implicitly assumes that the effect of the noise is local: it is possible to define a quantum channel for each quantum gate, and from that to find how much noise a given gate introduces). A typical property allowing it is to ask that:
\begin{align}
\mathcal{M}_P^{\text{exact}} \leq \mathcal{M}_P^{\text{bound}} \equiv \sum_{i \in \mathcal{G}} N_P^{(i)}\mathcal{M}^{(i)}_{P},
\label{eq:sub_additivity}
\end{align}
where $\mathcal{M}_P^{\text{exact}}$ is the \textit{exact} quantity of noise introduced by the algorithm, $\mathcal{M}^{(i)}_{P}$ is an estimation of the quantity of noise introduced by the gate $i$ and $N_P^{(i)}$ is the number of $i$'th gate in the algorithm. To clarify ideas, if we consider the metric being the worst case infidelity, $\mathcal{M}_P^{\text{exact}}$ would be the infidelity of the full algorithm and $\mathcal{M}_P^{(i)}$ the worst case infidelity of the $i$'th quantum gate in the algorithm. The worst-case infidelity of an algorithm is always lower or equal to the sum of the infidelity of each quantum gate \cite{gilchrist2005distance} (we are reasonning perturbatively: $\mathcal{M}^{(i)}_{P} \ll 1$). The way we will quantify the noise occurring in the algorithm will be based on the best estimation we can have of $\mathcal{M}_P^{\text{exact}}$: $\mathcal{M}_P^{\text{bound}}$. The property \eqref{eq:sub_additivity} is satisfied by many metrics (in addition to the worst-case fidelity, many different operator norms satisfy such property \cite{gilchrist2005distance}, as well as the logical error probability\footnote{For the logical error probability, what we mean is that the probability for an algorithm to fail is estimated with the sum of the probability that any logical gate fails, we recall \eqref{eq:probability_error_algo_k_concatenations} of the first chapter.}). Then, our metric will be defined as $\mathcal{M}_P=\mathcal{M}^{\text{bound}}_P$ and will satisfy:
\begin{align}
\mathcal{M}_P=\sum_{i \in \mathcal{G}} N_P^{(i)} \mathcal{M}^{(i)}_{P}.
\end{align}
Finally, we assume that we can define a clock speed for the algorithm. It means that the algorithm will be implemented in a finite sequence of primitive timesteps all of duration $\tau_P^{\text{timestep}}$. In all of those steps, a \textit{unique} primitive gate will be implemented per qubit (or per two qubits if the gate is a two-qubit one). The assumption that there is a clock speed is not equivalent to saying that the duration to implement all the gates is the same. It just means that if a primitive gate happens to be faster than the timestep, the qubit will then follow an identity evolution until the end of the timestep: $\tau_P^{\text{timestep}}$ thus corresponds to the longest gate in the primitive gateset. It allows us to define the depth $D_P$ of the algorithm as the number of timesteps between the preparation of the qubits and their measurements. In this thesis, we took the convention of not including the final measurements inside of the depth. This is because we will neglect the final measurements in order to simplify our discussions: as they are acting "at the boundary" of the algorithm, they won't contribute in a significant manner to the power consumption.\footnote{The measurements will only impact the power "over the surface" of the algorithm while all the rest of the gates will impact it "in the volume": the measurement cost can be neglected as soon as the depth is "not too small", which will be the case in our calculations.} It doesn't mean that we will not take into account the cost of measurement amplification: this is usually a \textit{static} consumption for which our argument doesn't apply that we are going to model (but this will come in the next chapter)\footnote{In fault-tolerance we will see in section \ref{sec:estimation_physical_qubits} that while the measurement of the logical qubits will contribute in a negligible manner to the number of gates acting in parallel, it will not be the case of the measurement of the ancillae qubits.}. Using the depth, we have, by definition of $N_{P,\parallel}^{(i)}$, $N_P^{(i)}=D_P N^{(i)}_{P,\parallel}$, and we can then write:
\begin{align}
\mathcal{M}_P=D_P \sum_{i \in \mathcal{G}} N_{P,\parallel}^{(i)}  \mathcal{M}^{(i)}_{P}
\label{eq:physical_metric_gen}
\end{align}
\subsubsection{The separate roles of hardware, software and noise in the resource estimation}
\label{sec:separate_role_hardware_software_noise}
At this point the general models for both power and metric are defined. We can summarize the optimization procedure for the full-stack model as what is represented on the figure \ref{fig:general_optimization_principle}.
\begin{figure}[h!]
\begin{center}
\includegraphics[width=0.9\textwidth]{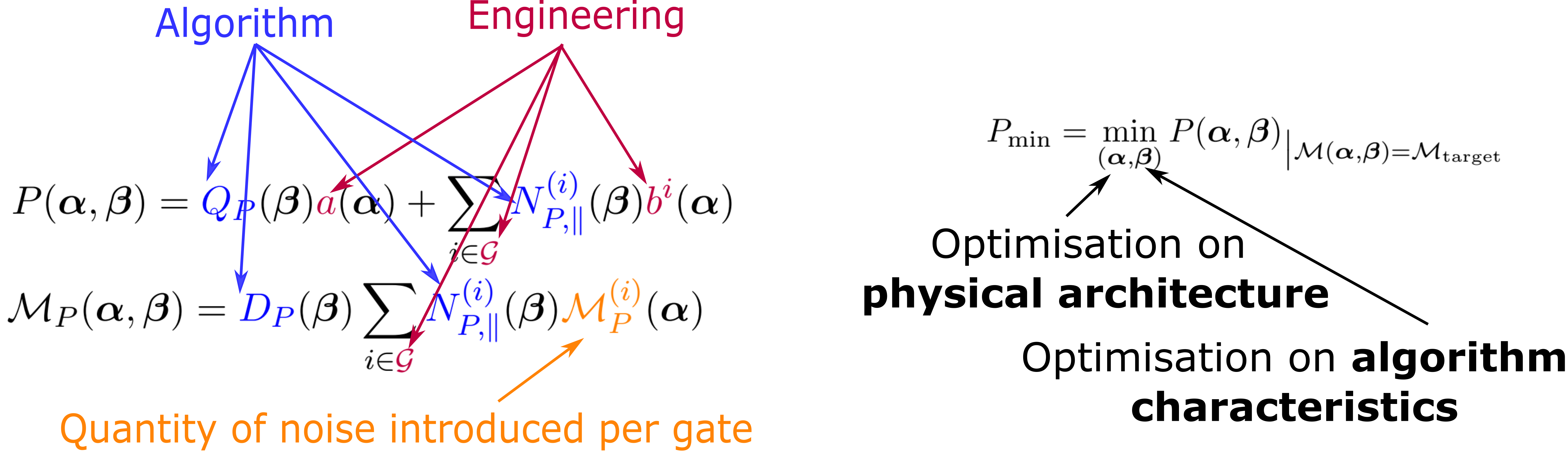}
\caption{Overall principle of minimization of the power cost. The role of the different components participating to the energetic cost: the algorithm, the engineering part of the computer, the noise model (and the way it is quantified) are represented in blue, purple, orange colors respectively. They depend on two families of parameters $\bm{\alpha}$ and $\bm{\beta}$ such that $\bm{\delta}$ in \eqref{eq:Pmin_general} satisfies $\bm{\delta}=(\bm{\alpha},\bm{\beta})$. The family $\bm{\alpha}$ represents the tunable parameters in the physical architecture of the computer. It usually represents the tunable hardware parameters (qubit temperature, attenuation etc) but not only as the clock-speed could also be optimized and it would enter in this family. The family $\bm{\beta}$ represents the tunable parameters corresponding to the manner the algorithm is implemented. We will see an example of that in the sections \ref{sec:energetic_cost_nisq_algo} and \ref{sec:charac_algo}.}
\label{fig:general_optimization_principle}
\end{center}
\end{figure}
\FloatBarrier

We see there that the algorithm, the quantity of noise, and the engineering part of the model have clearly identified roles. The algorithm characteristics are described by the functions $Q_P(\bm{\beta}), N_{P,\parallel}^{(i)}(\bm{\beta}), D_P(\bm{\beta})$ and they will play a role both in the power and the metric. Indeed if we increase the number of qubits or the number of gates that consume power\footnote{Not all gates necessarily consume power, identity gates, for instance, \textit{usually} do not require any power consumption.}, it will increase the power consumption of the computer. But it also plays a role in the metric because longer algorithms or algorithms involving a greater amount of gates in parallel will be noisier. Then, the power consumption per qubit or per gate are represented by the functions $a(\bm{\alpha})$ and $b^j(\bm{\alpha})$. Those functions will, for instance, represent the power efficiency of the cryogenic to remove the heat, how much heat the amplifiers are dissipating, what is the power of the pulse driving the quantum gates, etc. We can say that they represent the characteristics of the engineering aspect behind the computer and will thus only be involved in the expression of the power function. The quantity of noise introduced by the quantum gates will, on the other hand, only play a role in the metric as represented by the orange color. Here, we see that each expertise required to build a quantum computer have very well defined parameters to characterize; someone "at the end" has to regroup them and solve the minimization under constraint, which will provide all the experts the optimum tunable parameters they should choose in their respective field of expertise allowing to minimize the power consumption. What is interesting here is that even though some components in the computer might not seem to be related at all to "the quantum world", they will actually indirectly be related to it through the minimization under constraint. For instance, if we imagine that the cables used in the computer have a very high thermal conductivity which introduces a lot of heat in the computer, because we minimize the power consumption, they might "force" the qubit temperature, which has a direct influence on the noise, to be higher\footnote{They would do it in order to reduce the impact of heat conduction: the lower the temperature of the qubit is and the more costly it is to evacuate the heat conduction on the qubit stage (we can think about Carnot efficiency to understand that for instance).}. Doing so, they will have an influence on the way the noise is acting on the qubits (but as $\mathcal{M}=\mathcal{M}_{\text{target}}$ is imposed, the net quantity of noise is fixed, and does not depend on those cables\footnote{For instance, if $T_{\text{Q}}$ is increased, a bigger attenuation on the lines might be added to compensate this increase in $T_{\text{Q}}$ allowing to maintain the condition $\mathcal{M}=\mathcal{M}_{\text{target}}$.}).

\subsection{Example: application to the estimation of the energetic cost of an algorithm}
\label{sec:energetic_cost_nisq_algo}
Now, we can design an example in which we will optimize the shape of an algorithm that is running, in addition to the family of tunable parameters $\bm{\alpha}$ that play a role for the metric and the "engineering" part of the model. 

We will stay with the same example of power consumption we considered in the single-qubit gate example of section \ref{sec:single_qubit_example}: we will only take into account the power required to remove the heat dissipated by the driving signals in the attenuators. It means that we will neglect any \textit{static} consumption involved in the computation. 

The algorithm we are interested in implementing will only be composed of two-qubit cNOT gates and identity gates. The cNOT will be modeled as explained in section \ref{sec:state_of_the_art_2qb}. The metric we will consider using will be the worst-case infidelity (the average infidelity wouldn't satisfy the property \eqref{eq:sub_additivity} \cite{carignan2019bounding}). The identity gates will be applied at moments where some qubits are waiting on a physical timestep. The duration of a timestep corresponds to the duration of the longest physical gate used in the algorithm (thus, here $\tau^{\text{timestep}}_{P}=\tau_{\text{cNOT}}=100ns$). The identity gates will thus have a worst-case infidelity corresponding to \eqref{eq:worst_case_IF} applied for $\tau^{\text{timestep}}_P$. We assume that no power is required to apply an identity gate. The characteristics of the qubits we will use will follow the description provided in \ref{sec:state_of_the_art_qubit}.

In in this simple example we will thus have $P=Q_P a +N^{(\text{cNOT})}_{P,\parallel} b^{\text{cNOT}}+N^{(\text{Id})}_{P,\parallel} b^{\text{Id}}$ with:
\begin{align}
&a=0\\
&b^{\text{cNOT}}=\frac{300-T_{\text{Q}}}{T_{\text{Q}}} A P_g\\
& b^{\text{Id}}=0,
\label{eq:power_nisq}
\end{align}
and $\mathcal{M}_P=D_P \left(N_{P,\parallel}^{(\text{cNOT})} \mathcal{M}^{(\text{cNOT})}_{P}+N_{P,\parallel}^{(\text{Id})} \mathcal{M}^{(Id)}_{P}\right) $, with:
\begin{align}
 \mathcal{M}^{(\text{Id})}_{P}=\mathcal{M}^{(\text{cNOT})}_{P}/2=\left(1+\overline{n}_{\text{BE}}(T_{\text{Q}})+\frac{\overline{n}_{\text{BE}}(300)}{A}\right) \gamma_{\text{sp}} \tau_P^{\text{timestep}}.
\end{align}
We recognize $\overline{n}_{\text{tot}}=\overline{n}_{\text{BE}}(T_{\text{Q}})+\overline{n}_{\text{BE}}(300)/A$ in this expression. We also see that we assumed $A \gg 1$ (as it will anyway be necessary to reach $\mathcal{M}_P=\mathcal{M}_{\text{target}}$ in what follows). We will ask to have a target being $\mathcal{M}_{\text{target}}=1/3$ which implies that the fidelity of the algorithm output will be better than $2/3$ (because the metric is a pessimistic estimation: the "real" output fidelity will be better than the worst case one). 

Now, we need to describe the algorithm which will give us access to the average number of gates acting in parallel and $D_P$. For this example, we will consider running the quantum Fourier transform that we already introduced in the section \ref{sec:algo_QFT} of the previous chapter. It will be directly implemented on physical qubits (no error correction is being used yet). In order to simplify the discussion, we will remove all the Hadamard gates that are required in this algorithm (they wouldn't change the general behavior we are going to show). The algorithm is represented for $5$ qubits on the figure \ref{fig:qft_5qubits}. We take this example for pedagogic purposes.
\begin{figure}[h!]
\begin{center}
\includegraphics[width=1\textwidth]{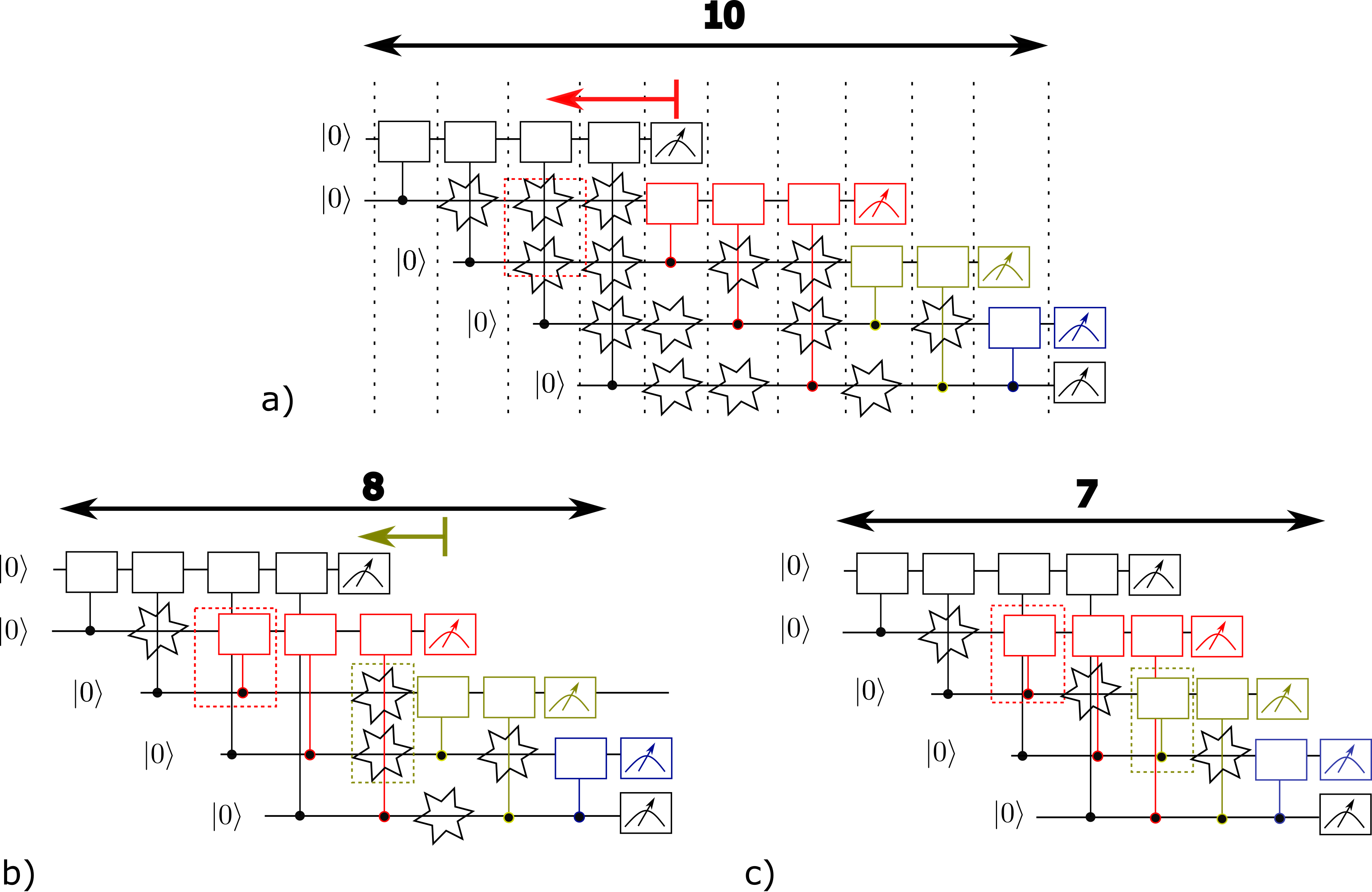}
\caption{\textbf{a)}: The QFT, introduced in \ref{sec:algo_QFT} (where Hadamard gates have been removed for simplicity), performed on $5$ qubits with no compression. The vertical dotted lines represent the different timesteps of the circuits. On some of those timesteps, some qubits are waiting as represented by the stars indicating the associated identity gates. Some other qubits are participating in a two-qubit controlled-phase gate that we modeled by a cNOT in our energetic estimation. To reduce the level of noise, we can compress the circuit. It can be done by moving toward the left the different subcomponents represented by the colors (black, green, yellow, and blue). The red arrow represents how to do the first level of compression represented on b). The red, yellow, and blue subcomponents are pushed toward the left for two timesteps to put the first red two-qubit gate inside the red dotted rectangle (removing the identity gates that were there). \textbf{b)}: First level of compression. The red, yellow, and blue circuits have been pushed toward the left such that the first red two-qubit gate is done during the third timestep in parallel of the two-qubit gate performed between the first and fourth qubit. \textbf{c)}: Second level of compression. The yellow subcomponent has also been pushed on the left (inside the yellow dotted rectangle) to reduce even more the algorithm duration and, thus, the noise. No further compressions are possible; the circuit reached the minimum depth possible.}
\label{fig:qft_5qubits}
\end{center}
\end{figure}
As we see, there are different ways to actually implement it. Either we run the algorithm in an "uncompressed" manner as on the figure \ref{fig:qft_5qubits} \textbf{a)}, which corresponds to the maximum depth, either we try to compress its length in such a way that the moments where qubits are waiting is minimized (on \textbf{c)}). The maximum compression corresponds to a circuit implemented with the smallest depth, where the average number of gates acting in parallel reaches its maximum. The lowest compression corresponds to the longest depth where the average number of gates acting in parallel reaches its minimum. One standard vision to implement algorithms is to make them the shortest possible. Indeed doing things this way, the quantity of noise that is introduced is then minimal. But it might not be the best way to do things in order to minimize power consumption \textit{under the constraint} of aiming a targetted fidelity. This is what is clearly illustrated on the figure \ref{fig:min_power_fct_compression} where the minimum power consumption (minimized on $T_{\text{Q}},A$) as a function of the algorithm depth is represented.
 
\begin{figure}[h!]
\begin{center}
\includegraphics[width=0.7\textwidth]{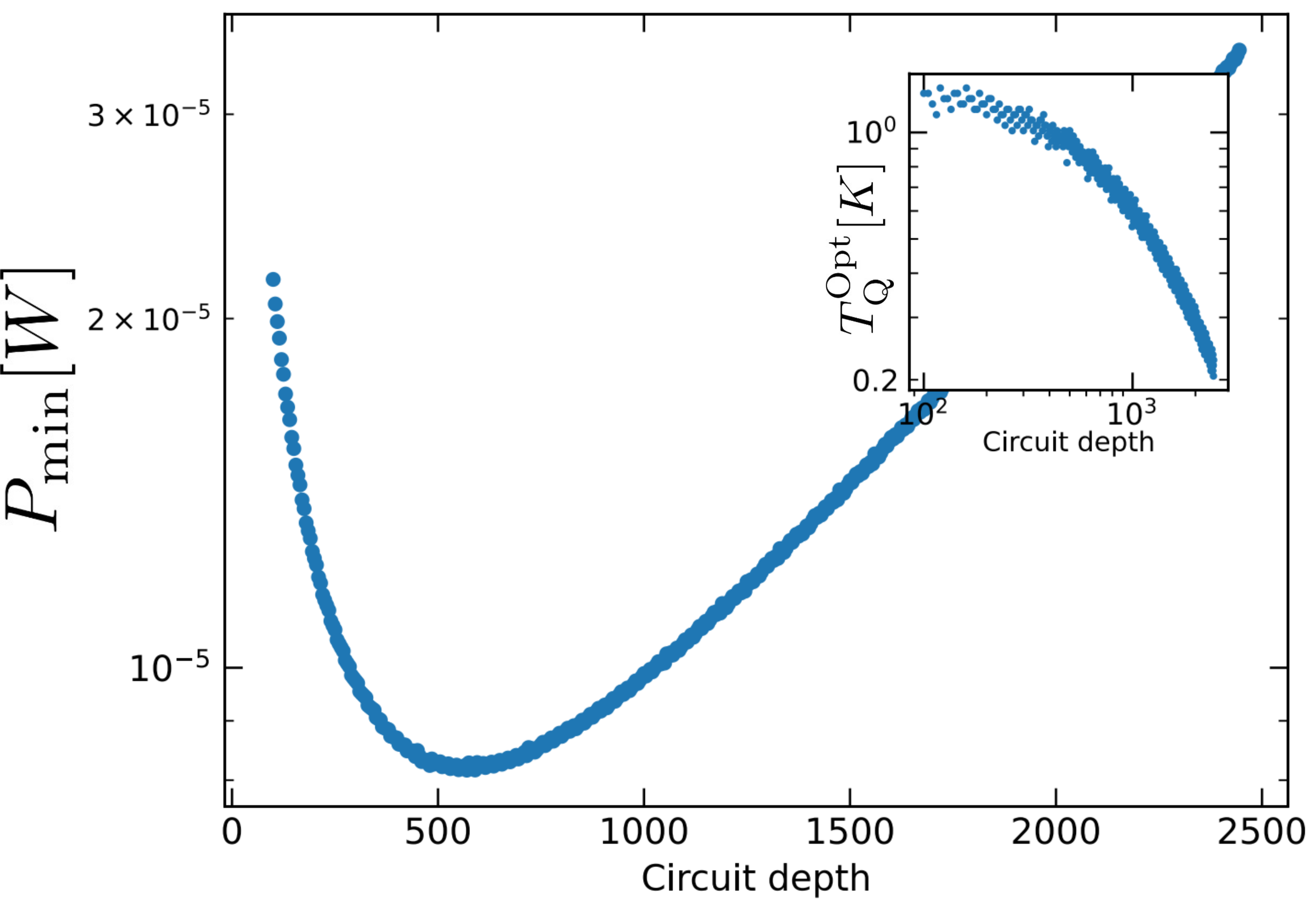}
\caption{Minimum power (minimized on $T_{\text{Q}},A$) for a QFT implemented on $30$ qubits as a function of the depth. Inset: optimal temperature $T_{\text{Q}}^{\text{Opt}}$ as a function of the compression factor. A non trivial optimal depth allows to minimize the power consumption.}
\label{fig:min_power_fct_compression}
\end{center}
\end{figure}
\FloatBarrier
On this curve, we see that the minimum power consumption is reached for a non-trivial value of the depth, neither corresponding to the minimum nor to the maximum. We also see that the optimal qubit temperature is a decreasing function of the depth. These curves illustrate in a very concrete manner the opposite behaviors between noise and power that we now explain.

We first explain the behavior of the temperature as shown on the inset. In order to understand it, we emphasize on the fact that the total number of cNOT is fixed in the algorithm, and it doesn't depend on the compression. However, the number of identity gates depends on the compression: in the low depth regime, the circuit is maximally compressed, and there are few identity gates, while in the high depth regime, the circuit is minimally compressed, and there are more identity gates. To understand the figure, we can rewrite the metric quantifying the noise as $\mathcal{M}_P=N_P^{(\text{cNOT})}\mathcal{M}_P^{(\text{cNOT})}+N_P^{(\text{Id})}\mathcal{M}_P^{(\text{Id})}$ where $N_P^{(i)}=N_{P,\parallel}^{(i)} D_P$ is the total number of physical gates of type $i$. Here, $N_P^{(\text{Id})}$ is the only variable that depends on the depth (and it is an increasing function of the depth). Because of that, in the low depth regime, there are fewer identity gates, thus fewer places where the noise can enter in the circuit. As we know that the minimum power is necessarily reached for $\mathcal{M}_P=\mathcal{M}_{\text{target}}$, the temperature of the qubits can be high in this regime, as shown in the inset. In the maximum depth regime, more identity gates are present, more places where the noise can enter exist, and the temperature of the qubits has to be lower in order to satisfy $\mathcal{M}_P=\mathcal{M}_{\text{target}}$. Now, we can study the behavior of the power. In the low depth regime, the Carnot efficiency is higher (because $T_{\text{Q}}$ is high from the previous discussion), and the Carnot efficiency decreases when the depth increases (because $T_{\text{Q}}$ decreases with the depth). Only reasoning with this efficiency we could believe that it is better to implement the circuit with a smaller depth. However, the lower the depth is, the more gates that consume power (i.e the cNOT) are acting in parallel: in the expression of the power above \eqref{eq:power_nisq} $N_{P,\parallel}^{(\text{cNOT})}$ increases when the depth decreases. Because of that, there is more heat dissipated in the low depth regime and less heat dissipated in the high depth regime. In the end, we understand that competition is occurring: it is good to reduce the depth to the benefit of a higher Carnot efficiency, but it is bad because more heat is dissipated in the attenuators. On the other hand, it is bad to increase the depth as the Carnot efficiency will be lower, but it is good because less heat will be dissipated in the attenuators. This competition implies that a non-trivial optimal depth exists in order to minimize power consumption. This is what is shown on the graph. This competition occurs because we ask to \textit{minimize} the power consumption \textit{under the constraint} that the noise strength should be equal to a target. 

In the end, this simple example illustrates that including considerations of power consumption in the algorithms can lead to non-trivial behaviors resulting from a "competition" between noise and power. This effect is a typical example of behavior that can only be understood from a \textit{transverse} approach as it relates algorithmic consideration (the depth) to engineering ones (the temperature of the qubit stage in the cryostat and the power consumption of this cryostat). In general, we can expect that the way to implement algorithms can have an important impact on power consumption, and we showed that we could, in principle, find the most energy-efficient way to implement such algorithms. 

In practice, our method can be applied for algorithms implemented without error correction, but to say that it could be applied to NISQ algorithms is a different affirmation. Indeed, most NISQ algorithms are variational algorithms that are "hybrid" in the sense that they require to implement a sequence of classical algorithms followed by a quantum algorithm (see \cite{bharti2021noisy,peruzzo2014variational} for instance). Because those kinds of algorithms are not "single shot", i.e., it is not a fixed and known set of gates that are implemented on some initial quantum state, the quantity of noise in the final density matrix might be more complicated to estimate such that what is presented here is not directly applicable\footnote{In some cases, it is believed that the noise can actually help those algorithms to converge to the good value \cite{gentini2019noise}. Because in our approach, the noise is seen as detrimental, some adaptations would be required.} (and some further investigations would be required to know to which extent our analysis could be extended to these kinds of algorithms).
\section{Adapting the full-stack framework for fault-tolerance}
\label{sec:adapting_framework_FT}
Now that the concepts have been provided for a computer not using quantum error-correction, we would like to adapt them to fault-tolerant quantum computing. This is the role of this section. We recall that in fault tolerance, physical qubits are regrouped together in order to create a logical qubit. Many physical gates are also regrouped together to create error-protected logical gates. One of the interests behind fault-tolerance construction is that this framework allows to some extent to ignore all the technical details that are occurring on the physical level and to do "as if" the logical qubits and gates were exactly like physical qubits and gates, excepted that they are less noisy. Following the same philosophy, we would like to keep the same interpretations for the physical metric and power we had in section \ref{sec:establishing_fullstack_framework_physical} and replace all the physical quantities that appeared there, such as physical depth, number of qubits, number of parallel gates by their logical equivalent (i.e., logical depth, number of logical qubits, number of logical gates acting in parallel...): we wish to adapt the full-stack framework we did on the physical level, at the logical level. We will see that with some very reasonable approximations, it will be possible, at least for the concatenated construction we are using. In this part, we will also do all the quantitative estimations about the number of physical qubits and gates that are required to perform a fault-tolerant calculation.
\subsection{Metrics at the logical level}
\label{sec:adapting_framework_FT_metric_logical}
Let us talk about the metric first. For that, we will use the tools that we already introduced: the probability of error of a logical gate. Calling it $p_L$, we recall that the probability of having an algorithm that fails corresponds to the probability any of the $N_L$ logical gates fails, which is simply $N_L p_L$ (see section \ref{sec:arbitrary_accurate_quantum_computing}). Because of that, the general description we adopted for the metric can directly be adapted to the logical level, and we have a particular case of \eqref{eq:physical_metric_gen}:
\begin{align}
&\mathcal{M}_L=N_L p_L=D_L \sum_{i \in \mathcal{G}_L} \overline{N}_{L,\parallel}^{(i)} \mathcal{M}^{(i)}_{L},
\end{align}
where for all\footnote{This is a standard approximation behind fault-tolerance: the logical error probability considered in the calculations corresponds to the probability of error of the "noisiest" logical gate: the cNOT.} $i$, $\mathcal{M}^{(i)}_{L} = p_L$, and where $\mathcal{G}_L=\{Id,H,S,X,Y,Z,cNOT\}$ is the logical gateset (the reason for this gateset is explained in the second paragraph of \ref{sec:FT_implementation_Steane_method}). Here we write $\overline{N}_{L,\parallel}^{(i)}$ instead of simply $N_{L,\parallel}^{(i)}$ to insist on the fact we are talking about the \textit{average} number of logical gates acting in parallel. We only considered an average number of gates in the previous sections, but as in the following sections, $N_{L,\parallel}^{(i)}$ will be used to represent a number of gates acting in parallel that does not correspond to an average, we add this "bar" from now on, to avoid confusions. We recall at this point that this gateset does not allow to implement an arbitrary logical gate: an extra gate such as the $T$ gate would be necessary. As also discussed in \ref{sec:FT_implementation_Steane_method}, those gates require a different procedure to be implemented, which goes beyond the scope of this Ph.D., in some sense, it means that we will approximate the energetic cost of an algorithm by only taking into account the gates in $\mathcal{G}_L$ it involves. In the end, we notice that the metric we used before is simply switched at the logical level. Everything behaves in the exact same manner.
\subsection{Power at the logical level}
Now, we can talk about the power part of the problem. One question is to wonder if it is possible to define it in an "effective manner" at the logical level. Can we just replace all the physical quantities entering in this expression with their logical counterpart and keep the same interpretation? To answer this question, we need to find the expression of the power (thus breaking down the fault-tolerant algorithm into its physical components) to try to then "recognize" the logical components with the hope that it can lead to a nice expression, analog to \eqref{eq:physical_power_gen}. The first step will thus be to estimate the number of physical components (qubits and gates) that are required in a fault-tolerant algorithm. This is the role of this section. 

\subsubsection{Estimation of the number of physical gates acting in parallel}
\label{sec:estimation_physical_gates}
We explained in the section \ref{sec:FT_implementation_Steane_method} the principle behind the construction of any of the logical gates in the gateset $\mathcal{G}_L$. To briefly summarize, the "recipe" to apply when concatenations are performed is to replace each physical gate with a 1-Rec. A 1-Rec for a single-qubit logical gate corresponds to the circuit represented in figure \ref{fig:steane_method_full} (a cNOT will have the same kind of structure excepted that error-correction will be implemented on the two logical qubits it is applied on). A proper counting of the number of physical gates in this 1-Rec, for each kind of logical gate implemented is shown again in the table \ref{table:FT_gate_breakdown_ch4}.
\begin{table*}[h!]
\begin{center}
\begin{tabular}{|c|c|c|c|c|}
	\hline
	& lvl-0 cNOT & lvl-0 Single & lvl-0 Identity & lvl-0 Measurement\\
	\hline
   lvl-1 cNOT & $135$ & $56$ & $72$ & $56$ \\
   \hline
   lvl-1 Single & $64$ & $35$ & $36$ & $28$ \\
   \hline
   lvl-1 Identity & $64$ & $28$ & $43$ & $28$\\
   \hline
   lvl-1 Measurement & $0$ & $0$ & $0$ & $7$\\
   \hline
\end{tabular}
\caption{Each row lists a FT logical gate and tabulates the lower level components required for the listed gate as columns. Those level-0 components are thus 0-Ga physical gates. The single-qubit gates here are either Hadamard, Pauli or $S$ gates. This table includes the gates required to prepare the ancillae and verifier and assumes that the verification always succeeds (there is no need to prepare more than one $Z$ or $X$ syndrome ancilla for instance). We discuss explain why we think it is a good assumption to make in \ref{app:ancilla_rejected}.}
\label{table:FT_gate_breakdown_ch4}
\end{center}
\end{table*}
\FloatBarrier
This table will allow us to deduce, based on a recursive reasoning the number of physical qubits and gates of each type that are required for a level-k logical gate. To be precise about what we call a level-$k$ logical gate here, we mean the physical elements that result from a 0-Ga that has been concatenated $k$ times\footnote{The notion of logical gate is "context dependent" as it could mean the logical implementation of the gate without error-correction (for $k=1$ it is the 1-Ga), its implementation followed by error-correction (for $k=1$ it is the 1-Rec), the appropriate set of gates that allows to define the probability of error (for $k=1$ it is the 1-exRec). Here we take the natural definition if we want to estimate the number of physical resources, which corresponds to the resulting circuit from a 0-Ga concatenated $k$ times (for $k=1$ it would correspond to the 1-Rec).}. We assume that we have $N_L^{\text{cNOT}}$, $N_L^{\text{1qb}}$, $N_L^{\text{Id}}$,$N_L^{\text{Meas}}$ logical cNOT, single-qubit, identity gates and measurement gates in the algorithm. We put those elements in a vector:$X_0=(N_L^{\text{cNOT}},N_L^{\text{1qb}},N_L^{\text{Id}},N_L^{\text{Meas}} )^T$ ($T$ means transposition). The number of \textit{physical} cNOT, single-qubit, identity and measurement gates after $k$ concatenations can be put in the vector: $X_k=(N_P^{\text{cNOT}}(k),N_P^{\text{1qb}}(k),N_P^{\text{Id}}(k),N_P^{\text{Meas}}(k) )^T$. Such vector satisfies $X_k=A^k X_0$, where:
\begin{align}
& A=\begin{pmatrix} 135 & 64 & 64 & 0\\ 56 & 35 & 28 & 0\\ 72 & 36 & 43 & 0 \\ 56 & 28 & 28 & 7\end{pmatrix}.
\end{align}
Indeed, for $k=1$, we just have to apply the table \ref{table:FT_gate_breakdown_ch4} to the number of gates implemented in the algorithm: it allows us to find that $X_1=A X_0$ which provides the number of physical gates after the first concatenation level. For the second concatenation level, each of those physical gates will \textit{again} be replaced by a 1-Rec; thus, we just have to "reuse" this table another time but applied this time on $X_1$, it gives $X_2=A X_1 = A^2 X_0$. The same principle applies recursively for further concatenation levels, which explains why $X_k=A^k X_0$. In the end, from the knowledge of the number of gates required by the algorithm, we can access the number of physical gates required in the algorithm for an arbitrary concatenation level. We deduce that we have the following \textit{total} number of physical gates required in the algorithm:
\begin{align}
X_k = \left(
\begin{array}{c}
 \frac{1}{3} 199^k (2 N_L^{\text{cNOT}}+N_L^{\text{Id}}+N_L^{\text{1qb}})+\frac{7^k}{3} \left(N_L^{\text{cNOT}} -N_L^{\text{Id}} - N_L^{\text{1qb}}\right) \\
 \frac{7}{48} 199^k (2 N_L^{\text{cNOT}}+ N_L^{\text{Id}}+ N_L^{\text{1qb}})+\frac{7^k}{48} \left(-14 N_L^{\text{cNOT}} -7 N_L^{\text{Id}} +41\  N_L^{\text{1qb}}\right) \\
 \frac{3}{16} 199^k (2 N_L^{\text{cNOT}}+ N_L^{\text{Id}}+ N_L^{\text{1qb}})+\frac{7^k}{16} \left(-6 N_L^{\text{cNOT}} +13 N_L^{\text{Id}} -3\ N_L^{\text{1qb}}\right) \\
 \frac{7}{48} 199^k (2 N_L^{\text{cNOT}}+ N_L^{\text{Id}}+ N_L^{\text{1qb}})+\frac{7^k}{48}\left(-14 N_L^{\text{cNOT}} -7 N_L^{\text{Id}}+48N_L^{\text{Meas}}-7N_L^{\text{1qb}}\right)\\
\end{array}
\right)
\label{eq:number_physical_gate_k_concat}
\end{align}
We notice from this expression that two numbers seem to have particular importance in this formula: $199$ and $7$ (they are the only two numbers elevated to a power $k$). They correspond to the two eigenvalues of the matrix $A$. In order to interpret their signification, let us first neglect the term proportional to $7^k$. Doing this, we realize that $X_k \propto (2 N_L^{\text{cNOT}}+N_L^{\text{Id}}+N_L^{\text{1qb}})$. It means that whatever the physical gate we are looking at, the exact type of gates implemented by the algorithm does not really matter; what matters is the \textit{total number} of logical gates involved (counting twice the cNOT and removing the measurement gates at the logical level for a reason given in a few lines). For instance, if we had implemented $1$ logical identity, or $1$ single-qubit logical gate (and no other gates), the number of physical components of each type wouldn't change. It symbolizes the fact that the fault-tolerant construction of the gates contains a dominant part that does not really depend on the gate that is being implemented, and we can understand that from the figure \ref{fig:steane_method_full}: whatever the 1-Ga has to be protected, the error-correction structure (i.e., the 1-Ec) is the same for any of the logical gates to implement. But this has two exceptions: first, we need two implementations of error-correction for a cNOT because it involves two logical qubits, it explains the $\times 2$ in front of $N_L^{\text{cNOT}}$. Then, the measurement gates will not be followed by error correction (because the information becomes classical when qubits are being measured). It explains why $N_L^{\text{Meas}}$ does not appear in the term proportional to $199^k$: the number of logical measurement gates does not play a significant role in the number of physical elements when concatenations are being performed. But the error correction is not everything; there is also the transversal implementation of the gate (the 1-Ga). And this part will differ depending on the exact logical gate we want to implement. For instance, a logical identity gate would require $7$ less single-qubit gate and $7$ more identity gates to be implemented. Those little differences are expressed by the much smaller eigenvalue $7$. And we see that the terms proportional to this eigenvalue are not the same for each physical gate (because here, there are variations depending on the logical gates that have been implemented). 

Now, we are here interested in the \text{average} number of physical gates that are acting in parallel. In order to access it, we need to know the average number of gates acting in parallel for the ideal algorithm (for each type: single-qubit, identity, and two-qubit gates). Putting those numbers in a vector $X_0=(\overline{N}_{L,\parallel}^{\text{cNOT}},\overline{N}_{L,\parallel}^{\text{1qb}},\overline{N}_{L,\parallel}^{\text{Id}},\overline{N}_{L,\parallel}^{\text{Meas}})^T$, we can deduce the total number of physical gate inside an average algorithm logical timestep (one algorithm timestep corresponds to the time it takes to execute any logical gate\footnote{Some logical gates can in principle be implemented faster than others. For instance, a logical identity could only last for two timesteps as the transversal implementation of the gate could be done instantaneously (as an identity operation can be performed instantaneously). But as we need to keep all the qubits "synchronized", the longest logical gate to execute should give the duration of a logical timestep of the algorithm (even though "smart" optimizations could be performed for specific algorithms in order to reduce the algorithm duration slightly).}). It naturally corresponds to \eqref{eq:number_physical_gate_k_concat} where we replace the number of logical gates by the average number of logical gates acting in parallel ($N_{L}^{x} \to \overline{N}_{L,\parallel}^{x}$ for $x \in \{\text{cNOT}, \text{Id}, \text{1qb},\text{Meas}\}$). Now, this is not exactly the average number of physical gates that are acting in parallel. Indeed, a logical timestep is composed of a given number of physical timesteps. More precisely, calling $\tau_P^{\text{timestep}}$ the duration of a physical timestep (which corresponds to the duration of the longest physical gate used in the FT construction: we will assume it corresponds to both the duration of a cNOT and measurement gates in what follows), we have after $k$ concatenations a logical gate that lasts for $\tau_{L}^{\text{timestep}}=3^k \tau_P^{\text{timestep}}$. Indeed, as explained in \ref{sec:number_timesteps_for_logical_and_ancilla}, the logical data qubits are going through three timesteps: the first one corresponds to the transversal implementation of the gate, followed by two timesteps used to perform the syndrome measurement\footnote{We recall that we assume the correction is not physically performed with our assumptions of "keeping track" of the syndrome with classical processing.}. Increasing the concatenation level, each of those timesteps will again be multiplied by three. This is how we deduce $\tau_{L}^{\text{timestep}}=3^k \tau_P^{\text{timestep}}$. The ancillae must, however, be prepared in advance: including their preparation and the moment they can be "plugged out", the duration of this gate would be longer. But as explained on the figure \ref{fig:timestep_logical_gate}, the good timescale to consider to estimate the average number of gates would still be $\tau_{L}^{\text{timestep}}=3^k \tau_P^{\text{timestep}}$.
\begin{figure}[h!]
\begin{center}
\includegraphics[width=1\textwidth]{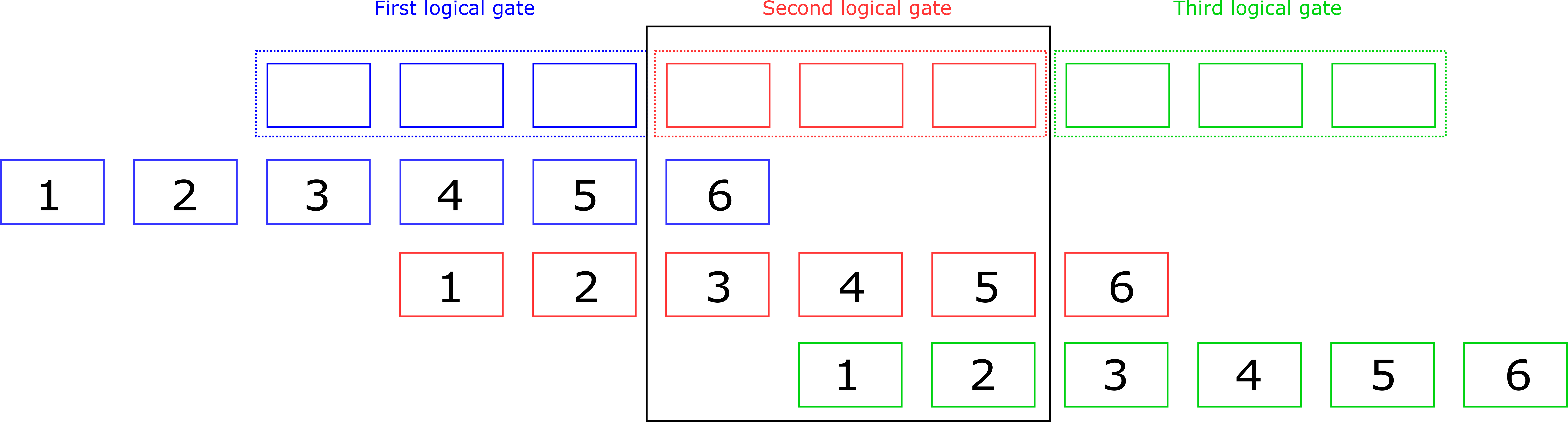}
\caption{Determination of the average number of physical gates acting on an average logical timestep. On the first line of this figure are represented a sequence of three logical gates that are composed of three physical timesteps as indicated by the three red rectangles (the transversal implementation and the measurement of the $X$ and $Z$ syndromes). Those logical gates require ancillae, which are represented on the line below by the matching colors. Between the moment the ancillae are prepared and measured, there are more than three timesteps (we represented $6$ here for simplicity, but the exact number would be different and would actually depend on the ancilla we are looking at). However, in order to compute the average number of physical gates acting in one logical gate, everything will behave \textit{as if} those gates were actually implemented in parallel of the $3$ timesteps. Indeed, if we focus on the red logical gate, the ancillae might have gates applied on them before ($1$ and $2$ red boxes) and after the logical gate ($6$ red box), but the preceding and following logical gates would exactly compensate those missing gates. More precisely, this is true if we neglect the boundary of the algorithm and if we are interested in the average number of gates acting in parallel (which is the case).}
\label{fig:timestep_logical_gate}
\end{center}
\end{figure}
\FloatBarrier
Finally, the average number of $x \in \{\text{cNOT}, \text{Id}, \text{1qb},\text{Meas}\}$ physical gates in this logical timestep is:
\begin{align}
&\overline{N}_{P,\parallel}^{x}(k)=N_{\text{during } \tau_L}^{x}(k) /3^k
\label{eq:average_number_physical_gate_timestep}
\end{align}
Where $N_{\text{during } \tau_L}^{x}(k)$ is the number of $x$ type physical gates that are inside an average logical timestep, i.e the number of $x$ physical gates there are assuming that $\overline{N}_{L,\parallel}^{\text{cNOT}},\overline{N}_{L,\parallel}^{\text{1qb}},\overline{N}_{L,\parallel}^{\text{Id}},\overline{N}_{L,\parallel}^{\text{Meas}}$ logical gates are acting in parallel in the algorithm. We find (those numbers are simply obtained by replacing $N_{L}^{x} \to \overline{N}_{L,\parallel}^{x}$ for $x \in \{\text{cNOT}, \text{Id}, \text{1qb}\}$ in \eqref{eq:number_physical_gate_k_concat}):
\begin{align}
&N_{\text{during } \tau_L}^{\text{cNOT}}(k)= \frac{1}{3} 199^k (2 \overline{N}_{L,\parallel}^{\text{cNOT}}+\overline{N}_{L,\parallel}^{\text{Id}}+\overline{N}_{L,\parallel}^{\text{1qb}})+\frac{7^k}{3} \left(\overline{N}_{L,\parallel}^{\text{cNOT}} -\overline{N}_{L,\parallel}^{\text{Id}} - \overline{N}_{L,\parallel}^{\text{1qb}}\right)
\label{eq:tot_number_physical_gate_timestep1}\\
&N_{\text{during } \tau_L}^{\text{1qb}}(k)=\frac{7}{48} 199^k (2 \overline{N}_{L,\parallel}^{\text{cNOT}}+ \overline{N}_{L,\parallel}^{\text{Id}}+ \overline{N}_{L,\parallel}^{\text{1qb}})+\frac{7^k}{48} \left(-14 \overline{N}_{L,\parallel}^{\text{cNOT}} -7 \overline{N}_{L,\parallel}^{\text{Id}} +41\  \overline{N}_{L,\parallel}^{\text{1qb}}\right)
\label{eq:tot_number_physical_gate_timestep2}\\
&N_{\text{during } \tau_L}^{\text{Id}}(k)= \frac{3}{16} 199^k (2 \overline{N}_{L,\parallel}^{\text{cNOT}}+\overline{N}_{L,\parallel}^{\text{Id}}+\overline{N}_{L,\parallel}^{\text{1qb}})+\frac{7^k}{16} \left(-6\overline{N}_{L,\parallel}^{\text{cNOT}} +13\overline{N}_{L,\parallel}^{\text{Id}} - 3 \overline{N}_{L,\parallel}^{\text{1qb}}\right) 
\label{eq:tot_number_physical_gate_timestep3}\\
&N_{\text{during } \tau_L}^{\text{Meas}}(k)= \frac{7}{48} 199^k (2 \overline{N}_{L,\parallel}^{\text{cNOT}}+\overline{N}_{L,\parallel}^{\text{Id}}+\overline{N}_{L,\parallel}^{\text{1qb}})+\frac{7^k}{48} \left(-14 \overline{N}_{L,\parallel}^{\text{cNOT}} -7 \overline{N}_{L,\parallel}^{\text{Id}}+48 \overline{N}_{L,\parallel}^{\text{Meas}} - 7 \overline{N}_{L,\parallel}^{\text{1qb}}\right) 
\label{eq:tot_number_physical_gate_timestep4}
\end{align}
The second parenthesis present in those expressions, as associated with the eigenvalue $7$, will typically be small enough to be ignored compared to the first parenthesis associated with the eigenvalue $199$. We can justify our claim with what follows. In practice, they would be the closest for $k=1$ (when $k$ increases further, the first parenthesis dominates even more). Excluding nonphysical cases where there are only logical measurements acting in parallel (the number of measurements in parallel cannot dominate the other logical gates as logical qubits are manipulated before being measured), a very pessimistic scenario would correspond to a case where $\overline{N}_L^{\text{1qb}}$ would dominate the number of gate acting in parallel. For example in the case $k=1$, if $\overline{N}_L^{\text{1qb}}=1$ (and the other logical gates acting in parallel are present in negligible number), the ratio of the first parenthesis divided by the second one in \eqref{eq:tot_number_physical_gate_timestep2} would give a number close to $5$. Not taking into account the smallest eigenvalue in this \textit{very pessimistic} case would imply that we are making an error in our estimation for the number of physical gates about $20 \%$. This is small enough to be ignored as our purpose will be to estimate the energetic cost of running a fault-tolerant quantum algorithm in order of magnitudes. For this reason, we will approximate the average number of physical single, cNOT, and measurement gates by\footnote{We don't need to find the average number of physical identity gates as they do not consume any power with our hypotheses.}:
\begin{align}
&\overline{N}_{P,\parallel}^{\text{cNOT}}(k) \approx \frac{1}{3} \left(\frac{199}{3}\right)^k (2 \overline{N}_{L,\parallel}^{\text{cNOT}}+\overline{N}_{L,\parallel}^{\text{Id}}+\overline{N}_{L,\parallel}^{\text{1qb}})
\label{eq:average_number_physical_gate_timestep1}\\
&\overline{N}_{P,\parallel}^{\text{1qb}}(k) \approx \frac{7}{48} \left(\frac{199}{3}\right)^k (2 \overline{N}_{L,\parallel}^{\text{cNOT}}+ \overline{N}_{L,\parallel}^{\text{Id}}+ \overline{N}_{L,\parallel}^{\text{1qb}})
\label{eq:average_number_physical_gate_timestep2}\\
&\overline{N}_{P,\parallel}^{\text{Meas}}(k) \approx \frac{7}{48} \left(\frac{199}{3}\right)^k (2 \overline{N}_{L,\parallel}^{\text{cNOT}}+\overline{N}_{L,\parallel}^{\text{Id}}+\overline{N}_{L,\parallel}^{\text{1qb}})
\label{eq:average_number_physical_gate_timestep3}
\end{align}
We notice that the number of physical measurements is roughly the same as the number of the other physical gates (actually, it is exactly equal to the number of single-qubit gates, for instance). This is an important difference with the case where no error correction is being performed. Indeed when no error correction is performed, it is fine to neglect the dynamic cost associated with measurements because they only affect the boundaries of the algorithm. Here, because error correction is being performed, measurements are performed on each logical timestep, and their dynamic energetic cost might not be negligible (they act "in the volume" of the algorithm). The dynamic cost of measurement corresponds to the readout pulses that are sent to read the qubit state as opposed to the static costs of measurement that are associated with the amplifiers (amplifiers cannot \textit{usually} be turned off when not used).
\subsubsection{Estimation of the number of physical qubits used in the computer}
\label{sec:estimation_physical_qubits}
The number of physical gates being estimated for a given concatenation level $k$, it remains to estimate the number of physical qubits to be able to finally write down the expression of the power. For the physical gates, we were interested in finding their average number because the associated power consumption was of a dynamic type. For the physical qubits, we need to find the bottleneck of the algorithm. Indeed physical qubits cannot be removed from the computer when they are not being used, so we need to find the moment when the maximum number of physical qubits will be required to estimate their number. 

In order to do the estimation, we notice that the number of ancillae qubits dominates the total number of physical qubits required. Indeed, as one can see from figure \ref{fig:steane_method_full}, for $k=1$ for instance, we would need $28$ (resp $28 \times 2$) ancillae qubits for a single or identity (resp cNOT) logical qubit gate\footnote{It assumes that the syndrome ancillae are always being accepted. As explained in the appendix \ref{app:ancilla_rejected}, the typical number of ancillae rejected can be neglected compared to the total number of ancillae accepted.}. Compared to the $7$ to $14$ physical data qubits, as we are only interested in orders of magnitude, this is a fair approximation to make. Furthermore, increasing the concatenation level would only make the number of physical ancillae qubits dominate even more. For this reason, we will only try to estimate the number of ancillae qubits. Then, we recall that the $k$'th concatenation level is obtained from the $k-1$ level in which each physical gate has been replaced by a 1-Rec. We can use that to estimate the number of physical qubits. We first assume that the algorithm requires a unique logical timestep composed of $(N_{L,\parallel}^{\text{cNOT}},N_{L,\parallel}^{\text{1qb}},N_{L,\parallel}^{\text{Id}},N_{L,\parallel}^{\text{Meas}})$ cNOT, single, identity and measurement logical gates acting in parallel. From this knowledge, we can find the total number of physical gates required on the $k-1$ level. Multiplying this number by $28$ for the single-qubit and identity gates and $28 \times 2$ for the cNOT, we deduce the total number of ancillae qubits required. We notice that we don't multiply measurement gates by $28$ because they do not require ancillae (because they are not followed by a quantum error-correction procedure). It would give us access to the total number of ancillae qubits after $k$ concatenation, and thus roughly the total number of physical qubits. We notice that this calculation assumes that no ancillae qubits can be recycled after $k$ concatenation (i.e., each physical gate in the $k-1$ level will need "fresh" ancillae when replaced by a 1-Rec to implement the $k$ concatenation level). Both the $X$ and $Z$ syndrome ancillae are participating to $9$ physical timesteps as explained in \ref{sec:number_timesteps_for_logical_and_ancilla}. Thus, in principle, we can reuse them once they have done those $9$ timesteps. In a similar way, the verifier ancilla can be reused (even earlier because they are used for a smaller amount of timesteps). Properly counting the number of ancillae that could be reused would, however, be a complicated task for an arbitrary concatenation level. Furthermore, in practice, we will mainly be interested in the first $3$ concatenations levels in the next chapter (more concatenation levels would imply an unrealistic power demand for all practical purposes for the models used there). Because of that, the number of timesteps for the logical gates on the $k-1$ level is also equal to $9$: $\tau^{\text{timestep}}_L/\tau^{\text{timestep}}_P=3^{k-1}=9$ which means that anyway, the level of recycling of the syndrome ancilla qubit would be "poor" (but we might be able "in principle" to recycle some of the verifier ancilla qubits, which represents the half of the total number of ancillae, as they are used for a smaller amount of timesteps). We assumed in the end that the ancillae qubits are not recycled within this logical gate. We note that it can actually be seen as the way the computer is designed and not as an assumption simplifying calculations. In summary, using \eqref{eq:tot_number_physical_gate_timestep1},\eqref{eq:tot_number_physical_gate_timestep2} and \eqref{eq:tot_number_physical_gate_timestep3}, we can estimate that the total number of physical qubits required in an algorithm composed of a unique logical timestep having $(N_{L,\parallel}^{\text{cNOT}},N_{L,\parallel}^{\text{1qb}},N_{L,\parallel}^{\text{Id}},N_{L,\parallel}^{\text{Meas}})$ cNOT, single, identity and measurement logical gates acting in parallel is $Q_P(k) \approx \frac{28}{199} 199^k (2 N_{L,\parallel}^{\text{cNOT}} + N_{L,\parallel}^{\text{1qb}}+N_{L,\parallel}^{\text{Id}})$. 

Now, an algorithm is composed of many logical timesteps. With our assumption where the ancillae qubits are not being recycled within a logical gate, the number of physical qubits required is given by the moment in the algorithm where $2 N_{L,\parallel}^{\text{cNOT}} + N_{L,\parallel}^{\text{1qb}}+N_{L,\parallel}^{\text{Id}}$ will be \textit{maximum} (which is in some way the "bottleneck" in term of physical resources required). Thus, the number of physical qubits can be written as:
\begin{align}
Q_P(k) \approx \frac{28}{199} 199^k \max_i \left[2 N_{L,\parallel,i}^{\text{cNOT}} + N_{L,\parallel,i}^{\text{1qb}}+N_{L,\parallel,i}^{\text{Id}}\right]
\label{eq:number_physical_qubits_k_concat}
\end{align}
where $i$ denotes the logical timestep on which there are $N_{L,\parallel,i}^{\text{cNOT}}$, $N_{L,\parallel,i}^{\text{1qb}}$, $N_{L,\parallel,i}^{\text{Id}}$ logical cNOT, single-qubit and identity gates acting in parallel. However this is not exactly correct (we need to refine one last time the calculation). In principle, for the same reason that consecutive gates on the $k-1$ level had to use fresh ancillae, we cannot use the same ancillae for consecutive logical gates because the ancilla preparation takes longer than the duration of the logical gate (physical ancillae within consecutive logical gates might overlap). Here, we will assume that the ancillae used inside a logical gate can be reused (recycled) when all the ancillae of this logical gate have finished to be used (it means that we recycle ancillae qubits inside a logical gate only when none of those qubits are used anymore). As explained in the appendix \ref{app:recycling_ancilla_logical}, modifying $Q_P(k) \to 4 Q_P(k)$ in \eqref{eq:number_physical_qubits_k_concat} would lead to an appropriate counting of such overlap. We will take into account this coefficient, such that the number of physical qubits used after $k$ concatenations is:
\begin{align}
Q_P(k) \approx \frac{112}{199} 199^k \max_i \left[2 N_{L,\parallel,i}^{\text{cNOT}} + N_{L,\parallel,i}^{\text{1qb}}+N_{L,\parallel,i}^{\text{Id}}\right]
\label{eq:number_physical_qubits_k_concat_no_recycling}
\end{align}
However, to define the power at the logical level, we would like to "make appear" the number of logical qubits $Q_L$. In order to do it, we can notice that we have\footnote{It assumes that the algorithm uses the minimal number of logical qubits it conceptually need. Indeed, this equation means that the moment the maximum number of logical qubits are doing something "together" is equal to the number of logical qubits. But we could imagine inefficient implementation where this wouldn't be true.}
\begin{align}
\max_i \left[2 N_{L,\parallel,i}^{\text{cNOT}} + N_{L,\parallel,i}^{\text{1qb}}+N_{L,\parallel,i}^{\text{Id}}+N_{L,\parallel,i}^{\text{Meas}} \right]=Q_L,
\label{eq:number_logical_qubits}
\end{align} 
where $N_{L,\parallel,i}^{\text{Meas}}$ is the number of logical measurement gates at the timestep $i$. The reason behind this equation can intuitively be understood from the figure \ref{fig:bottleneck_algo_physical_qubits}. We recall that $N_{L,\parallel,i}^{\text{Meas}}$ did not appear in \eqref{eq:number_physical_qubits_k_concat} because measurement gates do not require ancillae, and our estimation of the number of physical qubits was given by the number of ancillae (as they dominate). Now, for all practical purpose we will usually have $\max_i \left[2 N_{L,\parallel,i}^{\text{cNOT}} + N_{L,\parallel,i}^{\text{1qb}}+N_{L,\parallel,i}^{\text{Id}}\right] \approx \max_i \left[2 N_{L,\parallel,i}^{\text{cNOT}} + N_{L,\parallel,i}^{\text{1qb}}+N_{L,\parallel,i}^{\text{Id}}+N_{L,\parallel,i}^{\text{Meas}}\right]$, which allows us to write:
\begin{align}
Q_{P}(k) \approx \frac{112}{199} 199^k Q_L,
\label{eq:number_physical_qubits_from_logical}
\end{align}
We explain why in a few lines. This formula does not apply for $k=0$ as it is based on estimating the total number of physical qubits from the number of ancillae qubits, and there are no ancillae qubits when no concatenations are performed. We can understand why our approximation will usually be valid with the following analysis. First, we call $i_0$ the timestep on which \eqref{eq:number_logical_qubits} is satisfied (the index satisfying the maximization). If on this timestep, $N_{L,\parallel,i_0}^{\text{Meas}}$ is small compared to the other gates, then our approximation is directly valid. If not, at the logical timestep right before (i.e for the timestep $i_0-1$), all those measurement gates were either logical identity, single-qubit or cNOT gates. It implies that $2 N_{L,\parallel,i_0-1}^{\text{cNOT}} + N_{L,\parallel,i_0-1}^{\text{1qb}}+N_{L,\parallel,i_0-1}^{\text{Id}} \geq N_{L,\parallel,i_0}^{\text{Meas}}$ (this is an inequality because on the timestep $i_0-1$, the measurements are replaced by other logical gates, but qubits that were not measured in $i_0$ might be affected by some other logical gates). Thus, $\max_i \left[2 N_{L,\parallel,i}^{\text{cNOT}} + N_{L,\parallel,i}^{\text{1qb}}+N_{L,\parallel,i}^{\text{Id}}\right] \geq N_{L,\parallel,i_0}^{\text{Meas}}$. And as $N_{L,\parallel,i_0}^{\text{Meas}}$ was the dominant number of gates on the timestep $i_0$, it shows that $\max_i \left[2 N_{L,\parallel,i}^{\text{cNOT}} + N_{L,\parallel,i}^{\text{1qb}}+N_{L,\parallel,i}^{\text{Id}}\right]$ gives a good estimation to $\max_i \left[2 N_{L,\parallel,i}^{\text{cNOT}} + N_{L,\parallel,i}^{\text{1qb}}+N_{L,\parallel,i}^{\text{Id}}+N_{L,\parallel,i}^{\text{Meas}}\right]$, and thus that \eqref{eq:number_physical_qubits_from_logical} is satisfied. In what follows, we will always assume it to be the case as we only care about order of magnitudes estimations, but also mainly because the examples we will consider in the next chapter will clearly satisfy this hypothesis (it will not be an approximation but an exact result in the examples we will take). We illustrate all this on figure \ref{fig:bottleneck_algo_physical_qubits} b).
\begin{figure}[h!]
\begin{center}
\includegraphics[width=1\textwidth]{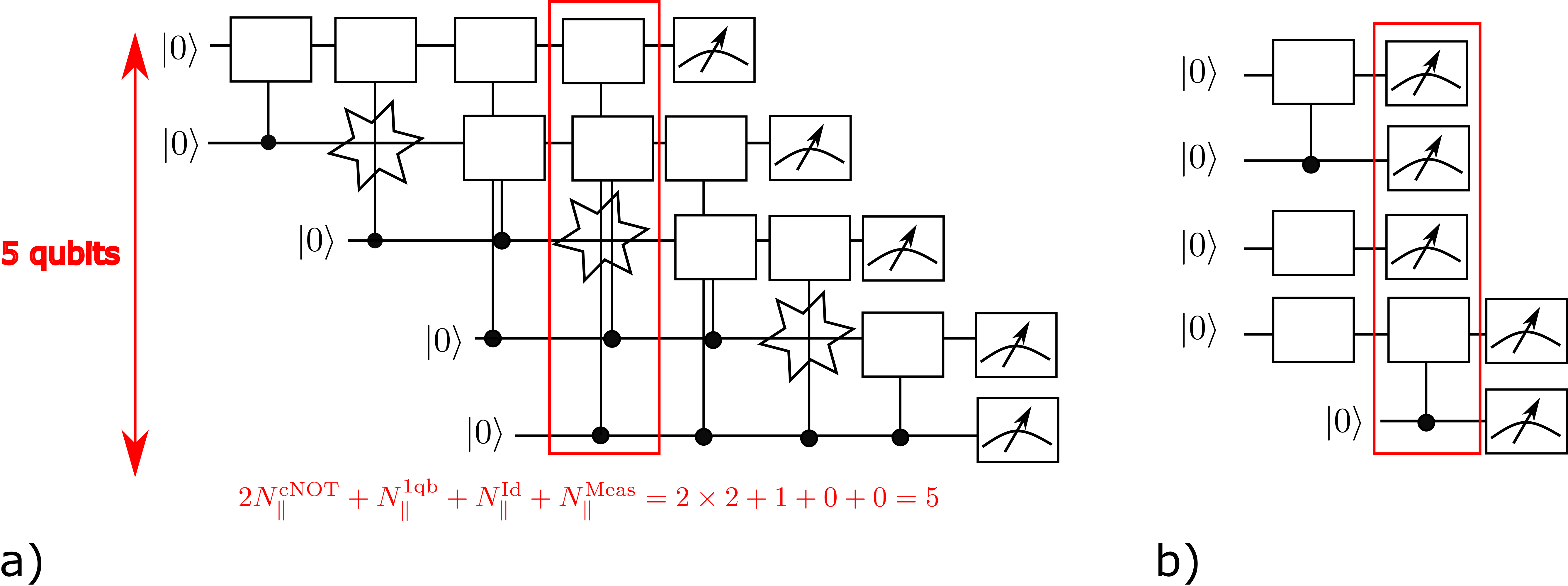}
\caption{\textbf{a)} How the number of qubits is related to the maximum number of gates acting in parallel. On this figure, we illustrate an algorithm composed of two-qubit and identity gates. We notice that the number of qubits required by the algorithm (here they appear to be logical ones) is equal to $\max_i[2 N_{L,\parallel,i}^{\text{cNOT}} + N_{L,\parallel,i}^{\text{1qb}}+N_{L,\parallel,i}^{\text{Id}}+N_{L,\parallel,i}^{\text{Meas}}]=5$, it happens on the red rectangle (but also on the following timestep). \textbf{b)} Intuitive understanding of why we can expect that $Q_L \approx \max_i \left[2 N_{L,\parallel,i}^{\text{cNOT}} + N_{L,\parallel,i}^{\text{1qb}}+N_{L,\parallel,i}^{\text{Id}} \right]$. The red box rectangle satisfies $2 N_{L,\parallel}^{\text{cNOT}} + N_{L,\parallel}^{\text{1qb}}+N_{L,\parallel}^{\text{Id}}+N_{L,\parallel}^{\text{Meas}}=Q_L$ exactly. But because the measurement gates dominate on this timestep, and that they are identity, single or two qubit gates on the timestep before (we assume that a qubit will never directly be measured as it would be useless for an algorithm), the result is close to $\max_i \left[2 N_{L,\parallel,i}^{\text{cNOT}} + N_{L,\parallel,i}^{\text{1qb}}+N_{L,\parallel,i}^{\text{Id}}\right]=4$.}
\label{fig:bottleneck_algo_physical_qubits}
\end{center}
\end{figure}
\FloatBarrier
\subsubsection{Expression of the power}
\label{sec:expression_power_ch4}
Now, we can regroup the different things we calculated to express the power at the logical level. On the physical level, it is a function: $P=Q_P a + \sum_{i \in \mathcal{G}_{P}} N^{(i)}_{P,\parallel} b^i$. Assuming that all single-qubit gates consume the same amount of power, using \eqref{eq:average_number_physical_gate_timestep1}, \eqref{eq:average_number_physical_gate_timestep2}, \eqref{eq:average_number_physical_gate_timestep3} and \eqref{eq:number_physical_qubits_from_logical}, this expression can be "translated" at the logical level:
\begin{align}
&P=Q_L t(k) a + \left( b^{\text{1qb}} u(k) + b^{\text{cNOT}} v(k)+ b^{\text{Meas}} w(k)\right) (\overline{N}_{L,\parallel}^{\text{1qb}}+\overline{N}_{L,\parallel}^{\text{Id}}+2 \overline{N}_{L,\parallel}^{\text{cNOT}}) \label{eq:logical_power}\\
& t(k) \equiv \frac{112}{199} 199^k\\
& u(k) \equiv \frac{7}{48}\left(\frac{199}{3}\right)^k\\
& v(k) \equiv \frac{1}{3} \left(\frac{199}{3}\right)^k\\
& w(k) \equiv \frac{7}{48} \left(\frac{199}{3}\right)^k
\end{align}
We recall that we only focused on the transversal logical gates. An algorithm would, in principle, require other gates than cNOT, what we called single-qubit gates here (which were Hadamard or Pauli), identity, or measurement gates. Those other gates, such as the $T$-gate, have to be implemented with a completely different protocol which goes beyond the scope of this thesis. Taking them into account would, in principle, add additional terms in this power function. We also made an implicit assumption here: all the single-qubit physical gates will consume the same amount of power consumption (we did not make any distinction in our calculation between Hadamard or Pauli gates, for instance). 

All those precisions being said, we see that we were able to define the power at the logical level with some adaptations. We have to keep in mind that each term involved in this equation could vary from a coefficient roughly between $1$ and $10$ as we only estimated quantities in their order of magnitudes (but we believe that our estimations are probably closer to the $1$ than $10$). Now, we can recall the most important hypothesis that led us here. For the static consumption, we only took into account the ancillae qubits because they give the dominant contribution to the total number of physical qubits. We also assumed that no recycling is being performed for all the ancillae qubits that are inside a given logical gate and that any ancilla used in a logical gate can be reused only when \textit{all} the others ancillae of this logical gate have also "finished their job". We also assumed that the number of physical qubits is proportional to $Q_L$. This will be true for some algorithms and only an approximation for some others (it depends if the moment when the maximum number of logical qubits are involved at the same time corresponds to a moment some logical qubits are being measured). If this happens to not be true, this approximation should still be quite fair (see the discussions on this subject in section \ref{sec:estimation_physical_qubits}). For the dynamic costs, the main assumption that we did is to neglect the "variation" in the concatenated construction between the different kinds of logical gates. It allowed us to have a number of physical gates proportional to $(\overline{N}_{L,\parallel}^{\text{1qb}}+\overline{N}_{L,\parallel}^{\text{Id}}+2 \overline{N}_{L,\parallel}^{\text{cNOT}})$. And we also did not make any distinction in the power consumption for each type of physical single-qubit gate. Now, we believe that the most "critical" assumptions, the ones that would change by more than $100 \%$ some of the numbers in \eqref{eq:logical_power} are (i) the assumption that the number of physical qubits is proportional to $Q_L$ (the fact it a very good, or "rough" approximation is algorithmic dependent), (ii) the assumptions behind the qubits recycling (which can actually be seen as an assumption on how the computer works rather than an approximation of the physics). The other considerations should not change our results quantitatively. 

Now, we comment on the expressions we obtained. We notice here the functions $t$,$u$,$v$,$w$. They represent the increasing power cost of the logical elements when more and more error-correction is being performed (because more and more physical components are there). They are the "translation" of the energetic cost from the logical to the physical world (where the power is "really" being spent). In the section \ref{sec:formulation_problem_physical}, we said that we would neglect the dynamic cost related to measurement. This was a fair assumption to do because physical measurements only occurred on the boundary of the algorithm (and will thus have a negligible "weight" in terms of parallel operation compared to all the other gates occupying the "volume" of the algorithm). Here, as we are doing quantum error correction, physical measurements are done "very frequently in time" (thus inside the "volume") to detect errors. In principle, we can no longer neglect them\footnote{Actually as in practice measurement require much less dynamic power than single-qubit gate we will still be able to neglect them, but we wanted to emphasize on this difference anyway.}, this is why there is a cost per physical measurement gate represented by $b^{\text{Meas}}$ in \eqref{eq:logical_power}. We also notice that $t(k)/u(k)$, $t(k)/v(k)$ and $t(k)/w(k)$ diverge in the infinite concatenation level. This is based on our assumption that the ancillae qubits are not being recycled within a logical gate. Because of that, the static consumption will dominate in the $k \to \infty$ limit. In practice, we can expect that with smart recycling scenarios, those ratios would actually converge toward some constant. We can also see that the dynamic consumption is proportional to the total quantity $(\overline{N}_{L,\parallel}^{\text{1qb}}+\overline{N}_{L,\parallel}^{\text{Id}}+2 \overline{N}_{L,\parallel}^{\text{cNOT}})$. It means that the exact logical gate that is being implemented in the algorithm doesn't matter; what matters is the average number of logical qubits that are "actively" participating on an information processing level. It also allows us to notice that in the opposite of what usually happens without error correction, logical identity gates require power consumption. The reason is that it costs energy to protect the information because of the error-correction subroutines.
 
\section{Conclusion}
In this chapter, we proposed a formulation for the question of the resource cost of quantum computing. A resource is, in principle, any cost function, but here we used the power consumption as our resource of interest. This formulation is based on a minimization under constraint: we ask to minimize the power it costs to run an algorithm \textit{under the constraint} that the noise contained in its output, as quantified by some metric, is not higher than a given target chosen by the experimentalist. It allows to indirectly assess that the success of the algorithm will be good enough.

Formulating the question this way allows optimizing all the parameters the experimentalist can tune in order to reach this minimum. It includes parameters related to the hardware part of a quantum computer (qubit temperature, attenuation on the line), but also to the software part in principle (how the algorithm is implemented). For this reason, in the section \ref{sec:establishing_fullstack_framework_physical} we proposed a full-stack framework in which the engineering, algorithmic and quantum physics parts involved in the quantum computer are all represented. Solving the problem of minimization under constraint, the various competition phenomenon that might be occurring, possibly between the different fields of expertise involved in the design, are automatically taken into account, and the most energy-efficient architecture of the computer can, in principle, be found. We illustrated it with a concrete example in which the most energy-efficient implementation of an algorithm can be found in the section \ref{sec:energetic_cost_nisq_algo}. This optimal implementation could be found because we related "at the same time" algorithmic characteristics to "engineering" ones (here represented by the temperature $T_{\text{Q}}$ which plays a role in the noise but also in the cryogenic Carnot efficiency). This optimal implementation results from the intrinsic competition there exist between the noise and the power: both cannot be low at the same time, which is at the heart of the physics behind the optimization we propose.

In the section \ref{sec:min_pow_increases_with_accuracy}, we also saw that for some reasonable hypotheses (that we expect to be true for many physical systems, and that are at least true for the ones we consider), the minimum power consumption is an increasing function of the accuracy targetted for the algorithm: the lowest the noise in the algorithm answer is desired, the more power it will cost. We then showed in section \ref{sec:equivalence_minpow_maxacc} that under those same hypotheses, the questions of asking the maximum accuracy one could get for a given resource and asking what is the minimum power one has to spend in order to reach a given accuracy are mathematically equivalent.

Finally, in section \ref{sec:adapting_framework_FT}, we provided the tools that are required to phrase our approach in the context of fault-tolerant quantum computing, where both the power function and the metric quantifying the noise could be expressed on the logical level. We also did quantitative estimations of the number of physical qubits and gates required. The elements provided in this section, and more generally in this chapter, will be used in the next chapter, where we will finally try to estimate the energetic cost of a realistic full-stack model of a superconducting quantum computer.
\begin{appendices}
\chapter{Recycling of the ancillae on the logical level}
\label{app:recycling_ancilla_logical}
Here, we justify why multiplying \eqref{eq:number_physical_qubits_k_concat} by $4$ should give a good estimation of the number of ancillae required after $k$ concatenations. We recall that the hypotheses behind the calculation are that (i) the ancillae required to implement a logical gate are not recycled within this logical gate, and (ii) on the logical level, the recycling is done such that the physical ancillae inside a given logical gate can only be reused when \textit{all} those physical ancillae have finished their work (i.e., they have all been finally measured).

For this, we recall that the ancillae taking the longest time to be used are the $X$ and $Z$-syndrome ancillae. For $k=1$, they last for $9$ physical timesteps. From now on, we only reason with the $X$-syndrome ancillae as the reasoning for the $Z$ ones would be similar. In practice, $5$ physical timesteps are before the logical gate implemented, $3$ in parallel (a logical gate lasts for $3$ timesteps for $k=1$, there is the 1-Ga implemented then the cNOT for the X-syndrome and the cNOT for the Z-syndrome, see figure \ref{fig:steane_method_full} and \ref{sec:number_timesteps_for_logical_and_ancilla}), and $1$ after\footnote{For the X-syndrome we must do an $X$ measurement. But as we can only do $Z$ measurements by hypothesis, we must apply a Hadamard and then do a $Z$ measurement. Looking at the circuit on the figure \ref{fig:steane_method_full}, we would see that the final $Z$ measurement is done "after" the logical gate has finished being implemented on the data qubits.}  which means that the ancillae are overlapping with $4$ consecutive logical gates in principle. This is represented in the figure \ref{fig:logical_ancilla_recycling}, and this is the number we took in the main text.
\begin{figure}[h!]
\begin{center}
\includegraphics[width=1.0\textwidth]{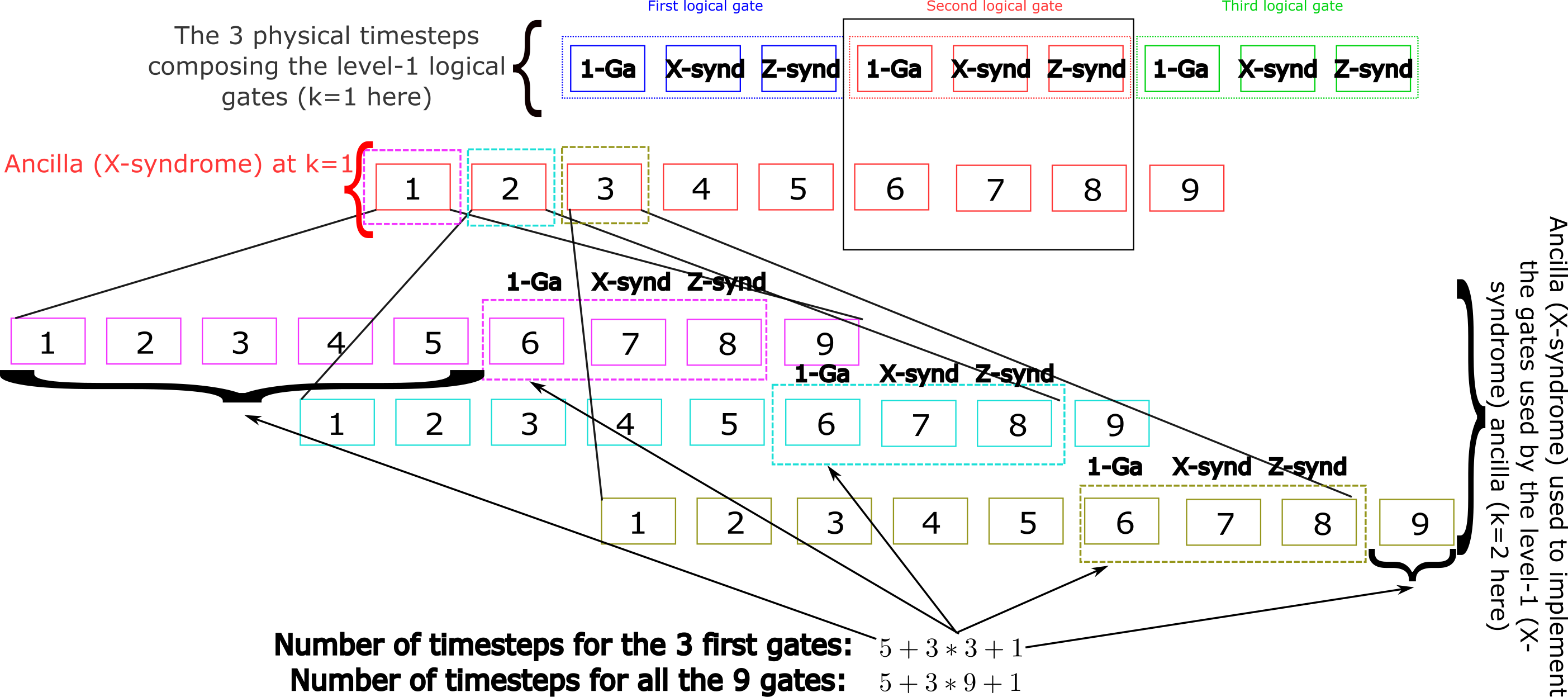}
\caption{Estimation of an upper bound on the number of physical ancillae required in an algorithm, taking into account the overlap of preparation with preceding and following logical gates. On the first line, we represented the three physical timesteps composing the different blue, red, and green logical gates (3 timesteps because $k=1$ at this point: the 1-Ga is implemented followed by the cNOT of the X-syndrome and the cNOT of the Z-syndrome, see the figure \ref{fig:steane_method_full}). The second line represents the $9$ physical timesteps required by the ancilla within the red logical gate of the first line. Those ancillae actually correspond to the $X$-syndrome ancillae, which are the ones lasting for the longest amount of time. $5$ timesteps are occurring before the gate and $1$ after (they are the reason for the potential overlap with the ancillae of the preceding and following logical gates). Concatenating once again, each physical gate of the second line will be replaced by a 1-Rec and will thus be error corrected, which requires ancillae. The first of those gates (the purple dotted box on the second line) will then need ancillae, represented in purple in the third line. They will be implemented $5$ timesteps before the transversal implementation of the gate (1-5 purple boxes, the transversal 1-Ga implementation is the 6'th box) and $1$ after the two syndromes have finished (9'th purple box). The second of those gates (cyan) also need ancillae when protected by error-correction, which will be prepared $5$ timesteps before the transversal implementation (the cyan 6'th box), and $1$ timestep after the two syndromes have finished. Same principle for the yellow, and in principle for the rest of the level-1 gates acting on the ancillae (the red gates from 4 to 9 on the second line). The number we need to estimate is the number of timesteps between the moment when the ancillae used to protect the first gate of the second line is "injected" in the calculation and the moment when the ancillae used to protect the last gate of the second line has finished working. More precisely, we need this analog value for further concatenations. For $k=2$, the number is equal to $5+3*9+1$ as shown in the image, and for an arbitrary $k$ it is expressed in \eqref{eq:N_timestep_X_syndrome_result}.}
\label{fig:logical_ancilla_recycling}
\end{center}
\end{figure}
Now, our goal is to justify that this $\times 4$ multiplication is still a good estimation, even for further concatenation levels. If we imagine concatenating another time, each of the physical gates applied on the ancillae on those $9$ physical timesteps will be protected, which means replaced by a 1-Rec, and will involve new physical $X$-syndrome ancillae lasting for $9$ physical timestep. This is represented on the figure \ref{fig:logical_ancilla_recycling}. For instance, the purple dotted box of the second line corresponds to a gate acting on the ancilla, on the first timestep in which it is being manipulated. This gate, when protected by error-correction, will also contain $X$-syndrome ancillae (represented by the third line containing $9$ purple boxes), which will also be prepared $5$ timesteps before the gate protected is implemented and will be measured $1$ timestep after the protected gate has finished its implementation. Analog explanations for the dotted cyan, yellow gates acting on the level-1 ancillae, and it would be the same explanation for the level-1 gates between 4 and 9.

In the end, we can estimate that the number of timesteps required by the physical ancillae satisfies, after $k$ level of concatenations:
\begin{align}
&N^{(k)}_{\text{Timestep}}=3 N^{(k-1)}_{\text{Timestep}}+6
\label{eq:N_timestep_X_syndrome} \\
&N^{(1)}_{\text{Timestep}}=9
\end{align}
For instance, for $k=2$, the total number of timesteps for the ancillae is $3$ times the number of timesteps the ancillae used on the $k=1$ level, on which we must add "boundary conditions": the ancillae used to protect the first gate of the $k=1$ level will last for $5$ timesteps before this gate, and the ancillae used to protect the last gate of the $k=1$ for $1$ timestep after. It gives the additional $6=5+1$. The reasoning remains when concatenating on further levels, and we thus have \eqref{eq:N_timestep_X_syndrome} that is satisfied. Solving this recursive relation, we obtain:
\begin{align}
N^{(k)}_{\text{Timestep}}=3^{k}*4-3
\label{eq:N_timestep_X_syndrome_result}
\end{align}
Using the fact that a logical gate concatenated $k$ times lasts for $3^k$ physical timestep, we obtain that the X-syndrome ancilla of a given logical gate will overlap with $\approx N^{(k)}_{\text{Timestep}}/3^k \approx 4$ consecutive logical gates.
\chapter{Energetic cost of the rejected ancillae}
\label{app:ancilla_rejected}
Here, we justify why we neglect the number of rejected ancillae and the associated gates applied on them. They are rejected in the case the verification failed, see figure \ref{fig:steane_method_full}. To say that we neglect the number of rejected ancillae means for us that we can do the energetic estimations assuming that the ancillae prepared are always accepted by the verifiers and that no ancillae are required in case of a failure event. We recall that as one syndrome ancilla must necessarily "pass" the verifier test at the moment it is needed, a given number of syndrome and verifier ancillae must be prepared \textit{in parallel} to be sure that a syndrome ancilla will be accepted "on the appropriate moment"\footnote{What we mean is that if it happens that one syndrome ancilla is being rejected, because qubits have finite lifetime we cannot wait the time to prepare another syndrome and verifier ancillae and to make them pass the test: we need to be sure to have a syndrome ancilla ready to perform the syndrome measurement right when needed, which could induce a large overhead in principle.}. It could, in principle, increase the cost of the physical resources required in a significant manner. 

We will first show that the \textit{average} number of the extra ancillae needed in the case the first verification fails counts for a negligible part in the total counting of ancillae. The intuitive reason is that the ancillae will be more likely to be accepted than rejected. Because of that, the number of extra ancillae required in case of one or multiple rejects is expected to be lower than the number of ancillae anyway required (i.e., the number of ancillae we would need under the assumption the verification always succeeds). But reasoning with an average is not enough as we cannot put an average number of ancillae in a computer. Then, we will give the reasons that allow us to think that because different logical qubits can share a common reservoir of extra ancillae, the "real" number of ancillae physically present in the computer in the case the verification fails can also be neglected for our purpose of order of magnitude estimations.
\section{The average number of extra ancillae required in case of rejection is negligible}
Let us call $p$ the probability to reject a syndrome ancilla. The probability to have an ancilla accepted on the $n+1$'th try is: $p^n(1-p)$. 

We call "syndrome-spot" a place where an ancilla is either accepted or rejected. At each syndrome-spot, the average number of extra ancillae required in case of rejection is:
\begin{align}
\overline{N}_{\text{extra}}=\sum_{n=1}^{+\infty} n \times p^n(1-p)
\end{align}
Indeed, if the ancilla is accepted after one failure (probability $p(1-p)$ that it occurs), only $1$ additional ancilla would have been needed, if it has been accepted after two tries (probability $p^2(1-p)$ that it occurs), $2$ additional ancillae would have been needed etc. Using the fact $\sum_{n=0}^{+\infty} n p^n = p/(1-p)^2$, we deduce:
\begin{align}
\overline{N}_{\text{extra}}=\frac{p}{1-p} \leq 1 \text{ for } p \leq 1/2
\end{align}
Thus, at each syndrome-spot, there are fewer extra ancillae required in case of rejection than the number of ancillae that must be present under the assumption the verification always succeeds. Actually, $p$ is expected to be very small in practice (because physical failure rates are below the threshold). A pessimistic estimation for $p$ can then be taken equal to $\approx 10 \eta_{\text{thr}} \approx 10^{-3}$, which makes $\overline{N}_{\text{extra}} \approx 10^{-3} \ll 1$. Indeed, there are about $\approx 10$ physical gates involved in the ancilla preparation. Assuming that the failure probability of those gates is about $\eta_{\text{thr}}$ (worst-case scenario), we would have $p \approx 10 \eta_{\text{thr}}$ being the probability that verification fails for the physical ancilla. An implicit assumption behind this calculation is that we only did one concatenation level (because there are different levels to consider for the ancillae when $k>1$). We will consider a more general case (and we will do a more accurate estimation) in what follows.
\section{The overhead required in case ancillae are rejected can be neglected}
Up to this point, we reasoned with a statistical average approach. However, we do not have an "average" number of ancillae in the computer but a fixed value of them. Reasoning with an average is in principle not enough because what could happen is that even if it is "very unlikely" to need, for instance, $10$ extra ancillae on a given syndrome spot, those ancillae might anyway be here "in the case" this verification fails, even if it is a rare event. We need a more accurate estimation to be sure that we can neglect the number of extra ancillae here in case the verification fails. This is the goal of what follows. One of the strategies behind what we present is to use a common reservoir of extra ancillae for multiple logical qubits. Because it is unlikely that \textit{all} the logical qubits would have their ancillae rejected at the same time, and given the typical value of the probability of error of a logical gate in all the examples we will consider, the number of ancillae in this reservoir is expected to be small compared to the number of ancillae that would strictly be required by the logical qubits sharing this reservoir, in the case the verification always succeeds. We would, however, like to emphasize on the fact that this part is still a work in progress (we are not sure if what we present is entirely rigorous, we need some further verifications), but we hope that the reader will be convinced enough by what is already written.

For our explanations, we first consider only doing one level of concatenation; we generalize after. We consider that a group of $Q_L^0 \leq Q_L$ logical qubits are \textit{sharing} $M$ \textit{extra} ancillae that constitute a reservoir used only in the rare cases the verifications are failing. We assume that $Q_L^0 p \ll 1$ (this is why $Q_L^0 \leq Q_L$, we took it in order to be able to reason in a perturbative manner in what follows). The number of syndrome spots inside those $Q_L^0$ logical qubits being roughly equal to the number of logical qubits (up to a factor of $2$), we will assume that they are identical in what follows. If we have $M < Q_L^0$, then we can, for our order of magnitude estimations, neglect those $M$ extra ancillae against the number of ancillae that are needed anyway in case the verification always succeeds. It would allow the calculations made in the main text to be valid. Our goal is then to justify that $M < Q_L^0$ which requires finding $M$.

In order to find it, what we can do is to calculate the value of $M$ such that the probability that we actually needed $M+1$ ancillae (i.e., we do not have enough ancillae "in reserve") is lower than the probability $p_L$ that a logical gate fails. If it is the case, it would mean that the dominant reason why a logical gate fails is not because of a lack of ancilla, and then the calculation will succeed often enough with only having those $M$ extra ancillae in the reservoir.

With $1$ extra ancilla, the calculation can resist to \textit{one} logical gate applied on one of those $Q_L^0$ logical qubits that failed because it didn't have enough ancillae (this extra ancilla will be used by this logical gate). Thus, having $M=1$ extra ancilla makes the probability that a logical gate fails \textit{because not enough ancillae were in the reservoir} being equal\footnote{Actually, dominated by (because it cannot resist to more errors but those events have a much smaller probability of occurring).} to the probability to either have two different logical gates applied on two different logical qubits that failed because they didn't have enough ancillae, either one logical gate applied on a logical qubit that failed twice (i.e., had two ancillae rejected) which is equal to $Q_L^0 p^2+ \binom{Q_L^0}{2} p^2 \approx (Q_L^0 p)^2$. If $2$ extra ancillae are being used, we can resist to two logical gates that failed because they didn't have enough ancillae\footnote{Or to one logical gate that failed because it had two ancillae rejected.}. Thus, the probability that a gate fails because it lacks ancilla is of order $(Q_L^0 p)^3$. In conclusion, with $M$ \textit{extra} ancillae, a logical gate applied on any one of those $Q_L^0$ logical qubits will fail because not enough ancillae are in the computer with a probability that is about $(Q_L^0 p)^{M+1}$. We need to make sure that this probability is lower than the probability of error of the logical gate that we aim, thus:
\begin{align}
(Q_L^0 p)^{M+1} < p_L \Leftrightarrow M > \frac{\ln(p_L)}{\ln(Q_L^0 p)}-1
\end{align}
Considering $p_L=10^{-12}$ (this is the lowest probability of error we will need to have in all the next chapter), for $Q_L^0=100$ (it allows us to have $Q_L^0 p = 10^{-1}$ which we consider being valid\footnote{Again this part is constituted of quick estimations that should be reinforced.} for our perturbative approach), we have, $M>11$. Thus $M<Q_L^0$, the extra ancillae do not dominate the number of ancillae that would be required under the assumption the verification always succeeds. The reason why is (i) because the probability of errors of the logical gate we want is not "too low", (ii) different logical qubits are sharing the same reservoir of extra ancillae, (iii) the probability that an ancilla is being rejected is actually low.

Now, this is only true for the first concatenation level. We would like to generalize to more levels. In order to do so, we can apply this same reasoning recursively. There is about $\epsilon=M/Q_L^0 \approx 0.1$ (taking $M=11$) extra ancillae required per ancilla needed in the case the verification always succeeds. The number of extra ancillae required in the reservoir after $k$ concatenation for $k=3$ (we take this example to illustrate the calculation) is equal to the number of extra ancillae required assuming the verification failed on either the first, second, or third concatenation level (or if it failed on multiple levels at the same time). Following the tree given in figure \ref{fig:neglecting_extra_ancilla_tree}, we can estimate this number of extra ancillae being equal to: $\epsilon^3+3\epsilon^2 + 3\epsilon \approx 0.3 < 1$. Thus for one ancilla required under the assumption the verification always succeeds, there is $0.3$ extra ancillae required for the "real" scenario. Thus, we can neglect this overhead in our calculations (both from the number of ancillae and the gates that would be applied to them). Considering that we will never do more than $5$ concatenations in the next chapter, the same reasoning can, in principle, be applied to show that we can neglect this overhead even for $5$ concatenations level (the extra number of ancillae would be $\approx 5 \epsilon < 1$). We recall again that this part needs to be re-checked properly (we did not have the time to do all the verifications at the moment this thesis was written).
\begin{figure}[h!]
\begin{center}
\includegraphics[width=0.7\textwidth]{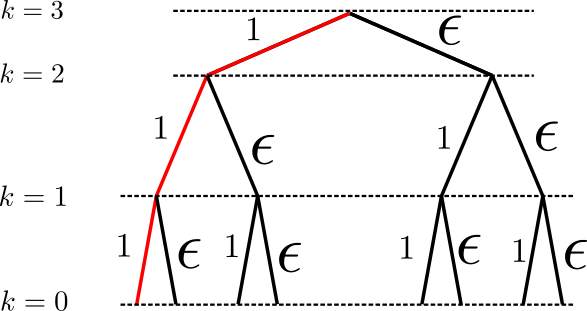}
\caption{Why we can neglect the rejected ancillae even up to $k=3$. The variable $\epsilon=M/Q_L^0 \approx 0.1$ represents the number of extra ancillae per ancilla "anyway required" (i.e the number of extra ancillae here in case some verifications are failing, per ancilla that are here under the assumption the verification always succeeds). The red line represents the situation in which the ancillae always pass the verification (what we considered in our calculations). At each level, there is a probability that the ancilla fail the verification. In this case, about $\epsilon \approx 0.1$ extra ancillae are required.}
\label{fig:neglecting_extra_ancilla_tree}
\end{center}
\end{figure}
\end{appendices}

\chapter{Full-stack approach to the energetic cost of quantum computing: application to superconducting qubit quantum computer}
In the previous chapter, we introduced the principle behind the energetic estimation of fault-tolerant quantum computing. We used examples to illustrate the method but we stayed on the level of toy models. In this chapter, we are going to go toward some more realistic models with the hope to get some orders of magnitude of power consumption we might expect in a fault-tolerant quantum computer\footnote{In this chapter, I made the major part of the work presented.}. 

The quantum computer we will model will be based on superconducting qubits. Such quantum computers are the ones that are currently being developed by Google (we can think about the Sycamore processor which has been used in a recent claim of quantum supremacy \cite{arute2019quantum} and Bristlecone \cite{villalonga2019flexible}), IBM that actually has a large number of quantum processors \cite{tornow2020non,garcia2017five}. Many other companies are also working (or are going to work) with superconducting qubits quantum computers such as Intel \cite{intelquantumcomputing}, Rigetti \cite{motta2020determining}, Alice \& Bob \cite{aliceandbob} but also the academic sector. QuTech has, for instance, two small-sized superconducting quantum computers: Starmon 5 and Spin 2 \cite{starmon5,spin2}. Among all those examples, we can say that our models are closer to the quantum processors used by IBM. Indeed, this company mainly uses the cross-resonance scheme \cite{sheldon2016procedure,malekakhlagh2020first,chow2011simple,kirchhoff2018optimized} in order to implement the two-qubit gates. This model of gates allows implementing two-qubit gates between qubits that are not necessarily nearest-neighbor (this is a requirement for the concatenated code construction) by connecting them with a quantum bus. Also, those gates can last for $160ns$  \cite{sheldon2016procedure}, which is close to the $100ns$ of duration we took for the two-qubit gates in our models. And finally, those gates can work with fixed-frequency qubits which is what we are going to consider as well. See \ref{sec:state_of_the_art_2qb} to see how we model such gates here.

In this chapter, we will develop a complete full-stack model for the quantum computer. It will include the classical electronics that is necessary to either generate signals, communicate with the laboratory, or amplify the measurement signals. We will also include the heat conduction of all the necessary cables. We will keep considering the heat dissipated in the attenuators. The model is detailed in the sections \ref{sec:hardware_model} and \ref{sec:noise_quantity}.

The minimization under constraint described in \ref{sec:separate_role_hardware_software_noise} will allow us to understand which characteristics of a quantum computer play the most important role in its power consumption. Between the temperature at which the qubits are, the total attenuation there is on the lines, and the temperature at which the signals are generated, what is the most critical parameter to optimize? Is it really necessary to put the qubits at very low temperature (i.e., close to $10mK$) in order to run a fault-tolerant calculation, and what is a low enough temperature quantitatively? Such questions will be studied in the sections \ref{sec:quantitative_interest} and \ref{sec:charac_algo}. In those two sections and in the section \ref{sec:making_energy_efficient}, we will also see which elements in the computer are participating in the most significant manner in the energetic cost: is it the dissipation in the attenuators, the electronics generating the signals or the heat conduction? It will allow us to establish some first strategies that seem promising in order to make quantum computing more energy efficient. In this work, we will also estimate quantitatively how much power consumption we can hope to save with our optimization compared to a non-optimized scenario in which "typical" values are provided. We will see that in regimes of high power consumption (i.e., bigger than the megawatt), our optimization can allow saving more than two orders of magnitude in power consumption (we discuss this in the section \ref{sec:quantitative_interest}).

Finally, in section \ref{sec:charac_algo} we will see how the shape of the algorithm (i.e., its logical depth and the number of logical qubits it contains) has an influence on the power consumption.
\section{Hardware engineering model}
\label{sec:hardware_model}
In the section \ref{sec:expression_power_ch4} of the previous chapter, we gave the generic expression for the power consumption that we recall:
\begin{align}
&P=Q_L t(k) a + \left( b^{\text{1qb}} u(k) + b^{\text{cNOT}} v(k)+ b^{\text{Meas}} w(k)\right) (\overline{N}_{L,\parallel}^{\text{1qb}}+\overline{N}_{L,\parallel}^{\text{Id}}+2 \overline{N}_{L,\parallel}^{\text{cNOT}}) \label{eq:logical_power_ch5}\\
& t(k) \equiv \frac{112}{199} 199^k\\
& u(k) \equiv \frac{7}{48}\left(\frac{199}{3}\right)^k\\
& v(k) \equiv \frac{1}{3} \left(\frac{199}{3}\right)^k\\
& w(k) \equiv \frac{7}{48} \left(\frac{199}{3}\right)^k
\end{align}
The coefficients $a$, $b^{\text{1qb}}$, $b^{\text{cNOT}}$, $b^{\text{Meas}}$ are characterizing the engineering aspects behind the quantum computer. They will describe how the power will depend on the number of \textit{physical} qubits (for $a$), on the number of \textit{physical} single, cNOT and measurement gates active in parallel (for $b^{\text{1qb}}$, $b^{\text{cNOT}}$, $b^{\text{Meas}}$). The coefficients $t(k), u(k), v(k)$ and $w(k)$ describe how the power cost will grow as a function of the quantity of error correction (quantified by the concatenation level $k$) that is being done. $t(k)$ represents how the number of physical qubits increases with $k$, and the other coefficients represent how the number of physical single, cNOT, and measurement gates increase as a function of $k$. Using the terminology introduced in the section \ref{sec:formulation_problem_physical}, $t(k)$ is scaling the static costs, i.e., the power that has to be spent whatever the algorithm implemented is doing. The coefficients $u(k), v(k)$ and $w(k)$ are scaling the dynamic costs, which depend on the logical gates implemented in the algorithm. The dynamic costs will only consume power when quantum gates are implemented on the qubits. We see that because of the error correction performed, the ratio of static over dynamic power consumption is fixed by the level of concatenation. This difference between dynamic and static power is something that has to be taken into account as it can significantly impact the energetic bill. For instance for $k=3$,  $t(k)/u(k) \approx 100$: there are two orders of magnitudes more physical qubits than physical single-qubit gates!\footnote{We implicitly assume $Q_L \approx (\overline{N}_{L,\parallel}^{\text{1qb}}+\overline{N}_{L,\parallel}^{\text{Id}}+2 \overline{N}_{L,\parallel}^{\text{cNOT}})$ in this statement which means that all the logical qubits are "doing something" for all the timesteps.} But for $k=1$, $t(k)/v(k) \approx 5$: there are only $5$ more physical qubits than two-qubit cNOT gates for one level of concatenation, we see that the conclusions we can extract are very dependent on the regime of parameters we are looking at. What this discussion illustrates is that in order to properly study the energetic cost of quantum computing to then make it energy-efficient, it is important to characterize both the static and dynamic costs as they can be significantly different.

Now, the only thing we need to determine from the engineering model are the values of the coefficients $a$, $b^{\text{1qb}}$, $b^{\text{cNOT}}$, $b^{\text{Meas}}$, and this is what we are going to estimate now.
\subsection{The global picture}
\label{sec:global_picture}
In this chapter, we are going to follow the same philosophy as what we did in the previous chapter: we will only consider the energetic cost required to remove the heat introduced within the cryostat; we call it the cryogenic cost here. Thus, any energy spent outside (i.e., at $300K$) will be considered as "free" in our estimation. The reason why we focus on the cryogenic cost is because any power that is spent inside the cryostat is usually transformed into heat that has to be evacuated. Then, because removing $1W$ of heat will cost more than $1W$ of work (i.e., electricity) at cryogenic temperatures (we can think about Carnot efficiency to understand that), focusing on the cryogenic cost would give the dominant power consumption\footnote{Of course if some elements at $300K$ consume a lot of power, they might dominate the cryogenic cost. But in practice, we believe that the dominant power consumption will anyway be related to cryogenics.}. Now, many different architectures of quantum computers could exist; we need to make some choices, we cannot be entirely general. One thing to have in mind is that a quantum computer should be seen as a hybrid quantum/classical computer. Indeed, classical electronics is required to generate the signals that will reach the qubit but also to manage the execution of the quantum algorithm. For instance, it has to keep in memory all the sequence of gates that have to be applied on the physical qubits. It must also perform some calculations that allow deducing the value of the syndrome from the measurement outcomes. One open question is to know where it is optimal to put the electronics, in terms of simplicity for the design but also in terms of energy efficiency \cite{bardin201929,patra2020scalable}. Two extreme scenarios can be considered. First, we could imagine merging all the classical electronics with the quantum core. In this scenario, the engineering would be much more simple (as it would remove a great number of cables). It would also remove the major part of heat conduction because we would only need one cable between $300K$ and the quantum core giving the instruction about which algorithm to perform, another cable extracting the measurement outcomes provided at the very end of the algorithm, and one cable providing the power to the electronics\footnote{This is probably a little bit extreme but it is to give the idea: the number of cables required would be very small.}. However, classical electronics dissipates a large amount of heat, and if we assume putting it at the qubit temperature (thus in the $10-100mK$ range), the quantity of heat to remove would be phenomenal: it would be an inefficient architecture. The other extreme is to put all the electronics at $300K$. In this case, we face the opposite problem: we would need to bring a large number of cables from the laboratory to the quantum core. They would bring a very large amount of heat to evacuate, and it might not be very energy efficient either. All this explains why there are strategies considered as being better for the scalability, which consists in putting the electronics that is generating the signals at some intermediate temperature between $300K$ and $T_{\text{Q}}$ (the qubit temperature). We call this temperature $T_{\text{Gen}}$ and it is typically considered being around $4K$ \cite{bardin201929,patra2020scalable}; the aspects of information processing being kept at $300K$. Doing this strategy, a large number of cables would be required between $T_{\text{Gen}}$ and $T_{\text{Q}}$, but the associated temperature gradient being lower, the heat flow would be less a problem\footnote{Typical heat conduction models involve a heat flow proportional to $T_a^i-T_b^i$ where $T_a$ and $T_b$ are the temperature of the two ends of the cable, and $i$ some positive power. Thus a low temperature gradient involves a low amount of heat conduction.}. Furthermore, as soon as the cable is below $\approx 10K$, superconducting microwave cables can be used, which are associated with a very low heat conduction flow \cite{krinner2019engineering,mcdermott2018quantum}. On the other hand, a much smaller amount of cables would be required between $300K$ and $T_{\text{Gen}}$ because those wires would only bring digitized information about which signal should be generated for which qubit (and those cables can be optical fibers which are insulators and thus also associated to a low heat flow). 

Outside of heat conduction and signal generation, an important question to answer is what is the optimal qubit temperature and attenuation level that should be chosen in the architecture. Choosing an attenuation close to $T_{\text{Gen}}/T_{\text{Q}}$ \cite{johnson2012optimization,reed2014entanglement} and putting the qubits at $T_{\text{Q}}=10mK$ is a choice that is usually considered. But is it necessarily the best one in terms of power consumption? If it happened to be possible to put the qubit at some higher temperature, would we be able to save a large amount of power consumption? All those questions are important to answer in order to know how the computer should be designed for the very goal of reducing power consumption, but outside of pure energetic considerations, it is also important to have access to this information as it can be used for other practical reasons. To give an example, different cooling technologies are associated with different temperature ranges \cite{weisend2016cryostat,enss2011tieftemperaturphysik}, and having the information that putting the qubits close to $10mK$ is not strictly necessary can be important information for the design.

In summary, here, we will consider a quantum computer where the electronics generating the signals will be \textit{inside} the cryostat, at a temperature $T_{\text{Gen}}$. Our goal will then be to optimize this temperature along with the qubit temperature $T_{\text{Q}}$, the total attenuation $A$, and the concatenation level $k$ in order to minimize the power consumption through our formulation of minimization under constraint. We will also try to understand what are the most important characteristics to optimize in the quantum computer and which elements are contributing to the energetic cost in the most dominant manner.
\subsection{The architecture we consider}
Now, we give more precisely the elements we are going to consider in our energetic estimations. As explained before, as we reason with the cryogenic cost, we have to know how much heat they dissipate, which will tell us how much electrical power will be required to evacuate that heat. We will then perform some approximations that will allow us to neglect some of those costs. The architecture of the computer we consider is represented in figure \ref{fig:quantum_computer_archi_cable_with_lab}, and we now list the sources of heat dissipation it contains. We should add that this list is extrapolated from typical quantum computer architectures (and what is planned for the near future) behind superconducting qubit platforms, such as the computers from Google or IBM, but many of the sources of heat dissipation given below will have analogs in other platforms.
\begin{description}
\item [Heat dissipation from signal attenuation:] We introduced it in section \ref{sec:single_qubit_example}. The signals must be attenuated in order to remove the thermal noise which introduces heat.
\item [Heat conduction in all the cables connecting to the qubits:] All the cables within the quantum computer are conducting heat. The associated heat load must also be evacuated by the cryostat. In our model, we use conventional coaxial microwave cables for $T>10K$ and superconducting cables (microstrip lines to be precise) below as they are associated with a lower heat load; the precise model is described in appendix \ref{app:cable_models}.
\item [Heat dissipation from signal generation and digitization:] Generating signals has an energetic cost that is usually higher than the energy contained in the generated signal. If the signals are generated inside of the cryogenics, the heat the electronics dissipates has to be evacuated. It corresponds to the DAC represented on figure \ref{fig:quantum_computer_archi_cable_with_lab}. We must also digitize the readout signals in order to read them in the laboratory. It corresponds to the ADC represented on this same figure.
\item [Heat dissipation from signal amplification:] The typical amplitude of the signals containing the measurement results about the state of the qubits is low such that the thermal noise might be an issue when those signals are read. In order for the signals to remain with a high signal over noise ratio, they must be amplified at low temperatures. This process dissipates heat.
\item [Heat dissipation for multiplexing and demultiplexing:] To reduce the heat introduced by the thermal conduction, one typical strategy is to perform multiplexing and demultiplexing. It consists in putting information that should be addressed to different components into a single cable. For digital signals, it requires some electronics that will "encode" (multiplex) data so that a lot of information can be sent into one cable. It also requires after to "decode" (demultiplex) this data. The electronics performing those operations will cost energy and thus dissipates heat. It corresponds to DEMUX-MUX on the figure \ref{fig:quantum_computer_archi_cable_with_lab}.
\item [Heat dissipation from Joule effect in DC cables:] The classical electronics that is put inside of the cryogenic requires a DC power, provided by some cables, in order to work. If a large amount of power has to be provided, the Joule effect induced by the resistance of those cables might introduce heat that we will need to evacuate.
\end{description}
\begin{figure}[h!]
\begin{center}
\includegraphics[width=1\textwidth]{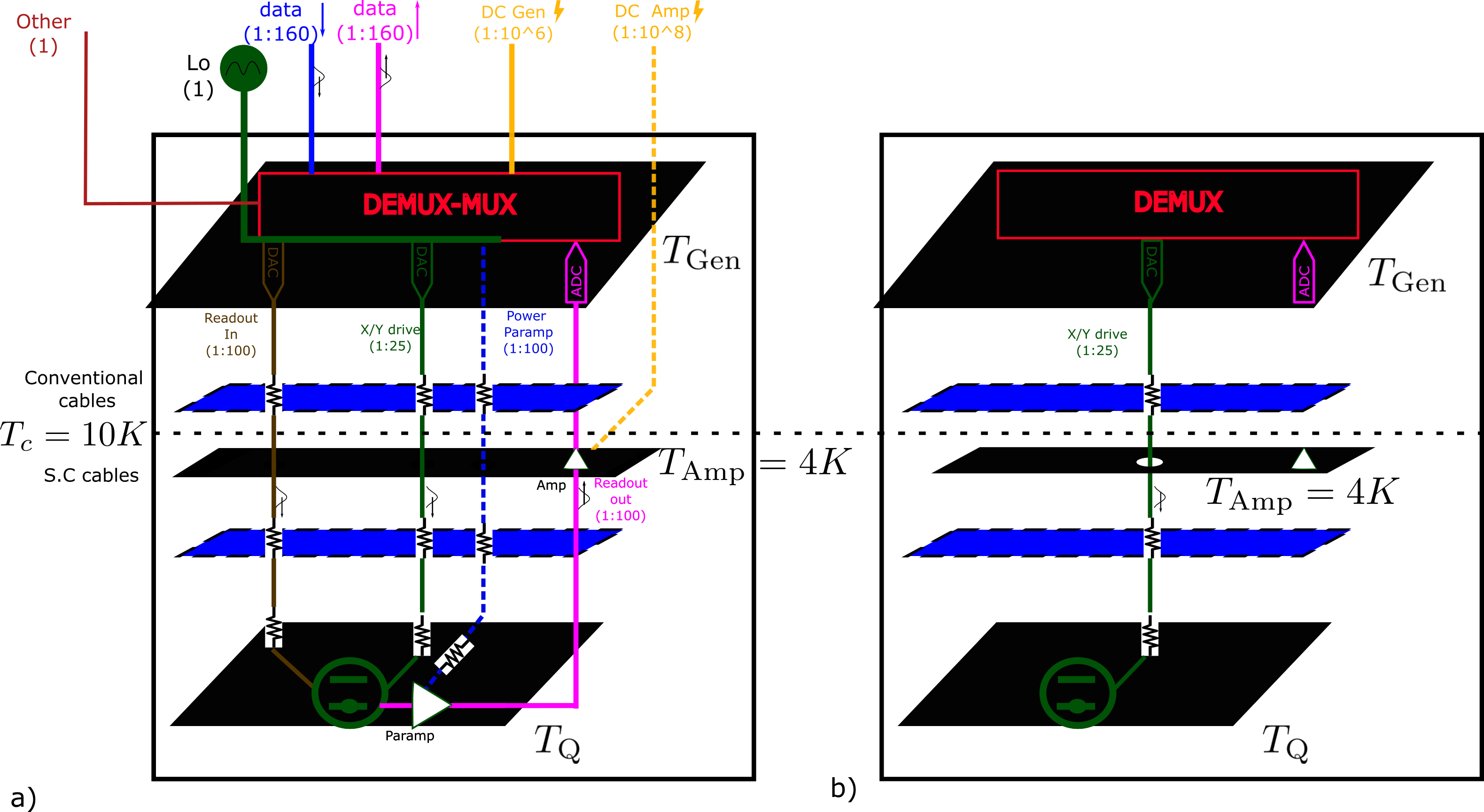}
\caption{\textbf{a)} Architecture of the computer our engineering model is based on. \textbf{b)} The simplified model where we retain only the dominant sources of heat. The numbers (1:X) on the cables give the level of multiplexing: it means that there are X cables per physical qubits. If (1) is written below a cable, it means that the associated cables do not scale with the number of physical qubits (and the associated number is typically low). Even though only one DAC and ADC per cable are represented, there is, in principle, one per physical qubit. Below $10K$, superconducting cables, which are associated with a lower heat flow, can be used. In our model, we then "switch" the material of the cable when the temperature goes below $10K$ (a cable between $300K$ and $5K$ would then be composed of a first portion in $[300K,10K]$ using a conventional metallic conductor and a portion in $[10K,5K]$ composed of a superconducting material, see appendix \ref{app:cable_models}. The hole in b) on the amplification stage represents the fact that the X/Y cables are not thermalized on this stage (to simplify our model as explained in the main text).}
\label{fig:quantum_computer_archi_cable_with_lab}
\end{center}
\end{figure}
\FloatBarrier
The left part of the figure \ref{fig:quantum_computer_archi_cable_with_lab} 
contains all the elements there are inside the computer that are dissipating heat; they exactly correspond to the elements we listed. The right part is what we modeled in our calculation. For reasons we are going to give, we think that the elements we removed from the left graph (to build the right one) can be neglected in the order of magnitude calculation that is our goal here. Here, we describe the architecture before simplification (i.e., the left part of the figure \ref{fig:quantum_computer_archi_cable_with_lab}), we will talk about the approximations we made after.

\subsubsection{Signal generation stage}
We first focus on the stage of temperature $T_{\text{Gen}}$ where signals are being generated. This stage is composed of various electronics components. For the quantitative values we are going to consider, we will consider that they are based on CMOS technology where experiments have shown that it is possible to put them at low temperatures \cite{patra2020scalable,le202019}. We first recognize elements that are called DAC (for Digital to Analog Circuits). Even though a unique DAC is represented here, one per physical qubit will be necessary. The role of a DAC is to generate the pulses that are going to drive the physical qubits. The way it works is that it receives digital information describing the shape of the pulse to generate. This information will be received as a bit-string describing a discretized approximation of the exact envelope we wish to generate. This approximated discretized version of the signal is generated by the electronics and filtered to obtain a "smooth" envelope. Finally, this envelope is multiplied with a signal being at the qubit frequency: the local oscillator, and sent to the qubits, the pulse necessary to drive the quantum gate has now been entirely created. To be a little bit more precise, two envelopes are actually generated; they are then multiplied by the local oscillator and a $\pi/2$ dephased version of the local oscillator. Those two envelopes allow making possible to control the amplitude of the signal but also its phase, which is necessary for single-qubit operations where the phase of the signal plays a role in the axis of rotation of the gate implemented. Also, as all qubits are not necessarily at the same frequency, the DAC can also use the local oscillator and "shift" its frequency in order to create a signal having the appropriate frequency of the qubit to be driven: it has been shown to be possible in a recent proposal \cite{patra2020scalable} on which our order of magnitude of heat dissipation will be based on. In the end, we understand why it is called a DAC: from a digitized description of the signal, it generates an analog one corresponding to the pulse that is going to drive the quantum gates. The ADC (Analog to Digital Circuit) does the opposite thing. It will "convert" the analog signals coming from the measurement outcomes of the qubits into digital ones. Indeed, the information about the qubit state after a measurement will be contained in a propagating analog pulse. The ADC is able to interpret the characteristics of this pulse and convert it into digital information. There is also one ADC per physical qubit that is required. 

We also recognize one element that we call DEMUX-MUX (which means demultiplexer-multiplexer). Its role is to (i) demultiplex the instructions coming from the laboratory about which gate has to be performed on which qubit and to transmit this instruction to the components that will generate the signals: the DAC. It is also (ii) multiplexing the data received by the multiple ADC. The goal of multiplexing and demultiplexing is to reduce the number of cables required between the laboratory and the signal generation stage. 

The digital data transiting between the laboratory and the stage of temperature $T_{\text{Gen}}$ are propagating into optical fibers represented by the blue and pink cables on the image. 

In addition to those optical fibers, we need wires containing DC voltages that power the electronics at $T_{\text{Gen}}$. They correspond to the yellow cable of the figure.

There is also one cable that will contain the local oscillator signal. This signal is monochromatic, at the "central" frequency of the qubits. What we mean by central frequency is that the qubits might not all be at the exact same frequencies: this is actually something to avoid in practice to avoid issues such as crosstalk \cite{theis2016simultaneous,schutjens2013single}. Transmons qubits can, for instance, be in the range of frequency $[3GHz,6GHz]$ \cite{krantz2019quantum}. In this case, the central frequency would be about $4.5 GHz$, which would correspond to the local oscillator frequency. The DAC would use this oscillator signal in order to generate the signals at the appropriate frequencies for all the qubits in the considered bandwidth.

Finally, a small number of cables (that does not scale with the number of physical qubits) is being represented under the name "Other". One example would be the clock signal required by the electronics \cite{bardin201929}.
\subsubsection{Measurement amplification}
We also recognize two amplification stages in black colors: one is done at $T_{\text{Q}}$ with parametric amplifiers \cite{aumentado2020superconducting}, while another one is done at $T_{\text{Amp}}$ with HEMT amplifiers\cite{korolev2011note} based on classical electronics (we will call them "classical amplifiers" in what follows). Classical amplifiers also require a DC power cable represented in yellow. The parametric amplifiers require a microwave signal. Those kinds of amplifiers amplify the signal while introducing a minimal amount of noise during the process, at the limit of the fundamental quantum noise \cite{krantz2019quantum}. They are made of superconducting components and do not dissipate heat by themselves. However, the pump signal they require needs to be attenuated and introduces heat dissipation. Then, we can notice all the microwave cables: some ("Readout In") are used to inject the readout signal, which is interacting with the qubit to be measured and then leaves the quantum core ("Readout Out"). It is on this "leaving part" that the signals are being amplified. 

\subsubsection{Driving cables}
We also have the microwave cables here to inject the resonant pulses that will drive the single-qubit gate ("X/Y drive"). In our approach, we assumed that the two-qubit gates would also be performed by injecting appropriate signals in the X/Y drive as we consider the cross resonance gate scheme. We refer to \ref{sec:state_of_the_art_2qb} and the introduction of this chapter as we already explained there the principle of such gates. We notice the absence of "Z" line, which corresponds to the line where signals are usually sent to change the frequencies of the qubits. Because we assume using the cross-resonance scheme (or a similar one that would only require to send signals on the X/Y lines to perform two-qubit gates), we can consider that all the qubits inside the computer are at fixed frequencies so that we don't need such line.

\subsubsection{Additional cryogenic stages}
Finally, we notice that in addition to the three black stages doing signal generation, containing the qubits, and amplifying the signals (thus the stages "fundamentally" required for the computer to work), we added extra stages in blue. They are here to help evacuate in a more progressive manner the heat dissipated into attenuators and the heat conduction. What we mean is that if we considered an extreme scenario where we would only have the qubits and the signal generation stage (let us forget about amplification), there would have a big attenuator at $T_{\text{Q}}$ dissipating a lot of heat there. The heat conduction would also have to entirely be evacuated at $T_{\text{Q}}$. It would be very energy inefficient as the total quantity of heat to evacuate would entirely be removed at a temperature having a very poor Carnot efficiency. It explains why we added those "extra" blue stages: they will help to "benefit" from higher Carnot efficiencies: the heat is evacuated more progressively. In practice, we will consider in what follows three additional stages to $T_{\text{Q}}$ and $T_{\text{Gen}}$ and their temperature will be determined as a function of $T_{\text{Q}}$ and $T_{\text{Gen}}$ as we explain in \ref{sec:reducing_number_tunable_parameters}. 
\subsection{Approximations leading to our model}
Now, we will make simplifications on the model represented on the left of the figure \ref{fig:quantum_computer_archi_cable_with_lab}. The simplifications we are going to do will not be valid for all the range of parameters we will consider varying in the curves we are going to present, but they will at least be valid for the most important points on those curves that we will mainly discuss. We will anyway provide in this section the regime of parameters where we believe our model is physical and justify why we think so.
\subsubsection{The efficiency of the cryostat is the Carnot efficiency}
\label{sec:efficiency_is_carnot}
This assumption is very important as it will scale the full energetic study. In our model, we assume that to remove $1W$ of heat at a temperature $T$ will cost $(300-T)/T$ of Watts of electrical power, i.e., we assume the best possible efficiency for heat removal: Carnot efficiency.

The cryostats used in the typical quantum computing experiments of today have efficiencies much lower than Carnot efficiency; our assumption seems unrealistic at first view. Actually, those poor efficiencies are explained because the cryostats used in quantum computing today are (i) of small size (i.e., they do not have to evacuate a very large amount of heat), and (ii) the cryostats are not fully optimized: current experiments are small prototypes. The cryostat that will have to be used in a large-scale quantum computer will have to evacuate a large amount of heat, and it should be well optimized (in particular for this reason).

Reaching cryostat efficiencies close to the Carnot efficiency is actually reachable for large-scale cryogenics. For instance, the cryogenic architecture used in the CERN has an efficiency that is about $30 \%$ to Carnot, and it evacuates heat at $4K$ \cite{parma2015cryostat}. More generally, a \textit{well-designed} cryostat can be expected to have an efficiency being more than $10 \%$ of Carnot in the full range $[10mK,300K]$ \cite{martin2021energy,green2019helium,irds}. This is because an important concept for cryogenics is the quantity of heat that has to be evacuated: there is a distinction between chip-scale and large-scale cryostats, the latter being much more power-efficient \cite{martin2021energy,green2019helium,irds,weisend2016cryostat}. For those reasons, in our calculations, we are taking a Carnot efficiency for all the stages. 
\subsubsection{We can neglect the heat conduction, and the Joule effect in all the cables between $300K$ and $T_{\text{Gen}}$}
\label{sec:neglecting_heat_cond_joules}
Let us show first that the cables between the laboratory and the signal generation stage introduce a negligible heat flow on the signal generation stage compared to the heat dissipated by the classical electronics there (DEMUX-MUX, and DAC/ADC). In order to simplify the discussion, we put aside the heat dissipated by DEMUX-MUX; we will consider it in a few paragraphs. The typical values of heat dissipation \textit{per physical qubit} for the DAC and ADC elements are in the milliwatt range \cite{bardin201929,patra2020scalable}, which is what we will consider here. The fact it is a consumption per qubit (and not per active gate) is the current state of the art of electronics. It is based on the fact that those elements will \textit{always} generate a signal to a given physical qubit (and the signal will be of zero amplitude if the qubit doesn't have to be driven). But it could be possible in principle to turn off the electronics when it is not being used, such that the consumption would be related to dynamic costs. We will study the consequences of that in the section \ref{sec:making_energy_efficient}.

Now, we need to estimate if the heat flow associated with the cables going toward the laboratory will be negligible compared to the heat this electronics dissipates. We start by discussing the optical fiber (blue and pink data cables). We give some reference values to keep in mind. A typical coaxial cable composed of a mixture of conventional conductor and insulator typically conducts $\sim 1mW$ of heat if it is put between $0K$ and $300K$ (see appendix \ref{app:typical_heat_flows}). Optical fibers (blue and pink data cables on the figure) will be composed of insulator material so that they will conduct less heat than that. But to simplify, we will reason with this worst-case scenario where one data cable would conduct $1mW$ of heat between the laboratory and the signal generation stage. The question is then: how many of those data cables do we need? The answer is related to how much information per unit time would have to transit in them and how much a typical optical fiber can transmit data per unit time. We treat the first question. Every $\tau_P^{\textit{fastest}}=25ns$, which corresponds to the duration of the fastest physical gate inside the computer (in our model, they correspond to single-qubit gates, see \ref{sec:state_of_the_art}), the DAC has to generate a new signal for a physical qubit. A reasonable value is based on the fact that a "good" pulse can be described with $16$ points in time \cite{bardin201929}, where on each time we can attribute a voltage amplitude encoded on four bits. It gives a total amount of information to describe this pulse being $4 \times 16=64$ bits. In the end, we obtain a quantity of information for each physical qubit that is: $64/(25ns)\approx 2.5 Gb/s$. Then, a good optical fiber can convey $400 Gb/s$ of information \cite{wade2018bandwidth}. It tells us that about one optical fiber can convey the information for $160$ physical qubits, and it would conduct less than $1mW$ of heat over a temperature gradient of $[300K,0K]$. So, the heat conduction \textit{per qubit} would be less than $10 \mu W$ which is lower than the typical consumption of the DAC/ADC: we can neglect it. We estimated the amount of information "going down" to create the pulses, but we also need to estimate the information "going up" toward the laboratory. This will be the information associated with the measurement outcomes. In an extremely worst-case scenario, all the physical qubits are being measured at once. We thus get one bit of information per physical qubit to provide to the laboratory on this timescale of $25 ns$. This will be negligible to what we already estimated, so the critical thing to estimate is the information used to drive the gates rather than the information coming from the qubits measurement. 

Later on, we will also try to reduce the consumption of the DAC/ADC in order to see how it affects the total power consumption. For this reason, we would like to see until when we can reasonably neglect the heat conduction of the optical fibers. We believe that as soon as the DAC/ADC wouldn't consume more than $1 \mu W$, we could neglect the heat conduction of the optical fibers. Typically, we can roughly consider that an optical fiber conducts $\times 10$ less heat than a typical microwave cable. To give closer ideas, at $300K$, silicon dioxide (SiO2), which is used in optical fibers, has a thermal conductivity of $\approx 1 W/m.K$ \cite{shackelford2000crc} while the thermal conductivity of 304 stainless steel (which is widely used for microwave cables \cite{krinner2019engineering}) is more than $10 W/m.K$ at this temperature. And the $\times 10$ factor of difference would remain for lower temperatures \cite{marquardt2002cryogenic,shackelford2000crc}. And even assuming this worst-case, simply adding an extra cryogenic stage at $T \approx (300-T_{\text{Gen}})/2$ would drastically reduce the impact of the heat conduction from those wires thanks to the good Carnot efficiency of this stage, such that their impact in power consumption would be negligible compared to the power required to remove the heat dissipated by the electronics. But we will remain with this "safer" assumption that the electronics should consume more than $1 \mu W$ per qubit to keep our model valid.

There are now DC wires that are providing power to all the electronics. Those wires will bring heat conduction but also introduce Joule effect. In order to understand what we mean, let us consider the typical resistance of a $1m$ length cable: $R = 10m \Omega$. We assume that the electronics is working under the typical voltage of $U=5V$. We call $\dot{q}_{\text{Gen}}$ the power consumed \textit{per physical qubit} by the electronics, and $N_{\text{DC}}$ the number of DC wires we use. There is interest in lowering the number of DC wires because we want to minimize heat conduction, but at the same time, all the electronics will require a given amount of DC current in order to work. The more DC wires are being used; the fewer Joule effect would be a problem as this current would "spread" on more wires and reduce the heat dissipated. In a more quantitative manner, the total Joule effect dissipated in the DC wires satisfies: $P_J=N_{\text{DC}} R (I/N_{\text{DC}})^2$ where $I=\dot{q}_{\text{Gen}} Q_P/U$ is the current that is consumed by all the electronics (we recognize on the numerator the \textit{total} power consumed by all the electronics). What we want is $P_J/Q_P < \dot{q}_{\text{Gen}}$: we want the Joule effect \textit{per qubit} in those cable to not dominate the heat dissipated \textit{per qubit} of all the electronics. It gives us the relation: $N_{\text{DC}}>R \dot{q}_{\text{Gen}}/U^2  Q_P$. We have $R \dot{q}_{\text{Gen}}/U^2 \approx 10^{-6}$ for $\dot{q}_{\text{Gen}} \approx 1mW$: to have the Joule effect negligible we need more than one wire every million of physical qubits. Considering this as being our choice, the heat conduction from the DC wire will also naturally be negligible given the typical values of heat conduction we gave before. Thus, the heat load associated with DC cables can be entirely neglected. And obviously, for $\dot{q}_{\text{Gen}}=1 \mu W$, our conclusion would remain (Joule effect and heat conduction would be even less a problem).

We can also naturally neglect the heat load associated with the local oscillator: there is a unique coaxial cable for this signal. For the same reason, we can neglect the few amounts of cables written under the name "Other" which correspond to the clock signal, trigger, etc.

At this point, we showed that the heat conduction and Joule effect would be negligible compared to the heat dissipated by the ADC and DAC as soon as it is in the "order of magnitude" range $[1 \mu W, 1mW]$. But, there is one last point to discuss: how much the DEMUX-MUX would dissipate? A typical value to demultiplex or multiplex data for the optical fibers we considered is $1pJ$ per bit (see \cite{wade2018bandwidth})\footnote{The cost depends on the optical fiber. Indeed optical fibers able to transmit more than $400 Gb/s$ of information do exist, but the information would be encoded in a way that may require a bigger consumption for the multiplexing/demultiplexing tasks than the one we are considering here.}. We know that we have $2.5 Gb/s$ of information transiting \textit{per physical qubit} in order to manipulate them (the information entering in the DAC). The power required to demultiplex the data is then about $2.5mW$, which is comparable to the consumption of the DAC/ADC. There is also a power required to multiplex the information coming from the measurement outcomes after it has been digitized by the ADC. But as we already explained, the amount of information involved will be much smaller than the one required to drive the qubits such that the multiplexing cost would be negligible.

In the very end, we will consider $\dot{q}_{\text{Gen}}=5mW$ as a standard value. It will include the cost of demultiplexing, multiplexing (this one is actually negligible, as we just showed), and signal generation per physical qubit. Then, in order to make predictions about the progress in CMOS technology, or other technologies such as adiabatic circuits \cite{debenedictis2020adiabatic} \footnote{This technology is based on reversible logic, which allows making the electronics much more energy-efficient, putting its energy efficiency closer to the fundamental Landauer limit.} and single flux quantum logic \cite{mukhanov2019scalable} \footnote{This is a technology based on realizing classical electronics with superconducting circuits. The performance is not equivalent to CMOS technology, but it is expected to be much more energy-efficient as superconducting circuits are not resistive.}, we will make it vary, imposing $\dot{q}_{\text{Gen}}=\epsilon \times 5mW$, for $\epsilon \in [10^{-3},10^0]$. Those other technologies are less mature than CMOS, but the hope is that they would consume significantly less power. We consider $\epsilon \geq 10^{-3}$ because it corresponds to the range of validity of our model\footnote{Actually, it is very likely that $\epsilon$ could be even more reduced while keeping a physical validity for the model. It would be the case if we consider that instead of transmitting all the waveforms of the signal to implement, we would just transmit a few bits of information describing the gate to implement. The DAC would then read in memory the waveform to generate. Such approaches are likely to be much more energy-efficient.}, but we notice that the lower bound could also correspond to a very rough\footnote{Up to our knowledge, there is no detailed analysis of how much power would single flux quantum logic consume for a specific task such as signal generation. The value $\epsilon=10^{-3}$ seems, however, plausible to hope for the future of this technology as one can see in \cite{mcdermott2018quantum}.} estimation of how much single flux quantum logic would consume power \cite{mcdermott2018quantum}.
\subsubsection{We can neglect the energetic cost associated to parametric amplifiers}
\label{sec:neglecting_param_amp}
We can see in figure \ref{fig:quantum_computer_archi_cable_with_lab} b) that we also removed all the elements associated with the parametric amplifiers. We recall that those components are made of superconducting materials and do not dissipate heat intrinsically. However, they require a pump microwave signal (as represented by the blue dotted line) which needs to be attenuated. The typical total attenuation required between where the pump is being generated and the parametric amplifier is comparable to the one that would be put on the driving lines. For this reason, assuming that the pump signal must be activated for a duration comparable to the duration of the pulses driving the quantum gates, it will be enough for us to compare the power of the pump signal to the typical power to drive single-qubit gates. The pump signal must typically be at least $100 \times$ bigger than the total power of the signal it has amplified \cite{benjaminPrivate} (thus the signal at the output of the parametric amplifier). We now estimate this value.

First, we will consider that measurements last for a duration being $\tau_{P}^{\text{Meas}}=100 ns$ (it is in the typical range of fast readout techniques \cite{dassonneville2020fast,heinsoo2018rapid,benjaminPrivate}). From this and the assumption that the qubits can be set to different frequencies over a bandwidth of about $\Delta f \sim 1GHz$, we deduce that we can, in principle, multiplex the readout signals by putting $100$ of them on the same line (we can relax a bit this number as we see later). Indeed, we can roughly estimate that one measurement will occupy $1/100ns$ of spectral bandwidth (around the frequency of the associated qubit). We deduce that we can multiplex about $\Delta f*\tau_{P}^{\text{Meas}}=100$ signals on the readout measurement lines: each measurement on this line will occupy a well defined spectral band such that the spectral resolution for the measurements will be good enough. Also, $1GHz$ of bandwidth is reachable with traveling-wave parametric amplifiers \cite{esposito2021perspective}. Now, one parametric amplifier will need a pump signal $100 \times$ bigger (in power) than the amplified signal associated with $100$ measurements (parametric amplifiers typically need to amplify $100$ times the measurement signals \cite{krantz2019quantum,esposito2021perspective}). It now remains to estimate the power of a typical measurement. One measurement releases one photon in the duration of the measurement, which gives $\hbar \omega_0/\tau_{P}^{\text{Meas}} \approx 10^{-17} W$. Then, using the fact that $100$ of those measurements will be amplified $100$ times and that the pump signal must typically be $100$ times bigger than the resulting amplified signal, we deduce that the pump power should be about $10^{-11} W$, a value close to what can be found experimentally \cite{esposito2021perspective,ranadive2021reversed}.

We need to compare this value to the typical power required to implement single and two-qubit gates to know if we can neglect the energetic cost of the parametric amplifiers. For single-qubit gates, we recall from section \ref{sec:energetic_cost_to_perform_a_gate} that we have a typical power $P_g \approx \hbar \omega_0 \pi^2/(4 \gamma_{\text{sp}} \tau^2)$. For the $25 ns$ single-qubit gates we consider in our models, it gives $P_g = 10^{-11} W$ if $\gamma_{\text{sp}}=1kHz$ (it will be the maximum value of $\gamma_{\text{sp}}$ we are going to consider in the plots that follows, the power $P_g$ here is thus the lowest one to expect\footnote{We also recall that we took the same typical power for the two-qubit gates, see \ref{sec:state_of_the_art_2qb}}). This value is comparable to the power required for the pump signal. Thus not taking into account the pump power will not change our estimations based on orders of magnitudes. We notice that it assumes that the pump can be turned off when not used (because the power of the single-qubit and two-qubit gates $P_g$ is only on when gates are active). It is in principle possible with parametric amplifiers \cite{LucaPrivate}. If not, given the fact $t(3)/(u(3)+v(3)) \approx 30$ (this is the ratio of physical qubits divided by the number of active single and two-qubit gates\footnote{For an algorithm satisfying $Q_L \approx (\overline{N}_{L,\parallel}^{\text{1qb}}+\overline{N}_{L,\parallel}^{\text{Id}}+2 \overline{N}_{L,\parallel}^{\text{cNOT}})$ which will be the case in what follows.}) it would mean that for $\gamma_{\text{sp}}=1kHz$, the pump signal cannot be neglected, and we would leave the regime of validity of our model: in this case, we should only focus on points where $\gamma_{\text{sp}}\lesssim 100Hz$. In what follows, we will assume that the pump can be disabled when the parametric amplifier is not used.
\subsubsection{We can neglect the heat conduction of the remaining cables, excepted X/Y drive, and the Joule effect for DC cables associated to classical amplifiers}
\label{sec:we_can_neglect_heat_conduction_remaining_cables_DC_Joules}
The remaining cables we must study are the readout-in, readout-out, pump cables for the parametric amplifiers, and DC cables for the classical HEMT amplifiers. All those cables, except the DC one, will be microwave coaxial cables for $T>10K$ and superconducting microstrip lines for $T<10K$ (see appendix \ref{app:cable_models}), and they will have the same physical properties. In order to know what we can neglect, we then just have to compare the number of each type of cable used. It is given by the quantity of multiplexing we can do. 

The single-qubit gates are the fastest, and they will be the "bottleneck" in terms of multiplexing. Indeed, following the same reasoning as the one above, we can drive about $1GHz*25 ns=25$ qubits with a single wire; otherwise, the spectral resolution of the signals is too poor, and issues such as crosstalk could occur. 

Now, we can study the cables associated with measurements. As explained in the section \ref{sec:neglecting_param_amp}, the measurements can have a greater level of multiplexing so that the associated cables "readout in" and "readout out" will be smaller in number. An analog justification holds for the cables associated with the pump signal of the parametric amplifiers. Thus, we deduce that the heat conduction associated with measurement and pump cables can entirely be neglected compared to the heat conduction of the X/Y lines.

Finally, there are also classical amplifiers at the typical temperature of $4K$. One of those amplifiers consumes about $5mW$ of power \cite{benjaminPrivate,LucaPrivate} and will amplify the signal contained in one readout line (that measures 100 qubits), thus the amplifiers consume $50 \mu W$ per physical qubit. Those amplifiers need a DC wire to receive power. But those wires can be regrouped such that one wire can provide power to many amplifiers. But it also induces Joule effect. Following the same reasoning as we did for the signal generation stage, we can easily show that both the Joule effect and heat conduction will be negligible compared to the heat dissipated by the amplifier. Indeed it is exactly the same calculation excepted that the heat \textit{per qubit} dissipated by the amplifier would become $5mW/100 =50 \mu W$. Replacing  $\dot{q}_{\text{Gen}} \to \dot{q}_{\text{Gen}}/100$ in the calculation we did previously would show that we would need at least $1$ wire every $10^8$ physical qubits in order to keep the Joule effect negligible. One wire for $10^8$ qubits would also give totally negligible heat conduction compared to the $50 \mu W$ dissipated \textit{per qubit} by the amplifiers. 

Finally, in the same line of thought as what we did for the signal generation stage, we will also assume that the power consumption of the amplifiers can be reduced such that the typical power consumption per physical qubits on the classical amplification stage will be $\dot{q}_{\text{Amp}}=\epsilon*50 \mu W$, where $\epsilon$ will take different values in our plots ($\epsilon=1$ would correspond to today state of the art).

\subsubsection{We can neglect the heat dissipated in the attenuators for readout-in}
\label{sec:neglecting_measurement}
The typical power of the signals that have to be injected to perform a measurement is much lower than the power to drive the quantum gates. Also, the average number of measurement gates is the same as the average number of single-qubit gates (this is represented by the values of $u(k) = w(k)$ in \eqref{eq:logical_power_ch5}). Thus, we can neglect the heat dissipated for the attenuators of the readout-in lines in comparison to the attenuators of the X/Y lines. 
\subsubsection{Brief summary of approximations}
In conclusion here, we believe that we have good reasons to approximate the architecture by the model of figure \ref{fig:quantum_computer_archi_cable_with_lab} b), at the very least in the regime $\gamma_{\text{sp}} \in [10Hz,1kHz]$, and\footnote{If the travelling wave parametric amplifier cannot be turned on only when needed, then this range has to be relaxed to roughly $[10Hz,100 Hz]$, see the discussion in the section \ref{sec:neglecting_param_amp}} for $\epsilon \in [10^{-3},10^0]$, where $\epsilon$ is such that the heat dissipated per qubit on the signal generation stage and classical amplification stage both satisfy $\dot{q}_{\text{Gen}}=\epsilon*5mW$, $\dot{q}_{\text{Amp}}=\epsilon*50 \mu W$. Many of the approximations we did here could actually be relaxed (nothing prevents us from taking into consideration the heat dissipated by the pump signal for instance), but the reason we wanted to do this study of what we can or cannot neglect in detail is also because it is instructive in itself as it will allow us to know what are the best strategies to make quantum computing energy-efficient.
\subsection{Expression of the power function for our model}
We are now ready to write down an explicit expression for the power consumption. Our goal is to find the values of the coefficients $a$, $b^{\text{1qb}}$, $b^{\text{cNOT}}$, $b^{\text{Meas}}$ in \eqref{eq:logical_power_ch5}. In order to write in a compact way the formulas, we will call $T_1 \equiv T_{\text{Q}}$, $T_K \equiv T_{\text{Gen}}$. We will consider adding $K-2$ intermediate stages between $T_1$ and $T_K$ where only heat conduction and dissipation into attenuators will be evacuated there (those stages won't contain any electronics). In addition to those stages there is also the classical amplification stage at the temperature $T_{\text{Amp}}$. We can write down the expression of the power.
\begin{align}
P&=\frac{300-T_K}{T_K}\left(\dot{q}_{\text{Gen}}-\frac{1}{25}\dot{q}_{\text{Cond}}(T_{K-1},T_K) \right) Q_P \notag \\
&+\sum_{1<i<K-1}\frac{300-T_i}{T_i} \left( \frac{1}{25}(\dot{q}_{\text{Cond}}(T_{i},T_{i+1})-\dot{q}_{\text{Cond}}(T_{i-1},T_i))\right) Q_P+ (\widetilde{A}_i-\widetilde{A}_{i-1})P_g\left(\frac{\tau_P^{\text{1qb}}}{\tau_P^{\text{timestep}}}\overline{N}^{\text{1qb}}_{P,\parallel}+\overline{N}^{\text{cNOT}}_{P,\parallel}\right) \notag \\
&+ \frac{300-T_{\text{Amp}}}{T_{\text{Amp}}} \dot{q}_{\text{Amp}} Q_P\\
&+\frac{300-T_1}{T_1} \left( \frac{1}{25}\dot{q}_{\text{Cond}}(T_{i},T_{i+1})\right) Q_P+ (\widetilde{A}_1-1) P_g\left(\frac{\tau_P^{\text{1qb}}}{\tau_P^{\text{timestep}}}\overline{N}^{\text{1qb}}_{P,\parallel}+\overline{N}^{\text{cNOT}}_{P,\parallel}\right),
\end{align}
The first line represents the stage where signals are generated. Here only the heat due to signal generation has to be evacuated. There is also heat conduction in the X/Y lines that is going to leave this stage to go to the lower stages (but it will be negligible compared to $\dot{q}_{\text{Gen}}$). The heat conduction \textit{per cable} for the X/Y lines is represented by the function $\dot{q}_{\text{Cond}}(T_{i},T_{i+1})$. The first temperature is the temperature of the coldest stage (where the heat is moving toward), and the second temperature is the one of the hottest stage. The coefficients $(300-T)/T$ represent the Carnot efficiency of the stage.

The second line is associated to the blue intermediate stages of figure \ref{fig:quantum_computer_archi_cable_with_lab}. Their role is to help evacuating heat conduction and dissipation into attenuators. We recognize a first term representing the incoming and outgoing heat conduction. The second term is associated to the heat that is dissipated into the attenuators. On this stage, the attenuation is $A_i=\widetilde{A}_i/\widetilde{A}_{i-1}$, where $\widetilde{A}_i=\prod_{n=1}^{i} A_n$ is the \textit{total} attenuation from $T_{1}$ up to the stage of temperature $T_i$ (it is represented on figure \ref{fig:attenuators} for more clarity).  The signal before attenuation on this stage thus has the power amplitude $\widetilde{A}_i P_g$. And after being attenuated this amplitude becomes $\widetilde{A}_{i-1} P_g$. Thus the total heat dissipated reads $(\widetilde{A}_i-\widetilde{A}_{i-1})P_g$. We notice that this power is then multiplied by $\tau_P^{\text{1qb}}/\tau_P^{\text{timestep}}\overline{N}^{\text{1qb}}_{P,\parallel}+\overline{N}^{\text{cNOT}}_{P,\parallel}$, where $\overline{N}^{\text{1qb}}_{P,\parallel}$ and $\overline{N}^{\text{cNOT}}_{P,\parallel}$ are representing the number of \textit{physical} single and two-qubit gate active in parallel. It comes from our assumption that (i) a two-qubit gate requires the same amount of power than a single-qubit gate (ii) a single-qubit gate is only acting for a portion of the timestep (while the cNOT last for the whole timestep). Thus the power spent to drive the single-qubit gates must be weightened by $\tau_P^{\text{1qb}}/\tau_P^{\text{timestep}}$. In principle, given the assumptions we already did we could roughly estimate that $\tau_P^{\text{1qb}}/\tau_P^{\text{timestep}}\overline{N}^{\text{1qb}}_{P,\parallel}+\overline{N}^{\text{cNOT}}_{P,\parallel} \approx \overline{N}^{\text{cNOT}}_{P,\parallel}$, but the full formula has been considered in our models.

We notice that on the amplification stage, only the heat dissipated by the amplifiers is evacuated (no heat dissipated from attenuators and no heat from conduction). We considered this choice in our model as it simplifies the expression of the power. Thus there are no attenuators on this stage (as already represented by figure \ref{fig:quantum_computer_archi_cable_with_lab}), but the cables are also not thermalized here (we could thermalize them, but given the fact we will anyway consider $K=5$, i.e., there are three intermediate stages between $T_{\text{Q}}$ and $T_{\text{Gen}}$, deciding to evacuate heat conduction on the amplifier stage or not will not change significantly the power we obtain\footnote{This is actually something that we checked numerically: no quantitative differences occurred when we removed or not the heat conduction of the cables on the signal amplification stage.}).

Now, we can access the coefficients $a$, $b^{\text{1qb}}$, $b^{\text{cNOT}}$ from \eqref{eq:logical_power_ch5}. We have:
\begin{align}
a&=\frac{300-T_K}{T_K}\left(\dot{q}_{\text{Gen}}-\frac{1}{25}\dot{q}_{\text{Cond}}(T_{K-1},T_K) \right) \notag \\
&+\sum_{1<i<K-1}\frac{300-T_i}{T_i} \left( \frac{1}{25}(\dot{q}_{\text{Cond}}(T_{i},T_{i+1})-\dot{q}_{\text{Cond}}(T_{i-1},T_i))\right)\notag \\
&+ \frac{300-T_{\text{Amp}}}{T_{\text{Amp}}}  \dot{q}_{\text{Amp}}  \notag \\
&+\frac{300-T_1}{T_1} \left( \frac{1}{25}\dot{q}_{\text{Cond}}(T_{i},T_{i+1})\right),
\label{eq:a}
\end{align}

\begin{align}
b^{\text{1qb}}&=\frac{\tau_P^{\text{1qb}}}{\tau_P^{\text{timestep}}}  P_g  \left( \sum_{1<i<K-1}\frac{300-T_i}{T_i} (\widetilde{A}_i-\widetilde{A}_{i-1}) +\frac{300-T_1}{T_1} (\widetilde{A}_1-1)\right),
\label{eq:b1qb}
\end{align}
\begin{align}
b^{\text{cNOT}}&=P_g  \left( \sum_{1<i<K-1}\frac{300-T_i}{T_i} (\widetilde{A}_i-\widetilde{A}_{i-1}) +\frac{300-T_1}{T_1} (\widetilde{A}_1-1)\right)
\label{eq:bcNOT}
\end{align}
\begin{align}
b^{\text{Meas}}&=0
\label{eq:bMeas}
\end{align}
The reason why $b^{\text{Meas}}=0$ has already been explained in the section \ref{sec:neglecting_measurement}: the dissipation in the readout-in lines is negligible compared to the one in the X/Y lines. At this point, the engineering model has entirely been described as all the coefficients $a$, $b^{\text{1qb}}$, $b^{\text{cNOT}}$ and $b^{\text{Meas}}$ have been specified. The next step for us is to characterize the quantity of noise that will be contained in the answer of the algorithm, i.e., we need to specify the metric.
\section{Expressing the quantity of noise}
\label{sec:noise_quantity}
\subsection{Expression as a function of the quantity of noisy photons}
As explained in the section \ref{sec:adapting_framework_FT_metric_logical} of the previous chapter, the metric $\mathcal{M}_L$ we consider will be the probability than any logical gate in the circuit fails. Calling $p_L$ the probability that one logical gate fails, we thus have
\begin{align}
&\mathcal{M}_L=N_L p_L \label{eq:metric_logical}\\
&p_L=\eta_{\text{thr}}\left(\frac{\eta}{\eta_{\text{thr}}}\right)^{2^k},
\end{align}
with $\eta_{\text{thr}} \approx 10^{-4}$ for probabilistic noise (see the section \ref{sec:arbitrary_accurate_quantum_computing} in the second chapter). The number of logical gates $N_L$ can be deduced from the depth and the average number of logical gate of each type that are acting in parallel: $N_L=D_L(\overline{N}_{L,\parallel}^{\text{1qb}}+\overline{N}_{L,\parallel}^{\text{cNOT}}+\overline{N}_{L,\parallel}^{\text{Id}})$. Now, we need to better express the physical noise. We will proceed in a similar way as the one done in the section \ref{sec:physical_model_chap3} of the third chapter of this thesis: we will consider the noise to be a probabilistic Pauli noise of strength  $\eta=\max_i \chi_{ii}$, where $\chi$ is the $\chi$ matrix associated to the noise map $\mathcal{N}$ (see also in the section \ref{sec:physical_model_chap3} what motivates us to do this simplifying approximation). 

We find that, if the noise map is acting for a duration $\tau$, we have:
\begin{align}
\eta=\frac{(1+2\overline{n}_{\text{tot}}^{(1)})\gamma_{\text{sp}}\tau}{4},
\label{eq:eta}
\end{align}
where $\overline{n}_{\text{tot}}^{(1)}$ is the total number of thermal photons interacting with the qubits (the exponent $(1)$ is associated to our numbering where $T_{\text{Q}}=T_1$). It is related to their temperature but also to the temperature of all the other cryogenic stages. The duration $\tau$ we need to consider is the duration associated with the longest physical gate: it corresponds to the physical timestep $\tau_P^{\text{physical}}$ and is equal to both cNOT and measurement gates. The next step is to find the expression of $\overline{n}^{(1)}_{\text{tot}}$.
\subsection{Expressing the noisy photons number as a function of the temperatures and attenuations}
As we now have $K$ temperature stages where $T_1=T_{\text{Q}}$ and $T_{K}=T_{\text{Gen}}$, we then have a family of attenuations: $\{A_i\}_{i=1,...,K-1}$ to put on all the stages below $T_K$ as shown on the Figure \ref{fig:attenuators}. Our goal is to determine $\overline{n}_{\text{tot}}^{(1)}$ as a function of all those parameters.
\begin{figure}[h!]
\centering 
\includegraphics[width=0.4\textwidth]{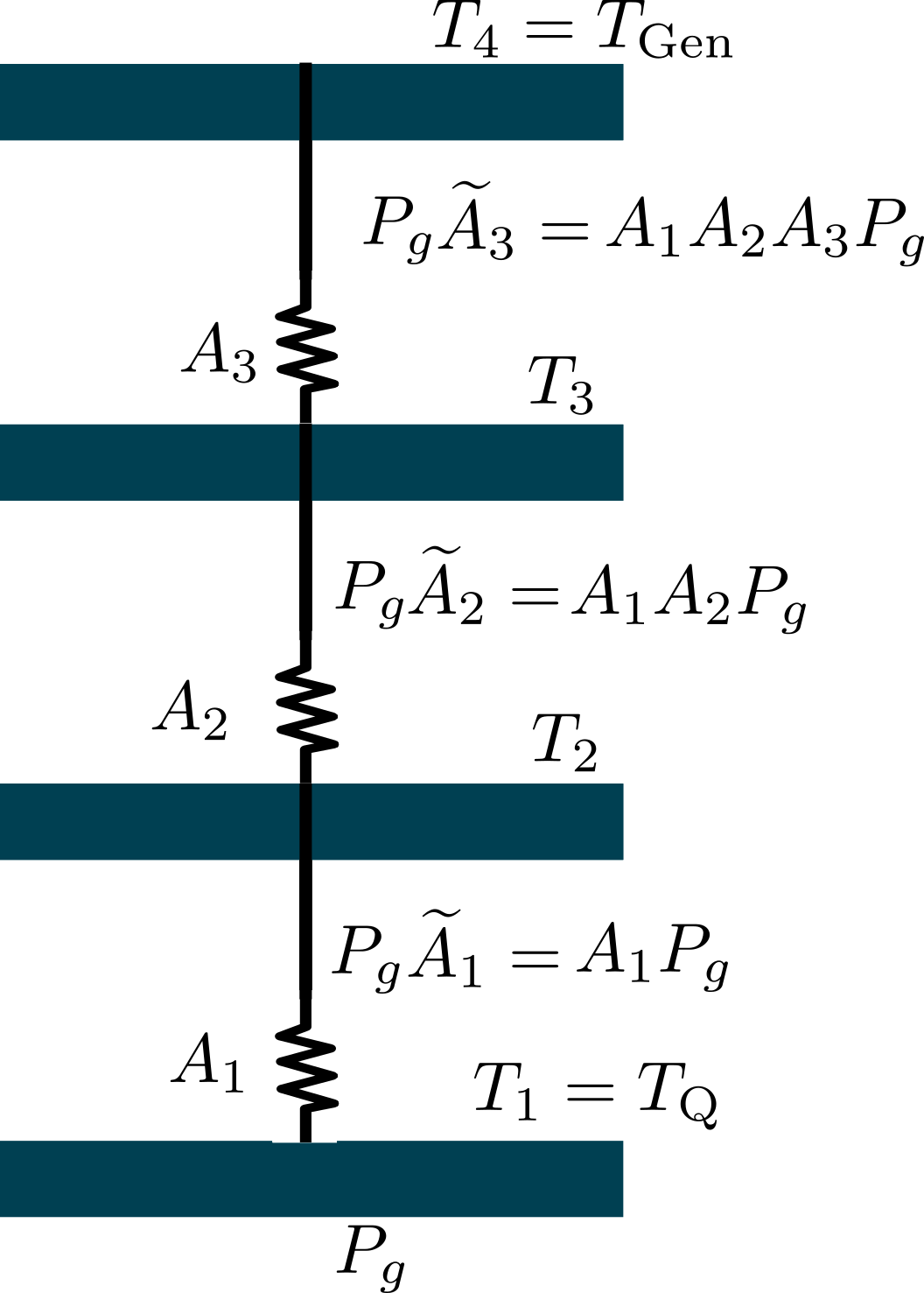}
\caption{Notations we use for attenuations and temperatures of the different stages when having $K=4$. $P_g \widetilde{A}_i$ is the power in the cable on the stage of temperature $T_i$ \textit{before} the attenuator.}
\label{fig:attenuators}
\end{figure}
\FloatBarrier
\subsubsection{Total number of thermal photons at $T_1$}
In order to determine it, we use the relationship between the total number of noisy photons on a stage of temperature $T_i$ and the number of noisy photons on the higher temperature stage at $T_{i+1}$. The relationship is simply \cite{pozar2011microwave}:
\begin{align}
\overline{n}^{(i)}_{\text{tot}}=n_{BE}(T_i) \frac{A_i-1}{A_i}+\frac{n_{\text{tot}}^{(i+1)}}{A_i}
\label{eq:ntot_recursive}
\end{align}
We also have the boundary condition that on the stage of temperature $T_K$, the total amount of noise will be given by the thermal noise at this temperature:
\begin{align}
\overline{n}_{\text{tot}}^{(K)}=n_{BE}(T_K),
\label{eq:ntot_Cl}
\end{align}
where $n_{BE}(T_K)$ is the Bose Einstein population, defined in the section \ref{sec:dynamic_in_presence_thermal_noise}.
At this point, it is important to notice that we are making the assumption that the highest temperature participating in the thermal noise corresponds to the temperature at which the signals are being generated, and not $300K$. However, as we can see from the figure \ref{fig:quantum_computer_archi_cable_with_lab}, no cables have been attenuated for temperatures higher than $T_{\text{Gen}}$. At the moment this thesis is written, we are not entirely sure if it is a good assumption to make, and we need further investigations. If it appears to not be valid, the model could easily be adapted. One choice would for instance be to replace $n_{BE}(T_K)$ by $n_{BE}(300)$ in \eqref{eq:ntot_Cl}. In the end, we can deduce the total number of photons on the qubit temperature stage: using \eqref{eq:ntot_recursive} and \eqref{eq:ntot_Cl}, we deduce:
\begin{align}
\overline{n}_{\text{tot}}^{(1)}=\sum_{i=1}^{K-1} n_{BE}(T_i) \frac{A_i-1}{A_1...A_i}+\frac{n_{BE}(T_K)}{A_1...A_{K-1}}
\label{eq:total_noisy_photons}
\end{align}
\subsection{Reducing the number of tunable parameters}
\label{sec:reducing_number_tunable_parameters}
In principle, our work to determine $\overline{n}_{\text{tot}}^{(1)}$ is over and it involves the family $\{T_i\}$ and $\{A_i\}$ as tunable parameters. We could imagine optimizing all of them, but this would lead to way too many variables to consider such that the computation would not be doable. In order to simplify the problem, we will give some "reasonable" constraints on the attenuations and temperatures such that the optimization will be parametrized only by the total attenuation there is on the lines and the qubits and signal generation temperatures $T_1$ and $T_K$.

For the temperatures, we will ask them to be regularly spaced in order of magnitudes. For instance, for $K=5$, if $T_1=10mK$ and $T_K=100K$, we would like that the rest of the temperature would satisfy: $T_2=100mK$, $T_3=1K$, $T_4=10K$. This is what is usually considered in cryostats \cite{krinner2019engineering}, and it allows to have a temperature spreading that does not isolate too much one stage from the other. For instance, if we assume $T_1=10mK$ and $T_K=300K$, choosing temperature linearly in the range $[T_1,T_K]$ would isolate a lot the lowest temperature stage where the efficiency is very poor (the closest temperature stage, $T_2$, for $K=5$, would be at $75K$ for example). It would force to evacuate a large amount of heat where the efficiency is very low, which would be extremely energy inefficient. Of course, our choice is probably not the best; the best optimization for the intermediate stage would, for instance, depend on the exact heat conduction model behind the cables. In the end, we can mathematically phrase our choice; the temperatures are chosen such as:
\begin{align}
&T_{1 \leq i \leq K}=10^{x_0+(i-1)x_1},
\end{align}
where $x_0$ and $x_1$ are chosen by the boundary conditions $T_1$ and $T_K$. We deduce:
\begin{align}
&x_1=(\log(T_K)-\log(T_1))/(K-1)\\
&x_0=\log(T_1)
\end{align}
Which gives:
\begin{align}
T_i=T_1 \left(\frac{T_K}{T_1}\right)^{\frac{i-1}{K-1}}
\label{eq:choice_temp}
\end{align}
In order to reduce the number of variables used to describe the attenuations, we will consider the following. The total attenuation $A$ will be the parameter we will optimize. We will consider that the attenuation on a given stage will be proportional to the ratio between the temperature of the stage and the temperature of the closest higher stage. The motivation behind such assumption is that in the high temperature regime ($k_b T \gg \hbar \omega_0$), we have $n_{BE}(T) \approx k_b T/\hbar \omega_0$. The number of noisy photons is proportional to the temperature. Thus to make the number of noisy photons coming from higher temperature stage negligible to the number of noisy photons where the attenuator has been put, we should attenuate of an amount typically given by the ratio of those temperatures (this is also what is frequently done experimentally, see \cite{johnson2012optimization,reed2014entanglement}). Now it is not clear how much attenuation should exactly be put (exactly this ratio? Two orders of magnitude bigger "for safety"?). It is also not so clear as in the low temperature regime, where the qubits are usually put, the approximation $n_{BE}(T) \approx k_b T/\hbar \omega_0$ is no longer correct. This is why we believe that optimizing the total attenuation is something that is important to do. In the end, our choice is to consider:
\begin{align}
&A_i=\epsilon \frac{T_{i+1}}{T_i}
\end{align}
where $\epsilon$ is the proportionality coefficient that we need to determine through the constraint $\prod_{i=1}^{K-1} A_i = A$. Now, using the fact that our temperature are chosen following \eqref{eq:choice_temp}, we deduce:
\begin{align}
A_i=A^{1/(K-1)}
\label{eq:att_per_stage}
\end{align}
The same value of attenuation is being put on each temperature stage. And this value is deduced from the total attenuation value $A$ that we are going to find. In the end, here, we gave some reasonable constraints on the tunable parameters to optimize. Better constraints could probably be chosen, but we will see that even with such constraints, we will be able to save large amounts of power consumption with our optimization.

\subsubsection{Exploiting the condition $\mathcal{M}_L=\mathcal{M}_{\text{target}}$}

As shown in the property \ref{prop:behavior_min_power} of the previous chapter, because increasing $T_{\text{Q}}$ implies that the metric increases while the power decreases, the minimum power consumption will be obtained when the metric satisfies $\mathcal{M}_L=\mathcal{M}_{\text{target}}$. This property is useful for our purposes. Indeed we can use it to remove one parameter from the optimization procedure, which will allow us to speed up our simulation in a significant manner (without this, the simulations we want to do would be a little bit too long for the laptop on which they were implemented).

Here, we have: $\mathcal{M}_L=N_L p_L$. If we want our algorithm to provide the correct answer with a probability $1-\mathcal{M}_{\text{target}}$, $\overline{n}_{\text{tot}}^{(1)}$ must then satisfy:
\begin{align}
\overline{n}_{\text{tot}}^{(1)}=\overline{n}_{\text{tot}}^{\text{target}} \equiv \frac{1}{2}\left(\frac{4 \eta_{\text{thr}}}{\gamma_{\text{sp}} \tau_{P}^{\text{timestep}}}\left( \frac{\mathcal{M}_{\text{target}}}{\eta_{\text{thr}}N_L} \right)^{2^{-k}}-1 \right)
\label{eq:ntot_fct_target}
\end{align}
We can exploit this condition to express the total attenuation as a function of $T_1, T_K$ and $k$. Indeed, injecting \eqref{eq:att_per_stage} and \eqref{eq:choice_temp} in \eqref{eq:total_noisy_photons}, we find the following polynomial equation as a function of the variable $y=A_i=A^{1/(K-1)}$:
\begin{align}
(\overline{n}_{\text{tot}}^{(1)}-n_{BE}(T_1))y^{K-1}+\sum_{p=1}^{K-2} (n_{BE}(T_{K-1-p})-n_{BE}(T_{K-p}))y^p+n_{BE}(T_{K-1})-n_{BE}(T_{K})=0,
\label{eq:polynomial_to_solve_for_A}
\end{align}
where $\overline{n}_{\text{tot}}^{(1)}$ has been provided in \eqref{eq:ntot_fct_target}.
For $K \leq 5$, there are analytic solution of this equation for the variable $y$. Keeping only the physical ones (i.e $y \geq 1$: the attenuation must be bigger than one), we deduce possible values of $A$ as a function of $T_1$, $T_K$ and $k$ that will allow to satisfy $\mathcal{M}_L=\mathcal{M}_{\text{target}}$. In principle, for $K=5$, $4$ possible values for $A$ can be found for each values of $T_1$, $T_K$ and $k$ when $\mathcal{M}_L=\mathcal{M}_{\text{target}}$.

At this point, we can now solve the minimization under constraint:
\begin{align}
P_{\min} \equiv \min (P(\bm{\delta}))_{\big | \mathcal{M}_L(\bm{\delta}) = \mathcal{M}_{target}},
\end{align}
where $\bm{\delta}=(T_{1},T_{K},k,A)$. The power function $P(\bm{\delta})$ satisfies \eqref{eq:logical_power_ch5} with the coefficients $a(\bm{\delta})$, $b^{\text{1qb}}(\bm{\delta})$, $b^{\text{cNOT}}(\bm{\delta})$ ,$b^{\text{Meas}}(\bm{\delta})$ that have been provided in \eqref{eq:a}, \eqref{eq:b1qb},\eqref{eq:bcNOT},\eqref{eq:bMeas} and where the values of all the temperatures and attenuations involved in this expression are related to $T_{1}$ and $T_{K}$ through \eqref{eq:choice_temp} and \eqref{eq:att_per_stage}. As we did not optimize the temperature of the amplification stage, we considered $T_{\text{Amp}}=4K$ for $T_1<T_{\text{Amp}}<T_{K}$, $T_{\text{Amp}}=T_K$ if $T_K<4K$ and $T_{\text{Amp}}=T_{1}$ if $T_1 > 4K$. The optimization of the temperature stage would require to put in our model how much noise is added in the measurements by amplifying at a too high temperature, this goes beyond the scope of our study. 

Finally, in practice, this optimization is being performed by a "brute force" approach. We will basically calculate $P(\bm{\delta})$ in a discretized grid where $10mK<T_1<300K$, $T_1<T_K<300K$, $k$ being an integer lower than $6$ (the power consumption rapidly increases with $k$, $k \leq 6$ is a good "upper bound" for the discretization in $k$). For each value of the parameters, we calculate all the possible values of $A$ satisfying \eqref{eq:polynomial_to_solve_for_A}. We remove the unphysical ones, and we compute the power for the remaining physically valid attenuation found (and for the sampling parameters $T_1,T_K,k$). Then, we construct a list containing all the power consumptions we find with this sampling, and we then simply find the minimum of this list numerically. It gives us access to $P_{\min}$ as well as to the optimum parameters $(T^{\text{Opt}}_{1},T^{\text{Opt}}_{K},k^{\text{Opt}},A^{\text{Opt}})$ allowing to reach this minimum. In all the examples we are going to show in what follows, we verified numerically that there is a unique well-identified minimum in the problem (the configuration space appears to be very smooth with a unique minimum well-identified). We are now ready to study the results.
\section{Expressing the software part of the problem: fault-tolerant QFT}
\label{sec:software_part}
Here, we are going to estimate the energetic cost of a fault-tolerant quantum Fourier transform which will have the typical size of the one used within the Shor algorithm. It will thus be composed of $Q_L=2048$ logical qubits, and the total number of gates it requires is $\approx 2048^2$, which mainly contains logical two-qubit controlled phase gates. To simplify the discussion, for our energetic estimation, we will only consider those two-qubit gates and replace each of them by logical cNOT gates. This is, of course, a simplifying assumption as those controlled phase gates must be decomposed properly on the gateset that it is possible to implement fault-tolerantly (and it will contain more than one cNOT). We discuss in the appendix \ref{app:gateset_decomposition} what would change if such decomposition were taken into account.

Now, we recall that the only characteristics of the algorithm we need are $Q_L$, $\overline{N}_{L,\parallel}^{\text{1qb}},\overline{N}_{L,\parallel}^{\text{Id}}, \overline{N}_{L,\parallel}^{\text{cNOT}}$, and $D_L$ (or equivalently $N_L$ as $D_L$ is actually only used to find from the average number of logical gate acting in parallel the total number of logical gates). As we already know that $N_L=Q_L^2=2048^2$, it remains for us to estimate $\overline{N}_{L,\parallel}^{\text{cNOT}}$. This estimation is intrinsically related to how we decide to implement the algorithm. Here we will not optimize the best way to implement it; we will assume that it is implemented in the compressed manner shown in figure \ref{fig:compressed_QFT} (following the same kind of approach as the one used in the previous chapter, we could also imagine optimizing the way to implement this algorithm). The depth of this circuit satisfies $D_L=2(Q_L-2)+1 \approx 2 Q_L$. To show it, we can notice that the first two-qubit gates represented in red are the first occurrence of a repeating pattern occurring until the end of the algorithm. This pattern is repeated $Q_L-2$ times and is composed of two consecutive gates. At the very end of the algorithm, an extra two-qubit gate must be implemented, which adds the final $+1$. Thus, $D_L=2(Q_L-2)+1 \approx 2 Q_L$ is the depth of this algorithm\footnote{We recall that by convention, we do not take into account the final measurement in the depth.}. We deduce from that $\overline{N}_{L,\parallel}^{\text{cNOT}}=N_{L}^{\text{cNOT}}/D_L \approx Q_L/2$. The other logical gates (logical identity, or Hadamard that we actually did not represent them on figure \ref{fig:compressed_QFT}) would give a negligible contribution to the total number of gates acting in parallel, compared to the number of two-qubit gates; this is why we neglect them.
\begin{figure}[h!]
\begin{center}
\includegraphics[width=0.7\textwidth]{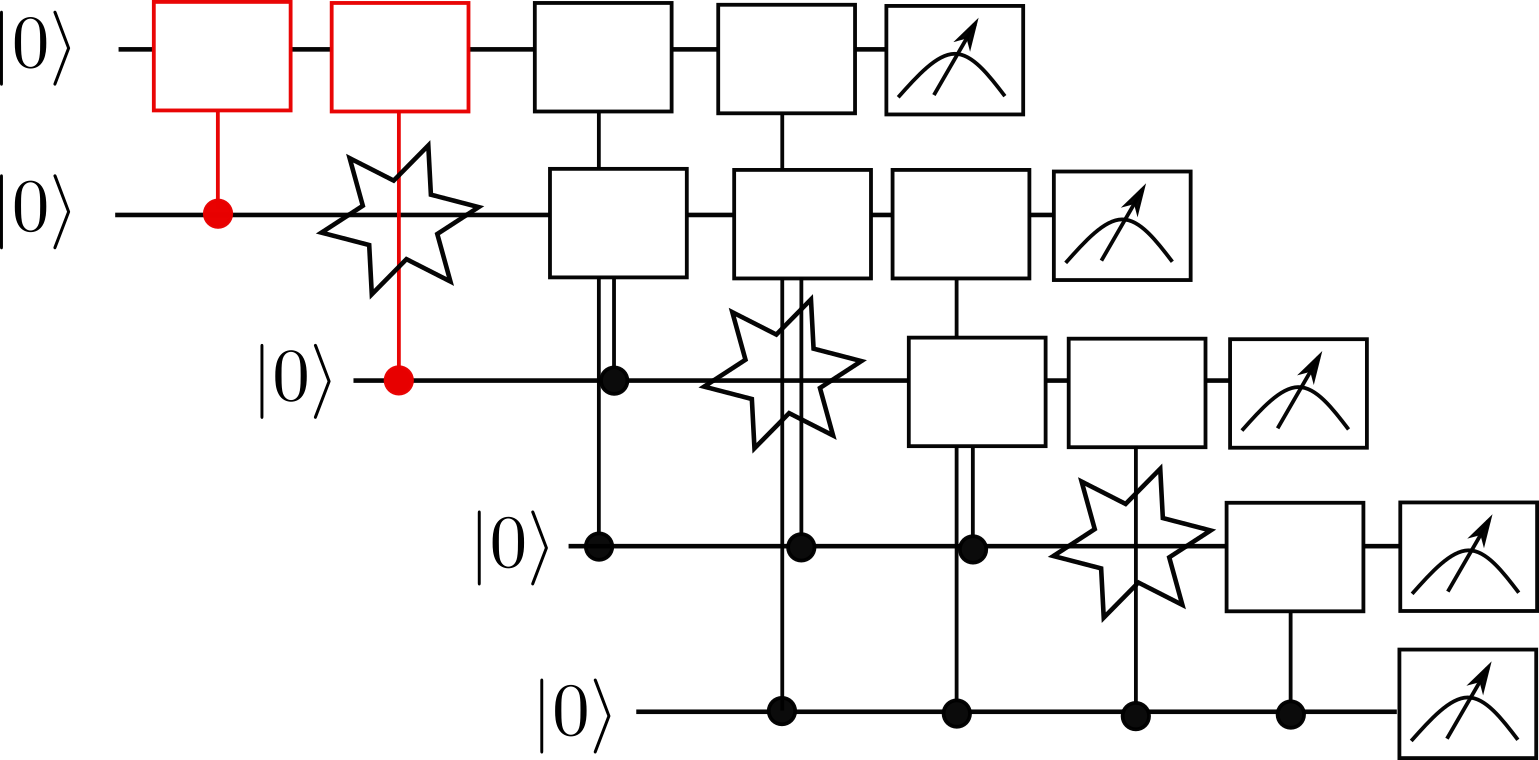}
\caption{QFT performed on $5$ qubits in a compressed way, removing the single-qubit gates, which are present in a negligible number compared to the two-qubit controlled phase gates. The two-qubit gates represented in red are the first occurrence of a repeating pattern occurring until the end of the algorithm. It will help us to estimate the depth of this algorithm, as explained in the main text.}
\label{fig:compressed_QFT}
\end{center}
\end{figure}
\FloatBarrier
At this point, the parameters describing the algorithm have been specified. We are now ready to estimate the energetic cost through the procedure of minimization of the power under the constraint of aiming a targeted accuracy. 
\section{Application of the hardware-software-noise-resource framework: estimation of the minimum power required to implement a QFT}
\label{sec:application_to_QFT}
\subsection{Power consumption as a function of qubit lifetime: }
\subsubsection{Getting a general intuition}
\label{sec:getting_intuition}
We start by considering the power consumption as a function of the lifetime of the qubits. We will consider $\gamma_{\text{sp}}^{-1} \in [1ms,100ms]$ which, as $\gamma_{\text{sp}}^{-1}$ roughly represent the qubit lifetime at $0$ temperature, corresponds to a qubit lifetime being between the millisecond and the hundred milliseconds. A qubit lifetime of one millisecond is, at the date this thesis is written, close to what it is possible to do in the state-of-the-art superconducting qubits \cite{kjaergaard2020superconducting}. A higher lifetime than that should be considered as what we could hopefully expect in the next years. We will also consider different values for the heat dissipated by the electronics. We recall that we introduced the parameter $\epsilon$ such that the heat dissipated by the electronics generating the signals and performing multiplexing/demultiplexing dissipates $\dot{q}_{\text{Gen}}=\epsilon 5 mW$ of heat per physical qubits, while the classical amplifiers are dissipating $\dot{q}_{\text{Amp}}=\epsilon 50 \mu W$ heat per physical qubit (thanks to multiplexing, one amplifier can amplify readout signals for $100$ of physical qubits). Also, strictly speaking, our engineering model is only valid if $\epsilon \geq 10^{-3}$ as explained in the section \ref{sec:neglecting_heat_cond_joules}. However, we still represent the points that are slightly outside of the strict range of validity of our model, keeping in mind that they should be interpreted with some care. The results are shown on figure \ref{fig:optimal_function_gamma_with_attenuators_consoperqubit}.
\begin{figure}[h!]
\begin{center}
\includegraphics[width=1.0\textwidth]{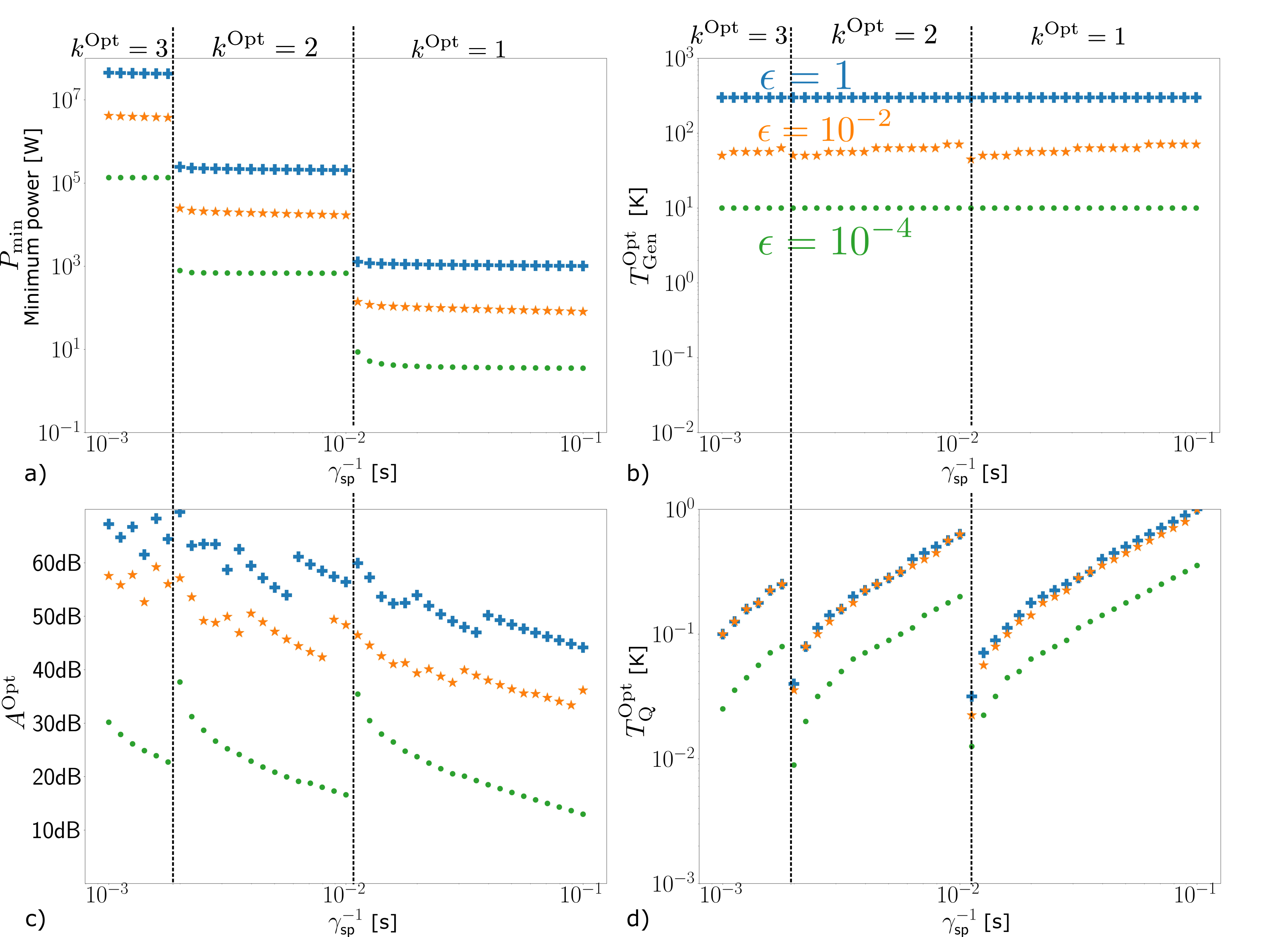}
\caption{Minimum power and optimum parameters allowing to reach this minimum as a function of the qubit lifetime $\gamma_{\text{sp}}^{-1}$ (in seconds) for different values of heat dissipated for the electronics represented by the different curves associated to different values of $\epsilon$. The blue curve for $\epsilon=1$, for $\gamma_{\text{sp}}^{-1}=1ms$ is close to today state of the art qubit lifetime. A qubit lifetime getting close to $\gamma_{\text{sp}}^{-1}=10ms$ corresponds to an optimistic long-term vision for the qubits and $\epsilon=10^{-2}$ a very optimistic value for the classical electronics ($\epsilon=10^{-1}$, not represented, is probably more plausible \cite{bardin201929}). \textbf{a)} Minimum power (in Watts), \textbf{b)} Optimum temperature for the signal generation stage (in Kelvin), \textbf{c)} Optimum attenuation level, \textbf{d)} Optimum qubit temperature (in Kelvin). The vertical black dotted lines delimit the regions associated to different values for the concatenation level.}
\label{fig:optimal_function_gamma_with_attenuators_consoperqubit}
\end{center}
\end{figure}
\FloatBarrier
Before explaining what is happening physically on those graphs, we can first comment that the power consumption varies between $1W$ and $10MW$ here (as said before, we only care about orders of magnitudes because of the approximations we did, so $80 MW$ or $13MW$ for instance, will for us be "mapped" to $10 MW$ in all our comments). Also, we are assuming a Carnot efficiency for the cryogenics, which, as explained in \ref{sec:efficiency_is_carnot} is likely to be between $1$ and $10$ times more efficient than plausible \textit{large-scale} cryogenics (the power obtained should be multiplied by a factor between $1$ and $10$ to get an estimation closer to a realistic but well-designed cryostat). A first general message we can give at this point is thus that fault-tolerant quantum computing might be demanding in terms of power consumption, and this is an aspect that should be considered in the design. However, we emphasize on the fact we are talking about power and not energy. This power would be consumed in a very short amount of time. For $k$ level of concatenations, from the discussions before the figure \ref{fig:timestep_logical_gate} we have a total duration of the algorithm being $D_L*\tau_L^{\text{timestep}}=D_L*3^k*\tau_P$. It is in the millisecond range for $k=3$ here (the energy required for the algorithm, when there is $10MW$ of power consumption would then be comparable to the one spent by a typical heater turned on for $10s$). This is something to keep in mind. Then, we can also notice that all the optimal parameters are greatly varying as a function of the qubit lifetime and energetic performance of the electronics (as encoded in the parameter $\epsilon$). Typically, the temperature of the qubits varies between $10mK$ and $1K$, the signal generation temperature between $10K$ and $300K$, and the attenuation between $10 dB$ and $70dB$ on all those graphs. It gives us a second general message: the optimum parameter to choose in the design are strongly dependent on the performance of the different components used inside the computer, and enforcing the qubits to be exactly at $10mK$ is, for instance, not something absolutely required (and probably not even desirable in order to make the computer energy efficient, we will discuss this point in further details).

That being said, we can now give overall intuitions behind the behavior of those curves. We notice that the better the qubit lifetime is, the lower the power consumption is, and the tunable parameters $T_{\text{Q}},T_{\text{Gen}},k,A$ seem to have well-defined variations. The concatenation level has the tendency to be reduced with better qubit lifetime and when $k$ is constant, $T_{\text{Q}},1/A$ and to some extent $T_{\text{Gen}}$ are all increasing with $\gamma_{\text{sp}}^{-1}$. In order to understand why we need to recall that the minimum of power consumption such that the algorithm succeeds with at least a targetted success rate is necessarily reached when $\mathcal{M}_L=\mathcal{M}_{\text{target}}$. Let's assume that we were on a point of minimum power consumption, but we increased $\gamma_{\text{sp}}^{-1}$ for all other parameters fixed. Because of that, $\mathcal{M}_L$ will be modified (the quantity of noise decreases), and we will then have $\mathcal{M}_L < \mathcal{M}_{\text{target}}$, which from the property \ref{prop:behavior_min_power} in the previous chapter cannot correspond to the point of minimum power. In order to find the minimum power consumption, the tunable parameters $T_{\text{Q}}$, $A$, $T_{\text{Gen}}$ and $k$ will be tuned to make the equality $\mathcal{M}_L=\mathcal{M}_{\text{target}}$ satisfied again. The way it can do this is by either increasing $T_{\text{Q}}$, $1/A$, $T_{\text{Gen}}$ or lowering the concatenation level as such modifications are increasing the quantity of noise as one can see from \eqref{eq:total_noisy_photons} (if we inject the expressions  \eqref{eq:choice_temp} and \eqref{eq:att_per_stage} in \eqref{eq:total_noisy_photons}, we see that increasing $1/A,T_{\text{Q}}$ or $T_{\text{Gen}}$ will increase the number of noisy photons at $T_{\text{Q}}$, and thus will increase $\mathcal{M}_L$ through \eqref{eq:eta} and \eqref{eq:metric_logical}). And, of course, a combination of those changes can also occur. Now, what will exactly occur will depend on how each of those tunable parameters affect the power consumption. The optimization will, in priority, change the parameters which possibly increase the noise but decrease in a "large" manner the power consumption. Let us go a little bit deeper in the understanding. First, the parameter which will have the biggest influence on the power consumption is clearly the concatenation level. To give ideas, the number of physical qubits for $k=1,2,3$ is respectively $\approx 10^5, 10^7, 10^{10}$. Reducing the concatenation level thus has a huge impact on power consumption. But such reduction can only be made for a low enough noise. If it is not possible to reduce the concatenation level, there are other ways to reduce the power consumption: one can increase $T_{\text{Q}}$ or $1/A$. This is what is happening for fixed value of $k$ on the curves of figure \ref{fig:optimal_function_gamma_with_attenuators_consoperqubit}.  For instance when $k=2$, $T_{\text{Q}}$ grows with $\gamma_{\text{sp}}^{-1}$ between $30mK$ and almost $1K$ (for $\epsilon=1$ or $10^{-2}$). A similar behavior seems to occur for the attenuation even though it is only really clear for $\epsilon=10^{-4}$: the curves seem to be "noisy" for $\epsilon=1$ or $10^{-2}$. We will comment on why in a few paragraphs. It is also possible to adapt $T_{\text{Gen}}$ as a function of $\gamma_{\text{sp}}^{-1}$, we see for instance that the curve for $\epsilon=10^{-2}$ slightly varies. But the variations for this variable are less intuitive to guess. Indeed while it is clear that increasing $T_{\text{Q}}$ or $1/A$ increases the quantity of noise while reducing the power, for $T_{\text{Gen}}$ the power function itself, i.e., without considering constraining the tunable parameters by the success condition might have non-trivial variations as a function of $T_{\text{Gen}}$ (for all other parameter fixed). It comes from the competition between the desire to choose $T_{\text{Gen}}$ low in order to reduce the heat conduction and to choose it being high (close to $300K$) to increase the efficiency at which the heat dissipated by the electronics is evacuated. 

Now, we can also explain why the curves representing the attenuation are "blurred" in the low $\gamma_{\text{sp}}^{-1}$, high $\epsilon$ regime. The reason is that the minimum is being found by sampling over the variables $T_{\text{Q}},T_{\text{Gen}},k$, and $A$ is found from the knowledge of those variables by calculating the roots of the polynomial equation \eqref{eq:polynomial_to_solve_for_A}. Our sampling having a limited accuracy, we will be limited in the accuracy to find those roots, but it cannot be the only explanation because on other regimes, we use the same sampling, and the curves are much more smooth. The additional element explaining this phenomenon is that the regime where those curves seem blurred correspond to where the heat conduction is important (because $T_{\text{Gen}}\approx 300K$), and the heat dissipated by the electronics as well (we recall that $\epsilon$ affects both the signal generation stage but also the amplifier stage at $T_{\text{Amp}}=4K$; thus even though $T_{\text{Gen}}=300K$ $\epsilon$ has an influence). In this regime, the attenuation is not what plays a dominant role, and finding the best attenuation is no longer critical\footnote{What we say here is not a proof, it is something that we checked numerically. Indeed when $T_{\text{Gen}}$ is closer to $300K$, the heat conduction might increase but the heat dissipated in the attenuator as well because a greater attenuation would be required. After numerical verifications (that we will partially see later on), we saw that the attenuation is indeed not playing a major role in those regions.}. For this reason, the sensitivity on the optimal value of $A$ will be reduced, which also participates in increasing the noise of this curve in this regime. We will justify better that the attenuation is indeed not what plays the dominant role in this regime in a next series of plots.

Now, we can be a little bit more precise about the behavior of the parameters when the optimal concatenation level is changing. Basically, when the qubit lifetime gets better and better, at some point, one can relax the level of concatenation. It is occurring for $\gamma_{\text{sp}}^{-1} \approx 2ms$ and $10ms$. It drastically reduces the power consumption, but it comes with a cost: in order to allow this change of concatenation to occur, the experimentalist will have to drastically reduce the physical noise, which is done by reducing $T_{\text{Q}}$, $1/A$ and (a little bit) $T_{\text{Gen}}$. This is what explains the kind of "reset" in all the tunable parameters that occur when the concatenation level is being changed. 

We can also comment that it is not obvious to have a power being reduced when the qubit lifetime gets better. Indeed, when $\gamma_{\text{sp}}^{-1}$ increases, the noise is being reduced, but the coupling between the qubit and the waveguide as well. It will have for effect (for all other parameters fixed) to increase the quantity of power required in the pulse to drive the qubits, which would increase the heat dissipated in the attenuators. The fact that the power \textit{is} actually been reduced with an increase of $\gamma_{\text{sp}}^{-1}$ shows indirectly that the behavior we just explained does not play a big enough role, but it was important to notice it.

Now that we provided some general messages about the power consumption and gave intuition about the variation of the different parameters, we are going to give messages about (i) how much power we are saving with our optimization, (ii) what is the most "power expensive" between the electronics the heat conduction and the dissipation in the attenuators, and (iii) what should be done in order to reduce the power consumption. For (iii), we can, of course, partially answer this last question by saying that the consumption of the electronics should be reduced and the qubit lifetime should be better, but 
those are "expected" results, we would like to give more \textit{quantitative} messages, explain what should be done in \textit{priority}, and how much we can expect to save quantitatively by doing those appropriate change.
\subsubsection{Quantitative interest of the optimization}
\label{sec:quantitative_interest}
Here, we are giving quantitative values about how much power we can hope to save with our optimization. In order to do this, we need to consider some "reference" situations to compare our optimization with. We consider here two main scenarios. The first one will consist in putting the qubits at a temperature $T_{\text{Q}}=10mK$, the signal generation at a temperature $T_{\text{Gen}}=4K$, choosing an attenuation given by the ratio $A=T_{\text{Gen}}/T_{\text{Q}}=400 \approx 26dB$. This choice for $T_{\text{Gen}}$ is motivated by recent proposals \cite{bardin201929,patra2020scalable} suggesting to put the electronics that is generating the signals driving the qubits close to $4K$, we refer to the discussion made in \ref{sec:global_picture} for the motivations behind this temperature. We choose a level of attenuation given by the ratio of temperatures as it corresponds to what is typically put experimentally; see the section \ref{sec:global_picture}. The concatenation level will then be chosen by estimating the minimum value it should have in order to have an algorithm that will succeed at least as often as the chosen target. For instance, $\mathcal{M}_L$ is calculated for the specified values of $T_{\text{Q}}, T_{\text{Gen}}, A$ for $k=0,1,2...$. Then the minimum value of $k$ allowing to have $\mathcal{M}_L \leq \mathcal{M}_{\text{target}}$ is the one that will be chosen. This choice for the concatenation level is being made in order to have the minimum number of physical qubits required to implement the algorithm successfully. It corresponds to the "standard way" to find the appropriate concatenation level to use.

The second scenario we consider is basically the same excepted that the signal are generated in the laboratory at $T_{\text{Gen}}=300K$. Thus we will also have $T_{\text{Q}}=10mK$, the attenuation also satisfies $A=T_{\text{Gen}}/T_{\text{Q}}$ (but it is now equal to $30000\approx 45 dB$), and the concatenation level is also being chosen by finding its minimum value allowing to have $\mathcal{M}_{L} \leq \mathcal{M}_{\text{target}}$. On the figure \ref{fig:gain_in_consumption_function_gamma_with_attenuators_consoperqubit}, we represent on a) the ratio between the power that would be obtained for the scenario in which $T_{\text{Gen}}=300K$ compared to our optimization, thus the quantity $P_{300K}/P_{\min}$. On b) we plotted the ratio associated to the other scenario: $P_{4K}/P_{\min}$.
\begin{figure}[h!]
\begin{center}
\includegraphics[width=1.0\textwidth]{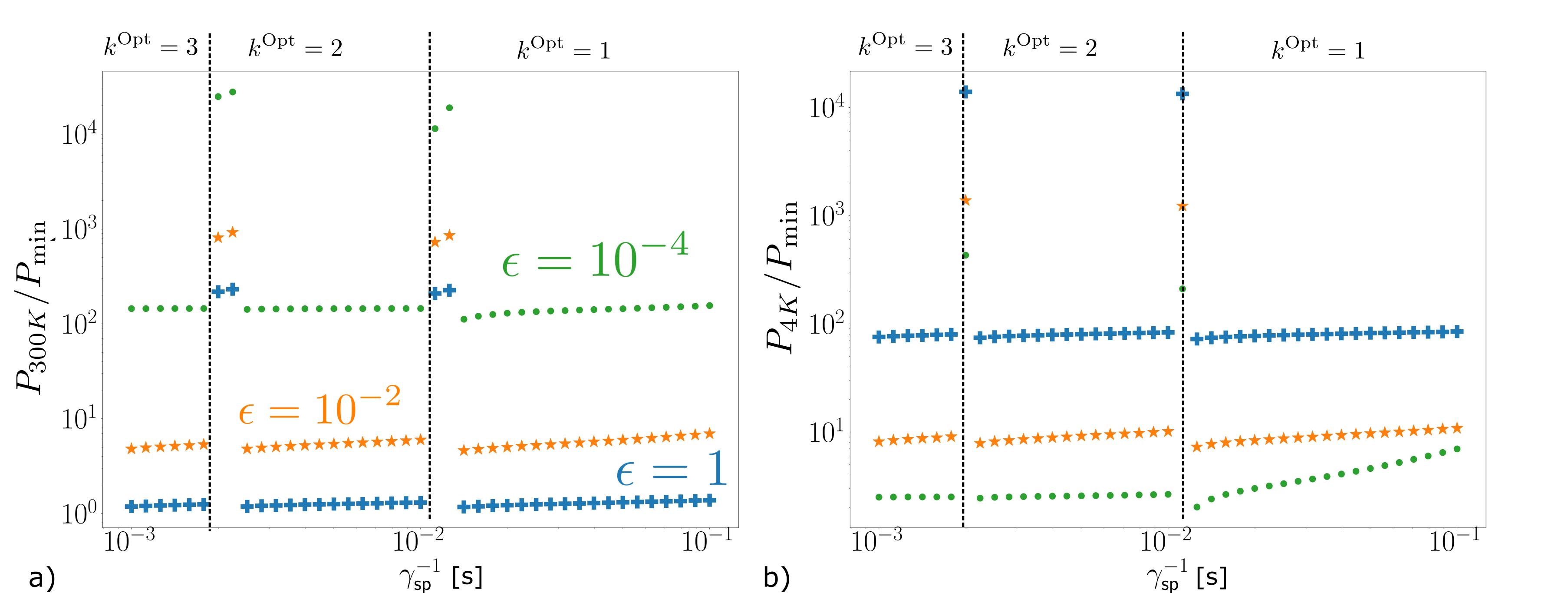}
\caption{How much power is being saved with the optimization compared to "typical" choice of parameters. \textbf{a)} Ratio between the power consumption of an unoptimized scenario where the signals are generated at $300K$ divided by the optimal power consumption we found. \textbf{b)} Ratio between the power consumption of an unoptimized scenario where the signals are generated at $4K$ divided by the optimal power consumption we found. For the unoptimized scenario, the choice of the other parameters ($T_{\text{Q}},k$ and $A$ is described in the main text). The simulations done here for the optimized scenario are the exact same as the ones shown in the figure \ref{fig:optimal_function_gamma_with_attenuators_consoperqubit}. The concatenations level written on this image and separated by the vertical black dotted lines are the ones that correspond to the curves $P_{\min}$. The concatenation levels associated with the curves $P_{300K}$ and $P_{4K}$ are not necessarily the same (see comments in the main text).}.
\label{fig:gain_in_consumption_function_gamma_with_attenuators_consoperqubit}
\end{center}
\end{figure}
\FloatBarrier
From those graphs, we see that the quantity of power that can be saved can vary between almost nothing (a ratio slightly bigger than $1$) and $4$ orders of magnitudes. Let us comment on the graph a) first. We can notice there that some values are much higher than others, close to $\gamma_{\text{sp}}^{-1} \approx 2ms$ and $\gamma_{\text{sp}}^{-1} \approx 10ms$ indicating that the optimization is saving a greater amount of power there. They are exactly matching the moment when our optimization is changing the value of the concatenations. What is happening there is that the optimization is saving us power by reducing the value of the concatenation level. For instance, if we focus on the blue curve (i.e $\epsilon=1$), for $\gamma_{\text{sp}}^{-1} \approx 2ms$, the concatenation associated to $P_{\min}$ changes from $k=3$ to $k=2$. On those points, $P_{300K}$ remains at three levels of concatenations (this cannot be understood from the graph as we did not represent the value of the concatenation level for $P_{300K}$, but we would find that $k=3$ for this curve on the two "peaked value" around $\gamma_{\text{sp}}^{-1}=2ms$). Because we save one level of concatenation compared to $P_{300K}$, a great amount of power can be saved as many physical qubits and gates can be removed from the computer; this explains the peaks observed. The legitimate question to ask is then: what is our optimization doing to "save" this extra level of concatenation? This question can be answered by looking at the blue curve (i.e $\epsilon=1$) on the figure \ref{fig:optimal_function_gamma_with_attenuators_consoperqubit} c). We see on this graph that much more attenuation than what is usually prescribed is being put: we have about or more than $60 dB$ where the blue curve of figure \ref{fig:gain_in_consumption_function_gamma_with_attenuators_consoperqubit} a) is being peaked. By putting more attenuation, the probability of error per physical gate is reduced enough to avoid doing an extra concatenation, and it thus saves a large amount of power. This will be the same principle for the other "peaked values" that we can see on either figure \ref{fig:gain_in_consumption_function_gamma_with_attenuators_consoperqubit} a) or b) (we will not comment on them too much). Now those peaked values are very specific; they are the ones in which the qubit lifetime (and also the characteristics of the algorithm) are close to a critical value where the concatenation level is about to change, the interest is thus in some sense limited to very particular scenarios. This is why we now study the behavior outside of those peaks.

Let us focus again on the blue curve ($\epsilon=1$) of figure \ref{fig:optimal_function_gamma_with_attenuators_consoperqubit} a). Outside of the peaked values, compared to the scenario $P_{300K}$, we do not save much power. One of the reasons is because the optimum temperature we found for $\epsilon=1$ is actually $T_{\text{Gen}}=300K$ (same temperature than for $P_{300K}$), and also because $P_{300K}$ and our optimum are associated to the same concatenation level (again outside of the peaked values). The only way to save power consumption is then to tune the attenuation and qubit temperatures appropriately. We can see on figure \ref{fig:optimal_function_gamma_with_attenuators_consoperqubit} c) and d) that $A \sim 60 dB$ and $T_{\text{Q}} \sim 100mK$ which is quite different from the typical $40dB$ and $10mK$ that are behind the curve $P_{300K}$. But then, even with such different values, the power consumption for the optimized scenario for $\epsilon=1$ is pretty similar to the one found for $P_{300K}$. It allows us to deduce that the most critical parameter to fix, outside from $k$, is actually $T_{\text{Gen}}$. A negligible amount of power will be saved by appropriately choosing the attenuation or qubit temperature: it is not \textit{in this specific example} what will play the most significant role. In the appendix \ref{app:appendix_example}, we show an example in which optimizing the qubit temperature as a function of the qubit lifetime (when the optimal concatenation is not varying) is something that is important. The remark that wisely choosing $T_{\text{Gen}}$ is an important requirement is also something that we can notice on the other graph (figure \ref{fig:gain_in_consumption_function_gamma_with_attenuators_consoperqubit}
b), still for the curve where $\epsilon=1$). Forcing the signal generation stage to be at $4K$, as it is one of the solutions considered in some recent studies \cite{patra2020scalable,bardin201929} would make us spend $\times 100$ more power than what is necessary. If we keep in mind that for $\gamma_{\text{sp}}^{-1}=1ms$ the minimized power consumption is in the $10 MW$ range (see the figure \ref{fig:optimal_function_gamma_with_attenuators_consoperqubit} a)), we see how important fixing $T_{\text{Gen}}$ is (otherwise we would spend power in the $GW$ range).

Now, we mainly commented the curve where $\epsilon=1$, but we can also look at $\epsilon=10^{-2}$, the optimum temperature we found there is non trivial: $T_{\text{Gen}} \approx 50K$ as we can see from the figure \ref{fig:optimal_function_gamma_with_attenuators_consoperqubit} b), and we save one order of magnitude compared to the two unoptimized scenarios $P_{300K}$ and $P_{4K}$ allowing us to remain in a consumption about $10 kW$ instead of $100 kW$ for $\gamma_{\text{sp}}^{-1}$ below (but close) to $10ms$ for instance.  All this illustrates that wisely choosing $T_{\text{Gen}}$ is an important requirement, and the optimum value for this temperature is not easy to guess in advance. 

In the end, those results illustrate that considering a transversal and optimized vision, relating the noise to the resource spent, in the design of a quantum computer can greatly improve the potential in terms of scalability: for $\gamma_{\text{sp}}^{-1}=1ms$, we see in our examples that we can reduce the power consumption from the gigawatt range (which corresponds to a scenario where the electronics is at $4K$ which is sometimes considered as a good choice for scalability \cite{bardin201929,patra2020scalable}) to $10 MW$, making the architecture much more scalable\footnote{Of course this quantitative conclusion is associated to the concatenated code we use. For an analog regime of parameters, for another code, we could eventually find that putting the electronics at $4K$ is a good choice. This is something that would be interesting to study.}.
\subsubsection{Power consumption as a function of classical electronics performances} 
\label{sec:power_consumption_function_electronics_performances}
So far, we studied the behavior as a function of the qubit lifetime for different values of the dissipation for the electronics. Now, we wish to study the power consumption as a function of the performance of the electronics. The graphs are represented on the figure \ref{fig:optimal_function_epsilon_with_attenuators_consoperqubit}. 
\begin{figure}[h!]
\begin{center}
\includegraphics[width=1.0\textwidth]{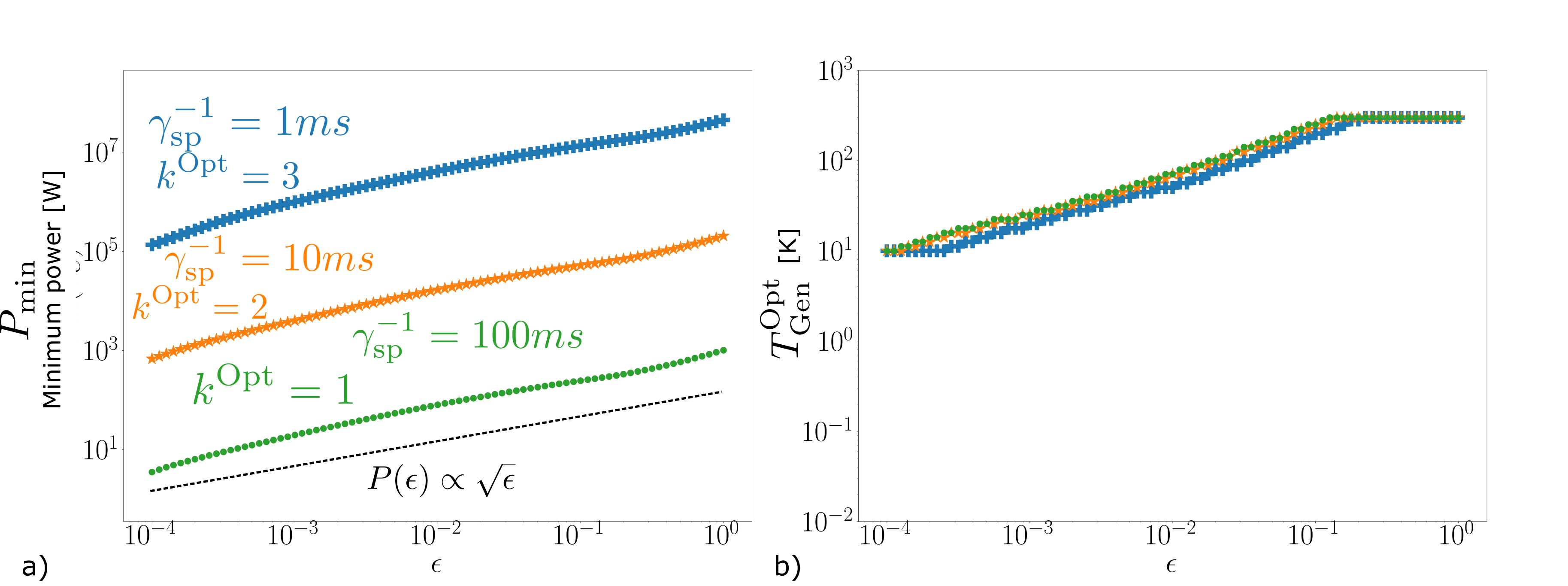}
\caption{How the performances of the electronics influence the overall energetic cost. The blue, orange and green curves are associated to qubit lifetime $\gamma_{\text{sp}}^{-1}$ being respectively $1ms$, $10ms$, $100ms$. The optimum concatenation level found for a curve having a given value for $\gamma_{\text{sp}}^{-1}$ has been found constant (the values are written on the graph \textbf{a)}). \textbf{a)} Minimum power as a function of $\epsilon$. The black dotted line is a guide for the eye showing a dependance of the minimum power in $\sqrt{\epsilon}$. \textbf{b)} Optimum signal generation temperature $T^{\text{Opt}}_{\text{Gen}}$ as a function of $\epsilon$.}.
\label{fig:optimal_function_epsilon_with_attenuators_consoperqubit}
\end{center}
\end{figure}
\FloatBarrier
First, we see two expected behaviors: the more the electronics dissipates heat, the bigger the power consumption is (on figure \ref{fig:optimal_function_epsilon_with_attenuators_consoperqubit} a) ), and the higher the temperature of the signal generation stage should be (on b)). We also see that the optimal concatenation level $k^{\text{Opt}}$ found by the optimization (they are written on the figure \ref{fig:optimal_function_epsilon_with_attenuators_consoperqubit} a)) does not vary with $\epsilon$. There is no fundamental reason for that that we found, in principle, $k^{\text{Opt}}$ should depend on $\epsilon$ because both parameters are related through the minimization under constraint. Without being rigorous, we can understand that it is something that might be unlikely to happen. Indeed, $\epsilon$ does not play a role in the expression of the metric. Then, in order to maintain the targeted accuracy, the value of $k^{\text{Opt}}$ should not change. Now, this is not entirely accurate because changing $\epsilon$ could "in principle" impact in an indirect manner the noise. For instance, if we imagine that $\epsilon$ becomes low enough in order to make $T_{\text{Gen}}^{\text{Opt}}$ low enough such that it removes almost entirely the influence of the thermal noise coming from this temperature, then we could imagine that changing $\epsilon$ would have an impact on the optimal concatenation level $k^{\text{Opt}}$. Indeed, the noise being lower, one level of concatenation could be removed, for instance. But we believe this behavior to be unlikely because $A^{\text{Opt}}$ is also adapting accordingly to $T_{\text{Gen}}^{\text{Opt}}$ in order to keep the thermal noise low enough. Thus it might be possible that because everything is being optimized, changing $\epsilon$ would, in the end, not impact in a significant manner the thermal noise on the qubit, and thus, not change the optimal concatenation level $k^{\text{Opt}}$.

Now, we can also notice that the optimum temperature for the signal generation stage does not depend very much on $\gamma_{\text{sp}}^{-1}$ for a whole set of $\epsilon$ (this is actually something we could already understand from figure \ref{fig:optimal_function_gamma_with_attenuators_consoperqubit} b)), and we can see that the minimum power consumption grows proportionally with the square root of the heat dissipated by the electronics. This is represented by the fact that the slope of the various curves on figure \ref{fig:optimal_function_epsilon_with_attenuators_consoperqubit} a) is close to a curve growing as $\sqrt{\epsilon}$. However, we did not find a fundamental reason why; we only noticed it to be true with our model.
\subsubsection{How to make the computer more energy efficient}
\label{sec:making_energy_efficient}
Here, we will briefly discuss different approaches we can choose to make the computer more energy efficient. In our discussions, we will take some degree of liberty from the strict regime of validity of our model, and we are going to take some freedom to what is strictly doable with today's technology. What we are going to do first will be to remove the attenuators and assume they can be replaced by non-dissipative, reflective filters. Basically, instead of removing the noise based on a process that dissipates heat, we will use filters that will reflect the thermal noise. This is not something that is frequently used in the context of quantum computing, but such devices can, in principle, be realized. We will also assume that the electronics can be enabled and disabled "on-demand". Basically, with this approach, the DAC, ADC, amplifiers, and multiplexing/demultiplexing units are turned off when they are not required. In principle, CMOS electronics has two components in its power consumption. One component is called static consumption, and it corresponds to leakage currents that might be between the ground and the positive DC voltage. Whatever the electronics is actually doing, those leakage currents are always here, and they are inducing heat dissipation. There is another component, called dynamic consumption (it is the same vocabulary as the one we are using for the power consumption of the quantum computer). This one is only here when the logic states of the transistors are switching from one value to another, and it induces currents that will dissipate heat on the switching events. The dynamic consumption is almost constant as a function of the temperature while the static is dominant at high temperature but rapidly decreasing with it \cite{rabaey2003digital}. Here, we will assume that we can entirely neglect the static consumption such that the electronics will strictly dissipate no heat when it is not being used. This is a possibly too idealistic scenario, but we would like to see in this extreme scenario how knowing the fact that the number of gates acting in parallel is different from the number of physical qubits can allow lowering the power consumption. The results are represented on the figure \ref{fig:optimal_function_gamma_without_attenuators_consoperqubitorpergate}. 
On a) is represented the power consumption removing any heat dissipated in the attenuators, and on b) we also do not have any attenuator, but additionally, the electronics doesn't consume any power when it is not being used.
\begin{figure}[h!]
\begin{center}
\includegraphics[width=1.0\textwidth]{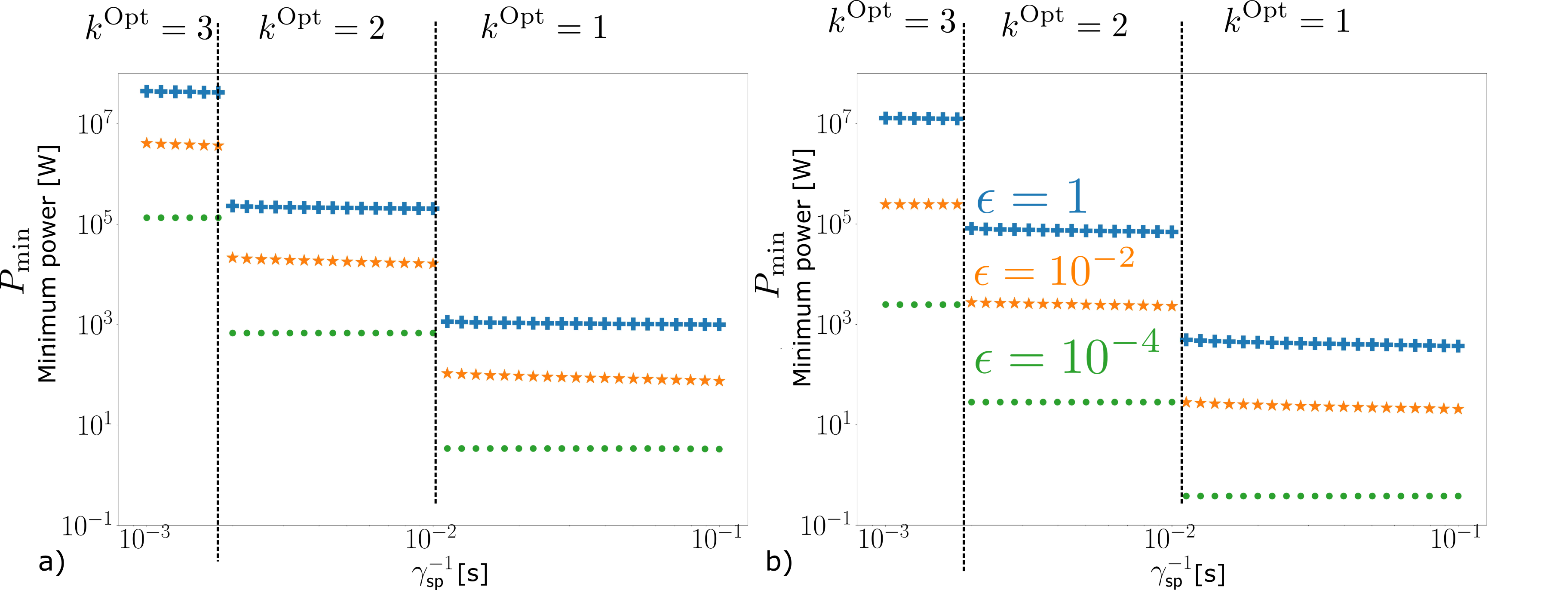}
\caption{\textbf{a)} Minimum consumption assuming no dissipation in the attenuators, i.e., they have been replaced by non-dissipating filters. \textbf{b)} Minimum consumption assuming no dissipation in the attenuators \textit{and} the electronics can be turned off dynamically when it is not being used (and assuming that it doesn't dissipate any heat when it is turned off).}.
\label{fig:optimal_function_gamma_without_attenuators_consoperqubitorpergate}
\end{center}
\end{figure}
\FloatBarrier
Comparing the figure \ref{fig:optimal_function_gamma_without_attenuators_consoperqubitorpergate} a) with \ref{fig:optimal_function_gamma_with_attenuators_consoperqubit} a), we see that the attenuators are not the main responsible for the power consumption of the computer. Indeed, removing them doesn't change the minimum power consumption very much quantitatively. We can, however, see some small changes in the curves in the high qubit lifetime, low electronics consumption (it can be seen, for instance, with the first points of the green curve ($\epsilon=10^{-4}$) occurring when $k^{\text{Opt}}$ switches from $2$ to $1$). The fact that the attenuators are playing a more important role in this region makes sense: when $\epsilon$ is getting lower, the dissipation from the electronics will have a lower impact on the power consumption than the heat dissipated in attenuators for instance\footnote{To be very precise, this is not really a proof because when the electronics dissipates a fewer amount of heat, $T_{\text{Gen}}^{\text{Opt}}$ is likely to be colder (thus closer to $T_{\text{Q}}^{\text{Opt}}$). In this case, a fewer amount of attenuation would be required, reducing as well its role in power consumption. This previous comment must then be taken with a bit of care. However, we believe that what we said is still true on a "qualitative" aspect because of the values we find for $T_{\text{Gen}}^{\text{Opt}}$ in the regimes of low $\epsilon$.}. Additionally, when the qubit lifetime is high, the coupling between the qubit and the waveguide gets lower, so pulses of higher amplitude have to be generated (and then dissipated). On the figure \ref{fig:optimal_function_gamma_without_attenuators_consoperqubitorpergate} b), we can see that having the ability to turn off the electronics when it is not being used, and assuming that when turned off, it dissipates no heat has, however, an important influence on the power consumption. We can see, for instance, that for $\epsilon=10^{-2}$, the consumption can be reduced by almost two orders of magnitudes in the regime $\gamma_{\text{sp}}^{-1}$ is between $2ms$ and $10ms$. The reason why the consumption is however not reduced that much for $\epsilon=1$ is because the power consumption of the electronics is still too high such that $T_{\text{Gen}}^{\text{Opt}}$ cannot be reduced "too far" from $300K$ (the graph is represented on the appendix \ref{app:dynamic_electronics})\footnote{To understand it simply, we can imagine that $T_{\text{Gen}}^{\text{Opt}}$ would remain at $300K$ with a reduction of $\epsilon$. In this case, the power consumption wouldn't vary as evacuating heat at ambient temperature is energetically free. What is happening here is in the same spirit ($T_{\text{Gen}}^{\text{Opt}}$ is reduced a bit but not "that much" from $300K$).}. It shows that the gain in power consumption by using the fact electronics can be dynamically turned on or off might only be effective if the electronics already dissipates a low amount of heat from the beginning. The gain observed is then mainly due to the fact that the classical amplifiers dissipate a fewer amount of heat.  

As a conclusion here, we see with the model we considered that the attenuators are not playing the most important role in the energetic balance. What plays the most important role is the heat conduction and the heat dissipation of the electronics. Allowing to turn off the consumption of the electronics when it is not being used is something that shows a clear advantage in terms of power consumption, but it can only show this advantage if the electronics consumed a low amount of power from the beginning (because otherwise, $T_{\text{Gen}}^{\text{Opt}}$ would be close from $300K$, and there wouldn't have a big gain).
\section{Energetic cost as a function of algorithm characteristics}
\label{sec:charac_algo}
In this section, we would like to study how the power consumption depends on the shape of the algorithm that has been implemented. For this reason, we are going to study the energetic cost of a quantum memory as a function of the number of logical qubits $Q_L$ and the number of logical timesteps (i.e., logical depth) $D_L$. Indeed, as we are going to explain, the energetic cost of a quantum memory characterized by $Q_L$ and $D_L$ can give a good idea of the energetic cost for any algorithm which will be based on the same number of logical qubit and depth, and that uses all its logical qubit for all the logical timesteps (i.e., such that all the logical qubits will for any logical timestep "do something"). It corresponds to the figure \ref{fig:algo_approx_memory} b) for instance (but not to the algorithm represented on a)). In order to understand why we need to recall a few elements. 

First, as we showed in the previous chapter and as it is illustrated in \eqref{eq:logical_power_ch5}, \eqref{eq:metric_logical} and the few lines of explanation following, the only characteristics of the algorithm that we need to estimate the power consumption are $\overline{N}_{L,\parallel}^{x}$ where $x$ is the type of logical gate to implement ($x \in \{\text{1qb},\text{cNOT},\text{Id}\}$), the logical depth $D_L$ and the number of logical qubits $Q_L$. To be a little bit more precise, the expression of the power function $P$ only requires to know the quantity $\overline{N}_{L,\parallel}^{\text{tot}} \equiv 2\overline{N}_{L,\parallel}^{\text{cNOT}}+\overline{N}_{L,\parallel}^{\text{1qb}}+\overline{N}_{L,\parallel}^{\text{Id}}$ and $Q_L$, while the metric only needs to access $N_L=D_L(\overline{N}_{L,\parallel}^{\text{cNOT}}+\overline{N}_{L,\parallel}^{\text{1qb}}+\overline{N}_{L,\parallel}^{\text{Id}})$. For a quantum memory, we will have $\overline{N}_{L,\parallel}^{\text{tot}}=\overline{N}_{L,\parallel}^{\text{Id}}=Q_L$: all the logical qubits are affected by a logical identity. For an algorithm using all its logical qubit for any timestep, we will also have $\overline{N}_{L,\parallel}^{\text{tot}}=Q_L$, but with this time $\overline{N}_{L,\parallel}^{\text{tot}}=2\overline{N}_{L,\parallel}^{\text{cNOT}}+\overline{N}_{L,\parallel}^{\text{1qb}}+\overline{N}_{L,\parallel}^{\text{Id}}$ (it is not only composed of logical identity). Thus we see here that at least the power function would not be modified between a quantum memory or another algorithm using all its logical qubit at the same time. But the power function is not the only thing to consider we must also study how the metric would change between a quantum memory and another algorithm. We recall that the metric can be estimated as $\mathcal{M}_L=N_L p_L$ where $N_L$ is the total number of logical gate used in the algorithm. For a quantum memory, we have $N_L=D_L \overline{N}_{L,\parallel}^{\text{Id}}=D_L Q_L$, but for an algorithm using all its logical qubits at all time, we have $N_L=D_L (\overline{N}_{L,\parallel}^{\text{Id}}+\overline{N}_{L,\parallel}^{\text{1qb}}+\overline{N}_{L,\parallel}^{\text{cNOT}})=D_L (Q_L-\overline{N}_{L,\parallel}^{\text{cNOT}})$. Thus, the metric will not be exactly the same for both situations. Fortunately, the mistake we would do by considering $N_L \approx D_L Q_L$ for the algorithm will not be very important for all practical purposes. Indeed, as by the assumption we made on the "shape" of the algorithm, we have $Q_L=2\overline{N}_{L,\parallel}^{\text{cNOT}}+\overline{N}_{L,\parallel}^{\text{1qb}}+\overline{N}_{L,\parallel}^{\text{Id}}$, and as we also necessarily have $\overline{N}_{L,\parallel}^{\text{cNOT}} \leq Q_L/2$, we deduce $Q_L/2 \leq Q_L-\overline{N}_{L,\parallel}^{\text{cNOT}} \leq Q_L$. The "mistake" we would do in the estimation of $\mathcal{M}_L$ by considering $Q_L-\overline{N}_{L,\parallel}^{\text{cNOT}} \approx Q_L$ will only be to consider the probability that the algorithm fails to be twice bigger as what it really is (in a worst-case scenario). This over-estimation of the probability of failure of the algorithm by only a factor $2$ is not expected to change dramatically the energetic estimation, as we will clearly see on the figure \ref{fig:quantum_memory} (excepted for some critical points, for instance if the optimum concatenation level found with this approximation would change under the modification $Q_L \to Q_L/2$).
\begin{figure}[h!]
\begin{center}
\includegraphics[width=1.0\textwidth]{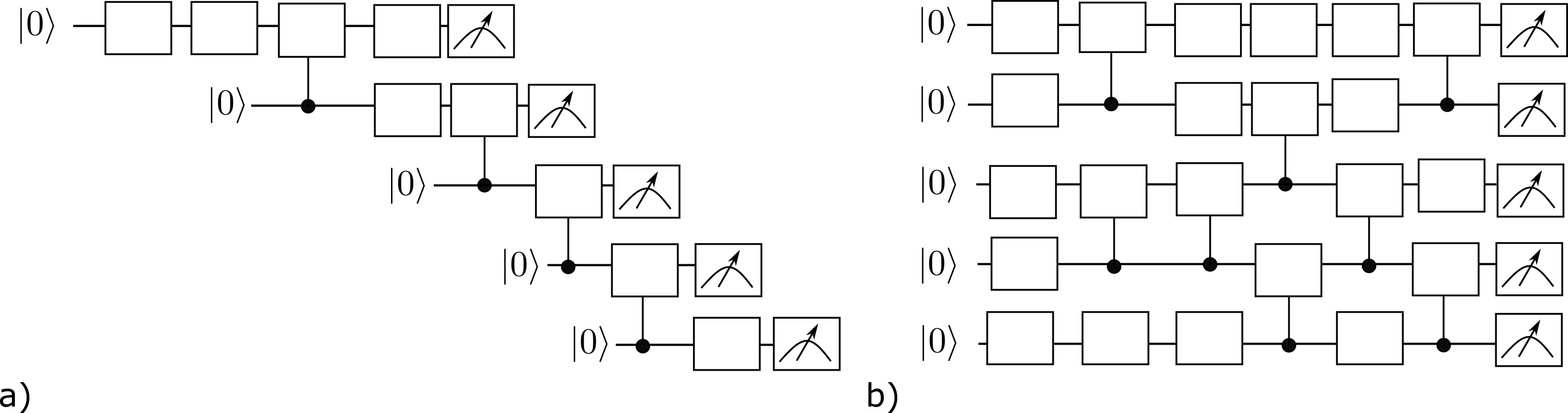}
\caption{\textbf{a)} An algorithm for which the energetic cost might be quite different from a quantum memory having the same number of logical qubits and depth. The reason is that not all logical qubits are doing something for any logical timestep: $Q_L$ will be very different from $2 \overline{N}_{L,\parallel}^{\text{cNOT}}+\overline{N}_{L,\parallel}^{\text{Id}}+\overline{N}_{L,\parallel}^{\text{1qb}}$. \textbf{b)} An algorithm for which the quantum memory approximation would give a good idea of its consumption, this algorithm has a more "rectangular shape" indicating the fact that all the logical qubits are participating in the algorithm for all the timesteps. In this example, $Q_L=2 \overline{N}_{L,\parallel}^{\text{cNOT}}+\overline{N}_{L,\parallel}^{\text{Id}}+\overline{N}_{L,\parallel}^{\text{1qb}}$ is satisfied.}
\label{fig:algo_approx_memory}
\end{center}
\end{figure}
\FloatBarrier
Thus, we believe that a quantum memory can give a good idea of the energetic cost of any algorithm\footnote{We recall that by any algorithm we consider an algorithm in which the energetic cost to implement the non-transversal gate would be negligible because the study of their energetic cost goes beyond our work.} that is making use of all its logical qubit at all times. We will show this affirmation in the following graphs. On figure \ref{fig:quantum_memory} a) is represented the power cost of a quantum memory as a function of $Q_L$ and $D_L$ varying between $100$ and $10^6$ logical qubits, where $\gamma_{\text{sp}}^{-1}=1ms$ and $\epsilon=1$ (for a case where the electronics is always turned on). On b) it is the associated optimal temperature for the qubits. The portion of those graphs having $k^{\text{Opt}} \geq 4$ is associated with a power consumption that is beyond what would be reasonable to use (to give an idea, $1 GW$ is approximately the power consumption of a nuclear reactor), but we wanted to plot values on a wide range of parameter to have intuition about how quantities are varying.
\begin{figure}[h!]
\begin{center}
\includegraphics[width=1.0\textwidth]{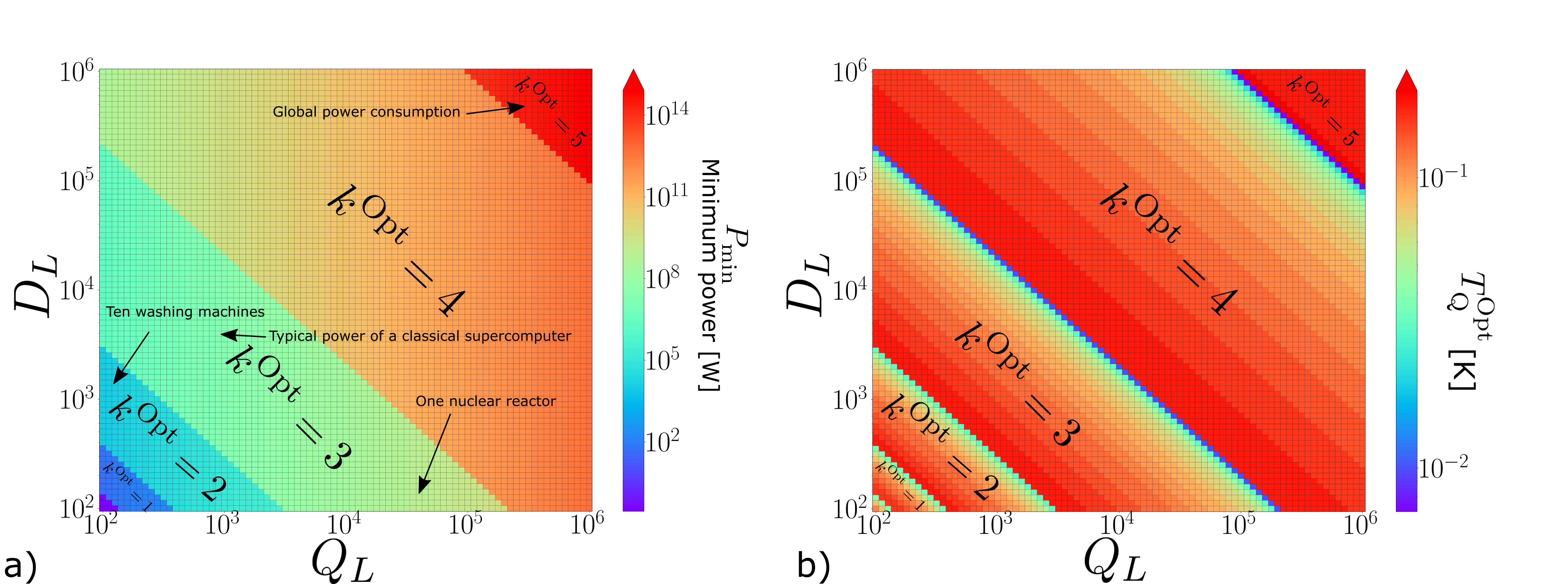}
\caption{\textbf{a)} Minimum power consumption for a quantum memory characterized acting on $Q_L$ logical qubits for a logical depth being $D_L$, $\gamma_{\text{sp}}^{-1}=1ms$ and $\epsilon=1$.  \textbf{b)} Optimum temperature $T_{\text{Q}}^{\text{Opt}}$ associated to the figure a).}.
\label{fig:quantum_memory}
\end{center}
\end{figure}
\FloatBarrier
We see on those graphs that the change of concatenation level occurs for some lines of equations $N_L=D_L Q_L = C$ for some constant $C$ (we recall that the graph is a log-log). We can interpret it as corresponding to the fact that the concatenation level changes accordingly with the size of the algorithm that is running. We also see that, for a fixed value of the concatenation level, the power consumption does not seem to vary much with the logical depth of the algorithm. However, it varies significantly with the number of logical qubits. For instance, on figure \ref{fig:quantum_memory} a), we can see that it would cost $10^{12}$ Watts to work with $D_L=10^5$ and $Q_L$ slightly below $10^6$ (to remain in the region where $k^{\text{Opt}}=4$), but about $1 GW$ for $Q_L$ close to $300$. The fact that the power does not vary very much with the logical depth (for a fixed value of the concatenation level) is not necessarily expected as when the depth increases, the error per logical gate should be reduced, which will ask to reduce the qubit temperature (we see it clearly on the figure \ref{fig:quantum_memory} b) ). Here we see that this reduction in qubit temperature is not inducing a significant variation of power consumption in the logical depth. In some sense, this is the similar behavior that occurred for the power as a function of $\gamma_{\text{sp}}^{-1}$ in the plots on figure \ref{fig:optimal_function_gamma_with_attenuators_consoperqubit}. We saw there that when the qubit lifetime is decreased (over a fixed value of $k^{\text{Opt}}$), the temperature of the qubits decreased accordingly to reduce the amount of thermal noise (because the qubits got noisier). But this did not impact very significantly the power consumption, which showed indirectly that putting the qubits at a very low temperature did not impact the power consumption significantly. Here it is the analog discussion where the role of the qubit lifetime (or, more precisely, its inverse) is replaced by the role of the depth. However, the fact that within a fixed concatenation level, adapting precisely the temperature of the qubits as a function of the qubit lifetime (or as a function of the algorithm depth) does not play an important role is a conclusion that is very "engineering architecture" dependent. In the appendix \ref{app:appendix_example}, we see an example (outside of the strict regime of validity of our model, and for a poor cryogenic efficiency) in which adapting precisely the temperature of the qubits is important and allows to make important energetic save in regime of high power consumption.

Finally, we see that the power consumption obtained on the point $Q_L=2048$, $D_L=2 Q_L$ is close to the power consumption we found in the figure \ref{fig:optimal_function_gamma_with_attenuators_consoperqubit} for $\epsilon=1$ and $\gamma_{\text{sp}}^{-1}=1ms$, it is in the $10 MW$ range. It is consistent with the fact that a quantum memory is a good approximation to the energetic cost of an algorithm.

\section{Conclusion}
\label{sec:conclusion_ch5}
In this chapter, we proposed an engineering model describing the quantum computer. We then used this engineering model in our method, which consists in finding the minimum power consumption in order to have a successful answer in the algorithm, under the constraint that the quantity of noise in this answer is below a targetted value. We applied this method in order to find the minimum power required to implement a quantum Fourier transform that has been energetically modeled by replacing all the controlled-phase gates inside with cNOTs. We saw that our approach allows to potentially save orders of magnitude in power consumption in regimes where the power consumption might be high (it can sometimes reduce the bill from the gigawatt to the tenth of megawatt). We found that the most important parameters there were to optimize were the concatenation level and the temperature at which the signals are generated. Indeed, we could deduce from our examples that a well-optimized choice for the temperature of the qubits and the total attenuation was not entirely necessary: if we had put the qubits always at $10mK$ for instance, the power consumption wouldn't be that much higher than in the regime where our optimization suggest to put them at $100mK$. We were also able to deduce from our examples that the dissipation in the attenuators is not what plays a dominant role in the power consumption; it is much more important to optimize the temperature at which the signals are being generated. We provided some possible strategies to minimize the energetic bill, which were based on allowing to dynamically turn on and off the electronics. With such strategies, as there is a difference between the number of physical qubits and the average number of gates active in parallel, a large energetic save can be done. However, they can only be gained if the electronics does not dissipate a too large amount of heat from the beginning; otherwise, the gain wouldn't be very important as their optimal temperature would be close to $T_{\text{Gen}}^{\text{Opt}} \approx 300K$. We also justified that it is possible to estimate the energetic cost as a function of the algorithm that is running by approximating the algorithm by a quantum memory having the same number of logical qubits,  the same depth, and making use of all its logical qubits at all time (in all rigor, the algorithm should not have non-transversal gates as we did not model them, we could reasonably consider that as long as it does not contain "too much" of such gates acting in parallel, the analyses done here could still hold, but this is clearly something to investigate further). We also saw in appendix \ref{app:appendix_example} that the conclusion that optimizing the temperature of the qubits properly is not what will play a major role in the reduction of power consumption is something that is not true in general. It is very architecture-dependent, and we designed an example illustrating that on a regime of constant concatenation, for a low value of electronics consumption, for a cryostat having a lower efficiency than Carnot, there is a clear interest in optimizing the qubit temperature properly (and the attenuation): orders of magnitude of power consumption can be saved showing the importance of connecting the aspect of noise to the aspect of energetic through a minimization under constraint.

As a general conclusion, we see that a superconducting qubit quantum computer might consume a large amount of power for "close to" state-of-the-art technology. More precisely, assuming very good qubits (lifetime about the $ms$), state of the art CMOS electronics to generate signals (about $5mW$ of heat dissipated per physical qubit), considering a cryostat based on Carnot efficiency (this is justified for large scale cryogenics, see the discussions in \ref{sec:efficiency_is_carnot}), and assuming that the fidelity of all the gates is only limited because of the qubit limited lifetime (we recall that this is not yet the case for the two-qubit cross resonance gates, see section \ref{sec:state_of_the_art_2qb}, it is however toward what the field is going to, and this is the "bet" for the future we are making\footnote{Anyway fault-tolerance with the code we used wouldn't be possible today with the quality of the two-qubit gates, we are forced to assume better two-qubit gates than what exist today to do our analysis. We considered that assuming that all the gates will only be limited by the qubit lifetime is a reasonable hypothesis to make for the future.}), we find that the power consumption using Steane code and its concatenated construction would be around $10MW$ for a quantum Fourier transform implemented on $2048$ logical qubits. Strictly speaking, it is the power consumption assuming that controlled-phase gates can be replaced by cNOTs in the energetic model. In some sense, because of this approximation, it is fairer to say that it would be the energetic cost of a quantum memory having the same size as the quantum Fourier transform (this is because the energetic cost of a quantum memory will be close to the energetic cost of an algorithm of similar size but not containing non-transversal gates such as the $T$ gate, see discussion in \ref{sec:charac_algo}). While it seems to be a high power consumption, it should be noticed that (i) this power consumption rapidly decreases with technological improvements, as we saw with our graphs, and (ii), it should be put in comparison of the consumption of a classical supercomputer which is in the same order of magnitude \cite{bartolini2014unveiling}. Also, if we want to compare this power to large scientific experiments such as the CERN, it would be a lower power consumption (when being used, the CERN consumes about $200 MW$ of power \cite{cernpowerconsumption}). It shouldn't be surprising that the first versions of fault-tolerant quantum computers would consume a large amount of power\footnote{Again, this conclusion of power consumption are intrinsically related to the concatenated fault-tolerant construction. To know how the energetic cost would scale for other strategies is an open question.}. Also, here we are talking about power consumption. In terms of energy, those $10MW$ would be consumed during about a millisecond as briefly explained in \ref{sec:getting_intuition}, which induces a low energy consumption. Finally, keeping those state-of-the-art values, many more optimizations could be performed than the ones we did here. For instance, we believe that a large reduction in power consumption could be gained by optimizing the temperatures of the amplification chain for the measurement outcomes and by considering a better choice for the temperatures of the intermediate stages than the one we proposed in section \ref{sec:reducing_number_tunable_parameters}. Also, more efficient recycling strategies in the ancilla used to perform error correction could be done (see all the discussions in section \ref{sec:estimation_physical_gates}).

In the very end, we believe that this work can give a first idea about the power consumption that may be required for large-scale quantum computers, in rough order of magnitudes, at least for concatenated codes. More accurate estimations could be done by refining the models; we believe that the most important part for that would be to improve the accuracy of the models describing the noise occurring during the quantum gates; here, we assumed that they are only noisy because of spontaneous emission and thermal noise, and we approximated such noise by a probabilistic model. This work also illustrates how important transversal optimizations techniques are, especially when the resource to minimize is related to the quantity of noise contained in the output of the algorithm. In our examples, we saw that $2$ orders of magnitude of power consumption could be saved in a regime where without optimizing, the power consumption would be in the gigawatt range. Those approaches can then clearly improve the potential in terms of the scalability of quantum computers.

\begin{appendices}
\chapter{Cable models}
\label{app:cable_models}
Here we provide the models of heat conduction behind the microwave cables we use. 
\section{Fourier law}
We give the basics behind the Fourier heat conduction law allowing to calculate the heat flow in a cable. Knowing the cross section $A$ of a cable, its length $L$ and the thermal conductivity $\lambda(T)$ of its constituent, we have a heat flow between $T_1 \leq T_2$, going from the high temperature to the low temperature stage being:
\begin{align}
    \dot{q}(T_1,T_2)=\frac{A}{L}\int_{T_1}^{T_2}dT \lambda(T) 
    \label{eq:fourier_heat_law}
\end{align}
In this thesis, we will take a typical length $L=20cm$ for all the cables considered between two consecutive temperature stages of the cryostat. The other characteristics of the cable (i.e., the thermal conductivity and the cross-section $A$) will depend on the cable considered.
\section{The model of cable}
In our model, we consider that for $T>10K$, conventional microwave coaxial cables must be used (i.e., they are composed of conventional conductors), and below, we can use microstrip superconducting lines. We call the cables we will consider using for $T>10K$ conventional coaxial, and for $T<10K$ superconducting microstrips. We now give their characteristics.
\subsection{Conventional coaxial}

We are going to take the typical dimension of the microwave coaxial cable called ULT-23, provided in \cite{cryoconventional}. It is composed of a metallic conductor being SUS304. This material has very close thermal properties to the 304SS, said 304 stainless steel, and the latter has a well-characterized thermal conductivity in the range $4K-300K$ \cite{marquardt2002cryogenic}, so we will consider it as being the metallic conductor\footnote{This material is anyway used in some papers such as \cite{krinner2019engineering} for a coaxial cable having a radius close from the one we consider here.}. The dimensions of this coaxial cable are the following: 
\begin{itemize}
\item inner conductor outer diameter: $0.2mm$
\item dielectric insulator outer diameter: $0.66mm$
\item outer conductor diameter: $0.86mm$
\end{itemize}
As metals conduct more heat  than insulators, and as the cross section of the insulator is comparable to the one of the metal in this cable we will only model the heat conduction associated to the stainless-steel. On the range $4K$-$300K$ its thermal conductivity $\lambda(T)$ has been characterized \cite{marquardt2002cryogenic}:
\begin{align}
    \log(\lambda_{SS}(T))=\sum_{i=0}^8 a_i \log(T)^i
\end{align}
With: $a_0=-1.4087, ~ a_1=1.3982, ~ a_2=0.2543, ~ a_3=-0.6260, ~ a_4=0.2334, ~ a_5=0.4256, ~ a_6=-0.4658, ~ a_7=0.1650, ~ a_8=-0.0199$ for $T\geq 4K$. Taking those values would provide $\lambda_{SS}(T)$ in $W/Km$ units. For $T < 4K$, we complete this model by a linear fitting (where $\lambda_{SS}(0K)=0$). This is justified as in the low temperature regime the thermal conductivity of metals depends linearly on temperature. For $T_1$ and $T_2$ greater than $10K$, we deduce $\dot{q}(T_1,T_2)$ for this model of wire.
\subsection{NbTi striplines}

For large-scale quantum computing, it would be inefficient to keep using the same conventional coaxial cables in the very low-temperature regime. And also, simply because for millions to billions of qubits, the cross-section of the microwave cables would simply be too large to connect them on each qubit. For this reason, NbTi superconductor can be used to drive the signals. The interest in using superconductors is that because of the absence of electrical resistivity, a very small cross-section can be considered. Also, superconductors have the advantage of being poor thermal conductors, which is an advantage here to reduce the energy cost. We will consider using a model of stripline as proposed in \cite{mcdermott2018quantum}. The cross-section \textit{per single line} of NbTi superconductor used is $1.5\ 10^{-11}m^2$ while the insulator around being Kapton is about $1.3\ 10^{-9}m^2$.\footnote{In \cite{tuckerman2016flexible} even smaller cross-sections than the ones we considered here have been tried and realized experimentally. They showed good properties allowing them to be used in quantum computing experiments.} The thermal conductivity of such materials is (we have two different thermal conductivity depending on the temperature range for the Kapton, this is why we have two exponents $0$ and $1$ for the thermal conductivity of the Kapton $\lambda_{Kap^{HN}}(T)$):
\begin{align}
&\lambda^0_{Kap^{HN}}(T)=4.6*10^{-3} T^{0.56} W/(K.m) \text{ ($0.5K<T<5K$)}\\
&\lambda^1_{Kap^{HN}}(T)=2.996*10^{-3}*T^{0.9794} W/(K.m) \text{ ($4K<T<300K$)}\\
&\lambda_{NbTi}(T)=27*10^{-3} T^{2} W/(K.m) \text{ ($0.05K<T<2K$) }
\end{align}
In the range NbTi and Kapton thermal conductivity are both defined (thus taking $\lambda_{Kap^{HN}}(T)=\lambda^0_{Kap^{HN}}$ for the Kapton), taking into account the respective cross-section of the materials we provided for a single line, we would see that the heat flow is dominated by the Kapton. For this reason, we will only consider its thermal conductivity in our model. In our model, we considered $\lambda_{Kap^{HN}}(T<4K)=\lambda^0_{Kap^{HN}}(T)$\footnote{It means that we extended the range of validity for the Kapton thermal conductivity below $0.5K$. This is an assumption we are making, but as the optimal temperature of the qubits found is anyway almost always bigger than $100mK$ in our work, we don't believe we would make a very big mistake with this simplification.} and $\lambda_{Kap^{HN}}(T \geq 4K)=\lambda^1_{Kap^{HN}}(T)$ (the discontinuity at $4K$ because we change of law is found to be small). At this point, we know the cross-section of the lines and their thermal conductivity, we can now compute a heat flow $\dot{q}(T_1,T_2)$ if both $T_1$ and $T_2$ are below $10K$.
\subsection{Typical heat flow values}
\label{app:typical_heat_flows}
We provide typical heat flow associated with single lines/cables.
\subsubsection{Coaxial cable model}
Here we use the coaxial cable model we described (on the full range $[0K,300K]$).
We find:
\begin{itemize}
\item $\dot{q}(0K,300K) \approx \dot{q}(10,300) \approx 4mW$
\item $\dot{q}(0K,10K) \approx 5 \mu W$
\end{itemize}
\subsubsection{NbTi striplines}
\begin{itemize}
\item $\dot{q}(0K,10K) \approx 1 nW$
\end{itemize}
We see that below $10K$, the superconducting lines are associated with a flow $\times 1000$ lower than the conventional conductor (it is one of the reasons why using superconducting lines is interesting).
\subsection{The full cable model: mixing the two}
So far, we have expressed the thermal conductivity of the two models of cables. The NbTi striplines are used below $10K$, and the conventional coaxial is used for a temperature higher than $10K$. Defining $A(T)$ as being the cross-section of the cable used (which depends on the temperature used as we have two different models of cables), we have the heat flow for one cable that is in general:
\begin{align}
\dot{q}(T_1,T_2)=\frac{1}{L}\int_{T_1}^{T_2} A(T) \lambda(T) dT,
\end{align}
where $\lambda(T<10K)=\lambda_{Kap^{HN}}(T)$ and $\lambda(T>10K)=\lambda_{SS}(T)$, and $A(T<10K)=1.3\ 10^{-9}m^2$. $A(T>10K)$ is deduced by calculating the total cross-section of the metallic conductor for the coaxial cable described there. This is what is being used in our models: this is how we consider "switching" the model of cables at $10K$. 
\chapter{Optimal temperature $T_{\text{Gen}}^{\text{Opt}}$ when the electronics is dynamically enabled.}
\label{app:dynamic_electronics}
Here, we represent the curve of $T_{\text{Gen}}^{\text{Opt}}$ associated to the figure \ref{fig:optimal_function_gamma_without_attenuators_consoperqubitorpergate}
\begin{figure}[h!]
\begin{center}
\includegraphics[width=0.5\textwidth]{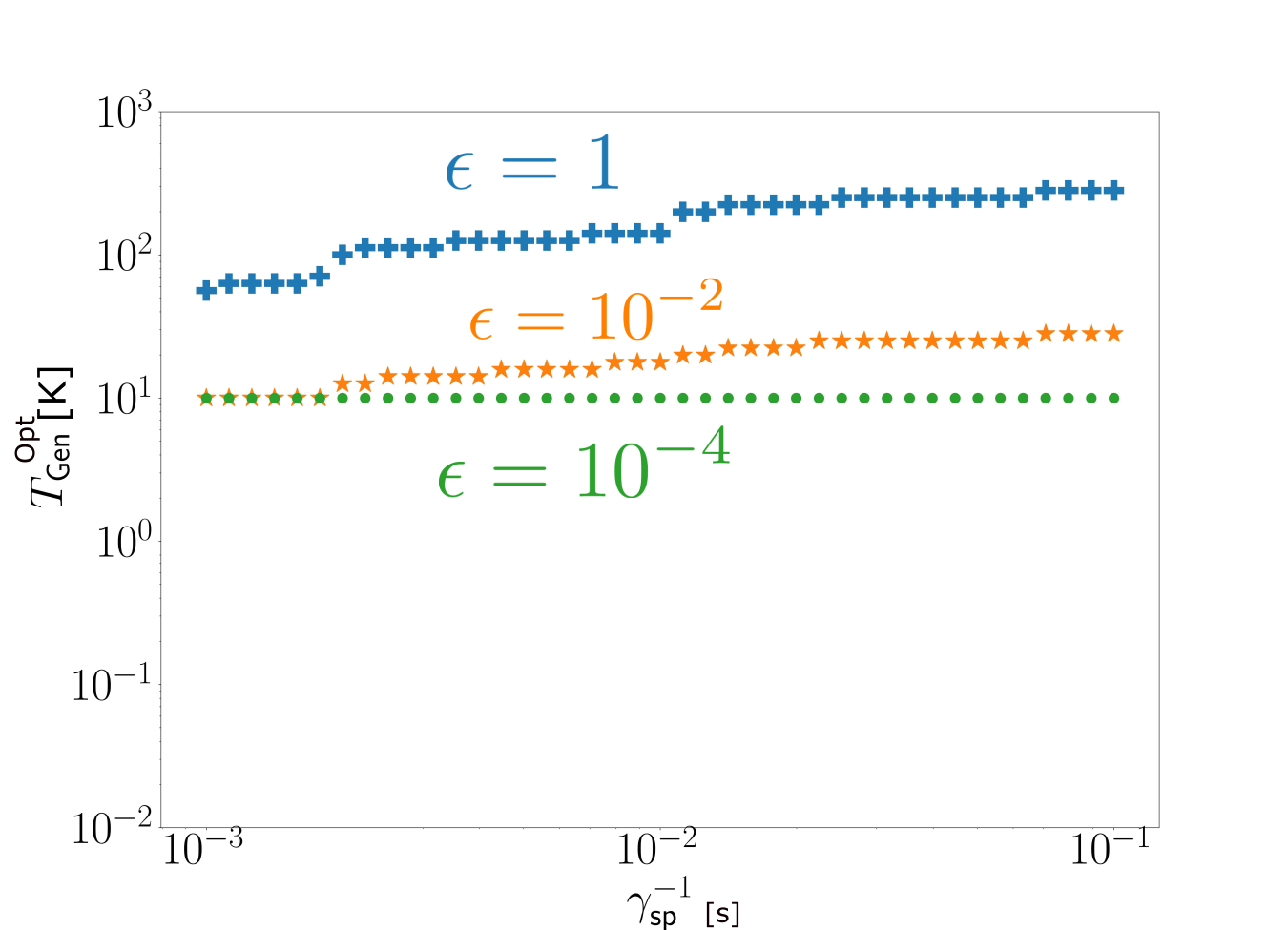}
\caption{Optimal temperature $T_{\text{Gen}}^{\text{Opt}}$ as a function of the qubit lifetime for the curve represented on the figure \ref{fig:optimal_function_gamma_without_attenuators_consoperqubitorpergate}
 b). The blue, orange and green curve correspond to $\epsilon=1$, $10^{-2}$, $10^{-3}$ respectively. We see that $T_{\text{Gen}}^{\text{Opt}}$ is higher, and not very far from $300K$ for $\epsilon=1$ as the electronics dissipates too much heat. This is why the reduction in power consumption when the electronics can be dynamically turned on is not as important in this regime as when $\epsilon$ is lower.}
\end{center}
\end{figure}
\FloatBarrier
\chapter{An example where fine-tuning of the qubits temperature and the total attenuation is crucial}
\label{app:appendix_example}
Here we give an example in which we see that the exact value $T_{\text{Q}}$ and $A$ should have in order to minimize the power consumption is important. In the main text, we saw that, of course, the qubit temperature cannot be too high, and the attenuation cannot be too low (otherwise the condition $\mathcal{M}_L=\mathcal{M}_{\text{target}}$ would either be violated, either an extra concatenation level would be required to compensate), but we did not see a very big interested into optimizing precisely those values for a fixed value of $k^{\text{Opt}}$. It showed us indirectly that in the regime we considered, the heat dissipated in the attenuator (and the exact qubit temperature) did not play the most important role in the physics as we commented in \ref{sec:making_energy_efficient}.

On the figure \ref{fig:appendix_example}, we can see the curves associated with an optimization in which we replaced the Carnot efficiency by its square. The motivation behind this choice is that it can correspond to an efficiency of cryostats that are not optimized to evacuate large quantity of heat \cite{irds,martin2021energy}; this is the kind of efficiency we can find in some "chip-scale" cryostats. This new efficiency is greatly increasing the cost of removing heat at low temperatures. We see that because of that; there is now a much more important requirement to optimize the qubit temperature and attenuation values. Indeed, even when $k^{\text{Opt}}$ is being fixed to some value, the power consumption varies significantly as a function of the qubit lifetime. This is an example in which we see that minimizing the power consumption is not only a matter of choosing the appropriate number of physical qubits per logical qubit (and the appropriate temperature $T_{\text{Gen}}^{\text{Opt}}$), even for a fixed number of physical qubits it might be crucial to optimize the qubit temperature and the attenuations. The value of $\epsilon$ we considered is, however, quite low (we reduced the consumption of the electronics by about $6$ order of magnitude compared to state of the art in CMOS technology)\footnote{In the main text, we explained that $\epsilon>10^{-3}$ is necessary to have our model valid. But it could be possible to increase much more the range of validity by considering that the waveform is saved in memory at the same stage of the ADC/DAC: this would reduce by an important amount the number of cables between $300K$ and $T_{\text{Gen}}$.}, but it could eventually correspond to other technologies such as adiabatic computing or single flux quantum logic. In summary, this example shows that in some cases, optimizing the qubits and attenuation temperature might be important to save quantitative amounts of power consumption.
\begin{figure}[h!]
\begin{center}
\includegraphics[width=0.8\textwidth]{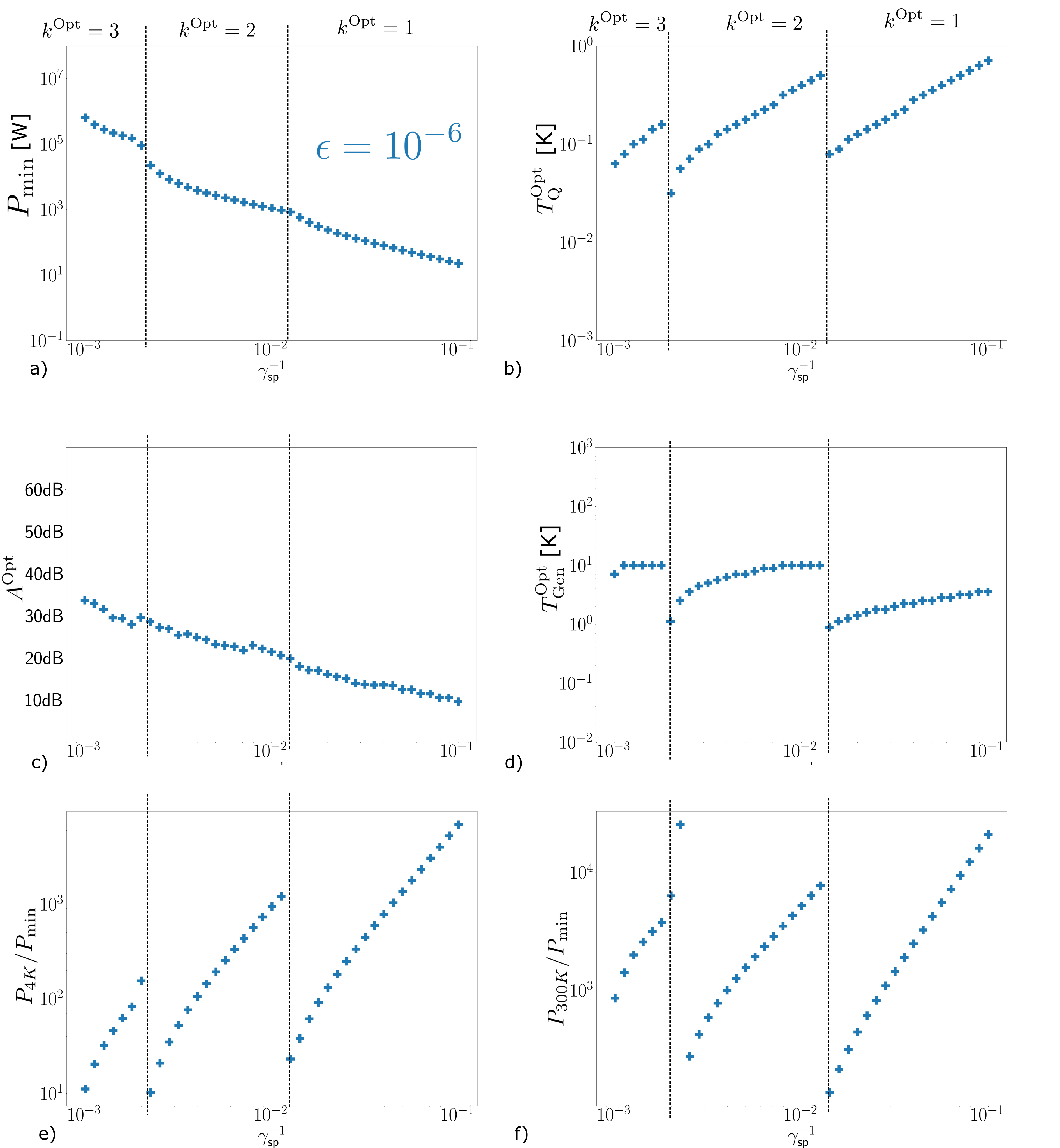}
\caption{Those graph represent the minimum power consumption and associated parameters in a case where we replaced the Carnot efficiencies appearing in \eqref{eq:a},\eqref{eq:b1qb},\eqref{eq:bcNOT} by their square, i.e: $(300-T)/T)\to ((300-T)/T)^2$. Motivation behind this choice are in the main text.}.
\label{fig:appendix_example}
\end{center}
\end{figure}
\FloatBarrier
\chapter{Decomposing the QFT on an appropriate gateset}
\label{app:gateset_decomposition}
Here, we give brief comments about what would change if we wanted to consider the exact decomposition of the quantum Fourier transform on a gateset that it is possible to implement with the Steane method. In practice, the controlled-phase gates would have to be decomposed on this fault-tolerant gateset (which is composed of Pauli, $H$, $S$, $T$, and $cNOT$ gates). A controlled-phase gate can be implemented with two cNOTs and a few phase gates $R_n$ as explained in the third chapter, in figure \ref{fig:controlled_R}, or in \cite{kim2018efficient}. But this wouldn't change in a very quantitative manner the results because the number of logical gates would be multiplied by some small number. However, the $R_n$ gates must be decomposed as well (because we cannot implement fault-tolerantly those gates with the Steane method). Using a gateset composed of $H$,$S$, Pauli (those gates are implementable with the Steane method), and $T$ gates (those gates require another procedure, for instance, magic state distillation), we would only be able to approximate the gate $R_n$ up to some deterministic error \cite{kim2018efficient}. Fortunately, as shown in \cite{kim2018efficient}, the number of gates required to approximate a gate $R_n$ grows very slowly with the accuracy desired. Also, to give an idea, in order to reach a deterministic error lower than $10^{-10}$, about $20$ gates (and an extra logical qubit) would be required. 

We did not investigate what the maximum deterministic error we could accept is (it is necessary in order to precisely estimate the total number of gates required after such decomposition), but from the results given in \ref{sec:charac_algo}, it is easy to see how the power consumption would change because of such decomposition. As the quantum Fourier transform without such decomposition admits a number of logical qubits and a logical depth being $(Q_L=2048,D_L=2 Q_L)$, in order to take into account those extra gates, we would simply have to move in this graph of an appropriate translation. For instance, if it appears that $10$ gates are required inside a controlled-phase gate, we would have to change $D_L \to 10 D_L$ to deduce the power consumption of such algorithm (we recall that as the $T$ gates are not modeled in our work, it would be the energetic cost of this algorithm neglecting the cost of the $T$ gates). A detailed analysis of this decomposition can be an outlook to consider.
\end{appendices}

\chapter*{Conclusion and perspectives}
\label{sec:conclusion}
In this thesis, we studied the question of the scalability of fault-tolerant quantum computing, mainly in the context of limited resources. The first approach we considered has been described in the third chapter. It consisted in studying what happens for fault-tolerance in the presence of a scale-dependent noise, which frequently comes from resources constraints. We showed that for some conditions on how the noise grows with the computer size, the maximum accuracy the computer can get to is intrinsically limited. We provided the tools allowing to estimate this maximum accuracy. In the case this scale-dependent noise is induced by a resource limitation, we provided a way that allows to (i) estimate the minimum resource required allowing to implement a fault-tolerant algorithm, (ii) estimate what is the maximum accuracy the computer can get to for a given amount of resource available\footnote{For (i) and (ii), the analysis is done under some hypotheses on the relationship between the noise and the resource, given in the chapter.}. We saw that having a scale-dependent noise is not necessarily an issue by itself; it depends on how fast the noise grows. It is why characterizing this dependence is essential to assess whether an architecture is scalable or not. Those first analyses provided a first approach to the problem of resource estimation of quantum computing by relating the resource to minimize to the noise felt by the qubits.

In the fourth chapter of this thesis, we generalized this approach and proposed a formulation for the problem of the resource cost of quantum computing. By asking to minimize a resource used for a calculation under the constraint that the calculation succeeds with a targeted probability, the entire architecture of a quantum computer can in principle be optimized, which includes aspects coming from fault-tolerance, algorithmic, and engineering. The principle behind this is to acknowledge that many of the elements inside a quantum computer are here in order to make sure that the calculation succeeds, possibly in a very indirect manner. Hence, by establishing the connection between the target accuracy and the architecture of the quantum computer, it is possible to optimize the whole design to ensure that the algorithm will succeed and minimize a resource at the same time.

In the last chapter, we used this method in a complete model of quantum computer based on superconducting qubits. Our goal was to find the minimum power consumption required to implement large-scale algorithms on at least thousands of logical qubits. In this part, we saw that more than two orders of magnitude of power consumption can be saved in regimes where without this optimization, the power consumption could be larger than the gigawatt. Our work seems to indicate that despite their high overhead in the number of physical qubits per logical qubit, the concatenated construction can reasonably be considered to create large-scale quantum computers if optimizations of the architecture are performed. This work allowed us to see which characteristics in the design play an important role in power consumption. For instance, we identified that the optimal temperature of the stage generating the signals is a critical parameter to fix if one wants to save power. In the majority of the examples, we also saw that the dominant source of power consumption comes from the heat that has to be removed from the electronics used inside of the cryostat and the heat conduction in the cables: the heat dissipated in the attenuators played a minor role. We have also seen that finely tuning the temperature of the quantum core is not necessary as the power consumption does not depend very much on this temperature. However, we also emphasized on the fact that this conclusion might be very architecture-dependent: we gave one concrete example of that in the appendix \ref{app:appendix_example} where the optimal attenuation level and the optimal temperature of the qubits also played a significant role in the power consumption if another efficiency than Carnot is considered for the cryostat. To give first directions to take in order to make quantum computing more energy-efficient (outside from the fact that optimization such as the one we did should be performed), we saw that being able to turn off the electronics when it is not being used is something that can reduce the power consumption of the computer if the heat it dissipates is not too large (otherwise the gain would be pretty limited for reasons explained in the lines following the figure \ref{fig:optimal_function_gamma_without_attenuators_consoperqubitorpergate}). In the same line of thought, increasing the level at which the ancillae qubits can be reused in the calculation, i.e., the level of recycling, seems to be another interesting approach that could help minimize power consumption. Also, replacing attenuators with non-dissipative filters (to isolate the qubits from thermal noise) might improve the power consumption in some regimes. However, as we explained in the last chapter, this was not the most important thing to optimize in the specific examples we took. 

Our work seems to indicate that the power consumption required by a large-scale quantum computer can be large, at least for superconducting qubit technologies error protected with Steane concatenated code. In our examples, we saw that \textit{after optimization}, about $10MW$ of power consumption could be expected to implement some "typical" large scale algorithm (see the section \ref{sec:conclusion_ch5} to understand more precisely what is the size and type of algorithm we are talking about), considering close to state of the art values for the different elements involved in the quantum computer (but assuming that the quality of all the gates is only limited by the qubit lifetime\footnote{See \ref{sec:conclusion_ch5} to know why we took this hypothesis that we believe is reasonable for the future.}). However, it has to be compared to the consumption of a classical supercomputer: it is in the same range of power consumption, but it would not be able to simulate the algorithms we studied. This consumption is lower than large scientific experiments such as the CERN, which consumes about $200MW$ of power. Also, those algorithms will be implemented in a very short amount of time such that the energy required would be small (the examples in which we find $10MW$ of power consumption would run in a few ms). We also noticed that the power bill rapidly decreases with technological improvements.

As a concluding remark, our work illustrates that having an optimized vision in the design of a quantum computer, using, for instance, the approach we are proposing, can significantly improve the potential in terms of scalability. It also illustrates that the energetic cost of quantum computing should be a figure of merit by itself on the scorecard of qubits technologies to assess their potential for scalability.

Our goal is now to provide some outlooks that look promising to make quantum computing more energy efficient. Indeed, further optimizations than the one we did could be performed, and there are ways to increase the level of details we considered in our models.
\section*{Some possible outlooks}
\label{sec:outlook_ft}
We can first describe possible explorations to do in the context of fault-tolerant quantum computing. The first outlook to consider would be to include in the modeling the energetic cost of the gates that we could not implement transversally through the Steane method: non-transversal gates such as the $T$ gate. Indeed, in our work, the energetic cost of such gate has not been modeled (we recall, however, that as a quantum memory does not require such gates, the energetic study done in \ref{sec:charac_algo} would be unchanged by such consideration). Including those gates in our modeling would allow us to see if the energetic cost would be significantly different from the one we estimated\footnote{We actually already started investigating on this question and it seems that if appropriate optimizations are performed, the energetic cost of $T$ gates is not likely to be much more significant than for the Clifford operations}, and it depends on the exact way such gates are implemented; many different proposals exist for that \cite{chamberland2020very,litinski2019magic,bravyi2005universal,yang2020covariant,paetznick2013universal}. Then, the next logical step would be to compare the energy efficiency of different quantum error correction code and their fault-tolerant implementation. Indeed, all the results in the energetic estimation we obtained are very closely connected to the fact that we used the Steane code and its concatenated fault-tolerant implementation. We could expect the quantitative estimation we did to drastically vary when other codes are considered. Also, to make quantum error correction more energy-efficient, it would be interesting to see if it is possible to implement it in an autonomous manner, i.e., without having to exchange information between the classical computer that manages the quantum algorithm's execution the quantum core. Doing things this way, all the amplification chains would have a significantly reduced energetic cost: the signals would only have to be amplified at the very end of the algorithm when the final answer would be given. Another essential aspect to investigate is the exact connectivity (i.e., with how many physical qubits one physical qubit has to interact) required by the different fault-tolerant construction. Some first investigations\footnote{This study has not been explained in the chapters of this thesis, but we started investigating it.} for the concatenated construction we used seem to indicate that a connectivity proportional to the number of logical qubits should be considered (the connectivity does not seem to increase with the level of protection). Considering the quantum Fourier transform within the Shor algorithm, it would then be about $2000$ physical qubits that should be able to interact "pairwise". This is another essential aspect for scalability and energetics as it can give constraints on how two-qubit gates can be performed. This aspect is also related to frequency overcrowding issues; including them in the modelization would be an interesting analysis to do.

In the context of quantum algorithms, we gave some intuitions about how the shape of an algorithm can influence the energetic cost. However, there are also large amounts of energy or power to save here. For example, in the figure \ref{fig:quantum_memory} of the last chapter, we saw that the number of logical qubits $Q_L$ has a more significant influence on the power cost than the logical depth $D_L$. However, would that be the case if we switched to another resource such as energy? In order to save energy, is it better to implement an algorithm in a compact way (small $D_L$) but with more logical qubits, or the other way around? Furthermore, how would the difference be quantitatively? This question raises the more general question of how algorithms should be compiled in order to make them energy, power, or more generally \textit{resource} efficient (for any resource of interest). For instance, if the non-transversal gates require a large dynamic power consumption, algorithm compilation procedures should minimize their acting in parallel.

More globally, this work allows benchmarking the energetic cost of different kinds of quantum computers and their associated architectures. How different would the power consumption of a superconducting quantum computer be compared to a spin qubit one for instance? From the way we formulated the question of the energetic cost, we now have a well-defined method allowing to do this benchmark. Then, it could be interesting to enrich the optimizations that have been performed. For instance, one "bottleneck" forbidding us to save more power consumption was that we forced the design to have an amplification stage at $4K$ (in order to make the thermal noise negligible for the readout). However, it would be surprising if this temperature is precisely the optimal one. In order to optimize this stage, we would need to have a proper modeling of how amplifying at a higher temperature can degrade the quality of the readout-signals quantitatively and thus increase the probability of error of the logical gates (if the measurements are too noisy then the quantum error correction would be wrongly performed).

All those propositions are exciting outlooks that would deserve to be investigated further to make quantum computing energy-efficient and naturally make it more scalable.

\bibliography{bibliography}
\end{document}